\newif\ifarxiv
\definecolor{redblue}{rgb}{1,0,1}
\newcommand\Ccancel[2][black]{\renewcommand\CancelColor{\color{#1}}\cancel{#2}}
\definecolor{bluekeywords}{rgb}{0.13, 0.13, 1}
\definecolor{greencomments}{rgb}{0, 0.5, 0}
\definecolor{redstrings}{rgb}{0.9, 0, 0}
\definecolor{graynumbers}{rgb}{0.5, 0.5, 0.5}
\lstdefinelanguage{wisl}{
   morekeywords={predicate, +, lemma, proof, apply, statement,
                 forall, if, assert, bind, unfold, function, return, ret},
   sensitive=false,
   comment=[l]{//}
}
\newtheorem{property}{Property}
\title{Compositional Symbolic Execution for Correctness and Incorrectness Reasoning \ifarxiv (Extended Version) \else \fi} 
\titlerunning{Compositional Symbolic Execution for Correctness and Incorrectness Reasoning} 
\author{Andreas Lööw}{Imperial College London, United Kingdom}{}{}{}
\author{Daniele Nantes-Sobrinho}{Imperial College London, United Kingdom}{}{}{}
\author{Sacha-Élie Ayoun}{Imperial College London, United Kingdom}{}{}{}
\author{Caroline Cronjäger}{Ruhr-Universität Bochum, Germany}{}{}{}
\author{Petar Maksimović}{Imperial College London, United Kingdom \and Runtime Verification Inc., United States of America}{}{}{}
\author{Philippa Gardner}{Imperial College London, United Kingdom}{}{}{}
\authorrunning{A. Lööw et al.} 
\keywords{separation logic, incorrectness logic, symbolic execution, bi-abduction}
\begin{document}

\maketitle




\begin{abstract}
The introduction of separation logic has led to the development of symbolic execution techniques and tools that are (functionally) compositional with function specifications that can be used in broader calling contexts. Many of the compositional symbolic execution tools developed in academia and industry have been grounded on a formal foundation, but either the function specifications are not validated with respect to the underlying separation logic of the theory, or there is a large gulf between the theory and the implementation of the tool.

We introduce a formal compositional symbolic execution engine which
creates and uses function specifications from an underlying separation
logic and provides a sound theoretical foundation for, and indeed was
partially inspired by, the Gillian symbolic execution platform. This
is achieved by providing an axiomatic interface which describes the
properties of the consume and produce operations used in the engine to
update compositionally  the symbolic state, for example when calling
function specifications. This consume-produce technique is used by
VeriFast, Viper, and Gillian,  but has not been previously characterised independently of the tool. As part of our result, we give consume and produce operations inspired by the Gillian implementation that satisfy the properties described by our axiomatic interface. A surprising property is that our engine semantics provides a common foundation for both correctness and incorrectness reasoning, with the difference in the underlying engine only amounting to the choice to use satisfiability or validity. We use this property to extend the Gillian platform, which previously only supported correctness reasoning, with incorrectness reasoning and automatic true bug-finding using incorrectness bi-abduction. We evaluate our new Gillian platform by using the Gillian instantiation to C. This instantiation is the first tool grounded on a common formal compositional symbolic execution~engine~to~support~both~correctness~and~incorrectness~reasoning.
\end{abstract}


\section{Introduction}


One of the main challenges that  modern program analysis tools based on static symbolic execution~\cite{symb:exec:survey} must face  is {\em scalability}, that is, the ability to tractably analyse large, dynamically changing codebases. 
Such scalability requires symbolic techniques and tools that are {\em functionally compositional} (or simply compositional) where the analysis works on functions in isolation, at any point in the codebase, and then records the results in simple function specifications that can be used in broader calling contexts.
However, the traditional symbolic execution tools and frameworks based on first-order logic, such as CBMC~\cite{cbmc} and Rosette~\cite{torlak:pldi:2014}, can only be compositional for functions that manipulate the variable store, but not for functions that manipulate the heap, limiting their usability.

A key insight is that, to obtain compositionality,  the analysis should work with function specifications that are  {\em local}, in that they should describe the function behaviour only on the {\em partial} states or resources that the function accesses or manipulates, and a mechanism for using such specifications when the function is called by code working on a larger state. 
This insight was first introduced in separation logic (SL)~\cite{seplogic,reyseplogic}, a modern over-approximating (OX) program logic for compositional verification of correctness properties, which features local function specifications that can be called on larger state with the help of the {\em frame rule}. Recently, these ideas have been adapted to under-approximate (UX) reasoning in the context of incorrectness separation logic (ISL)~\cite{isl} for compositional {true bug-finding}.


Ideas from  SL and ISL have led to the development of efficient compositional symbolic execution tools in academia and industry, such as the tool VeriFast~\cite{verifast} and the multi-language platforms Viper~\cite{Muller16} and Gillian~\cite{gilliancav} for semi-automatic OX verification based on SL, and Meta's multi-language platform Infer-Pulse~\cite{Le22} for automatic UX true bug-finding based on ISL. However,  there are issues with the associated formalisms of the tools: 
either the function specifications created and used by the tool are not validated with respect to the underlying separation logic; 
or there is a large gulf between the formalism and the implementation of the tool.
VeriFast, Viper, and Gillian, on the one hand,  all come with a sound compositional symbolic operational semantics that closely model the tools.
These tools handle the creation and use of function specifications (and the folding and unfolding of predicates) using two operations, called {\em consume} and {\em produce}, which, respectively, removes from and adds to a given symbolic state the symbolic state corresponding  to a given assertion. The formalisms  accompanying the  tools, however, do not properly connect their function specifications to the underlying separation logics. Thus, function specifications created by the tools cannot soundly be used by other tools, and vice versa. On the other hand, the formalism of  Infer-Pulse describes  compositional symbolic execution 
as proof search in ISL,  and similarly with its SL-predecessor Infer~\cite{calcagno:nasa:2011}, 
thereby making the connection
to its separation logic direct. However, the gap between the formalism and the tool is considerable.




In this paper, inspired by the Gillian platform~\cite{gillianpldi, gilliancav}, we formally define a compositional symbolic execution (CSE) engine that provides a sound theoretical foundation for building compositional OX and UX analysis 
tools. Our engine is  described by  a compositional symbolic operational semantics using 
the consume and produce operations to interface with
 function specifications  valid in either SL or ISL. 
A surprising property of our semantics is that it is simple to switch between OX and UX reasoning. In more detail, our CSE engine features the following theoretical contributions:
\begin{enumerate}
\item {\em specification interoperability}, in the sense that it can both create and use function specifications validated in an underlying separation logic, allowing for a mix-and-match of specifications validated in various ways from various sources;
\item {\em an axiomatic approach to compositionality}, in that we provide an axiomatic description of  properties  for the consume and produce operations, which have not been previously characterised axiomatically;
\item {\em a general soundness result}, which states that, assuming the axiomatic properties of the consume and produce operations and the validity of function specifications that are used with respect to the underlying separation logic, the CSE engine is sound and creates function specifications that are valid with respect to the said  logic;
\item {\em a unified semantics}, which captures the essence of {\em both} the OX reasoning of VeriFast, Viper, and Gillian, and the UX reasoning in Infer-Pulse, with the difference amounting {\em only} to the choice of using satisfiability or validity, and different axiomatic properties of the consume and produce operations. 
\end{enumerate}

We instantiate our general soundness result by giving example implementations of the consume and produce operations,  inspired by those found in  Gillian, which we prove satisfy the properties laid out by the axiomatic interface. 
For clarity of presentation, although both our CSE engine and our consume and produce operations are inspired by Gillian, we have opted to work with a fixed, linear memory model and a simple while language instead of the parametric memory model approach of Gillian and its goto-based intermediate language~GIL. The move from a fixed to a parametric memory model is straightforward and planned~future~work.

In addition, this paper brings the following practical contributions:
\begin{enumerate}
\item a demonstrator Haskell implementation of our CSE engine with  example implementations of the consume and produce operations;
\item an extension of the Gillian platform with automatic compositional UX true bug-finding using UX  bi-abduction in the style of Infer-Pulse, making Gillian the first unified tool for OX and UX compositional reasoning about real-world code.
\end{enumerate}


The Gillian platform already supported whole-program symbolic testing as found in, for example, CBMC, and semi-automatic OX compositional verification underpinned by SL  as in, for example, VeriFast.
Because our CSE engine  has pinpointed the small differences required for the switch between  OX and UX reasoning, we were able to simply extend Gillian with automatic compositional UX true-bug finding without affecting its other analyses.
Interestingly, UX true bug-finding has not been implemented in a consume-produce engine before. We demonstrate the additional UX reasoning  by extending the CSE engine with UX bi-abductive reasoning~\cite{calcagno:popl:2009,calcagno:jacm:2011,isl,Le22}, an automatic technique which has enabled compositional reasoning to scale to industry-grade codebases, and which works by generating function specifications from their implementations by incrementally constructing the~calling~context. We implement this technique following the approach pioneered by OX tool JaVerT~2.0~\cite{javert2}, where missing-resource errors are used to generate fixes that drive the specification construction. We evaluate this extension of Gillian using its Gillian-C instantiation, on a real-world Collections-C data-structure library~\cite{collections}, obtaining promising initial results and performance.

\ifarxiv

\else
An extended version of this paper is available on arXiv~\cite{arxiv}, which includes additional definitions and proofs of the theorems discussed in this paper.
\fi


\section{Overview: Compositional Symbolic Execution}%
\label{sec:overview}

We give an overview of our CSE engine, together with example  analyses that we show can be hosted on top of this engine.
Our CSE engine
consists of three engines built on top of each other, labelled by
different reasoning modes, OX and  UX, that  are appropriate for different types of analyses.
%
In short, the reasoning modes can be characterised as follows:
\begin{table}[h!]
\centering
\begin{tabular}{llll}
\toprule
\textbf{Mode} & \textbf{Guarantee} & \textbf{Consequence rule}
  & \textbf{Analysis}\\ \midrule
  OX & Full path coverage & Forward logical consequence & Full verification in SL \\
UX & Path reachability & Backward
                                                    logical  consequence &
                                                                   True
                                                                   bug-finding in ISL \\
\end{tabular}
\end{table}
%

The \emph{core engine} (\S\ref{sec:language-symbolic}) is a simple
symbolic execution engine for our  demonstrator programming language
(\S\ref{sec:language-concrete}). It  does not use assertions to
update symbolic state and is hence compatible with both OX and UX reasoning.
It is sufficient to capture whole-program symbolic
testing found in, e.g., CMBC and Gillian. 
The \emph{compositional engine}
(\S\ref{sec:function-interface},~\S\ref{sec:function-implementation})
is built on top of this core engine. It 
can switch between either the OX or UX mode of reasoning, providing 
support for 
the use of SL and ISL function specifications 
by extending the core engine with consume and
produce operations for updating the symbolic state. In OX mode, it captures the  full verification found in e.g. VeriFast
and Gillian, soundly underpinned by~SL. For the first time, in UX mode
it also captures 
ISL analysis, not previously found in a symbolic execution tool. We
demonstrate this by building 
the  UX \emph{bi-abductive engine}~(\S\ref{sec:bi-abduction})  on 
top of the UX  compositional engine to 
automatically  fix 
missing-resource errors (e.g.,~a~missing~heap~cell) using the UX
bi-abductive technique from Infer-Pulse, 
to capture automatic 
true bug-finding underpinned by ISL.




\subsection{Core Engine}\label{ki:core}

The core symbolic execution engine provides a foundation on top of which the other components are built. It is essentially a  standard symbolic execution engine that is slightly adapted to handle both usual language errors and the missing-resource errors, which can occur now that we are working with partial state. 

Our engine operates over partial  symbolic states $\sst = (\ssto, \smem, \spc)$ comprising: a symbolic variable store~$\ssto$, holding symbolic values for the  program variables; a partial symbolic heap~$\smem$, representing the memory on which programs operate; and a symbolic path condition~$\spc$, holding constraints accumulated during symbolic execution. We work with a simple demonstrator programming language (cf.~\S\ref{sec:language-concrete}) and linear heaps:  that is, partial-finite maps from natural numbers to values. The core engine is both OX- and UX-sound, also referred to as exact~(EX)~\cite{esl}, as established by~Thm.~\ref{thm:ux-ox-sound}.

\subparagraph*{Example.} In Fig.~\ref{fig:f-exec} (left), we define a simple function $\mathtt{f}$. In Fig.~\ref{fig:f-exec} (middle), we illustrate its symbolic execution, which starts from
 $\sst = (\{ \pvar c \mapsto \svar c, \pvar x \mapsto \svar x, \pvar r \mapsto \nil \}, \emptyset, \true)$, 
 with the function parameters ($\pvar c$ and $\pvar x$) initialised with some symbolic variables ($\svar c$ and $\svar x$),
the local function variables ($\pvar r$) initialised to $\nil$, the heap 
set to empty and the path condition set to $\true$.
 Next, executing the if-statement with condition $\pvar c \ge 42$ yields
  three branches: one in which $\pvar c$ is not a natural number, in which 
  the execution fails with an evaluation error; one in which $\pvar c \geq 42$, in which the execution 
  continues; and one in which $\pvar c < 42$, in which the program throws a user-defined error. 
Next, executing the lookup $\pderef{\pvar{r}}{\pvar x}$ results in two more branches: one in which $\pvar x$ is not a heap address (natural number), yielding a type error 
and one in which $\pvar x$ is a heap address. In that branch, the lookup fails with a missing-resource error as the heap is empty.

\begin{figure}[!t]
\begin{minipage}{0.25\textwidth}
\footnotesize
\(
\begin{array}{l}
\mathtt{f}(\pvar c, \pvar x)~\{ \\
\tab \texttt{if}~( \pvar{c} \ge 42 )~\{ \\ 
\tab\tab \pderef{\pvar{r}}{\pvar x}; \\
\tab\tab \pmutate{\pvar{x}}{\pvar c} \\
\tab \}~\mathtt{else}~\{ \\ 
\tab\tab \perror(\strlit{$\pvar c$ less than 42}) \\
\tab \}; \\ 
\tab\preturn{\pvar{r}} \\
\}
\end{array}
\)
\end{minipage}
\begin{minipage}{0.33\textwidth}
\small
  \begin{tikzpicture}[scale=.75]
 \node (init) at (3, 4)  {\footnotesize \texttt{if} $(\pvar{c} \geq 42)$ $\{ \dots\!\: \}$ \texttt{else} $\{ \dots\!\: \}$}; \filldraw (init.south) circle (1pt);
 \node [blue,yshift=4] (spec) at (init.north) {\footnotesize $(\{ \pvar c \mapsto \svar c, \pvar x \mapsto \svar x, \pvar r \mapsto \nil \},  \emptyset, \true)$};
  \node (firstlevelb) at (3, 2.4)  {\footnotesize $\pderef{\pvar{r}}{\pvar{x}}$}; \filldraw (firstlevelb.south) circle (1pt); 
  \node[red] (firstlevela) at (1, 2.4) {\footnotesize  err};
  \node[red] (firstlevelc) at (5, 2.4)  {\footnotesize  err}; 
  \draw[->] (init.south) to (firstlevelb.north);
  \draw[->] (init.south) to (firstlevela.north);
  \draw[->] (init.south) to (firstlevelc.north);
  \node[blue,fill=white, inner sep=1, outer sep=0] (name)  at (1.7, 3.2) {\tiny{$\svar{c}\notin \Nat$}}; 
  \node[blue,fill=white, inner sep=1, outer sep=0] (name1)  at (4.3, 3.2) {\tiny{$\svar{c}<42$}}; 
  \node[blue,fill=white, inner sep=1, outer sep=0] (name2)  at (3, 3.2) {\tiny{$\svar{c}\geq 42$}}; 
  \node[red] (outcome2) at (1.2, 0.8) {\footnotesize  err};
  \draw[->] (firstlevelb.south) to (outcome2.north);
  \node[blue,fill=white, inner sep=1, outer sep=0] (name3)  at (1.9, 1.6) {\tiny{$\svar{x}\notin \Nat$}};
  \node[red] (missing) at (4.5,0.8) {\footnotesize miss};
  \draw[->] (firstlevelb.south) -- (missing.north);
  \node[blue,fill=white, inner sep=1, outer sep=0] (name4) at (4,1.6){\tiny{$\svar{x}\in\Nat$}};
\end{tikzpicture}
\end{minipage}
\begin{minipage}{0.39\textwidth}
\small
  \begin{tikzpicture}[scale=.8]
  \node at (3, 2.8)  {\footnotesize \dots};
  \node (firstlevelb) at (3, 2.4)  {\footnotesize $\pderef{\pvar{r}}{\pvar{x}}$}; \filldraw (firstlevelb.south) circle (1pt); 
%
  \node[red] (outcome2) at (1, 0.8){\footnotesize err};
  \draw[->] (firstlevelb.south) to (outcome2.north);
  \node[blue,fill=white, inner sep=1, outer sep=0] (name3)  at (1.7, 1.6) {\tiny{$\svar{x}\notin \Nat$}};

  \node[red] (missing) at (3,.5) {\footnotesize (1) miss};
  \draw[->, dashed, red] (firstlevelb.south) .. controls (2.25,1.45) .. (missing.north);

     \draw[->, color= green!70!black]
             (missing.north) .. controls (3.75, 1.42) ..  (3.05,2.05);
      \node[color= green!70!black,fill=white, inner sep=1, outer sep=0] (biabduce) at (3.5,1.42) {\tiny(2) fix: $\svar x \mapsto \svar v$};

      \node (command) at (6,1.1) {\footnotesize $\pmutate{\pvar{x}}{\pvar{c}}$};
      \draw[->] (firstlevelb.south) to (command.north);
      \node[fill=white, inner sep=1, outer sep=0] at (5,1.63) {\tiny (3)};
      \node [blue,text width=4cm] (post) at (5,-0.5) {\footnotesize $(\{ \pvar c \mapsto \svar c, \pvar x \mapsto \svar x, \pvar r \mapsto \svar v \}, $ \\ $\{ \svar x \mapsto \svar c \},  \svar c \geq 42 \lstar \svar x \in \Nat)$};      
     \node[yshift=-14] (outcome3) at (post) {\footnotesize success: $\svar v$};
      \draw[->] (command.south) to (post.north);
\end{tikzpicture}
\end{minipage}
\caption{Definition and symbolic execution of function \texttt{f}}%
\label{fig:f-exec}
\end{figure}

\subparagraph*{Analysis: EX Whole-program Symbolic Testing.} 
The core engine can be used to perform whole-program symbolic testing in the style of CBMC~\cite{cbmc} and Gillian~\cite{gillianpldi}, in which the user creates symbolic variables, imposes some initial constraints on them, runs the symbolic execution to completion, and asserts that some final constraints hold.

\subsection{Compositional Engine}\label{ki:cse}


Our compositional engine extends the core engine to support calling, in its respective OX and UX mode, SL and ISL function specifications, denoted {\small $\quadruple{\vec{\pvar{x}} = \vec{x} \lstar P}{{\sf f}(\vec{\pvar{x}})}{\Qok}{\Qerr}$} and {\small $\islquadruple{\vec{\pvar{x}} = \vec{x} \lstar P}{{\sf f}(\vec{\pvar{x}})}{\Qok}{\Qerr}$},\footnote{UX quadruples can be split into two triples, but OX quadruples cannot. To unify our presentation, we consider both types of specifications in quadruple~form.} respectively, where $P$ is a pre-condition and $\Qok$ and $\Qerr$ are success and error post-conditions. A SL specification gives an OX description of the function behaviour whereas an ISL specification~gives~a~UX~description:
\begin{description}
\item[(SL)] All terminating executions of the function ${\sf f}$ starting in a state satisfying $P$ either end successfully in  a state that satisfies $\Qok$ or fault in a state that satisfies $\Qerr$.
\item[(ISL)] Any state satisfying either the success $\Qok$ or error post-conditions $\Qerr$ is reachable from some state satisfying the pre-condition $P$ by executing the function ${\sf f}$.
\end{description}

The engine also adds support, in both OX and UX mode, for folding and unfolding of user-defined predicates, describing inductive data-structures such as linked lists.


In both cases, the call to function specifications and the folding and unfolding of predicates are implemented following the consume-produce engine style, underpinned by $\mathsf{consume}$ and $\mathsf{produce}$ operations, which, in essence, remove (consume) and add (produce) the symbolic state corresponding to a given assertion from and to the current symbolic state. 
For example, in Fig.~\ref{fig:ki-interface}, the symbolic execution is in a symbolic state $\sst$ and calls a function ${\sf f}(\vec{\pvar{x}})$ by its specification in ISL mode.  The (successful) function call is implemented by first consuming the symbolic state $\sst_{\mkern-3muP}$ corresponding to the pre-condition $P$, leaving the symbolic frame $\sst_{\mkern-3muf}$, and then producing into $\sst_{\mkern-3muf}$ the symbolic state $\sst_{\Qok}$ corresponding to the post-condition $\Qok$.

\begin{figure}[t]
\centering
\begin{tikzpicture}[scale=0.6]
\fill[gray!10!, dashed] (4.2,2) rectangle (7.4,8);
\draw (0.5,6.6) rectangle (2,7.4);
\draw (2,6.6) rectangle (3.6,7.4);
\node at (1.5,8) {{\tiny \(\sst=\sst_{\mkern-3muf}\cdot \sst_{\mkern-3muP}\)}};
\node (state1) at (1.3,7){\tiny\(\sst_{\mkern-3muf}\)};
\node (statep) at (2.7,7){\tiny\(\sst_{\mkern-3muP}\)};
\node at (2.5,5){{\tiny\(y:={\tt f}(\vec{\pvar x})\)}};
\draw (0.5,2.6) rectangle (2,3.4);
\draw (2,2.6) rectangle (3.6,3.4);
\node (statef) at (1.3,3){\tiny\(\sst_{\mkern-3muf}\)};
\node (stateq) at (2.7,3){\tiny\(\sst_{\Qok}\)};
\node at (5.8,2.3){\tiny Properties~\ref{prop:wf}-\ref{prop:prod-compl} (\S\ref{sec:function-interface})};
\node[right] (uxspec) at (7.2,7) {{\tiny\(\islquadruple{\vec{\pvar{x}} = \vec{x} \lstar \colorbox{teal!20!}{$P$}}{\mathtt{f}(\vec{\pvar{x}})}{\colorbox{magenta!10!}{$\Qok$}}{\Qerr}\)}};
\node[draw,text width=2.4cm,anchor=south,text centered,xshift=0.4cm] (interface) at (5,8.2) {\scriptsize Axiomatic Interface};
\draw[->, shorten >= 1pt, shorten <= 2pt]
             (9.7, 6.6) .. controls (4,5) and (3,6) .. (3.1,6.7);
\node at (8.2,5.4) {\tiny $\mac(\macUX,P,\ssubst,\sst)\rightsquigarrow (\ssubst',\sst_{\mkern-3muf})$}; 
\node at (12,3.5) {\tiny \(\produce(\Qok,\ssubst',\sst_{\mkern-3muf})\rightsquigarrow \sst_{\mkern-3muf}\cdot \sst_{\Qok}\)};
\draw[->, shorten >= 1pt,shorten <= 2pt]
             (12.5, 6.6) .. controls (12,2) and (3,5) .. (2.6,3.4);
\draw[thick,->] (state1) to (statef);       
\end{tikzpicture}
\caption{UX function-call rule: successful case}\label{fig:ki-interface}
\end{figure}

Our approach is novel in two ways: (1) we provide an axiomatic interface that captures the sufficient properties of the consume and produce operations for the engine to be sound; and (2) we provide example implementations (in the same style as the rest of the engine, that is, using inference rules) for the consume and produce operations that we prove satisfy the~axiomatic~interface. Moreover, our consume and produce operations switch their behaviour between the mode of reasoning (OX/UX), as described next.

\subparagraph*{Axiomatic Interface.}
We have identified sufficient properties for the consume and produce operations to be OX or UX sound (cf. Thm.~\ref{thm:ux-ox-sound-func}). 
Here we will describe a general idea of the \mac operation, the more complex of the two operations, the signature of which is: 
\[
  \mathsf{consume}(m, P, \ssubst, \sst) \rightsquigarrow  (\ssubst', \sst_{\mkern-3muf}) \mid
  \oxabort(\sval)
\]
The consume operation takes a mode $m$ (OX or UX), an assertion~$P$,  a substitution $\ssubst$, and a symbolic state $\sst$, where the substitution $\ssubst$ links known logical variables in $P$ to symbolic values in $\sst$. The operation finds which part of~$\sst$ could match $P$, resulting in potentially multiple successful or unsuccessful matches, and then, per match, returns either the pair $(\ssubst', \sst_{\mkern-3muf})$, which comprises a substitution $\ssubst'$  and a resulting symbolic state $\sst_{\mkern-3muf}$, or it aborts with error information $\hat v$. 
Some  properties the interface of  $\mathsf{consume}$ mandates are the following:
\begin{enumerate}
\item In successful consumption, there exists a symbolic state $\sst_{\mkern-3muP}$ such that $\sst = \sst_{\mkern-3muf} \cdot \sst_{\mkern-3muP}$ (where~$\cdot$~denotes state composition, which composes the corresponding components of the two states together) and that every concrete state described by $\sst_{\mkern-3muP}$ satisfies $P$. This tells us that the matched part of $\sst$ does correspond to $P$, that the effect of $\mathsf{consume}$ is its removal from $\sst$, and that consumption can be viewed as the frame inference problem~\cite{berdine:aplas:2005}, with the resulting state $\sst_{\mkern-3muf}$ constituting the frame;
\item In OX mode, $\mathsf{consume}$ does not drop paths;
in UX mode, it does not drop information.
\end{enumerate}
The interface allows, e.g., tool developers to design an OX-consume that (soundly) drops certain information deemed unneeded or a UX-consume that (soundly) drops paths according to a tool-specific criteria (e.g., as in UX bi-abduction in Infer-Pulse).

\subparagraph*{Example  Implementation: Consume.} 
We provide example implementations for the consume and produce operations  (\S\ref{sec:function-implementation})
that explore the similarities between OX and UX reasoning,  and allow us to maintain unified implementations across both reasoning modes. Our consume operation has a mode switch $m$, allowing for OX- or UX-specific behaviour, which we use to control the {\em only difference} in our implementation between the two modes: the consumption of pure (non-spatial) information (cf. Fig. \ref{fig:mac-pure}, left). For soundness, our implementation of the consume operation has to be compatible with the SL and ISL guarantees: in  OX mode, consume requires full path coverage, and in UX, it requires path reachability.

We illustrate our \mac implementation by example. 
Consider the symbolic state $\sst \defeq (\emptyset, \smem,  \spc)$, where $\smem \defeq \{ 1 \mapsto \svar v, 2 \mapsto 10, 3 \mapsto 100 \}$ and $\spc \defeq \svar x > 0 \land \svar v > 5$, and let us consume the assertion $P \defeq x \mapsto y \lstar y \geq 10$ from $\sst$ knowing that $\ssubst = \{ \svar x /x \}$, meaning that the logical variable $x$ of~$P$ is mapped to the symbolic variable $\svar x$ of $\sst$. This consumption is presented in the diagram below:

\begin{figure}[!h]
\small
\begin{tikzpicture}[scale=.9]
\node [anchor=east] (consume) at (0, 0) {$\mathsf{consume}(m, P, \ssubst, \sst)$};

\node [anchor=west] (out1) at (3, 1.2) {$\oxabort([\strlit{\sf consPure}, \svar v \geq 10, \spc \land {\color{blue} \svar x = 1}])$};

\node [anchor=west] (out2) at (3, 0.6) {$(\ssubst\uplus\{{\color{orange}  \svar v/y}\}, (\emptyset, \{ \Ccancel[red]{1 \mapsto \svar v}, 2 \mapsto 10, 3 \mapsto 100 \}, \spc \land {\color{blue} \svar x = 1} \land {\color{magenta}\svar v \ge 10}))$};

\node [anchor=west] (out3) at (3, 0) {$(\ssubst\uplus\{{\color{orange} 10/y }\}, (\emptyset, \{ 1 \mapsto \svar v, \Ccancel[red]{2 \mapsto 10}, 3 \mapsto 100 \},  \spc \land {\color{blue} \svar x = 2}))$};

\node [anchor=west] (out4) at (3, -0.6) {$(\ssubst\uplus \{{\color{orange}  100/y}\}, (\emptyset, \{ 1 \mapsto \svar v, 2 \mapsto 10, \Ccancel[red]{3 \mapsto 100} \}, \spc \land {\color{blue} \svar x = 3}))$};

\node [anchor=west] (out5) at (3, -1.2) {$\oxabort([\strlit{\sf MissingCell}, \svar x, \spc \land {\color{blue} \svar x \notin \dom(\smem)}])$};

\draw[->] (consume.east) to[bend left=18] (out1.west);
\draw[->] (consume.east) to[bend left=10] (out2.west);
\draw[->] (consume.east) to (out3.west);
\draw[->] (consume.east) to[bend right=10] (out4.west);
\draw[->] (consume.east) to[bend right=18] (out5.west);

\node [fill=white, inner sep=1, outer sep=0] at (1.7, 1.0) {\scriptsize 0: OX};
\node [fill=white, inner sep=1, outer sep=0] at (1.7, 0.5) {\scriptsize 1: UX};
\node [fill=white, inner sep=1, outer sep=0] at (1.7, 0.0) {\scriptsize 2: OX/UX};
\node [fill=white, inner sep=1, outer sep=0] at (1.7, -0.5) {\scriptsize 3: OX/UX};
\node [fill=white, inner sep=1, outer sep=0] at (1.7, -1.0) {\scriptsize 4: OX/UX};

\end{tikzpicture}
\end{figure}
\vspace{-2mm}
\begin{itemize}[leftmargin=*]
\item the arrows are labelled with the mode  $m$ of operation of  \mac, 
  being either only OX, or only UX, or OX/UX when the consumption has the same behaviour in both modes;
\item both our OX and UX consumption branch on all possible matches: in this case, the cell assertion $x \mapsto y$ can be matched to any of the three cells in the heap (branches 0--3), but it could also refer to a cell that is outside of the heap (branch 4);
\item when branching occurs, then the branching condition is added to the path condition of the resulting state (the constraints in \textcolor{blue}{blue}), ensuring information is not dropped;
\item the heap cell corresponding to $x \mapsto y$ is removed when matched successfully (branches 1, 2, 3), and in those cases we learn the value corresponding to $y$ (substitution extension in \textcolor{orange}{orange} where $\uplus$ denotes disjoint union);
\item for the heap cell $\{ 1 \mapsto \svar v \}$, our OX consumption (branch 0) must abort since the $\spc$ does not imply $y \geq 10$ when $y = \svar v$, whereas UX consumption (branch 1) can proceed by restricting the path condition (constraint in \textcolor{magenta}{magenta}), since dropping paths is sound in UX; this allows our UX consumption to successfully consume more assertions; OX consumption cannot do the same since that would render e.g. the function-call rule, which is implemented in terms of consume, unsound in OX mode;
\item our UX consumption could alternatively drop the missing-cell abort outcome (branch~4), however, some analyses, such as bi-abduction, have use for this error information so we~do~not~drop~it.
\end{itemize}
\ifarxiv
 The detailed computation of this example using our consume implementation is in Ex.~\ref{ex:consP_modes}.
\else
\fi

\subparagraph*{Analysis: Verification.}
We use our compositional engine to provide semi-automatic OX verification: that is, given a function implementation and an OX function specification, the engine checks if the implementation satisfies the specification. This analysis is semi-automatic in that the user may have to provide loop invariants as well as ghost commands for, e.g., predicate manipulation and lemma application. It is implemented in the standard way for consume-produce tools.

\subsection{Bi-abductive Engine}\label{ki:biabduction}

Bi-abduction is a technique that enables automatic compositional analysis by allowing incremental discovery of resources needed to execute a given piece of code. It was introduced in the OX verification setting~\cite{calcagno:popl:2009,calcagno:jacm:2011}, later forming the basis of the bug-finding tool Infer~\cite{calcagno:nasa:2011}, and was recently ported to the UX setting of true bug-finding, underpinning Infer-Pulse~\cite{Le22}.

Our UX bi-abduction advances the state of the art in two ways. Firstly, UX bi-abduction has thus far been intertwined with the proof search of the symbolic execution tool it has formulated for~\cite{isl,Le22}. Inspired by an approach developed in the OX tool JaVerT~2.0~\cite{javert2}, we 
add UX bi-abduction as a layer on top of CSE by generating fixes from missing-resource errors. This covers both missing-resource errors from the execution of the commands of the language (e.g., in heap lookup, the looked-up cell might not be in the heap) as well as invocations of $\mathsf{consume}$ (e.g., if the resource required by a function pre-condition is not in the heap). In more detail, when a missing-resource error occurs, a fix is generated and applied to the current symbolic state, allowing the execution to continue.
Secondly, our UX bi-abduction is able to reason about predicates, allowing us to synthesise and soundly use a broader range of function specifications from other formalisms and tools, in particular specifications that capture unbounded behaviour rather than bounded or single-path behaviour.

\subparagraph*{Analysis: Specification Synthesis and True Bug-finding.} 
We use bi-abduction to power automatic synthesis of UX function specifications, obtaining one specification per each constructed execution path. Such function specification synthesis forms the back-end of Pulse-style true bug-finding, where specifications describing erroneous executions, after appropriate filtering, can be reported as bugs. Given the guarantees of UX reasoning, any bug (represented by a synthesised erroneous function specification) found during this process will be a true~bug.



\subparagraph*{Example: Specification Synthesis.} We illustrate how bi-abduction can be used for the synthesis of UX function specifications, using again the simple function $\mathtt{f}(\pvar c, \pvar x)$ from Fig.~\ref{fig:f-exec} (left).
The first and the third branches of Fig.~\ref{fig:f-exec} (middle) yield the following specifications:
\[
\begin{array}{c}
\isltripleerr{\pvar c = c \lstar \pvar x = x}{\mathtt{f}(\pvar c, \pvar x)}{\pvar{err} = [\mathstr{\mathsf{ExprEval}}, \strlit{c $\ge$ 42}] \lstar c \notin \nats} \\[1mm]
\isltripleerr{\pvar c = c \lstar \pvar x = x}{\mathtt{f}(\pvar c, \pvar x)}{\pvar{err} = [\mathstr{\mathsf{Error}}, \strlit{$\pvar c$ less than 42}] \lstar c < 42}
\end{array}
\]
noting that information about local variables is discarded, the error value is returned in the dedicated program variable $\pvar{err}$, and symbolic variables are replaced~with~logical~variables.

Using bi-abduction, the second branch of Fig.~\ref{fig:f-exec} (middle) now becomes Fig.~\ref{fig:f-exec} (right). The second branch of Fig.~\ref{fig:f-exec} (middle) has one branch in which $\pvar x$ is not a heap address (natural number), yielding a type error and the following specification 
\[
\isltripleerr{\pvar c = c \lstar \pvar x = x}{\mathtt{f}(\pvar c, \pvar x)}{\pvar{err} = [\mathstr{\mathsf{Type}}, \strlit{x}, x, \mathstr{\mathsf{Nat}}] \lstar c \ge 42 \lstar x \notin \nats}
\]
and one branch in which $\pvar x$ is a heap address. In that branch, the lookup fails with a missing-resource error as the heap is empty, but in bi-abductive execution, that is, Fig.~\ref{fig:f-exec} (right), instead of failing we first generate the fix $\svar x \mapsto \svar v$, where $\svar v$ is a fresh symbolic variable, and then add it to the heap and re-execute the lookup, which now succeeds. The rest of the function is executed without branching or errors, the function terminates 
 and returns the value of $\pvar r$, which is $\svar v$. This branch results in the following specification:
\[
\isltripleok{\pvar c = c \lstar \pvar x = x \lstar {\color{red}x \mapsto v}}{\mathtt{f}(\pvar c, \pvar x)}{x \mapsto c \lstar c \ge 42 \lstar \pvar{ret} = v }
\]
which illustrates an essential principle of bi-abduction, which is to add the fixes applied during execution (also known as \emph{anti-frame}, highlighted in red) to the specification pre-condition.

%



\subparagraph*{Example: Specification Synthesis with Predicates.} To exemplify how predicates can be useful during specification synthesis, consider the following variant of the standard singly-linked list predicate: $\llist{x; \lstxs, \lstvs}$, where $x$ denotes the starting address of the list, and $\lstxs$ and $\lstvs$ denote node addresses and node values, respectively.\footnote{We use the semicolon notation for predicates to be consistent with the main text, where the notation is used for automation---for the purpose of this section, these semicolons can be read as commas.} Both addresses and values are exposed in the predicate to ensure that no information is lost when the predicate is folded, making it suitable for UX reasoning. The predicate is defined as~follows:
\[
\begin{array}{r@{~}c@{~}l}
\llist{x; \lstxs, \lstvs} & \defeq & (x\doteq\nil \lstar \lstxs \doteq \emplist \lstar \lstvs \doteq \emplist)~\lor \\
                          && (\exsts{v, x', \lstxs', \lstvs'} x \mapsto v, x' \lstar\llist{x'; \lstxs', \lstvs'} \lstar  \lstxs \doteq x \cons \lstxs' \lstar \lstvs \doteq v \cons \lstvs') 
\end{array}
\]
Further, consider the predicate $\listptr{x; \lstxs}$, which tells us that $x$ is the head of the list~$\lstxs$ if $\lstxs$ is not empty and $\nil$ otherwise, defined as
\[
\begin{array}{@{}c@{}}
\listptr{x; \lstxs} \defeq (\lstxs \doteq \emplist \lstar x \doteq \nil) \lor (\exsts{\lstxs'} \lstxs = x \cons \lstxs'),
\end{array}
\]
and the following specifications of two list-manipulating functions (e.g., proven using pen-and-paper), which capture the exact behaviour of inserting a value in the front of a list (\texttt{LInsert}) and swapping of the first two values in a list (\texttt{LSwapFirstTwo}):
\[
\begin{array}{l}
\!\!\!\!\isltripleok{\pvar x\doteq x \lstar \pvar v \doteq v \lstar \llist{x; \lstxs, \lstvs} }{\mathtt{LInsert}(\pvar x, \pvar v)}{\llist{\pvar{ret};\pvar{ret}{\cons}\lstxs, v{\cons}\lstvs} \lstar \listptr{x; \lstxs}} \\[1mm]
 \!\!\!\!{\color{blue} [\pvar x\doteq x \lstar \llist{x; \lstxs, \lstvs}]}~{\mathtt{LSwapFirstTwo}(\pvar x)}~{\color{blue} [\oxerr: \llist{x; \lstxs, \lstvs} \lstar \shade{|\lstvs| < 2} \lstar \pvar{err} \doteq \strlit{List~too~short!}]}
\end{array}
\]
Using these specifications, we can bi-abduce the following UX true-bug specification of client code calling these functions, where the discovered anti-frame is again highlighted in red, but this time contains a predicate:
\[
\begin{array}{l}
{\color{blue} [\pvar x\doteq x \lstar {\color{red} \llist{x; \lstxs, \lstvs}}]} \\
\passign{\pvar x}{\mathtt{LInsert}(\pvar x, 42)}; \passign{\pvar y}{\mathtt{LSwapFirstTwo}(\pvar x)} \\
{\color{blue} [\oxerr: \exists x'.~\llist{x'; x'{\cons}\lstxs, 42{\cons}\lstvs} \lstar \listptr{x; \lstxs} \lstar \shade{|42{\cons}\lstvs| < 2} \lstar \pvar{err} \doteq \strlit{List~too~short!}]}
 \end{array}
\]


\section{Programming Language}%
\label{sec:language-concrete}

\newcommand{\serr}{\sto[\pvar{err} \rightarrow \verr]}

\newcommand{\myunmod}{\texttt{unmod}}

We present a simple imperative heap language with function calls on which
our analysis engine operates. The language is standard, except that, in line
with previous work on compositional reasoning and
incorrectness~\cite{cosette,javert,javert2,jslogic,isl}, we track freed
cells in the heap,
and separate language errors and missing-resource errors to preserve
the compositionality of the semantics.
We sometimes refer to the definitions of this section as \emph{concrete} to differentiate them from the \emph{symbolic} definitions used in the symbolic engine introduced in subsequent sections.

\subparagraph*{Syntax.} 
The values are given by: $ \gv \in \vals  ::= \nv \in \nats \mid \bv \in \bools \mid \sv \in
                       \strings \mid \nil \mid [\lst{v}] $, 
where $\lst{v}$ denotes a vector of values. The types
are given by: $\tau \in \typez ::= \vals \mid \nats \mid \bools \mid
\strings \mid \lists$. The types are used to define the semantics of
the language; the language itself is dynamically typed. The expressions, $\pexp \in \pexps$, comprise the
values, program variables  $\pvar x, \pvar y, \pvar z, \ldots
\in \pvars$, and expressions formed using the standard operators for numerical and Boolean
expressions. The commands are given by the grammar:
\[\begin{array}{r@{~}c@{~}l}
 \cmd  \in \cmds & ::= &\pskip \mid \passign{\pvar{x}}{\expr{\pexp}}
                         \mid \passign{\pvar{x}}{\prandom} \mid
                         \perror(\pexp) \mid
                         \pderef{\pvar{x}}{\expr{\pexp}} \mid
                         \pmutate{\expr{\pexp}}{\expr{\pexp}} \mid
                         \pmyalloc{\pvar{x}}  \mid
                         \\
               &&
             \pdealloc{\expr{\pexp}} \mid \pifelse{\pexp}{\cmd}{\cmd} \mid      \cmd; \cmd \mid
                  \pfuncall{\pvar{y}}{\fid}{\lst{\pexp}} 
\end{array}
\]
comprising the variable assignment, variable assignment of a
non-deterministically chosen natural number, user-thrown error, heap
lookup, heap mutation, allocation, deallocation, command sequencing,
conditional control-flow and function call.
Our results extend to other control-flow commands, e.g. loops, since these can be implemented using conditionals and recursive functions. 
The sets of program variables for expressions $\pv{\pexp}$ and commands $\pv{\cmd}$ are standard.

\subparagraph*{Functions and Function Implementation Contexts.} A function implementation, denoted $\pfunction{\fid}{\lst{\pvar x}}{\cmd; \preturn{\pexp}}$, comprises:
an identifier, $\fid \in \fids \subseteq \strings$;  the 
parameters, given by a list of distinct program variables  $\lst{\pvar x}$; a 
body, $\cmd \in \Cmd$; and a {return expression}, $\pexp \in \PExp$, with  $\pv{\pexp} \subseteq
\{\lst{\pvar x}\} \cup \pv{\cmd}$.
A~function implementation context, $\fictx$,  maps function identifiers to their implementations:
$
	\fictx : \fids \rightharpoonup_{\mathit{fin}} \pvars~\mathsf{List} \times \cmds \times \pexps 
$, where $\rightharpoonup_{\mathit{fin}}$ denotes that the function is
a finite partial function. We often  write $\fid(\lst{\pvar{x}})\{\cmd;
\preturn{\pexp}\} \in \fictx$  for $\fictx(\fid)=(\lst{\pvar{x}}, \cmd, {\pexp})$.


\subparagraph*{Stores, Heaps and States.}
A {variable store}, $\sto : \pvars
\rightharpoonup_{\mathit{fin}} \vals$, is a function from program
variables to values. A {\em partial}   heap, $\hp : \nats
  \rightharpoonup_{\mathit{fin}} (\vals \uplus \{\cfreed\})$, is a function from natural numbers to values extended with a dedicated symbol
  $\cfreed \notin \vals$ recording that a heap cell has been
  freed.
  Two heaps are disjoint, denoted $\dish{h_1}{h_2}$, when their domains are disjoint.
  Heap composition, denoted $h_1 \uplus h_2$, is given by 
the disjoint union of partial functions which is only defined 
when the domains are
disjoint. 
A {\em partial} program state, $\cst = (\sto, \hp)$, is a  pair comprising a store
and a heap. 
State composition, denoted $\cst_1 \cdot \cst_2$,  is given by 
$(s_1, h_1)\cdot (s_2,h_2)\defeq (s_1 \cup s_2,  h_1 \uplus h_2)$  for
$\cst_1 = (s_1, h_1)$ and $\cst_2 = (s_2, h_2)$,
which is only defined when 
variable stores are equal on their intersection
and the~heap~composition~is~defined.

\subparagraph*{Operational Semantics.}

We use a standard expression evaluation function, $\esem{\pexp}{\sto}$,
which evaluates an expression $\pexp$ with respect to a
store $\sto$, assuming that expressions do not affect the heap. 
It results in 
 either a value or a dedicated symbol $\undefd \notin \vals$, denoting, an
 evaluation error, such as a variable not being in the store or a mathematical error.
  The operational semantics of commands is a big-step semantics using  judgements of the form
\(
  \st, \scmd \baction_{\fictx} \outcome : \st'
\)
which reads ``the execution of command $\scmd$ in state
$\st$ and function implementation context $\gamma$ results in a 
state $\st'$ with outcome $o$'', where $o ::= \oxok \mid \oxerr \mid
\oxm$ denotes, respectively, a successful execution, a language error, 
and a missing-resource error  due to the absence of a required 
cell in the {partial} heap. The separation of the missing-resource
errors from the language errors is important for compositional
reasoning, since the language satisfies  both  the standard OX and
UX frame properties when the outcome is not missing. 
The semantics is standard and given in full 
\ifarxiv
in App.~\ref{app:concrete}, along with the frame properties it satisfies.
\else
in~\cite{arxiv}, along with the frame properties it satisfies.
\fi


\section{Compositional Symbolic Execution: Core Engine}%
\label{sec:language-symbolic}

\newcommand{\unmod}{\texttt{unmod}}

We present our CSE engine  in two
stages. In this section, we present the   core CSE engine, given 
by a standard compositional symbolic operational semantics
presented here to establish notation and introduce key concepts to the
non-specialist reader: the
definitions are similar to those for
whole-program symbolic execution; the difference is with the use of
partial state which has the effect that we have the  distinction between
language errors and missing-resource errors. 
In \S\ref{ref:compositional-rules}, 
we extend the core engine with our semantic rules for function calls and 
folding/unfolding  predicates,  using an axiomatic description of
 the consume and produce operations given in \S\ref{sec:axfun}. 


\subsection{Symbolic  States}\label{ssec:sstates}
 Let $\svars$ be a set of symbolic variables, disjoint from
 the set of program variables, $\PVar$, and values, $\vals$. 
Symbolic values are~defined~as~follows:
\[
    \begin{array}{l} 
    \sval \in \SVal  ::=  \gv \mid  \svar{x}  \mid \sval + \sval
                               \mid \ldots \mid \sval
                               = \sval \mid \neg \ \sval \mid  \sval
                               \wedge \sval 
                               \mid \sval \in \tau \\
  \end{array}
\]
A {symbolic store}, $\ssto : \PVar
  \rightharpoonup_{\mathit{fin}} \SVal$, is a  function 
  from program variables to symbolic values.
  A partial 
{symbolic heap}, $\smem : \SVal 
\rightharpoonup_{\mathit{fin}} (\SVal \uplus \{ \cfreed\})$, is a function from symbolic values  to symbolic values extended with
 $\cfreed$.
A~{path condition},~$\spc \in \SVal$, is a symbolic Boolean
expression that captures constraints imposed on symbolic variables 
during execution.
A (partial) {symbolic state} is a triple of the form $\sst = (\ssto,
\smem, \spc)$.
Throughout the paper, we only work with well-formed states $\sst$,
denoted $\sinv(\sst)$, the definition is uninformative (it ensures, e.g., that the addresses of the symbolic heap are interpreted as natural numbers), 
\ifarxiv
see App.~\ref{app:symbolic-soundness}.
\else
see~\cite{arxiv}.
\fi
We write 
 $\sst.\fieldsst{pc}$ and
$\sst[\sstupdate{pc}{\spc'}]$ to denote, respectively, access and update the state
path condition.
We write 
$\svs{\sf{X}}$ to denote  the set of symbolic
variables of a given construct~$\sf{X}$: e.g., 
$\svs{\ssto}$ for symbolic stores, $\svs{\smem}$
for symbolic~heaps,~etc.

\subparagraph*{Symbolic Interpretations.} A symbolic interpretation, $\vint : \svars \rightharpoonup_{\mathit{fin}} \Val$ maps symbolic variables to concrete values, and is used to define the meaning of symbolic states and state the soundness results of the engine. We lift interpretations to symbolic values, $\vint : \SVal \rightharpoonup_{\mathit{fin}} \Val$, with the property that it is undefined if the resulting concrete evaluation faults. Satisfiability of symbolic values is defined as usual, i.e., $\sat(\spc) \defeq \exists \vint.~\vint(\spc) = \true$. We further lift symbolic interpretations to stores, heaps, and states, overloading the $\vint$ notation.
\ifarxiv
~(see~App.~\ref{app:symbolic-soundness}).
\else
\fi

\subsection{Core Engine} The  symbolic expression evaluation
relation, $\cseeval{\pexp}{\ssto}{\spc}{\sym{w}}{\spc'}$, evaluates a program  expression $\pexp$ with respect to a symbolic store $\ssto$
and path condition $\spc$. It results in either a symbolic value or an
evaluation error, $\sym{w} \defeq \hat v \mid \undefd$, and
a satisfiable path condition $\pc' \Rightarrow \pc$, which may contain
additional constraints arising from the evaluation (e.g., to prevent branching on division by zero).
    The core CSE semantics is described using the usual single-trace semantic
    judgement (below, left) which is used to state UX properties. It
    also  induces
    the collecting semantic judgement (below, right), which is used to
    state OX properties. 
\[
\sst, \scmd \baction_\fictx \outcome : \sst' \qquad\qquad \csesemtranscollect{\sst}{\cmd}{\hat\Sigma'}{\fictx} \iff \hat\Sigma' = \{ (\result, \sst') \mid
\csesemtransabstract{\sst}{\cmd}{\sst'}{\fictx}{\result}\}
\]
We give  the lookup rules for illustration in Fig.~\ref{fig:sexrules}:
for example,  the rule \textsc{Lookup} branches over all possible addresses in the heap that can match the given address.
\ifarxiv
All the  rules of our core CSE semantics  are given in
App.~\ref{app:symbolic}.
\else
\fi

\begin{figure}[!t]
\footnotesize
\begin{mathpar}
\inferrule[\textsc{Lookup}]
 {\cseeval{\pexp}{\ssto}{\spc}{\sval}{\spc'} \quad \smem(\sloc) = \sval_m \\\\ \spc'' \defeq (\sloc = \sval) \land \spc' \quad \sat(\spc'')}
 {\csesemtrans{\ssto, \smem,\spc}{\pderef{\pvar{x}}{\pexp}}{ \ssto[\pvar{x} \mapsto \sval_m], \smem, \spc''}{\fsctx}{\osucc}}
\and
\inferrule[\textsc{Lookup-Err-Val}]
  {\cseeval{\pexp}{\ssto}{\spc}{\undefd}{\spc'}\\\\
   \sverr \defeq [\mathstr{\mathsf{ExprEval}}, \stringify{\pexp}] }
  {\csesemtrans{\ssto, \smem,  \spc}{\pderef{\pvar{x}}{\pexp}}{\ssto_{\oerr}, \smem, \spc'}{\fsctx}{\oerr}}
\and
%
\and
\inferrule[\textsc{Lookup-Err-Missing}]
 {\cseeval{\pexp}{\ssto}{\spc}{\sval}{\spc'} \quad \sat(\spc' \land \sval \in \nats \land \sval \not\in \domain(\smem)) \quad \sverr \defeq [\mathstr{\mathsf{MissingCell}}, \stringify{\pexp}, \sval]}
 {\csesemtrans{\ssto, \smem, \spc}{\pderef{\pvar{x}}{\pexp}}{\ssto_\oerr, \smem, \spc' \land \sval \in \nats \land \sval \not\in \domain(\smem)}{\fsctx}{\omiss}}
\end{mathpar}
\vspace*{-0.3cm}
\caption{Excerpt of symbolic execution rules, where $\ssto_\oerr = \ssto[\pvar{err} \mapsto \verr]$}%
\label{fig:sexrules}
\vspace*{-0.3cm}
\end{figure}

Our CSE semantics is both OX- and UX-sound, which we call exact: OX soundness captures that no paths are dropped by stating that the symbolic semantics includes all behaviour w.r.t. the concrete semantics; UX soundness captures that no information is dropped by stating that the symbolic semantics does not add behaviour w.r.t.~the~concrete~semantics.

\begin{theorem}[OX and UX soundness]\label{thm:ux-ox-sound}
\[
\begin{array}{l@{\qquad}r@{\ \implies\ }l}
\mbox{(\macOX)} &\csesemtranscollect{\sst}{\cmd}{\hat\Sigma'}{\fictx} \land
\vint(\sst),\cmd\baction_\fictx \result : \st'
& \exsts{\sst', \vint' \ge \vint} { (\result, \sst')} \in \hat\Sigma' \land \st' = \vint'(\sst') \\[2mm]
\mbox{(\macUX)}&\csesemtransabstract{\sst}{\cmd}{\sst'}{\fictx}{\result} \land
\sint(\sst') = \st' &
\sint(\sst),\cmd\baction_\fictx \result : \st'
\end{array}
\]
where $\vint' \ge \vint$ denotes that $\vint'$ extends $\vint$, i.e., $\vint'(\svar{x}) = \vint(\svar{x})$ for all $\svar{x} \in \domain(\vint)$.
\end{theorem}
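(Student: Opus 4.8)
The statement splits into two independent soundness claims, OX and UX, and for each the natural strategy is induction on the structure of the symbolic execution derivation $\csesemtransabstract{\sst}{\cmd}{\sst'}{\fictx}{\result}$ (equivalently, on the derivation underlying $\hat\Sigma'$ for the collecting judgement). Since this theorem concerns only the core engine — no function calls, no consume/produce, no fold/unfold — the case analysis is over the command-level rules only: $\pskip$, the two assignment forms, $\perror$, heap lookup, mutation, allocation, deallocation, sequencing, and the conditional. Function calls are deferred to Thm.~\ref{thm:ux-ox-sound-func}, so I may ignore that rule here. The proof is almost entirely routine; the point is to set up the right auxiliary lemma about symbolic expression evaluation and then crank the induction.

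\textbf{The expression-evaluation lemma.} Before the main induction I would establish the analogous exactness property for the symbolic expression evaluation relation $\cseeval{\pexp}{\ssto}{\spc}{\sym{w}}{\spc'}$ against the concrete evaluator $\esem{\pexp}{\sto}$: (OX) if $\vint$ satisfies $\spc$ and $\esem{\pexp}{\vint(\ssto)} = w$, then there is a branch $\cseeval{\pexp}{\ssto}{\spc}{\sval}{\spc'}$ with $\vint$ still satisfying $\spc'$ and $\vint(\sval) = w$ (matching error outcomes likewise); and (UX) every symbolic branch $\cseeval{\pexp}{\ssto}{\spc}{\sval}{\spc'}$ with $\vint$ satisfying $\spc'$ yields $\esem{\pexp}{\vint(\ssto)} = \vint(\sval)$, using the constraint $\pc' \Rightarrow \pc$ and that interpretations are undefined exactly when concrete evaluation faults. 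This lemma is itself a straightforward structural induction on $\pexp$, handling the branching operators (division, etc.) using the extra constraints added to $\spc'$. It is the workhorse that discharges the evaluation sub-goals inside every command rule.

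\textbf{The command induction.} For OX, fix a concrete execution $\vint(\sst),\cmd \baction_\fictx \result : \st'$. By inversion on the concrete semantics I know which rule fired; I match it with the corresponding symbolic rule(s) and show one of the resulting symbolic branches has a satisfiable path condition extending $\vint$ (extended, if needed, to a $\vint' \ge \vint$ that interprets freshly allocated symbolic variables — this is where $\vint' \ge \vint$ is genuinely used, namely in the $\prandom$ and allocation rules) and whose state interprets to $\st'$. For the key cases: \textsc{Lookup} branches over all $\sloc \in \domain(\smem)$, and since $\vint(\smem)$ is a concrete heap containing the looked-up address, one of those branches has $\vint(\sloc) = $ the concrete address and is satisfiable; the missing-resource case \textsc{Lookup-Err-Missing} similarly matches the concrete missing outcome because $\vint(\sval) \notin \domain(\vint(\smem))$ makes its path condition satisfiable. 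The conditional splits into the two guard branches and the guard-type-error branch, again covered by the expression lemma. For UX, the direction is reversed and genuinely easier: given a symbolic branch and a $\vint$ satisfying $\sst'.\fieldsst{pc}$ with $\vint(\sst') = \st'$, peel off the last symbolic rule, note that $\sst'.\fieldsst{pc} \Rightarrow \sst.\fieldsst{pc}$ so $\vint$ also satisfies the pre-state, apply the UX half of the expression lemma and the induction hypothesis to the sub-derivations, and reassemble the concrete step with the matching concrete rule. Sequencing is the one compositional case: for OX, chain the two IHs, carrying the extension $\vint'$ through the second; for UX, the intermediate symbolic state's path condition is implied by the final one (path conditions only grow), so the same $\vint$ works for both halves.

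\textbf{Main obstacle.} Nothing here is deep; the friction is entirely bookkeeping. The part most likely to need care is allocation (and $\prandom$): the symbolic rule introduces a fresh symbolic variable for the new cell/value and the symbolic heap gains a new location, so in the OX direction I must explicitly produce the extended interpretation $\vint'$ and check that well-formedness $\sinv$ and disjointness of the symbolic heap are preserved; in the UX direction I must check the re-assembled concrete heap is still a valid partial heap (disjoint union defined). Getting the freshness side-conditions and the $\sinv$ invariant to line up across the induction — and making sure the ``does not drop paths'' (OX) versus ``does not add behaviour'' (UX) phrasing is respected rule-by-rule — is the only place where one can slip.
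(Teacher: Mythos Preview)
Your proposal is correct and is essentially the paper's own approach: an auxiliary OX/UX soundness lemma for symbolic expression evaluation (the paper states these as separate lemmas, one per direction), followed by a rule-by-rule induction on the derivation, of which the paper also only writes out a few representative cases (\textsc{Mutate} for OX; \textsc{Mutate}, \textsc{Free}, \textsc{Seq}, and one error rule for UX).

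One factual slip worth correcting: you claim function calls may be ignored here because they are deferred to the compositional-engine theorem, but the core-engine judgement is indexed by the \emph{implementation} context $\fictx$ and does carry a function-call rule, namely call-by-inlining (\textsc{Fcall-Inline} and its error variants), which recursively executes the callee body under a fresh store. What Thm.~\ref{thm:ux-ox-sound-func} handles is the call-by-\emph{specification} rule, indexed by $\fsctx$. This does not harm your plan---the inlining case is structurally \textsc{Seq} plus two applications of the expression lemma (for the argument vector and the return expression), and in the OX direction one simply threads the extended $\vint'$ through the recursive call just as you describe for sequencing---but it is a case you must include in the induction, not one you may omit.
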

\noindent
\ifarxiv
The proofs are standard, 
with several illustrative cases given in App.~\ref{app:symbolic-soundness}.
\else
\fi


\section{Compositional Symbolic Execution: Full Engine}%
\label{sec:function-interface}

Our core CSE engine  is limited in
that it does not call function
specifications written in a program logic,  and it cannot fold
and unfold user-defined predicates to verify, e.g., specifications
of list algorithms. What is missing is a general description of 
how to update  symbolic state using assertions from the function
specifications and predicate definitions.  In  VeriFast, Viper
and Gillian, this symbolic-state update is given by their   implementations
of the  {consume} and produce operations. 
We instead  give  an
\emph{axiomatic interface} for describing  such 
symbolic-state update by 
providing a general  characterisation
of these consume and produce operations
(\S\ref{sec:axfun}). Using this interface, we are able to give general definitions of 
the function-call rule (\S\ref{ref:compositional-rules}) and
the folding and unfolding of predicates that are independent of the underlying tool implementation.
Assuming the appropriate properties stated in the
axiomatic interface, we prove that the resulting CSE engine is
either 
OX sound or UX sound. Moreover, because the axiomatic interface relates the behaviour of the consume and produce operations to the standard satisfaction relation of SL and ISL, our function-call rule is able to use any function specification proven correct with respect to the standard function specification validity of SL and ISL, including functions specification proven outside our engine. 
In the next section  (\S\ref{sec:function-implementation}),
we demonstrate that the Gillian
implementation of the consume and produce operations are correct with
respect to our axiomatic
interface.\footnote{To our best understanding, there is a large overlap
  between Gillian's consume and produce operations and those  of Viper and VeriFast. We therefore expect them to also satisfy the OX properties of our interface (we have however not proven this fact).}

\subsection{Assertions and Extended Symbolic States}%
\label{sub:funspec}

We present assertions  suitable for both SL and ISL reasoning, and also extend our
symbolic states to account for predicates. 
It is  helpful to make a clear
distinction between the logical 
assertions and symbolic states: since we
  work with the linear heap, the gap between assertions and
  symbolic states is quite small; with more complex memory models and 
  optimised symbolic representations of memory, the gap is larger and
 this distinction becomes essential.

\subparagraph*{Assertions.}  
Let $x, y, z \in \LVar$ denote logical variables, distinct from
program  and symbolic variables. The set of logical
expressions, $E \in \LExp$, extends program expressions to include
logical variables. We work with the following set of assertions (other assertions are derivable):
\[\!
  \begin{array}{r@{~}c@{~}l}
   \pc \in \PAssert  &\defeq  & \lexp \mid \lexp \in \tau \mid \lnot \pc \mid \dots \mid \pc_1 \land \pc_2 \\
   P \in \Assert & \defeq &
   {{\pc}} \mid \AssFalse\!\mid\!{ P_1 \Rightarrow   P_2} \mid P_1 \vee
                            P_2 \mid \exists x \ldotp P \mid \\
                            && \emp \mid \lexp_1 \mapsto \lexp_2 \mid \lexp \mapsto \cfreed \mid  P_1 \lstar P_2 \mid \Predd{\vec{\lexp_1}}{\vec{\lexp_2}}
   \end{array}
 \]
where $\lexp, \lexp_1, \lexp_2  \in \LExp$, $\lvar{x} \in \LVar$, and $\pred \in \strings$.
The assertions should by now be familiar from separation logic. They
comprise the lift of the usual first-order Boolean
assertions $\pi$, assertions built from the usual first-order
connectives and quantifiers,  and assertions well-known from separation logic:
the empty assertion $\emp$, the cell assertion  $\lexp_1 \mapsto
\lexp_2$ describing a heap cell at an address given by $\lexp_1$ with
value given by $\lexp_2$, the less well-known assertion
$\lexp \mapsto \cfreed $ describing a heap cell at address $\lexp$
that has been freed, the separating conjunction $P_1 \lstar P_2$, and
the predicate assertions $\Predd{\vec{\lexp_1}}{\vec{\lexp_2}}$. 
The parameters of predicate assertions
$\Predd{\vec{\lexp_1}}{\vec{\lexp_2}}$ are separated into
in-parameters $\vec{\lexp_1}$ (``ins'') and out-parameters
$\vec{\lexp_2}$ (``outs'') for automation purposes, as we
discuss  in \S\ref{sec:function-implementation}; this separation does
not affect the logical meaning of the predicate assertions. We write $\lv{\mathsf{X}}$ to denote free
logical variables of a construct~$\mathsf{X}$: e.g., $\lv{\lexp}$ for
logical expressions, $\lv{P}$ for assertions,~etc.
We say that an
assertion $P$ is {\it simple} if it does not syntactically feature the
separating conjunction; simple assertions are used in the definition of matching plans (\S\ref{sec:macMP}).

\subparagraph*{Predicates.} Predicate definitions are given by a set
$\preds$ containing elements of type  $\strings \times \vec{\LVar}
\times \vec{\LVar} \times \Assert$,  with the notation
$\Predd{\predin}{\predout}~\{ P \} \in \preds$, where the string
$\pred$ denotes  the predicate name, the lists of disjoint parameters  $\predin, \predout$  denote the predicate ins and outs
respectively, and the assertion $P= \bigvee_i (\exists \vec{y}_i.~P_i)$ denotes the predicate body, which does not contain program variables and whose free logical variables are contained in $\{\predin\} \cup \{\predout\}$ which are disjoint from the bound variables $\vec{y}_i$, and where the $P_i$'s (denoting the assertions of the predicate definition) do not contain disjunctions or existential quantifiers.

\ifarxiv
For example, in the standard ${\sf list}$ predicate
\[
\begin{array}{r@{~}c@{~}l}
\llist{x; \lstvs} & \defeq & (x\doteq\nil \lstar \lstvs \doteq \emplist)~\lor \\
                  && (\exsts{v, x', \lstvs'} x \mapsto v, x' \lstar\llist{x'; \lstvs'} \lstar \lstvs \doteq v \cons \lstvs') 
\end{array}
\]
the logical variable $x$ is an in-parameter and the list $\lstvs$ is an out-parameter. The logical variables $v,x',\lstvs'$ are bound in the right disjunct of the predicate body.
\else
\fi

\subparagraph*{Satisfiability.} The meaning of an assertion $P$  is defined by capturing the models of $P$ using the standard satisfaction relation $\subst, \st \models P$ where 
 $\subst : \LVar \rightharpoonup_{\mathtt{fin}} \Val$ is a logical interpretation represented by a function
 from logical variables to values and $\st$ is a program state (as
 defined in \S\ref{sec:language-concrete}). 
\ifarxiv
This satisfaction relation, given partially below,
 is the  standard relation underpinning all separation logics;
its full definition is given  in
App.~\ref{app:symbolic-soundness-func}.
\[
\begin{array}{l@{\hspace{1cm}}l}
 \subst, (\sto, \hp)  \models  & \\
\begin{array}{rcl}
 \quad {{\pc}} &\Leftrightarrow & \esem{{\pc}}{\subst,\sto} = \true \wedge \hp = \emptyset \\
  \quad P_1 \lor P_2 &\Leftrightarrow& \subst, (\sto, \hp)  \models P_1 \lor  \subst, (\sto, \hp) \models P_2\\
 \quad {\lexp_1 \mapsto \lexp_2} &\hspace{0.05cm}\Leftrightarrow& \hp = \{ \esem{\lexp_1}{\subst, \sto} \mapsto \esem{\lexp_2}{\subst, \sto}\} \\
 \quad P_1 \lstar P_2 &\Leftrightarrow& \exists \hp_1, \hp_2 \ldotp \hp = \hp_1 \uplus \hp_2 \land \subst, (\sto, \hp_1) \models P_1 \wedge \subst, (\sto, \hp_2) \models P_2 
\end{array}
\end{array}
\]
\else
The formal definition is included in the extended version of this paper~\cite{arxiv}.
\fi


%

\subparagraph*{Function Specifications.}
The quadruples $\quadruple{\vec{\pvar{x}} = \vec{x} \lstar
  P}{\fid(\vec{\pvar{x}})}{\Qok}{\Qerr}$ 
 and $\islquadruple{\vec{\pvar{x}} =
  \vec{x} \lstar P}{\fid(\vec{\pvar{x}})}{\Qok}{\Qerr}$ denote, respectively,  a
SL and an ISL function specification, as explained in \S\ref{ki:cse}.  We write $\genquadruple{\vec{\pvar{x}} = \vec{x} \lstar
  P}{\fid(\vec{\pvar{x}})}{\Qok}{\Qerr}$ to refer to either. Both
quadruples record successful executions and language errors. They
are unable to record missing-resource errors, as these errors do not satisfy the
OX and UX frame properties. Missing errors can be removed automatically via UX bi-abduction (see \S\ref{sec:bi-abduction}).

Formally, we define function specifications using internalisation~\cite{esl}. In short, internalisation relates internal specifications, which describe the internal behaviour of functions, to external specifications, which describe the external behaviour of functions. Internalisation is needed for ISL to allow the logic to drop information about function-local program variables at function boundaries, since dropping information is in general not allowed in ISL. The full definitions of function specifications and internalisation are included in 
\ifarxiv
App.~\ref{app:symbolic-soundness-func}.
\else
\cite{arxiv}.
\fi

A function specification context, $\fsctx \in \fids \rightharpoonup_{\mathit{fin}} \mathcal{P} (\especs)$, maps function identifiers to a finite set of external specifications $\especs$. To simplify the presentation of the paper, we assume existential quantifiers only occur at the top level of external specifications. We denote the validity of $\fsctx$ with respect to $\fictx$ by $\models (\fictx, \fsctx)$, and validity of a function specification with respect to $\fsctx$ by $\fsctx \models \genquadruple{\vec{\pvar{x}} = \vec{x} \lstar P}{\fid(\vec{\pvar{x}})}{\Qok}{\Qerr}$.

\subparagraph*{Extended Symbolic States.}

To reason about unbounded execution to verify, for example,
specifications of  list algorithms, 
we extend the partial symbolic states defined in \S\ref{ssec:sstates} with symbolic predicates of
the form $\Predd{\vec{\sval}_1}{\vec{\sval}_2}$, with $\pred \in
\strings$ and $\vec{\sval}_1, \vec{\sval}_2 \in \vec{\SVal}$. An
extended symbolic state $\sst$ is a tuple $(\ssto, \smemb,\spc)$
comprising 
a partial symbolic state $\ssto$, a symbolic resource 
$\smemb=(\smem,\sps)$ with   symbolic
heap  $\smem$ and  multiset of symbolic predicates $\sps$, 
and a symbolic path condition $\spc$. 
\ifarxiv
We provide the extended definitions of well-formedness of
symbolic state and symbolic interpretations in
App.~\ref{app:symbolic-soundness-func}.
\else
Definitions of well-formedness of symbolic state and symbolic interpretations are extended as expected.
\fi
 We define 
$\sint, \st \smodels (\ssto, \smemb, \spc)$  analogously to assertion
satisfaction since the interpretation of a symbolic state with respect
to symbolic interpretation $\vint : \svars
\rightharpoonup_{\mathit{fin}} \Val$ is a relation and not a function (cf. \S\ref{sec:language-symbolic}) due to the presence of symbolic predicates.

The composition of two extended symbolic states is defined by:
\[
(\ssto_1, \smemb_1, \spc_1) \cdot (\ssto_2, \smemb_2,  \spc_2) \defeq (\ssto_1 \cup \ssto_2, \smemb_1 \cup \smemb_2, \spc_1 \wedge \spc_2 \land \sinvc(\smemb_1 \cup \smemb_2))
\]
where $\smemb_1 \cup \smemb_2$ denotes the pairwise union of the
components of the symbolic resource and  $\sinvc(\smemb_1 \cup
  \smemb_2)$ ensures that the composition is well-formed.
\ifarxiv
  (as defined in App.~\ref{appssec:extsstates}).
\else
\fi

\subsection{Axiomatic Interface for Consume and Produce}%
\label{sec:axfun}

We present our axiomatic interface for the consume and produce
operations, used to update the symbolic state during function call and
to fold and unfold  
the predicates. 
Given the  substitution $\ssubst :
\LVar \rightharpoonup_{\mathtt{fin}} \SVal$, 
 the \mac and \produce ~operations have the signatures:
\[
  \mac(m, P, \ssubst, \sst) \rightsquigarrow (\ssubst', \sst_f) \mid
  \oxabort(\sval)
  \qquad\qquad
  \produce(P, \ssubst, \sst) \rightsquigarrow \sst'
\]
Recall the use of the consume and produce operations in the function call illustrated in 
Fig.~\ref{fig:ki-interface} of  \S\ref{ki:cse}. For  $\mac(m, P, \ssubst, \sst) $,
the initial
substitution $\ssubst$ comes from replacing the function parameters
with symbolic values given by the arguments in the function call;
consume matches the precondition $P$ and  substitution $\ssubst$
against part of $\sst$,
removing the appropriate resource $\sst_P$ and returning the
frame $\sst_f$ and the substitution $\ssubst'$ which extends $\ssubst$ 
with further information given by the match. For 
  $\produce(Q_{ok}, \ssubst', \sst_f)$, 
the
produce takes the postcondition $Q_{ok}$ and this resulting
substitution
$\ssubst'$ and  creates a symbolic state
which is composed with  $\sst_f$ to obtain $\sst'$. Notice that the
consume operation can abort with error information if no match is found. The
$\produce$ operation does not abort, but it may render states
unsatisfiable, in which case the branch is cut.

\begin{figure}[!t]
\begin{minipage}{14cm}
\begin{mdframed}[innerrightmargin=0.1cm, rightmargin=0,innerleftmargin=4,linewidth=.4pt]

  \small

We assume the following holds initially: state $\sst$
and  substitution $\ssubst$  are well-formed;  and  $\ssubst$ covers~$P$ for \produce, that is, $\lv{P} \subseteq \dom(\ssubst)$. For properties 1--4 below, consider~the~following~executions:
\[
\begin{array}{l@{\hspace{3cm}}l}
\mac(m, P, \ssubst, \sst) \rightsquigarrow (\ssubst', \sst_f)  & \textcolor{gray}{where  \quad \sst=(\ssto, \smemb, \spc)~and~\sst_f=(\ssto', \smemb_f, \spc')}\\
\produce(P, \ssubst, \sst) \rightsquigarrow \sst' & \textcolor{gray}{where \quad\sst=(\ssto, \smemb, \spc)~and~\sst'=(\ssto', \smemb', \spc')}
\end{array}
\]

\setcounter{property}{0}
\begin{property}[{Well-formedness}]\label{prop:wf} The variable store is not altered: that is, $\ssto' = \ssto$ and 

\begin{description}[font=\normalfont\textit]
\item[(\mac)] $\sinv(\sst_f)$ and  $\sinv(\ssubst', \spc')$ \qquad \qquad \textcolor{darkgray}{$(\produce)$}  $\sinv(\sst')$
\end{description}
\end{property}

\begin{property}[Path Strengthening]\label{prop:path_strength}
$\spc' \Rightarrow \spc$
\end{property}


\begin{property}[Consume Covers $P$]\label{prop:coverage} $\ssubst' \geq
\ssubst $ and $\dom(\ssubst') \supseteq \lv{P}$
\end{property}

\begin{property}[Soundness] \label{prop:soundness} \quad 

\begin{description}[font=\normalfont\textit]
\item[(\mac)] \(
\exists \smemb_P.~\smemb = \smemb_f \cup \smemb_P 
\land (\forall \vint, \st.~\sint, \st \smodels \sst_P \implies \sint(\sym\theta'), \st \models P)
\)
 \ \textcolor{gray}{where $\sst_P \defeq
   (\emptyset,\smemb_P,\spc')$}\footnote{We choose the empty symbolic
   store for $\sst_P$; $P$ does not have program variables so this
   choice is arbitrary. The symbolic state $\sst_P$ is  the one used in Prop.~\ref{prop:mac-ux-comp}.}
\item[(\produce)]  ~\(
\exists \smemb_P.~\smemb' = \smemb \cup \smemb_P 
\land (\forall \vint, \st.~\sint, \st \smodels \sst_P \implies \sint(\sym\theta), \st \models P)\)  ~\kern 0.5em \textcolor{gray}{where  \ $\sst_P \defeq (\emptyset,\smemb_P,\spc')$}
\end{description}
\end{property}

\begin{property}[Completeness: \macOX consume]\label{prop:mac-branch-comp}
 If \(  \oxabort \not\in \mac(\macEX, P, \ssubst, \sst) \) and \(\vint, \st \smodels \sst\), then
\[
  \begin{array}{l}
   \exists \ssubst', \st_f, \sst_f.~\mac(\macEX, P, \ssubst, \sst) \rightsquigarrow (\ssubst', \sst_f) \land
  \vint, \st_f \smodels \sst_f
  \end{array}
  \]
\end{property}
\begin{property}[Completeness: \macUX consume]\label{prop:mac-ux-comp}
If \(\mac(\macUX, P, \ssubst, \sst)\rightsquigarrow (\ssubst', \sst_f) \), \(\sint(\spc')=\true\) and \(\sint(\ssubst'), (\emptyset ,h_P)\models P \),~then
\[
\begin{array}{@{}l}
 \sint, (\emptyset, \hp_P) \smodels \sst_P \gand 
(\forall \hp_f.~\sint, (\sto, \cmem_f) \smodels \sst_f \gand \dish{\hp_P}{\cmem_f} \implies \sint, (\sto, \hp_f \uplus \hp_P) \smodels \sst)
\end{array}
\]
\end{property}
\begin{property}[Completeness: produce]\label{prop:prod-compl}
 If \(\sint, (\sto, \cmem) \smodels\sst  \) and \( \sint(\ssubst), (\emptyset ,h_P)\models P \) and \( \dish{\cmem}{\hp_P} \), then
\[
\begin{array}{l} 
\exists \sst_P.~\produce(P, \ssubst, \sst) \rightsquigarrow \sst \cdot \sst_P \gand \sint, (\emptyset, \hp_P) \smodels \sst_P
\end{array}
\]
\end{property}
\end{mdframed}
\end{minipage}
\caption{The axiomatic interface for the \mac and \produce~operations}\label{fig:interface}
\end{figure}

In Fig.~\ref{fig:interface}, we present the 
axiomatic  interface  of the consume and
produce operations, identifying  sufficient properties 
to prove OX and UX soundness  for 
the function-call rule and the folding and unfolding of predicates, as
we demonstrate in
the next section (\S\ref{ref:compositional-rules}). 
Properties 1--3 ensure that the operations are compatible with the
expected 
properties of symbolic execution, including well-formedness
$\sinv(\ssubst', \spc')$ of the symbolic substitutions with respect to
path conditions.  This property  guarantees that the $\spc'$
implies that $\ssubst'$ does not map logical variables into $\undefd$,
that is,  $\spc'\vDash\codom(\ssubst')\subseteq \Val$.
\ifarxiv
 (the formal definition is in App.~\ref{appssec:extsstates}).
\else
\fi

 Properties 4--7 give conditions
for \mac and \produce to soundly decompose and compose symbolic states
respectively, while being compatible with symbolic and logical
interpretations. These properties will come as no surprise to those
with a formal 
knowledge of symbolic execution.
However, their identification was not easy, requiring a considerable amount of
back and forth between the soundness proof and the properties to
pin them down properly. 
We  describe the more interesting of the properties described in Fig.~\ref{fig:interface}:

\begin{itemize}[leftmargin=*]
\item  Prop.~\ref{prop:path_strength} states that path conditions may
  only get strengthened: 
    OX  and UX \mac  may strengthen $\spc$ to due to branching; additionally UX \mac
    may strengthen $\spc$ arbitrarily due to cutting; and \produce may add constraints to $\spc$ arising from~$P$.

\item Prop.~\ref{prop:soundness}, for consume (and similarly for produce), states that the operation is sound: the symbolic resource of $\sst$ can be decomposed as $\smemb = \smemb_f \cup \smemb_P$, i.e., it consists of the symbolic resources of $\sst_P$ and $\sst_f$, respectively, and $\forall \vint, \st.~\sint, \st \smodels \sst_P \implies \sint(\sym\theta''), \st \models P$ states that  all models of $\sst_P$ are models of $P$. 

\item Prop.~\ref{prop:mac-branch-comp} captures that successful OX consumptions do not drop paths: if no branch aborts, and we have a model $\sint, \st \models \sst$, then there exists a branch $\sst_f$ with a model using the same $\sint$, i.e., there exist $\st_f$ such that $\sint, \st_f \models \sst_f$. 
\ifarxiv
(Specifically, $\sint(\sst.\fieldsst{pc}) = \true$ implies $\sint((\sst_f).\fieldsst{pc}) = \true$.)
\else
\fi
\item Prop.~\ref{prop:mac-ux-comp} is as follows: in successful UX \mac, any model of the consumed assertion $P$ must also model the consumed state $\sst_P$ (obtained from Prop.~\ref{prop:soundness}), and when extended with a compatible model of the output state $\sst_f$ it must model the input state $\sst$.

\end{itemize}

Assuming these properties of the consume and produce operations, we are
able to  prove that the function-call rule and
the predicate folding and unfolding are sound, and thus that our whole
CSE engine is sound (Thm.~\ref{thm:ux-ox-sound-func}).
In \S\ref{ref:compositional-rules}, we give an example (Ex. \ref{ex:fcall-ux}) illustrating some of the properties in action during a function call. 
\ifarxiv
In addition,  in App~\ref{app:rules-consume}, there is an example that demonstrates the satisfaction relation described by Props.~\ref{prop:soundness} and \ref{prop:mac-ux-comp} for consume in UX mode  (see  Ex.~\ref{ex:illustrate_props}).
\else
\fi


 

\subsection{Full CSE Engine}%
\label{ref:compositional-rules}

We introduce our full CSE engine, 
extending the  core CSE engine (\S\ref{sec:language-symbolic}) with
the ability to soundly call  valid SL and ISL function specifications (\S\ref{sub:funspec}), 
and to fold/unfold predicates. 
We extend and adapt
the compositional symbolic operational semantics to
carry a specification context $\fsctx$ and the mode of execution $m$
(OX or UX), obtaining the judgement
$\csesemtransabstractm{\sst}{\cmd}{\sst'}{\fsctx}{m}{\result}$, and
extend the possible outcomes with $\oxabort$, as \mac can abort. The
rules are analogous except for the rules for function calls and predicate folding/unfolding, as detailed below.\footnote{The satisfiability check $\sat(\spc)$ used by the rules over-approximates the existence of a model for $(\ssto, \smemb, \spc)$, due to the presence of symbolic predicates; for sound reasoning in UX mode, our engine addresses this source of over-approximation by under-approximating the satisfiability check once by bounded unfolding of predicates at the~end~of~execution.}

\begin{figure}[!t]
\begin{mathpar}
\footnotesize
\infer
{\csesemtransabstractm{\sst}{\passign{\pvar{y}}{\fid(\vec{\pexp})}}{\sst''[\sstupdate{sto}{\ssto[\pvar{y} \mapsto \svar{r}]}]}{\fsctx}{m}{\osucc}}
{
\begin{array}{l@{\hspace*{-0.2cm}}r}
(1)~\genquadruple{\vec{\pvar{x}} = \vec{x} \lstar P}{\fid(\vec{\pvar{x}})}{\Qok}{\Qerr} \in \fsctx(f)|_m & \text{get function specification} \\
(2)~\cseeval{\vec{\pexp}}{\ssto}{\spc}{\vec{\sval}}{\spc'} \text{ and }\hat{\theta} \defeq \{\vec{\sval}/\vec{x} \}  & \text{evaluate function parameters} \\
(3)~\mac(m, P, \hat{\theta}, \sst[\sstupdate{pc}{\spc'}]) \rightsquigarrow (\hat{\theta}', \sst')& \text{consume pre-condition} \\
\textcolor{gray}{(4)~\Qok = \exists \vec{y}.~\Qok'} & \textcolor{gray}{\text{as $\Qok$ is a post-condition}} \\
\textcolor{gray}{(5)~\ssubst'' \defeq \ssubst{'}\cup\{\vec{\hat{z}}/\vec{y}\}}&\textcolor{gray}{\text{ extend substitution to cover } \Qok}\\
\textcolor{gray}{(6)~\vec{\hat{z}}, r, \hat{r} \text{ fresh}} & \textcolor{gray}{\text{fresh variables}} \\
\textcolor{gray}{(7)~\Qok'' = \Qok'\{r/\pvar{ret}\} \text{~and~} \ssubst''' = \ssubst''\cup\{\hat{r}/r\}}& \textcolor{gray}{\text{set up return value}} \\
(8)~\produce(\Qok'', \ssubst''', \sst') \rightsquigarrow  \sst'' & \text{produce post-condition}
\end{array}
}
\end{mathpar}
\vspace{-0.65cm}
\caption{Unified function-call rule for CSE: success case, where $\sst
  = (\ssto, \smemb, \spc)$}
\label{fig:fcall}
\end{figure}

\subparagraph*{Function-Call Rule.} 
The unified success rule for a function call is in Fig.~\ref{fig:fcall}, using the notation $\fsctx(f)|_m$ to isolate the $m$-mode specifications of $f$. 
 A description of each step is included in the rule itself. In short, given an initial state $\sst$, the rule selects a function specification, consumes the specification pre-condition from $\sst$, resulting in $\sst'$, and produces the post-condition of the specification into $\sst'$, resulting in the final state $\sst''$. The steps in \textcolor{gray}{grey} are uninteresting (about renamings and fresh variables) and can be ignored on~a~first~reading.

\begin{example}\label{ex:fcall-ux}
We show a possible execution of a function call using a function specification, where we assume we have been given \mac and \produce example implementations that satisfy the axiomatic interface. Consider the function ${\sf f}$ given in \S\ref{ki:core} and the ISL specification given in \S\ref{ki:biabduction}: 
$\isltripleok{\pvar c = c \lstar \pvar x = x \lstar P}{\mathtt{f}(\pvar c, \pvar x)}{x \mapsto c \lstar c \ge 42 \lstar \pvar{ret} = v }$
where $P$ is $x \mapsto v$, here assumed to be in the function specification context $\Gamma$. Suppose the symbolic execution is in a state $\sst=(\ssto,\smemb,\spc)$ and that the next step is $\sst, \pvar{y}:={\sf f}(50,1)\Downarrow^{\macUX}_{\Gamma}ok:\sst'$. Let  $\smemb=(\{\svar{x}\mapsto \svar{c},\svar{y}\mapsto1, 3\mapsto 5\},\emptyset)$ be the symbolic resource, and $\spc=\svar{c}\geq 42\land\svar{x}\neq \svar{y}\land \svar{x},\svar{y}\in \Nat$ be the  symbolic path condition  (the symbolic store $\ssto$ is irrelevant to this computation and left opaque). 
We now follow the steps (1) - (8) described in the function-call rule in Fig.~\ref{fig:fcall}.

Step (1) is above. Step (2) evaluates the parameters of the function call  which  in this case yields the initial
substitution is  $\ssubst=\{50/c,1/x\}$. Step (3) is to consume the
pre-condition of ${\sf f}$: $\ssubst$ identifies the logical
variable $x$ with $1$, and thus, this $\ssubst$ maps $P$ into $1\mapsto v$; now we check whether
there exists a resource in $\smemb$ that matches this. There are two
possibilities: either $\svar{x}=1$ and $v=\svar{c}$; or $\svar{y}=1$
and $v=1$. Let us choose the first match. Thus, with our axiomatic description of a consume operation,   $\mac(\macUX, x\mapsto v, \ssubst, \sst)$
gives  the pair $(\ssubst',\sst_f)$, with substitution
$\ssubst'=\{50/c,1/x, \svar{c}/v\}$ and  symbolic frame
$\sst_f=(\ssto,(\{\svar{y}\mapsto 1, 3\mapsto 5\},\emptyset ),
\spc\land \svar{x}=1)$. Here $\ssubst'\geq \ssubst$ as described by
Prop.~\ref{prop:coverage} of Fig.~\ref{fig:interface}, the new path
condition $\spc\land \svar{x}=1$ is stronger than the initial
$\spc$, as required by Prop.~\ref{prop:path_strength}.

Steps (4) - (5)~are straightforward: $\Qok$ is not existentially quantified and the domain of $\ssubst'$ covers $\Qok=x \mapsto c \lstar c \ge 42 \lstar \pvar{ret} = v$.  Steps (6) - (7)  set up the return value by renaming ${\sf ret}$ with a fresh logical variable $r$ as in $\Qok''=\Qok\{r/{\sf ret }\}$ and defining the substitution $\ssubst''=\ssubst'\uplus \{\svar{r}/r\}$, with  $\svar{r}$ a fresh symbolic variable.  Step (8) produces the post-condition
which results in $(\ssto,(\{\svar{y}\mapsto 1, 3\mapsto 5,1\mapsto 50\},\emptyset),\spc')$, for some $\spc'$  that  is satisfiable.
\ifarxiv
 The full computation is in App.~\ref{app:prod_rules}, see Ex.~\ref{ex:fcall-ux-complete}.
\else
\fi
\end{example}

We illustrate the general execution of the function-call rule in Fig.~\ref{fig:branching}. Successful \mac may branch (in the figure: $\sst_{f_1},\ldots, \sst_{f_k}$) due to different ways of matching with the symbolic state $\sst$, and the function call will branch accordingly. In both modes, in each successful branch, say with frame state $\sst_{f_i}$, the function-call rule will call \produce, which will produce both $\Qok$  and $\Qerr$ postconditions of the function specification. The function call propagates errors from \mac, whose error handling can depend on the mode of reasoning. In OX mode, all errors must be reported; the figure shows an example with two $abort$ outcomes, one with a symbolic value $\sst_{miss}$, representing a missing outcome, and another abort $\sst_{\sval}$. In UX mode, in contrast, errors can be cut: e.g., a \mac implementation may choose to report missing errors (to be used in e.g. bi-abduction, see \S\ref{sec:bi-abduction}), but cut other errors, as illustrated in the figure. Lastly, note that \mac implementations must represent missing-resource errors as abort errors. To see why, consider the function $\pfunction{\textsf{do\_nothing}}{}{\pskip; \preturn{\nil}}$ and the (nonsensical but valid) specification $\isltripleok{5 \mapsto 0}{\textsf{do\_nothing}()}{5 \mapsto 0 \lstar \pvar{ret} = \nil}$. Of course, in the concrete semantics, calling the function will never result in a miss. Now, say the symbolic engine calls the function using the provided function specification. If the the resource of the pre-condition is not available in the current symbolic heap, then the consumption of the pre-condition will fail. Because no concrete execution of the function results in a miss, it would be unsound for the consumption to report a missing-resource~error~in~this~case.

\subparagraph*{Predicate Rules.} To handle the folding and unfolding of predicates in symbolic states, we extend the language syntax with the following two ghost commands (also known as tactic commands):
\ifarxiv
\[
\cmd \in \cmds ::= \dots \mid \pfold{p}{\vec{\pexp}} \mid \punfold{p}{\vec{\pexp}} 
\]
\else
\(
\cmd \in \cmds ::= \dots \mid \pfold{p}{\vec{\pexp}} \mid \punfold{p}{\vec{\pexp}} 
\),
\fi
\noindent where $\vec{\pexp} \in \PExp$ specifies the values of the in-parameters of the predicate $p$. In the concrete semantics, these commands are no-ops, as they are ghost commands. The symbolic-semantics rules are similar to the function-call rule: in short, a fold of a predicate consumes the body of the predicate, learns the out-parameters of the predicate, and adds the predicate (with the specified in-parameters and learnt out-parameters) to the symbolic state; and an unfold of a predicate finds a corresponding predicate in the symbolic state, learns the out-parameters of the predicate, and produces the body of the predicate.

\ifarxiv
We give the full symbolic rules in  App.~\ref{ssec:fold_unfold}.
\else
\fi

\begin{figure}[!t]
\tiny
\begin{tikzpicture}[scale=0.6]
\node (fcall) at (-2.5,1) {\tiny$\vec{\pvar{y}}=\fid (\vec{\pvar{E}})$};
\node (sst) at (1,2.5){\footnotesize$\sst$};
\node[yshift=-0.5cm] (spec) [above of=sst] {\tiny$ \quadruple{\vec{\pvar{x}} = \vec{x} \lstar P}{\fid(\vec{\pvar{x}})}{\Qok}{\Qerr}$};
  \node (root) at (1,2) {\tiny$\mac(\macOX,P,\ssubst,\sst)$};
 \node (frame1)   at (-1.5,0)       {\tiny$\sst_{f_1}$};
 \node[xshift=-.1cm] (sproderr)  [below of= frame1] {\tiny$\oxerr:\sst''$};
  \node (sprodok)  [left of= sproderr] {\tiny$\osucc:\sst'$};
  \node (frame4) at (0.5,0) {\tiny$\sst_{f_k}$};
   \node[red] (abort1) at (4.7,0.6) {\tiny$\oxabort(\sval)$};
\node[teal,yshift=0.2cm,xshift=-0.5cm] (prodok) [below of=fcall]  {\tiny $\produce(\Qok)$};
\node[teal,xshift=0.8cm,yshift=-0.3cm,xshift=0.1cm] (proderr) [right of=prodok]  {\tiny $\produce(\Qerr)$};
\node[yshift=1mm, xshift=-0.2cm] (dots1) [right of=frame1] {\tiny\ldots};
  \node[red, xshift=0.8cm, yshift=-0.51cm] (abortf1)  [below of= abort1]{\tiny$\oxabort:\sst_{\sval}$};
    \node[red,xshift=-0.6cm] (abortf2)  [left of= abortf1]{\tiny$\oxabort:\sst_{miss}$};
   \node[red] (abort4) at (2.7,.3) {\tiny$\oxabort(miss)$};
  \draw [->,decorate,decoration={snake,aspect=0},>=stealth]      (root) -- (frame1);
 \draw[->,decorate,decoration={snake,aspect=0},>=stealth]       (frame1) -- (sprodok);
  \draw[->,decorate,decoration={snake,aspect=0},>=stealth]       (frame1) -- (sproderr);
  \draw[->,decorate,decoration={snake,aspect=0},>=stealth]       (root) -- (frame4);
   \draw[->,red, decorate,decoration={snake,aspect=0},shorten <= .1mm,>=stealth]       (root) -- (abort1);
   \draw[->,red, decorate,decoration={snake,aspect=0}, >=stealth]       (root) -- (abort4);
   \draw[->, red,decorate,decoration={snake,aspect=0},>=stealth]       (abort1) -- (abortf1);
     \draw[->, red,decorate,decoration={snake,aspect=0},>=stealth]       (abort4) -- (abortf2);
\end{tikzpicture}
\hspace{0.3cm}
\begin{tikzpicture}[scale=0.65]
\draw (-4.5,3.2) -- (-4.5,-2);
\node (fcall) at (-2,1) {\tiny$\vec{\pvar{y}}=\fid (\vec{\pvar{E}})$};
\node (sst) at (1,2.5){\footnotesize$\sst$};
\node[yshift=-0.5cm] (spec) [above of=sst] {\tiny$ \islquadruple{\vec{\pvar{x}} = \vec{x} \lstar P}{\fid(\vec{\pvar{x}})}{\Qok}{\Qerr}$};
  \node (root) at (1,2) {\tiny$\mac(\macUX,P,\ssubst,\sst)$};
 \node (frame1)   at (-1,0.2)       {\tiny$\sst_{f_1}$};
 \node[xshift=-.2cm] (sproderr)  [below of= frame1] {\tiny$\oxerr:\sst''$};
  \node (sprodok)  [left of= sproderr] {\tiny$\osucc:\sst'$};
  \node (frame4) at (1,0.1) {\tiny$\sst_{f_k}$};
\node[teal,yshift=0.3cm,xshift=-.5cm] (prodok) [below of=fcall]  {\tiny $\produce(\Qok)$};
\node[teal,xshift=0.8cm,yshift=-0.4cm,xshift=0.1cm] (proderr) [right of=prodok]  {\tiny $\produce(\Qerr)$};
\node[yshift=1mm, xshift=-0.3cm] (dots1) [right of=frame1] {\tiny\ldots};
 \node[red,xshift=-0.4cm,yshift=2mm] (abortf2)  [left of= abortf1]{\tiny$\oxabort:\sst_{miss}$};
   \node[red] (abort4) at (2.5,.7) {\tiny$\oxabort(miss)$};
  \draw [->,decorate,decoration={snake,aspect=0},>=stealth]      (root) -- (frame1);
 \draw[->,decorate,decoration={snake,aspect=0},>=stealth]       (frame1) -- (sprodok);
  \draw[->,decorate,decoration={snake,aspect=0},>=stealth]       (frame1) -- (sproderr);
  \draw[->,decorate,decoration={snake,aspect=0},>=stealth]       (root) -- (frame4);
   \draw[->, red,decorate,decoration={snake,aspect=0},>=stealth]       (abort4) -- (abortf2);
   \draw[->,red, decorate,decoration={snake,aspect=0}, >=stealth]       (root) -- (abort4);
\end{tikzpicture}
\vspace*{-0.1cm}
\caption{Branching in \macOX and \macUX function calls}\label{fig:branching}
\vspace*{-0.3cm}
\end{figure}

\subparagraph*{Soundness.} Our CSE engine is sound: OX soundness, expectedly, has no restrictions on the predicates; UX soundness, on the other hand, allows only strictly exact predicates (i.e., predicates whose bodies are satisfiable by at most one heap~\cite{Yangphd}) to be folded to ensure that no~information~is~dropped.
\begin{theorem}[Compositional OX and UX soundness]\label{thm:ux-ox-sound-func}
If all UX predicate foldings are limited to strictly exact predicates, then the following hold:
\[
\begin{array}{l}
\models (\fictx, \fsctx) \land
\csesemtranscollectm{\sst}{\cmd}{\hat\Sigma'}{\fsctx}{\macOX} \land
\oxabort \not\in \hat\Sigma'\, \land
\vint, \st \smodels \sst \land
\st, \cmd\baction_{\fictx} \result : \st'
\implies \\
\qquad \exsts{\sst', \vint' \ge \vint} (\result, \sst') \in \hat\Sigma' \land \vint', \st' \smodels \sst' \\[2mm]
\models (\fictx, \fsctx) \land
\csesemtransabstractm{\sst}{\cmd}{\sst'}{\fsctx}{\macUX}{\result} \land
\vint, \st' \smodels \sst' \implies
\exists \st.~\vint, \st \smodels \sst \land \st, \cmd \baction_{\fictx} \result : \st'
\end{array}
\]
\end{theorem}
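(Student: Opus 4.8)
The plan is to prove Theorem~\ref{thm:ux-ox-sound-func} by induction on the structure of the symbolic execution derivation $\csesemtransabstractm{\sst}{\cmd}{\sst'}{\fsctx}{m}{\result}$ (respectively, on the collecting derivation in the OX case), exactly as for the core-engine soundness Theorem~\ref{thm:ux-ox-sound}, but now with the extra rules for function calls and for predicate fold/unfold. All the cases for the commands inherited from the core engine go through verbatim from the proof of Theorem~\ref{thm:ux-ox-sound}, since their rules are unchanged and the mode $m$ and context $\fsctx$ are merely carried along; the only genuinely new work is in the function-call case and the fold/unfold cases, and these are where the axiomatic interface (Properties~\ref{prop:wf}--\ref{prop:prod-compl}) does all the heavy lifting.

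\textbf{OX case (function call).} Suppose $\csesemtransabstractm{\sst}{\passign{\pvar{y}}{\fid(\vec{\pexp})}}{\sst''[\dots]}{\fsctx}{\macOX}{\osucc}$ via the rule of Fig.~\ref{fig:fcall} with no $\oxabort$ in the collecting result, and we are given a concrete execution $\st, \passign{\pvar{y}}{\fid(\vec{\pexp})} \baction_{\fictx} \result : \st'$ together with $\vint, \st \smodels \sst$. From $\models(\fictx,\fsctx)$ the chosen SL specification is valid, so the concrete function call decomposes: $\st = \st_{\text{arg}} \cdot \st_P$ where $\st_P$ satisfies the (internalised) precondition and $\st_{\text{arg}}$ is the frame, and by SL validity of the spec the concrete outcome is either a $\osucc$ with $\st' = \st_{\text{arg}} \cdot \st_{\Qok}$ (for some $\st_{\Qok}\models\Qok$) or an $\oxerr$ with $\st_{\Qerr}\models\Qerr$. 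On the symbolic side, since no branch aborts, Property~\ref{prop:mac-branch-comp} (completeness of OX consume) gives a successful consume branch $\mac(\macOX,P,\hat\theta,\sst[\dots])\rightsquigarrow(\hat\theta',\sst_f)$ with a model $\vint,\st_f\smodels\sst_f$; Property~\ref{prop:soundness} identifies $\smemb_P$ with $\smemb = \smemb_f\cup\smemb_P$ and tells us the consumed symbolic state $\sst_P$ models $P$ under $\vint(\hat\theta')$, so one matches $\st_P$ (the concrete precondition witness) against $\sst_P$ up to extending $\vint$ if needed. Then Property~\ref{prop:prod-compl} (completeness of produce) applied with $\st_{\Qok}$ yields a produce branch $\produce(\Qok'',\hat\theta''',\sst_f)\rightsquigarrow\sst_f\cdot\sst_Q$ with $\vint',\st_{\Qok}\smodels\sst_Q$, so $\vint',\st'\smodels\sst''$ after the final store update, which is the required witness; the $\oxerr$ sub-case is analogous using the $\Qerr$ branch of produce. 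The remaining subtlety is bookkeeping around the grey steps (4)--(7): the renaming $\Qok = \exists\vec y.\,\Qok'$, extending the substitution with fresh $\vec{\hat z}$, and the $\pvar{ret}$/$r$ handling. These are handled exactly as in the core-engine proof's treatment of fresh symbolic variables: one extends $\vint$ to interpret $\vec{\hat z}$ and $\hat r$ by the witnesses given by $\st_{\Qok}\models\exists\vec y.\,\Qok'$ and the concrete return value, invoking Property~\ref{prop:wf} to keep well-formedness and Property~\ref{prop:path_strength} to propagate satisfiability of the path condition.

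\textbf{UX case (function call).} Here we go backwards: given a successful symbolic branch landing in $\sst''$ and a model $\vint,\st'\smodels\sst''$ (with $\sint(\spc'')=\true$ for the final path condition), we must produce a concrete pre-state $\st$ with $\vint,\st\smodels\sst$ and $\st,\passign{\pvar y}{\fid(\vec\pexp)}\baction_{\fictx}\result:\st'$. We reverse the two operations in order: first, since $\sst''$ was obtained by $\produce(\Qok'',\hat\theta''',\sst_f)\rightsquigarrow\sst''=\sst_f\cdot\sst_Q$, Property~\ref{prop:soundness} for produce gives $\smemb'' = \smemb_f\cup\smemb_Q$ with every model of $\sst_Q$ modelling $\Qok''$; splitting $\st'$'s heap accordingly yields $\st_f$ with $\vint,\st_f\smodels\sst_f$ and $\st_{\Qok}$ with $\vint(\hat\theta'''),\st_{\Qok}\models\Qok''$. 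Second, since $\sst_f$ came from $\mac(\macUX,P,\hat\theta,\sst[\dots])\rightsquigarrow(\hat\theta',\sst_f)$, Property~\ref{prop:mac-ux-comp} tells us: from a concrete precondition witness $\st_P$ with $\vint(\hat\theta'),(\emptyset,h_P)\models P$ (obtained by instantiating $P$ at the interpreted substitution, which exists since $\dom(\hat\theta')\supseteq\lv P$ by Property~\ref{prop:coverage} and well-formedness of $\hat\theta'$ w.r.t.\ $\spc'$), we get $\vint,(\emptyset,h_P)\smodels\sst_P$ and, for any compatible $h_f$, $\vint,(\sto,h_f\uplus h_P)\smodels\sst$. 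Taking $h_f$ to be $\st_f$'s heap, this reconstructs the symbolic pre-state $\st := (\sto, h_f\uplus h_P)$ with $\vint,\st\smodels\sst$. Finally, UX validity of the ISL specification (from $\models(\fictx,\fsctx)$) guarantees that since $\st_P$ (internalised, with the parameter bindings) satisfies the precondition and $\st_{\Qok}$ satisfies $\Qok$, the concrete call $\st,\passign{\pvar y}{\fid(\vec\pexp)}\baction_\fictx\osucc:\st'$ is realised---this is precisely the ``reachability'' direction of the ISL guarantee stated in \S\ref{ki:cse}. The $\oxerr$ sub-case uses the $\Qerr$ half of the specification symmetrically. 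Note that $\oxabort$ and $\omiss$ outcomes never reach this case since neither the specification nor the theorem statement records them.

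\textbf{Fold/unfold cases.} For \emph{unfold}, the rule consumes a predicate assertion $\Predd{\vec\sval_1}{\vec\sval_2}$ from the symbolic state and produces the body $\bigvee_i(\exists\vec y_i.\,P_i)$; soundness follows from Properties~\ref{prop:soundness}, \ref{prop:mac-ux-comp}/\ref{prop:mac-branch-comp} and \ref{prop:prod-compl} in the same way as the function call, but using as ``specification'' the trivial validity that the predicate symbol and its unfolded body denote the same concrete resource (an immediate consequence of the satisfaction relation for symbolic predicates $\sint,\st\smodels\sst$, which is defined by unfolding). For \emph{fold}, dually, the rule consumes the body and produces the predicate symbol; in the OX case there is no restriction, while in the UX case we impose that the folded predicate is \emph{strictly exact} (its body is satisfiable by at most one heap), which is exactly what is needed so that the ``no information dropped'' direction holds---without strict exactness, folding $\Predd{\vec\sval_1}{\vec\sval_2}$ would abstract away which of several heaps was consumed, violating the UX soundness inclusion. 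One must check that each $P_i$ in the body is a simple assertion so the matching-plan machinery (\S\ref{sec:macMP}) of the consume implementation applies, but for the axiomatic proof this is subsumed under the hypotheses that $\mac$ and $\produce$ satisfy the interface.

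\textbf{Main obstacle.} I expect the genuinely delicate point to be the interplay between the substitution $\hat\theta$ and the \emph{internalisation} of function specifications in the UX direction. The axiomatic properties of \mac speak about assertions $P$ with free logical variables, instantiated via $\hat\theta$, whereas the validity $\fsctx\models\genquadruple{\cdots}$ is phrased in terms of external specifications related to internal ones; one must carefully mediate between ``$P$ consumed symbolically at substitution $\hat\theta'$'' and ``the concrete precondition of $\fid$ holds at the argument values'', threading through the renaming of $\pvar{ret}$ and the fresh variables of steps (4)--(7), and---crucially in UX---ensuring that the discarding of function-local program variables that internalisation permits lines up with what \produce actually puts into the state. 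Getting the order of reconstruction right (produce before consume, and within consume using Property~\ref{prop:mac-ux-comp}'s universally-quantified $h_f$ with the right witness) is the key structural insight; everything else is the kind of careful-but-routine bookkeeping already present in the proof of Theorem~\ref{thm:ux-ox-sound}.
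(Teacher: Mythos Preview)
Your overall plan---induction on the symbolic derivation, reusing the proof of Theorem~\ref{thm:ux-ox-sound} for the core rules, and discharging the new function-call and fold/unfold rules via the axiomatic interface---is exactly the paper's approach, and you correctly identify the relevant properties in each direction (Properties~\ref{prop:soundness}, \ref{prop:mac-branch-comp}, \ref{prop:prod-compl} for OX; Properties~\ref{prop:soundness}, \ref{prop:mac-ux-comp} for UX; strict exactness for UX fold).

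There is, however, a genuine gap in your UX function-call argument concerning the precondition witness $h_P$. You claim $h_P$ is ``obtained by instantiating $P$ at the interpreted substitution, which exists since $\dom(\hat\theta')\supseteq\lv{P}$''; but Property~\ref{prop:coverage} only guarantees that $\hat\theta'$ covers the free logical variables of $P$, not that $P$ is satisfiable under $\vint(\hat\theta')$. You then invoke ISL validity at the end as ``$\st_P$ satisfies the precondition and $\st_{\Qok}$ satisfies $\Qok$, so the concrete call is realised''---but ISL validity does not promise an execution from \emph{your} precondition state to a given postcondition state; it says that for each postcondition model there \emph{exists} some precondition model from which execution reaches it. The paper's order (App.~\ref{app:symbolic-soundness-func-ux}) fixes both issues at once: after soundness of \produce yields $\st_{\Qok}$ satisfying $\Qok''$, ISL validity (via the frame-explicit Lemma~\ref{lem:valid-with-frame} and internalisation) is applied \emph{first}, simultaneously producing the witness $h_P$ with $\vint(\hat\theta'),(\emptyset,h_P)\models P$ \emph{and} the concrete execution $(\sto_p, h_P \uplus h_f),\scmd\baction_\fictx\oxok:(\sto_q,h_q\uplus h_f)$; only then is Property~\ref{prop:mac-ux-comp} invoked with this particular $h_P$ to reconstruct $\vint,(\sto,h_f\uplus h_P)\smodels\sst$. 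So the correct chain is soundness of \produce $\to$ ISL validity $\to$ UX completeness of \mac, not the order you sketch.

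A minor remark on the OX side: you attribute the decomposition $\st=\st_{\text{arg}}\cdot\st_P$ to SL validity, but SL validity does not decompose states---it is applied only after you know some $\st_P$ satisfies $P$. In the paper that decomposition comes from Property~\ref{prop:mac-branch-comp}, Property~\ref{prop:soundness}, and a state-decomposability lemma (Lemma~\ref{lem:state-decompose}), which is essentially what you go on to describe with $\smemb_P$; so your OX ingredients are all present, just narrated in a slightly misleading order.
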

\begin{proof}
Proofs of rules not related to function calls and predicates are the same as for Thm.~\ref{thm:ux-ox-sound}. OX soundness of function call follows from soundness of \mac (Prop.~\ref{prop:soundness}), OX completeness of \mac (Prop.~\ref{prop:mac-branch-comp}), and completeness of \produce (Prop.~\ref{prop:prod-compl}). UX soundness of function call follows from the soundness of \mac and \produce (Prop.~\ref{prop:soundness}) and UX completeness of \mac (Prop.~\ref{prop:mac-ux-comp}).
Full details are given in 
\ifarxiv
App.~\ref{app:symbolic-soundness-func}.
\else
the extended version of this paper~\cite{arxiv}.
\fi
\end{proof}


\section{Consume and Produce Implementations}%
\label{sec:function-implementation}

We provide implementations for the \mac and \produce~operations and
prove that they satisfy the properties 1--7 of the axiomatic interface
(\S\ref{sec:axfun}). We give the complete set of rules implementing
these operations in
\ifarxiv
App.~\ref{app:mac-produce-soundness}
\else
the extended version~\cite{arxiv}
\fi
and only discuss the interesting rules here. Our implementations are
inspired by the Gillian OX implementations, although previous work has only given a brief informal sketch of these implementations~\cite{javert2}.

\subsection{Implementations}\label{sec:macMP}

\subparagraph*{Consume Implementation.} As is typical for SL-based analysis tools, our \mac~operation works with a fragment of the assertions with no implications, disjunctions, or existentials (which are handled outside consumption, see, e.g., the function-call rule); which means that input assertions for consumption are $\lstar$-separated lists of simple assertions. Following the implementation of Gillian, our \mac~operation works by consuming one simple assertion at a time and is split into two phases, a planning phase and a consumption~phase:
\[
\begin{array}{l}
\mac(m, P, \hat{\theta}, \sst) \defeq \textsf{let } mp \textsf{ = } \plan(\dom(\ssubst),P) \textsf{ in } \macMP(m, mp, \hat{\theta}, \sst)
\end{array}
\]
Here our interest lies in the consumption phase: the planning phase of Gillian has been formalised and discussed by Lööw et al.~\cite{matchplanning}. We, however, repeat the necessary background of the planning phase here to keep this paper self-contained.

\subparagraph*{Consumption Planning.}
The $\plan$~operation has two responsibilities: to resolve the order of
consumption and the unknown variables. The
operation takes a set of known logical variables (above,
$\dom(\ssubst)$) and an assertion $P$ to plan and returns a
\emph{matching plan} (\mps) of  the form $[(\Assert, [(\LVar, \LExp)])]$. An MP for an assertion $P=P_1 \lstar \ldots \lstar P_n$ ensures that (1) the simple assertions $P_i$ of  $P$ are consumed in an order such that the \emph{in-parameters} (\textit{ins}) of each simple assertion, i.e., the parameters (logical variables) that must be known to consume the simple assertion, have been learnt during previous consumption; (2) specifies how \emph{out-parameters} (\textit{outs}) are learnt during consumption, i.e., the remaining parameters (logical variables). For instance, the in-parameter of the cell-assertion $x\mapsto z+1$ is $x$ and the out-parameter is $z$, where the value of $z$ can be learnt by inspecting the heap and subtracting $1$. Another example is given by the pure assertion $x+1=y+3$: here, what the in- and out-parameters are depend on what variables are known, e.g., if we know $x$ we can learn $y$ and vice versa.
\begin{example}\label{ex:mp}
  Say we are to plan the assertion $x \le 10 \lstar x \mapsto y \lstar y = z - 10$ knowing that $\ssubst = \{ \svar x/x \}$, that is, $x$ is known but $y$ and $z$ are not. One MP for this assertion is $[(x \le 10, []), (x \mapsto y, [(y, \mathsf{O})]), (y =z-10, [(z,y+10)])]$, where $\mathsf{O}$ is used to refer to the value of the consumed heap cell. First, by consuming $x \le 10$ we learn nothing ($x$ is already known); second, when consuming $x \mapsto y$ we learn $y$ (from the consumed heap cell); third, since we learn $y$ in the previous step, we can learn $z$ by manipulating the assertion to $z=y+10$. Another MP is $[(x\mapsto y, [(y, {\sf O})]), (x\le 10, []), (y=z-10,[(z, y+10)])]$. Note that there is no MP starting with assertion $y=z-10$, because $y$ and $z$ are not known initially.
\end{example}

\subparagraph*{Consuming Assertions.} Having discussed the planning phase, we now discuss how pure assertions, cell assertions, and predicate~assertions~are~consumed.

\subparagraph*{Consuming Pure Assertions.}
Fig.~\ref{fig:mac-pure} contains the \macMP rules for consuming pure assertions.
The rules are defined in terms of the helper operation $\consPure(m, \spc, \spc') = \spc'' \mid abort$ which depends on the current reasoning mode $m$:
\begin{enumerate}[label=(\roman*)]
\item for $m=\macOX$, we check $\neg\sat(\spc \land \neg \spc')$ which is equivalent to $\spc \Rightarrow \spc'$, hence, the SAT check corresponds to the entailment check seen in traditional OX reasoning;

\item for $m=\macOX$, if $\sat(\spc \land \neg \spc')$, that is, $\neg(\spc \Rightarrow \spc')$, consumption, of course, must abort;

\item for $m=\macUX$, $\consPure$ instead cuts all paths of $\spc$ that are not compatible with the input pure assertion $\spc'$, i.e., forms $\spc \land \spc'$, and then checks if there are any paths left after the cut, i.e., checks $\sat(\spc \land \spc')$.
\end{enumerate}

\begin{example}\label{ex:ux_vs_ox}
To exemplify the difference between OX and UX consume, consider calling a function \texttt{foo(y)} with the precondition $\pvar{y} = y \lstar y \geq 0$. The first step of calling a function using its function specification is to consume its precondition, which we now illustrate. Say we are in a symbolic state with path condition $\spc = \sval > 5$ and are calling the function with an argument that symbolically evaluates to $\sval$, i.e., we know $\ssubst(y) = \sval$. In OX mode, the function call aborts: \macMP's pure consumption error rule is applicable because $\consPure(\macOX, \spc, \ssubst(y) \geq 10) = abort$ since $\sat(\spc \wedge \neg(\sval\geq 10))$. Intuitively, this means that not all paths described by $\spc$ are described by $y\geq 10$, i.e., we are ``outside'' the precondition of the function. Differently, in UX mode, a call to $\consPure(\macUX, \spc, \ssubst(y) \geq 10)$ cuts the incompatible paths by strengthen the path condition to $\spc \wedge \sval\geq 10$. That is, instead of as in OX mode where execution must abort, in UX mode the execution can continue.
\end{example}

\begin{figure}[t]
\small
\begin{mathpar}
\begin{aligned}
&\consPure(m, \spc, \spc') = \\[-0.4cm]
&\quad\begin{cases}
\spc \land \spc', & \text{if $m=\macUX$ and $\sat(\spc \land \spc')$} \\
\spc, & \text{if $m = \macOX$ and $\neg\sat(\spc \land \neg \spc')$} \\
abort, & \text{if $m = \macOX$ and $\sat(\spc \land \neg \spc')$}
\end{cases}
\end{aligned}
\quad
\begin{aligned}
\inferrule{
 P \text{ is pure } \quad outs=[(\lvar{x}_i, \lexp_i)|_{i=1}^n]\\\\
 \ssubst'=\ssubst \uplus \{( \ssubst (\lexp_i)/\lvar{x}_i)|_{i=1}^n\} \\\\
 \consPure(m,\spc,\ssubst'(P)) = \spc' }
 {\macMP( m, [(P, outs)], \ssubst ,\sst)\rightsquigarrow ( \ssubst', \sst[\sstupdate{pc}{\spc'}])) }
\end{aligned}

\inferrule
{P \text{ is pure }\\
 outs=[(\lvar{x}_i, \lexp_i)|_{i=1}^n]\\\\
 \ssubst'=\ssubst \uplus \{( \ssubst (\lexp_i)/\lvar{x}_i)|_{i=1}^n\} \\
   \consPure(\macEX, \spc, \ssubst'(P))= abort }
   {\macMP(\macEX, [(P, outs)], \ssubst, \sst) \rightsquigarrow abort([\mathstr{\sf consPure}, \ssubst'(P), \spc])}
\end{mathpar}

\vspace*{-0.40cm}
\caption{Rules for \consPure and \macMP (excerpt), where $\sst = (\ssto, \smemb,  \spc)$}\label{fig:mac-pure}
\end{figure}

\begin{figure}[t]
\small
    \begin{mathpar}
        \inferrule{\smem=\smem_f \uplus \{\sval_1 \mapsto \sval_2\}\\ 
          \spc'=\spc\wedge (\sval = \sval_1) \\ \sat(\spc')}
        {\conscell( \sval,\sst)
        \rightsquigarrow ({\sval_2}, \sst[\sstupdate{heap}{\smem_f}, \sstupdate{pc}{\spc'}])}
\and
        \inferrule{\sat(\spc \wedge \sval\notin \dom(\smem))}
         {\conscell(\sval,\sst) \rightsquigarrow \oxabort}
\and 
\ifarxiv
 \inferrule
 {
  \macMP(m, [(P,outs)], \ssubst, \sst)\rightsquigarrow abort(\sval)}
 {\macMP(m, (P,outs)::mp, \ssubst,  \sst)\rightsquigarrow abort(\sval)}
\and
\else
\fi

\infer{
\macMP(m, [(\lexp_a\mapsto \lexp_v, outs)],  \ssubst,  \sst) 
       \rightsquigarrow  (\ssubst', \sst'[ \sstupdate{pc}{\spc''}])
}
{
\begin{array}{lr}
\consPure(m, \spc, \ssubst(\lexp_a)\in \Nat) = \spc'& \text{check for evaluation error} \\
\conscell(\ssubst(\lexp_a), \sst[\sstupdate{pc}{\spc'}]) \rightsquigarrow (\sval,\sst') &  \text{branch over all cell consumptions}\\
      \textcolor{gray}{     \ssubst_{subst}=\{\sval/\mathsf{O}\}} &     \textcolor{gray}{ \text{substitution with cell contents}}\\ 
    \textcolor{gray}{   outs=[(\lvar{x}_i, \lexp_i)|_{i=1}^n] \text{ and } ((\ssubst\uplus \ssubst_{subst})(E_i)=\hat v_i)|_{i=1}^n }&  \textcolor{gray}{\text{collect and instantiate outs}}\\
     \textcolor{gray}{   \ssubst'=\ssubst \uplus \{(\hat v_i/x_i)|_{i=1}^n\} } &    \textcolor{gray}{  \text{extend substitution with outs}}\ \\
        \consPure(m, (\sst').\fieldsst{pc}, \ssubst' (\lexp_v)=\sval)= \spc'' &   \text{consume  cell contents}
\end{array}
}
\ifarxiv
\and
\inferrule
{\conscell( \ssubst(\lexp_a),\sst) = abort}
{\macMP(m, [(\lexp_a\mapsto \lexp_v, outs)],  \ssubst,\sst) \rightsquigarrow  abort([\mathstr{\sf MissingCell}, \ssubst(\lexp_a), \spc \land \ssubst(\lexp_a) \not\in \dom(\smem)])}
\else
\fi
\end{mathpar} 

\vspace*{-0.40cm}
\caption{Rules for \conscell and \macMP (excerpt), where $\sst = (\ssto, \smemb,  \spc)$}
\label{fig:mac-cell}
\end{figure}

\subparagraph*{Consuming Cell  Assertions.} Fig.~\ref{fig:mac-cell}
contains some of the \macMP rules for consuming a cell assertion. The rules are defined using the helper operation $\conscell(\hat v, \sst) \rightsquigarrow (\sval', \sst') \mid abort$, which tries to consume the cell at address~$\hat v$ in mode $m$ by branching over all possible addresses in the heap, returning the corresponding value in the heap, $\sval'$, and the rest of the state, $\sst'$, and returns $\oxabort$ if it is possible for the address $\hat v$ to point outside of heap. In the successful \macMP rule (featured in Fig.~\ref{fig:mac-cell}), the operation $\consPure$ is used to consume the contents of the matched cells.
\ifarxiv
The erroneous \macMP rules for consuming cell assertions propagate errors from the components of the successful rule; e.g., as shown in Fig.~\ref{fig:mac-cell}, if the \conscell call aborts, then an abort representing a missing-cell-resource~error~is~reported.
\else
The erroneous \macMP rules are available in~\cite{arxiv}.
\fi
\ifarxiv
\begin{example}\label{ex:consP_modes} We now revisit the consumption example of \S\ref{ki:cse}. Recall, we start in the symbolic state $\sst=(\emptyset, \smem, \spc)$, where $\spc=(\hat{x} > 0 \land \hat{v} > 5)$ and $\smem = \{1\mapsto \hat{v}, 2\mapsto 10, 3\mapsto 100\}$. We are to consume $P = x \mapsto y \lstar y \geq 10$ given $\ssubst=\{ \hat{x}/x\}$. Further recall that we in total get four consumption branches in both OX and UX mode: one branch per address in heap (where, respectively, $\hat{x} = 1$, $\hat{x} = 2$, and $\hat{x} = 3$) and one missing-cell branch (where $\hat{x} \not\in \{1, 2, 3\}$). Now, using the rules we have introduced for \macMP, we explain in more detail why the first branch of the example differs in OX and UX mode, that is, the branch where $\hat{x} = 1$ (called branch 0 and branch 1 in the example). First, note that the only MP for $P$ is $[(x \mapsto y, [(y,{\sf O})]), (y\geq 10, [])]$. Hence, $x \mapsto y$ is consumed first, using the \macMP rules for cells just introduced. Both the OX and UX consumption result in the same set of states: \conscell branches over all heap addresses and the missing-cell case. For the branch with $\hat{x} = 1$, we learn that $\ssubst(y) = \hat{v}$ (from the value of the matching cell in the heap). It is now time to consume $y \geq 10$, using the \macMP rules for pure consumption. From the definition of the $\consPure$~operation: in OX mode, the operation returns $\oxabort$ because it is satisfiable that $\hat{v} > 5$ and $\neg(\hat{v} \geq 10)$, whereas in UX mode, the path condition is strengthened with $\hat{v} \geq 10$ and execution~can~continue.
\end{example}
\else

\fi

\subparagraph*{Consuming Predicate Assertions.} The \macMP rules for predicate assertions are generalisations of the rules for cell assertions, with two main differences: predicates may have multiple \textit{ins} and \textit{outs} whereas cells have a single \textit{in} 
and single \textit{out},
and predicate assertions refers to symbolic predicates whereas cell assertions refer to the symbolic heap. 

\subparagraph*{Produce Implementation.}\label{sec:produce}

\begin{figure}[!t]
\begin{mathpar}
\small
\inferrule
{\smem' \defeq \smem \uplus \{\svar{l}\mapsto \svar{v}_\cfreed\} \\\\
\spc' \defeq \spc\wedge \svar{l}\notin \domain(\smem) \quad \sat(\spc') \\\\
\sst' \defeq \sst[\sstupdate{heap}{\smem'}, \sstupdate{pc}{\spc'}]}
{\funcAlg{prodCell}{\svar{l}, \svar{v}_\cfreed,\sst}= \sst'}
\quad
\inferrule
{P ~\text{pure} \\\\ \spc' \defeq \spc \wedge  \ssubst(P) \quad \sat(\spc') \\\\ \sst' \defeq \sst[\sstupdate{pc}{\spc'}]}
{\funcAlg{produce}{P,\ssubst, \sst}\rightsquigarrow \sst'}
\quad
\inferrule
{ \spc' \defeq \spc \wedge \ssubst(\lexp_a)\in \Nat\wedge \ssubst(\lexp_v)\in \Val \\\\
\prodcell(\ssubst(\lexp_a), \ssubst(\lexp_v),\sst[\sstupdate{pc}{\spc'}]) =\sst'}
{\produce(\lexp_a\mapsto \lexp_v, \ssubst, \sst) \rightsquigarrow \sst'}
\end{mathpar}

\caption{Rules for \prodcell and \produce (excerpt), where $\sst = (\ssto, \smemb, \spc)$ and $\sym{v}_\cfreed$ denotes a symbolic value or $\cfreed$}\label{fig:part-produce}
\end{figure}

The implementation of \(\produce(Q, \ssubst, \sst)\) is straightforward: it extends $\sst$ with the symbolic state corresponding to $Q$ given $\ssubst$, ensuring well-formedness. Unlike \mac, \produce does not require planning and is not dependent on the mode of execution. An excerpt of rules for \produce is given in Fig.~\ref{fig:part-produce}. Like \mac, \produce does not support assertion-level implications, which are usually not found in function specifications or predicate definitions. However, \produce, unlike \mac, supports assertion-level disjunctions since the function specification we synthesise using bi-abduction contains disjunctions~(cf.~\S\ref{sec:applications}).

\subsection{Correctness of Implementations}

The correctness of the consume and produce implementations amount to
showing that they satisfy properties of the axiomatic interface for consume and produce.
\ifarxiv
The proofs are given in App.~\ref{sec:macMP-correct} for \mac and App.~\ref{sec:produce-correct} for \produce.
\else
\fi
\begin{theorem}[Correctness]\label{thm:cp_correct}
The \mac and \produce~operations satisfy properties \ref{prop:wf}-\ref{prop:prod-compl}~(\S\ref{sec:axfun}).
\end{theorem}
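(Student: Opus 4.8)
The plan is to prove the seven properties one at a time, in each case first reducing the claim about $\mac$ to a claim about $\macMP$ via the defining equation $\mac(m, P, \ssubst, \sst) = \textsf{let } mp = \plan(\dom(\ssubst), P) \textsf{ in } \macMP(m, mp, \ssubst, \sst)$, together with the soundness of $\plan$ imported as a black box from Lööw et al.~\cite{matchplanning}: namely, that $\plan$ returns a matching plan whose simple assertions are a reordering of those of $P$, in which every \emph{in} of each simple assertion is already known when that assertion is reached, and whose \emph{outs} entries correctly express the remaining logical variables of the assertion as $\ssubst$-expressions. The argument for $\macMP$ then proceeds by induction on the matching plan (equivalently, on the $\lstar$-separated list of simple assertions), with a case split in the inductive step on whether the head simple assertion is pure, a cell assertion $\lexp_a \mapsto \lexp_v$, a freed-cell assertion $\lexp_a \mapsto \cfreed$, or a predicate assertion; these are discharged by inspecting the rules for $\consPure$, $\conscell$, and their predicate counterpart respectively. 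For $\produce$ the argument is a simpler structural induction directly on $P$ using the rules of Fig.~\ref{fig:part-produce} (and their omitted companions), since $\produce$ neither plans nor branches on matching and is mode-independent; the only mild subtlety is the disjunction case, which splits the produced state.

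Properties~\ref{prop:wf}--\ref{prop:coverage} (well-formedness, path strengthening, coverage of $P$) all fall out of this induction directly: the store is visibly untouched by every $\macMP$ and $\produce$ rule; every rule only conjoins fresh constraints to the path condition, so $\spc' \Rightarrow \spc$; and each \emph{outs} step extends $\ssubst$ exactly by the logical variables whose values are learnt, so after traversing the whole plan $\dom(\ssubst') \supseteq \lv{P}$. The well-formedness side-conditions $\sinv(\sst_f)$, $\sinv(\ssubst', \spc')$ and $\sinv(\sst')$ hold because $\conscell$ and $\prodcell$ only remove or add cells whose addresses are constrained in $\spc'$ to be naturals, and because every new substitution entry has a codomain value already pinned down by $\spc'$, which is what $\sinv(\ssubst', \spc')$ asks for.

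Property~\ref{prop:soundness} (soundness) is the technical core. For $\mac$ I would build the witness resource $\smemb_P$ incrementally along the plan: a cell-assertion step contributes the cell that $\conscell$ removes, a predicate step contributes the symbolic predicate that is removed, and a pure step contributes nothing to the heap but strengthens $\spc'$ so that its pure conjunct holds; one then checks $\smemb = \smemb_f \cup \smemb_P$ and proves $\sint, \st \smodels \sst_P \implies \sint(\ssubst'), \st \models P$ by composing the per-step facts, using that simple assertions are consumed left to right and that $\ssubst'$ assigns exactly the values dictated by the matched heap/predicate fragments. The $\produce$ case is the mirror image, with $\prodcell$ producing the new cell. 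Properties~\ref{prop:mac-branch-comp} and~\ref{prop:mac-ux-comp} (OX- and UX-completeness of $\mac$) are where the mode switch matters, and both go by induction on the plan using that $\conscell$ and the predicate-consumption rule branch over \emph{all} admissible matches, so no model of the input state is ever lost at a spatial step; the difference is localised to $\consPure$. In OX mode, the abort-free hypothesis forces $\lnot\sat(\spc \land \lnot\spc')$, i.e.\ $\spc \Rightarrow \spc'$, so a model of $\sst$ survives the pure step; in UX mode $\consPure$ only strengthens $\spc$ to $\spc \land \spc'$, and the hypotheses that the given interpretation satisfies both $\spc'$ and $P$ guarantee the strengthened condition still holds, while frame re-composition holds because the removed cells and predicates are precisely those recorded in $\sst_P$. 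Property~\ref{prop:prod-compl} (produce completeness) is again a structural induction on $P$: the disjointness hypothesis $\dish{\cmem}{\hp_P}$ lets us, at each cell step, pick the $\prodcell$ branch whose fresh address matches the intended one, and the required satisfiability side-conditions are witnessed by the given model.

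The step I expect to be the main obstacle is Property~\ref{prop:mac-ux-comp}: threading its universally-quantified re-composition statement (for all $\hp_f$ compatible with $\sst_f$, gluing $\hp_f$ with $\hp_P$ yields a model of $\sst$) through the left-to-right consumption of a \emph{list} of simple assertions while simultaneously accounting for the path-condition \emph{cutting} performed by UX $\consPure$ --- one must argue that cutting only ever discards pure-constraint possibilities, never heap-level ones, and that the cut path condition is the one simultaneously recorded in $\sst_f$ and $\sst_P$. A secondary recurring difficulty is the predicate-assertion case throughout: the symbolic resource carries a multiset of symbolic predicates, so one has to check that $\cup$ on $\smemb$, together with the composition well-formedness condition $\sinvc$, plays the role that the heap-level $\uplus$ plays in the interface's completeness statements, and that the predicate analogue of ``cell in the heap'' is compatible with the (relational, because of symbolic predicates) state-satisfaction relation $\smodels$.
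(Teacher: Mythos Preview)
Your proposal is correct and matches the paper's approach: the paper also proves each of the seven properties separately, by induction on the structure of $P$ with a case split on whether the current simple assertion is pure, a (freed-)cell assertion, a predicate assertion, or a $\lstar$-conjunction, inspecting the relevant $\consPure$/$\conscell$/$\conspred$ rules in each case, and treating $\produce$ by a direct structural induction on $P$. Your phrasing of the induction as ``on the matching plan'' rather than ``on $P$'' is a cosmetic difference only, since the plan is a reordering of $P$'s simple assertions and the paper's $\lstar$-conjunction case just chains the inductive hypothesis through the sequential $\macMP$ rule; your identification of Property~\ref{prop:mac-ux-comp} as the delicate case and of the $\consPure$ mode-switch as the sole point where OX and UX diverge is also exactly what the paper's proofs reflect.
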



\section{Bi-abduction}%
\label{sec:bi-abduction}

To enable hosting Pulse-style true bug-finding on top of our CSE engine, it must support UX bi-abduction. In this section, we show how the engine presented in \S\ref{sec:function-interface} can be extended to support UX bi-abduction by catching missing-resource errors that happen during execution and applying \emph{fixes} to  enable uninterrupted  execution instead of faulting. These fixes, stored in an \emph{anti-frame}, add the missing resource to the current state and allow execution to continue. This style of bi-abduction was introduced in the OX setting by JaVerT 2.0~\cite{javert2}. Here we show that it can also be applied to the UX setting. We focus on UX bi-abduction for true bug-finding, but also discuss in \S\ref{sec:applications} how the obtained UX results can be used in~an~OX~setting.

We introduce the  judgement for the bi-abductive symbolic~engine,
$
\biab{\sst}{\scmd}{(\sst',\smemb)}{\fsctx}{\outcome}, 
$
with outcomes $o ::= \oxok \mid \oxerr$, and the anti-frame $\smemb=(\smem,
\sps)$ containing the anti-heap $\smem$ and the
anti-predicates  $\sps$. We do not need $\omiss$ or $\oxabort$ as possible outcomes since if they happen during execution they will be either fixed by bi-abduction or cut if not. The new judgement is defined in terms of the judgement $\csesemtransabstractm{\sst}{\scmd}{\sst'}{\fsctx}{\macUX}{\outcome}$ and a partial function \texttt{fix} as:
\begin{mathpar}
\footnotesize
\inferrule[\textsc{Biab}]
 {\csesemtransabstractm{\sst}{\scmd}{\sst'}{\fsctx}{\macUX}{\outcome} \\\\ \mathsf{not\_Seq}(\scmd) \quad \outcome \notin \{ \omiss, \oxabort \}}
 {\biab{\sst}{\scmd}{(\sst', (\emptyset, \emptyset))}{\fsctx}{\outcome}}
\and
\inferrule[\textsc{Biab-Miss}]
 {\csesemtransabstractm{\sst}{\scmd}{\sst'}{\fsctx}{\macUX}{\outcome} \quad
  \mathsf{not\_Seq}(\scmd) \quad
  \outcome \in \{ \omiss, \oxabort \} \\\\
  \texttt{fix}(\sst') = (\smemb, \spc) \quad
  \biab{\sst \cdot (\smemb, \spc)}{\scmd}{(\sst'', \smemb')}{\fsctx}{\outcome'}}
 {\biab{\sst}{\scmd}{(\sst'', \smemb \cup \smemb')}{\fsctx}{\outcome'}}
\and
\inferrule[\textsc{Biab-Seq-Err}]
 {\biab{\sst}{\scmd_1}{(\sst', \smemb)}{\fsctx}{\outcome} \quad \outcome \not= \osucc}
 {\biab{\sst}{\scmd_1; \scmd_2}{(\sst', \smemb)}{\fsctx}{\result}}
\and
\inferrule[\textsc{Biab-Seq}]
 {\biab{\sst}{\scmd_1}{(\sst', \smemb_1)}{\fsctx}{\osucc} \qquad \biab{\sst'}{\scmd_2}{(\sst'', \smemb_2)}{\fsctx}{\result} \\\\ 
  \svs{\domain(\smemb_2)}\cap (\svs{\sst'} \setminus \svs{\sst}) = \emptyset}
 {\biab{\sst}{\scmd_1; \scmd_2}{(\sst'', \smemb_1\cup \smemb_2))}{\fsctx}{\result}}
\end{mathpar}
where $\mathsf{not\_Seq}(\scmd)$ denotes that $\scmd$ is not a sequence command (i.e., does not have the form~$\scmd_1; \scmd_2$), $\sst \cdot (\smemb, \spc)$ denotes $\sst \cdot (\emptyset, \smemb, \spc \land\sinvc(\smemb))$, $\domain(\smemb_2)=\svs{\domain(\smem_2)}  \cup  \svs{\domain(\sps_2)}$, and $\domain(\sps)$ denotes all symbolic variables of the $\ins$ of the predicates in $\sps$. The rule \textsc{Biab} states that for non-erroneous outcomes, the bi-abductive engine has the same semantics as the underlying UX engine it is built on top of. The rule \textsc{Biab-Miss}, which is the most interesting rule, catches missing-resource errors from the underlying UX engine and uses the \texttt{fix} function to add the missing resource to the current symbolic state and anti-frame, such that execution can continue. The two rules \textsc{Biab-Seq-Err} and \textsc{Biab-Seq} are two straightforward sequencing rules for the engine, where the symbolic-variable condition of \textsc{Biab-Seq} ensures that the anti-frame $\smemb_2$ does not clash with resource allocated by $\scmd_1$.

To exemplify, say the engine is in symbolic state $(\{ \pvar v \mapsto 0, \pvar a \mapsto 13 \}, (\emptyset, \emptyset), \true)$ and is about execute $\pderef{\pvar v}{\pvar a}$, i.e., about to retrieve the value of the heap cell with address $\pvar a$. Since this cell is not in the heap, the rule \textsc{Lookup-Err-Missing} from Fig.~\ref{fig:sexrules} is applicable, which sets the variable $\pvar{err}$ to $[\mathstr{\mathsf{MissingCell}}, \mathstr{a}, 13]$ and gives outcome $\omiss$. Now, in the rule \textsc{Biab-Miss}, given the data in the $\pvar{err}$ variable, the \texttt{fix} function constructs a fix $((\{ 13 \mapsto \hat{v} \}, \emptyset), \true)$ where $\hat{v}$ is a fresh variable. The rule adds this fix to both the current symbolic state and the anti-frame, resulting in the symbolic state $(\{ \pvar v \mapsto \hat{v}, \pvar a \mapsto 13 \}, (\{ 13 \mapsto \hat{v} \}, \emptyset), \true)$ and outcome $\oxok$. Other cases are similar. E.g., when abort outcomes from \mac represent missing resource (e.g., when invoked in a function call), \texttt{fix} returns the resources needed for the~execution~to~continue.
The following theorem captures the essence of bi-abduction:
\ifarxiv
its proof is given in~App.~\ref{app:bi-abduction-correct}:
\else
\fi
\begin{theorem}[CSE with Bi-Abduction: UX Soundness]\label{thm:bi-abduction}
\[
\biab{\sst}{\scmd}{(\sst', \smemb)}{\fsctx}{\outcome} \implies \csesemtransabstractm{{\sst \cdot (\smemb,\true)}}{\scmd}{\sst'}{\fsctx}{\macUX}{\outcome}
\]
\end{theorem}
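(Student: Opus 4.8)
The plan is to argue by rule induction on the derivation of \(\biab{\sst}{\scmd}{(\sst', \smemb)}{\fsctx}{\outcome}\), with one case per bi-abductive rule. Throughout I will use that extended-state composition is associative and commutative up to logical equivalence of path conditions and the well-formedness conjuncts \(\sinvc(\cdot)\) it inserts; in particular \(\sst \cdot ((\emptyset,\emptyset),\true)\) is \(\sst\), and \((\sst \cdot (\smemb,\spc)) \cdot (\smemb',\true)\) is \(\sst \cdot (\smemb \cup \smemb', \spc)\), up to this equivalence. The base case \textsc{Biab} is then immediate: the anti-frame is \((\emptyset,\emptyset)\), so the goal state \(\sst \cdot ((\emptyset,\emptyset),\true)\) equals \(\sst\) and the conclusion is literally the hypothesis \(\csesemtransabstractm{\sst}{\scmd}{\sst'}{\fsctx}{\macUX}{\outcome}\). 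Case \textsc{Biab-Seq-Err} is also short: with \(\scmd = \scmd_1;\scmd_2\) and \(\outcome \neq \osucc\), the induction hypothesis gives \(\csesemtransabstractm{\sst \cdot (\smemb,\true)}{\scmd_1}{\sst'}{\fsctx}{\macUX}{\outcome}\), and the UX engine's error-propagating rule for command sequencing lifts this to \(\csesemtransabstractm{\sst \cdot (\smemb,\true)}{\scmd_1;\scmd_2}{\sst'}{\fsctx}{\macUX}{\outcome}\).

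The case \textsc{Biab-Seq} is the substantive one. From its premises, the induction hypothesis yields \(\csesemtransabstractm{\sst \cdot (\smemb_1,\true)}{\scmd_1}{\sst'}{\fsctx}{\macUX}{\osucc}\) and \(\csesemtransabstractm{\sst' \cdot (\smemb_2,\true)}{\scmd_2}{\sst''}{\fsctx}{\macUX}{\outcome}\). I would then frame the resource \(\smemb_2\) through the first execution using a symbolic UX frame lemma for the compositional engine — the symbolic counterpart of the concrete UX frame property recorded in \S\ref{sec:language-concrete}, which holds because the outcome is \(\osucc \neq \omiss\) — obtaining \(\csesemtransabstractm{\sst \cdot (\smemb_1 \cup \smemb_2,\true)}{\scmd_1}{\sst' \cdot (\smemb_2,\true)}{\fsctx}{\macUX}{\osucc}\); the freshness side condition \(\svs{\domain(\smemb_2)} \cap (\svs{\sst'} \setminus \svs{\sst}) = \emptyset\) of the rule is exactly what makes the composition \(\sst' \cdot (\smemb_2,\true)\) well-defined and prevents the framed resource from being captured by symbolic variables generated while running \(\scmd_1\). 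Chaining this with the second execution via the UX sequencing rule closes the case.

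The case \textsc{Biab-Miss} discards its two non-recursive premises — the aborting/missing execution \(\csesemtransabstractm{\sst}{\scmd}{\sst'}{\fsctx}{\macUX}{\outcome}\) with \(\outcome \in \{\omiss,\oxabort\}\) only licenses applying the fix and plays no part in the execution we must exhibit — and applies the induction hypothesis to the recursive premise \(\biab{\sst \cdot (\smemb,\spc)}{\scmd}{(\sst'',\smemb')}{\fsctx}{\outcome'}\), giving \(\csesemtransabstractm{(\sst \cdot (\smemb,\spc)) \cdot (\smemb',\true)}{\scmd}{\sst''}{\fsctx}{\macUX}{\outcome'}\). Re-associating the initial state gives \(\csesemtransabstractm{\sst \cdot (\smemb \cup \smemb', \spc)}{\scmd}{\sst''}{\fsctx}{\macUX}{\outcome'}\). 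When \texttt{fix} returns the trivial path condition (as in the implementation) this is already the goal \(\csesemtransabstractm{\sst \cdot (\smemb \cup \smemb',\true)}{\scmd}{\sst''}{\fsctx}{\macUX}{\outcome'}\); in general a short argument — that the path condition produced by \texttt{fix} only records well-formedness of the freshly-introduced resource, hence is subsumed by the \(\sinvc(\cdot)\) conjuncts already present in \(\sst \cdot (\smemb \cup \smemb',\true)\) — identifies the two initial states up to path-condition equivalence.

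The main obstacle I expect is the symbolic UX frame lemma invoked in \textsc{Biab-Seq}: it must be proved by its own structural induction on the symbolic-execution derivation, with the function-call and fold/unfold rules handled through the axiomatic interface (Props.~\ref{prop:soundness}--\ref{prop:prod-compl}), and — crucially — its freshness hypotheses must be stated so that they follow from, and feed back into, the symbolic-variable side conditions carried by the bi-abductive rules and the anti-frame disjointness accumulated across nested \textsc{Biab-Miss} applications. Once that lemma is in place with matching side conditions, the remaining work is the routine composition-algebra and sequencing-rule bookkeeping sketched above.
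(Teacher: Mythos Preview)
Your approach is correct and, in one sense, more faithful to the paper's main-text formulation than the paper's own proof. The paper does \emph{not} prove the theorem by induction on the four rules \textsc{Biab}, \textsc{Biab-Miss}, \textsc{Biab-Seq-Err}, \textsc{Biab-Seq}. Instead, it first reformulates bi-abduction via per-command ``modified'' rules (\textsc{Biab-Lookup}, \textsc{Biab-Mutate}, \textsc{Biab-ConsCell}, \ldots) that bake the fix directly into the rule, and then proves the theorem by induction on \emph{those} rules. This sidesteps your \textsc{Biab-Miss} case entirely: there is no catch--fix--re-execute loop to unwind, and no path-condition-equivalence argument about what \texttt{fix} returns, because the fix and the re-execution are a single atomic step. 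The price is an implicit (unproven) equivalence between the two formulations and a proof that scales linearly with the number of memory-manipulating commands.

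Where the two approaches coincide is \textsc{Biab-Seq}: the paper also reduces it to a symbolic UX frame lemma (stated as a separate theorem) plus the engine's \textsc{Seq} rule, exactly as you do. One detail you should make explicit: the side condition of \textsc{Biab-Seq} only constrains \(\svs{\domain(\smemb_2)}\), not all of \(\svs{\smemb_2}\), yet the frame lemma needs disjointness for the whole of \(\smemb_2\). The paper closes this gap with a separate lemma that the \emph{codomain} symbolic variables of any anti-frame are fresh, i.e.\ \(\svs{\codomain(\smemb_2)} \subseteq \svs{\sst''} \setminus \svs{\sst'}\), hence already disjoint from \(\svs{\sst'}\). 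You gesture at this (``freshness hypotheses must be stated so that they follow from \ldots''), but naming it as a standalone invariant of bi-abductive execution would strengthen your sketch.
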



\section{Analysis Applications}\label{sec:applications}

We discuss the three analysis applications we have built on top of our unified CSE engine, to demonstrate its wide applicability: EX whole-program automatic symbolic testing (\S\ref{sec:testing}); OX semi-automatic verification (\S\ref{sec:verification}); and UX automatic true bug-finding (\S\ref{sec:bug-finding}).

We have gathered these three analyses from different corners of the literature. EX symbolic testing is well understood in the first-order symbolic execution literature. OX verification is well understood in the consume-produce symbolic execution literature. However, one novelty here is that the correctness proof of the analysis is established with respect to our axiomatic interface rather than the consume/produce operations directly, allowing us to show that function specifications are valid w.r.t. the standard SL definition of validity. In contrast to the other two analyses, UX bug-finding has not previously been implemented in consume-produce style, making this a novel contribution.
To simplify the presentation, we consider only non-recursive functions. All applications can be extended to handle bounded recursion by adding a fuel parameter. Unbounded recursion can be handled in verification via user-provided annotations, but is not a good fit for automatic bug-finding.

\subsection{EX Whole-program Symbolic Testing}%
\label{sec:testing}

Our EX core engine allows us to implement simple non-compositional analyses, such as whole-pro\-gram symbolic testing, in the style of CBMC~\cite{cbmc} and Gillian~\cite{gillianpldi}. For this analysis, we augment the input language with three additional commands: $\pvar x := \mathsf{sym}$, for creating symbolic variables; $\passume{\pexp}$, for imposing a constraint $\pexp$ on the current state; and $\passert{\pexp}$, for checking that $\pexp$ is true in the current state.
The operational semantics for these commands is given
\ifarxiv
in App.~\ref{app:symbolic}.
\else
in the extended version of this paper~\cite{arxiv}.
\fi

The testing algorithm is as follows. Given a command $C$ and implementation context $\fictx$, the analysis starts from the state $\sst \defeq (\{ \pvar x \mapsto \nil \mid \pvar x \in \pv C \}, \emptyset, \true)$, and executes $C$ to completion. 
The analysis reports back any violations of the $\mathsf{assert}$ commands encountered during execution. Given the core engine is UX sound, any bug found will be a true bug. Moreover, if the analysed code contains no unbounded recursion, given the core engine is OX sound, all existing bugs will be found modulo the ability of the underlying SMT solver. 

\subsection{OX Verification}%
\label{sec:verification}

We formalise an OX verification procedure, $\texttt{verifyOX}$, on top of our CSE engine. 
Given a specification context $\fsctx$, a function $\pfunction{f}{\vec{\pvar{x}}}{\scmd; \preturn{\pexp}}$ with $f \notin \dom(\Gamma)$, and an OX specification $t_f = \quadruple{\vec{\pvar{x}} = \vec{x} \lstar P}{\!\!\!}{\Qok}{\Qerr}$, if $\texttt{verifyOX}(\fsctx, f, t_f)$ terminates successfully, then we can soundly extend $\fsctx$ to $\fsctx' = \fsctx[f \mapsto t_f]$. 
\ifarxiv
The algorithm is given below and the proof in App.~\ref{app:applications-proofs}:
\else
The algorithm is given below:
\fi
\begin{enumerate}[leftmargin=*]
\item Let $\ssubst \defeq \{ \svar{x}/x \mid x \in \lv{\vec{\pvar{x}} = \vec{x} \lstar P} \}$, $\ssto \defeq \{ \pvar{x} \mapsto \svar{x} \mid \pvar{x} \in \vec{\pvar{x}} \} \cup \{ \pvar{x} \mapsto \nil \mid \pvar{x} \in \pv{\scmd} \setminus \vec{\pvar{x}} \}$, and $\sst = (\ssto, \emptyset, \true)$.
\item Set up symbolic state corresponding to pre-condition: $\produce(P, \ssubst, \sst) \rightsquigarrow \sst'$.
\item Execute the function to completion: $\csesemtranscollectm{\sst'}{\cmd; \passign{\pvar{ret}}{\pexp}}{\hat\Sigma'}{\fsctx}{\macOX}$. 
Then, for every $(\outcome, \sst'') \in \hat\Sigma'$:
\begin{enumerate}[label=(\alph*)]
\item If $\outcome = \omiss$ or $\outcome=\oxabort$, abort with an error. 
\item If $\outcome = \osucc$, then let $\ssubst' =  \ssubst \uplus \{(\sst''.\texttt{sto})(\pvar{ret})/r\}$ for a fresh $r$ and let $Q' = \Qok\{r/\pvar{ret}\}$. Otherwise, $\outcome = \oerr$, in which case let $\ssubst' = \ssubst$ and $Q' = \Qerr$.
\item Consume the post-condition: $\mac(\macEX, Q'', \ssubst', \sst'') \rightsquigarrow (\ssubst'', \sst''')$, where 
 $Q' = \exists \vec{y}.~Q''$.
\item If consumption fails or the final heap is not empty, abort with an error.\footnote{This check is required due to our classical (linear) treatment of resource, appropriate for languages with explicit deallocation rather than garbage collection.}
\end{enumerate}
\end{enumerate}

\subsection{UX Specification Synthesis and True Bug-finding}%
\label{sec:bug-finding}

\newcommand{\toasrt}[0]{\mathsf{toAsrt}}

Recall that Pulse-style UX bug-finding is powered by UX specification synthesis, where, after appropriate filtering, synthesised erroneous specification can be reported as bugs. UX specification synthesis, in turn, is powered by UX bi-abduction (as introduced in \S\ref{sec:bi-abduction}).

To formalise the specification synthesis procedure, we first define the $\toasrt$ function, which takes a symbolic state $\sst = (\ssto, \smemb, \spc)$, where $\smemb=(\smem,\sps)$, and returns the corresponding assertion. The function is simple to implement: $\ssto$ becomes a series of equalities,  $\smem$ becomes a series of cell assertions, $\sps$ are lifted to predicate assertions and $\spc$ is lifted to a pure assertion. Using $\toasrt$, we can also transform multiple symbolic states into an assertion by transforming them individually and gluing together the obtained~assertions~using~disjunction.

We generate UX function specifications using the $\mathsf{synthesise}(\fsctx, f, P)$ algorithm, which takes a specification context $\fsctx$, a function $\pfunction{f}{\vec{\pvar{x}}}{\scmd; \preturn{\pexp}}$ and its candidate pre-condition, $\vec{\pvar x} = \vec x \lstar P$, and uses bi-abduction to generate a set of UX specifications describing the behaviour of $f$ starting from~$P$. As $P = \emp$ is a valid starting point, \textsf{synthesise} can be applied to any function without a priori knowledge. The \textsf{synthesise} algorithm is as~follows:
\begin{enumerate}[leftmargin=*]
\item Let $\ssubst \defeq \{  \svar{x}/x \mid x \in \lv{\vec{\pvar{x}} = \vec{x} \lstar P} \}$, $\ssto \defeq \{ \pvar{x} \mapsto \svar{x} \mid \pvar{x} \in \vec{\pvar{x}} \} \cup \{ \pvar{x} \mapsto \nil \mid \pvar{x} \in \pv{\scmd} \setminus \vec{\pvar{x}} \}$, and $\sst = (\ssto,  \emptyset, \true)$.
\item Add the symbolic representation of $P$ to $\sst$: $\produce(P, \ssubst, \sst) \rightsquigarrow \sst'$
\item Execute the function, obtaining a set of traces: $\csesemtranscollectm{\sst'}{\cmd, \pvar{ret} := \pexp}{\{ (o_i, (\sst_i', \smemb_i))|_{i \in I} \}}{\fsctx}{\text{bi}}$.
\item Then, for every obtained $(o_i, ((\ssto_i', \smemb_i', \spc_i'), \smemb_i))$:
\begin{enumerate}
\item Complete the candidate pre-condition: $P_i \defeq P \lstar \toasrt((\emptyset, \smemb_i, \true))$.
\item Restrict the final store to the return/error variable: $\ssto_i'' \defeq \ssto_i'|_{\{\pvar x\}}$, where $\pvar x = \pvar{ret}$ if $o_i = \oxok$ and $\pvar x = \pvar{err}$ otherwise.
\item Create the post-condition: $Q_i \defeq \toasrt(\ssto_i'', \smemb_i', \spc_i')$.
\item Return $\isltriplex{P_i}{\fid(\vec{\pvar{x}})}{o_i}{\exists \vec y.~Q_i}$, where $\vec y \defeq \lv{Q_i} \setminus \lv{P_i}$.
\end{enumerate}
\end{enumerate}
%
%
\begin{theorem}[Correctness of ${\sf synthesise}$]\label{thm:synthesise-correct}
\[
\isltriplex{P'}{\fid(\vec{\pvar{x}})}{o}{\exists \vec y.~Q} \in {\sf synthesise}(\fsctx,f,P) \implies \fsctx \models \isltriplex{P'}{\fid(\vec{\pvar{x}})}{o}{\exists \vec y.~Q}
\]
\end{theorem}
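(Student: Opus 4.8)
The plan is to fix an arbitrary output triple $\isltriplex{P'}{\fid(\vec{\pvar{x}})}{o}{\exists \vec y.~Q} \in {\sf synthesise}(\fsctx, f, P)$, recover from the algorithm the bi-abductive execution that produced it, and then chain the soundness results already in place: soundness of bi-abduction (Thm.~\ref{thm:bi-abduction}), the UX half of Thm.~\ref{thm:ux-ox-sound-func}, soundness of $\produce$ (Prop.~\ref{prop:soundness}), together with a routine correspondence lemma between $\toasrt$ and the symbolic-state satisfaction relation $\smodels$. By inspecting ${\sf synthesise}$, emitting this triple means there is an initial state $\sst = (\ssto, \emptyset, \true)$ with $\ssto$, $\ssubst$ as in step~1, a state $\sst'$ with $\produce(P, \ssubst, \sst) \rightsquigarrow \sst'$, and a bi-abductive trace $\biab{\sst'}{\scmd;\, \pvar{ret} := \pexp}{((\ssto_i', \smemb_i', \spc_i'), \smemb_i)}{\fsctx}{o}$, with $P' = (\vec{\pvar x} = \vec x \lstar P \lstar \toasrt((\emptyset, \smemb_i, \true)))$, $Q = \toasrt(\ssto_i'', \smemb_i', \spc_i')$ for $\ssto_i''$ the restriction of $\ssto_i'$ to $\pvar{ret}$ (if $o = \oxok$) or $\pvar{err}$ (if $o = \oxerr$), and $\vec y = \lv{Q}\setminus\lv{P'}$. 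Since $\fsctx \models {\cdot}$ is relative validity, I would work under the standing hypothesis $\models(\fictx,\fsctx)$ so that Thm.~\ref{thm:ux-ox-sound-func} is applicable.

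To establish ISL validity — that every model of the post-condition is reachable, by executing $f$, from some model of the pre-condition — I would take arbitrary $\subst, \st'$ with $\subst, \st' \models \exists \vec y.~Q$, extend $\subst$ over $\vec y$ so that $\subst, \st' \models Q$, and use the $\toasrt$ correspondence lemma to turn $\subst$ (picking arbitrary values for any symbolic variables not already pinned down) into a symbolic interpretation $\vint$ with $\vint, \hat\st' \smodels (\ssto_i', \smemb_i', \spc_i')$, where $\hat\st'$ has the same heap as $\st'$ and a store extending $\st'$'s to all of $\pv{\scmd}$; the extra store information is invisible to $Q$, which constrains only $\pvar{ret}$/$\pvar{err}$, and is exactly what internalisation discards at the function boundary. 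Applying Thm.~\ref{thm:bi-abduction} to the trace gives $\csesemtransabstractm{\sst' \cdot (\smemb_i, \true)}{\scmd;\, \pvar{ret} := \pexp}{(\ssto_i', \smemb_i', \spc_i')}{\fsctx}{\macUX}{o}$, and then the UX clause of Thm.~\ref{thm:ux-ox-sound-func} yields a concrete state $\st$ with $\vint, \st \smodels \sst' \cdot (\smemb_i, \true)$ and $\st, \scmd;\, \pvar{ret} := \pexp \baction_\fictx o : \hat\st'$.

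It remains to see that $\st$ models $P'$. By Prop.~\ref{prop:soundness} for $\produce$, the state $\sst'$ decomposes as $\sst \cdot \sst_P$ where every model of $\sst_P$ (instantiated by $\ssubst$) is a model of $P$; the store component of $\sst$ contributes exactly the equalities $\vec{\pvar x} = \vec x$; and the anti-frame $(\smemb_i, \true)$ is, by the correspondence lemma, modelled precisely by $\toasrt((\emptyset, \smemb_i, \true))$. Gluing these three parts $\lstar$-wise — the composition $\sst' \cdot (\smemb_i, \true)$ witnesses the requisite heap split — gives a logical interpretation $\subst''$, read off from $\vint$, with $\subst'', \st \models P'$. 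Finally I would discharge the internal-to-external step: the derived judgement is an execution of the function body $\scmd;\, \pvar{ret} := \pexp$ on $\st$, and internalisation (as used throughout for ISL specifications) lifts it to validity of $\isltriplex{P'}{\fid(\vec{\pvar{x}})}{o}{\exists \vec y.~Q}$ relative to $\fsctx$, with the error case additionally routing the error value into $\pvar{err}$ exactly as ${\sf synthesise}$ records it.

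The main obstacle is not any one deep step but the bookkeeping at the function boundary: matching the store restriction of step~4(b), the handling of $\vec y$ (existentials coming from fresh symbolic variables that occur only in the post-condition), and the symbolic-to-logical variable renaming, all against the precise statement of internalisation; plus pinning down the $\toasrt$ correspondence lemma (that $\vint, \st \smodels \sst$ iff the logical interpretation induced by $\vint$ and $\st$ satisfies $\toasrt(\sst)$) at exactly the granularity needed for both its pre- and post-condition uses above. The $\oxok$ and $\oxerr$ cases are uniform modulo which variable the final store is restricted to, so only one needs to be written out in full.
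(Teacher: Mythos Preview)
Your proposal is correct and follows essentially the same route as the paper: recover the bi-abductive trace, apply Thm.~\ref{thm:bi-abduction} then the UX clause of Thm.~\ref{thm:ux-ox-sound-func}, use the $\toasrt$ correspondence (the paper states this as Thm.~\ref{thm:toAsrt}) to bridge logical and symbolic models, and use soundness of $\produce$ (Prop.~\ref{prop:soundness}) for the pre-condition side. The only organisational difference is that the paper front-loads the internalisation step---it first checks that $\exists\vec y.\,Q$ satisfies the UX internalisation conditions for the full internal post $Q_i' \defeq \toasrt(\ssto_i', \smemb_i', \spc_i')$, and then proves validity of the \emph{internal} triple $\isltriplex{\vec{\pvar x}=\vec x \lstar P \lstar \vec{\pvar z}=\nil}{\cmd}{o}{Q_i'}$---whereas you work directly from a model of the external post and defer internalisation to the end; the content is the same.
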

\ifarxiv
\begin{proof}
We present the full proof in App.~\ref{app:applications-proofs}. In essence, the proof follows from the definition of UX specification internalisation and the soundness of bi-abductive execution (Thm.~\ref{thm:bi-abduction}).
\end{proof}
\else
\fi

\begin{remark}
\textnormal{Step 4b corresponds to forgetting the local variables when moving from internal to external post-condition since symbolic states only have program variables in the store.}
\end{remark}

\begin{remark} 
\textnormal{Front-end heuristics to filter out ``interesting'' bugs, i.e., synthesised erroneous specification, can be easily implemented on top of our bi-abduction (e.g., filtering for manifest bugs as per Lee et al.~\cite{Le22}); for this paper, however, we are foremost interested in back-end engine development and therefore consider such front-end issues~out~of~scope.}
\end{remark}

\begin{remark} 
\textnormal{Specifications with the same anti-frame can be coalesced into one via disjunction of their post-conditions. Moreover, if a specification does not branch on symbolic variables created by the execution of $C$, the pure part of the post-condition can be lifted to the pre-condition to create an EX specification~\cite{esl}, which can then be used both in OX verification and UX true bug-finding.}
\end{remark}

\begin{remark} 
\textnormal{Automatic predicate folding and unfolding may be required in some cases to prevent redundant fixes: e.g., if $\pfuncall{\pvar{y}}{\pvar{g}}{}; \pderef{\pvar{x}}{\pvar{y}}$ and $\pvar{g}$ has post-condition $\llist{\pvar{ret}, \lstvs}$, the list predicate should be unfolded for the lookup to access the first value in the list. Gillian has heuristics-based automatic folding and unfolding, but we leave its description and the evaluation of its compatibility with bi-abduction for future work. Without automatic folding and unfolding, code must not break the interface barrier of the data structures it~uses.}
\end{remark}


\section{Evaluation}%
\label{sec:evaluation}

We have evaluated our CSE engine in the following two practical ways. 

\subsection{Companion Haskell Implementation}
With our  engine formalism, we have developed a companion Haskell implementation to demonstrate that the formalism is implementable (and to catch errors early by executing simple examples). The Haskell implementation follows the inference rules of $\csesemtransabstractm{\sst}{\cmd}{\sst'}{\fsctx}{m}{\result}$ given in \S\ref{sec:function-interface}, and the specific $\mac$ and $\produce$ operations in~\S\ref{sec:function-implementation}. We have implemented the search through the inference rules inside a symbolic execution monad, similar to other monads in the literature~\cite{darais:icfp:2017,Mensing19}; the monad handles, e.g., demonic non-determinism (branching), angelic non-determinism (backtracking), per-branch state and global state.

\subsection{Gillian OX and UX Compositional Analysis Platform} 

Our unified CSE engine took direct inspiration from the Gillian compositional OX platform.
Using the ideas presented here, we returned to Gillian and adapted its CSE engine to handle both SL and ISL function specifications with real-world consume-produce implementations. Leveraging the identified difference between 
OX and UX reasoning, we were able to introduce UX reasoning to Gillian by adding, in essence, a OX/UX flag to the corresponding function. As these changes were isolated, existing analyses implemented in Gillian remain unaffected, including Gillian's whole-program symbolic testing, previously evaluated on the Collections-C library~\cite{gillianpldi}, and Gillian's compositional OX verification, previously evaluated on AWS code~\cite{gilliancav}. In addition, we have implemented UX bi-abduction in Gillian, following the fixes-from-errors approach presented in \S\ref{sec:bi-abduction}, where functions are evaluated bottom-up along the call graph and previously generated specifications are used at call sites. 

\newcolumntype{H}{>{\setbox0=\hbox\bgroup}c<{\egroup}@{}}
\begin{table}[!t]
\centering
\footnotesize
\begin{tabular}{@{}lrrHrrr@{}}
\toprule
\textbf{Library} & \textbf{Functions} & \textbf{GIL Inst.} & \textbf{Tests} & \textbf{Succ. Specs} & \textbf{Err. Specs} & \textbf{Time (s)} \\ \midrule
array & 45 & 1784 & 204 & 251 & 260 & 1.36 \\
deque & 47 & 2312 & 213 & 271 & 210 & 2.25 \\
hashset & 14 & 160 & 46 & 7 & 112 & 9.43 \\
hashtable & 28 & 1527 & 97 & 31 & 147 & 15.67 \\
list & 66 & 2977 & 314 & 454 & 615 & 5.59 \\
pqueue & 10 & 557 & 28 & 90 & 51 & 3.96 \\
queue & 16 & 85 & 92 & 133 & 67 & 1.36 \\
rbuf & 9 & 181 & 22 & 9 & 17 & 0.07 \\
slist & 52 & 2269 & 278 & 292 & 1873 & 24.49 \\
stack & 16 & 85 & 92 & 136 & 88 & 0.50 \\
treeset & 17 & 214 & 68 & 28 & 106 & 0.36 \\
treetable & 36 & 1601 & 154 & 144 & 276 & 1.55 \\
other & 8 & 139 & 21 & 14 & 11 & 0.03 \\
\textbf{Total} & \textbf{364} & \textbf{13891} & \textbf{1629} & \textbf{1860} & \textbf{3833} & \textbf{66.62} \\ \bottomrule
\end{tabular}
\vspace*{0.2cm}
\caption{Aggregated results of synthesising function specifications for the Collections-C library (commit \texttt{584e113}). Results were obtained by setting the loop and recursive call unrolling limit to 3, on a MacBook Pro 2019 laptop with 16~GB memory and a 2.3~GHz Intel Core i9 CPU.}%
\label{tbl:collections-c-results}
\vspace{-0.3cm}
\end{table}

To evaluate Gillian's new support for UX reasoning, we have tested its new UX bi-abduction analysis on real-world code, specifically, the Collections-C~\cite{collections} data-structure library for C. As discussed in \S\ref{sec:applications}, specification synthesis using UX bi-abduction constitutes the back-end of Pulse-style bug-finding and is its most time-consuming part. The Collections-C library has 2.6K stars on GitHub and approximately 5.2K lines of code, and it uses many C constructs and idioms such as structures and pointer arithmetic. The data structures it provides include, e.g., dynamic arrays, linked lists, and hash tables. To carry out the evaluation, we extended previous work  where the Gillian platform has been instantiated to the C programming language, called Gillian-C. Tbl.~\ref{tbl:collections-c-results} presents the results of our new UX bi-abduction analysis, grouped by the data structures of the library: the numbers of associated functions; number of corresponding GIL instructions (GIL is the intermediate language used by Gillian); the number of success and error specifications; and the analysis time. Since one specification is synthesised per execution path, the number of specifications reflect the number of execution paths the Gillian engine was able to construct using bi-abduction. In summary, Gillian-C synthesises specifications for 364 functions of the Collections-C library, producing 5693 specifications in 66.92 seconds. We believe the results are promising both in terms of performance and number of specifications synthesised. One anomaly is that 58\% of the execution time is spent on 3 of the 343 functions, leading to the creation of 1640 specifications. This anomaly arises because of the memory model currently in use by Gillian-C and not from a limitation of our formalisation or of the Gillian engine. More detailed analysis and a selection of generated specifications 
\ifarxiv
can be found in~App.~\ref{app:gillian}.
\else
are available in~\cite{arxiv}.
\fi


\section{Related Work}\label{sec:related-work}

\subparagraph*{First-order Compositional Symbolic Execution.} Static symbolic execution tools and frameworks based on first-order logic, such as CBMC~\cite{cbmc} and Rosette~\cite{torlak:pldi:2014,Porncharoenwase22},
 can be made functionally compositional with respect to the variable store but not with respect to arbitrary state
 because they are not able to specify functions that manipulate memory in a way that would make the reasoning scalable.

\subparagraph*{Compositional Symbolic Execution.} We work with a  CSE engine with consume and produce operations, as found in, e.g., the OX tools VeriFast~\cite{verifast}, Viper~\cite{Muller16}, and Gillian~\cite{gillianpldi,gilliancav}. In contrast, an alternative approach is to describe CSE inside a separation logic using proof search, as found in, e.g., Smallfoot~\cite{berdine:fmco:2005,berdine:aplas:2005}, Infer~\cite{calcagno:nasa:2011}, and Infer-Pulse~\cite{Le22}.

Here, we focus on formalising a CSE  engine with consumers and producers, which more accurately models tool implementations. All the current consume-produce tools are based on OX reasoning. Some have detailed work on formalisation: Featherweight VeriFast~\cite{Jacobs15} provides a Coq mechanisation inspired by VeriFast; Schwerhoff's PhD thesis~\cite{Schwerhoff16} and Zimmerman et al.~\cite{Zimmerman24} provide detailed accounts of Viper's symbolic execution backend.
Previous work has not, like us, introduced an axiomatic interface for their consume and produce operations. Because of the interface, our results are established using function specifications whose meaning is defined in standard SL/ISL-style, in particular, using the standard satisfaction relation for assertions defined independent of the choice of our CSE engine. This means that we can use specifications developed outside of our engine, e.g., using theorem provers, and vice versa. In contrast, the work on Featherweight VeriFast does not define an assertion satisfaction relation independent of their consume and produce operations. Schwerhoff does not give a soundness theorem at all. Lastly, Zimmerman et al. give a standard satisfaction relation (for implicit dynamic frames~\cite{Smans09}, a variant of SL~\cite{Parkinson11}) but only embed this relation inside their concrete semantics instead of working with the standard definitions of function specifications (see Fig.~11 of their paper). Finally, we have demonstrated that our engine semantics provides a common foundation for OX and UX reasoning, with the difference in the underlying engine only amounting to the choice to use satisfiability or validity.
This allows a straightforward extension of Gillian to support UX reasoning.

\subparagraph*{Bi-abduction.} Bi-abduction was originally introduced for  OX reasoning~\cite{calcagno:popl:2009,calcagno:jacm:2011} and led to Meta's automatic Infer tool for bug-finding~\cite{calcagno:nasa:2011}. Recently, it was reworked for UX  reasoning and led to Meta's Infer-Pulse for true bug-finding~\cite{isl,Le22}. In these works, compositional symbolic execution is formalised using proof search, in the style of the  Smallfoot description of symbolic execution~\cite{berdine:fmco:2005,berdine:aplas:2005}, with bi-abduction embedded into that proof search.
In contrast, our UX  bi-abduction is formulated as a separate layer on top of our CSE engine, establishing fixes from missing-resource errors using an idea  introduced for OX bi-abduction by~JaVerT~2.0~\cite{javert2}.

\subparagraph*{Alternating between OX and UX Reasoning.} Smash by Godefroid et al.~\cite{smash} is the most well-known tool to combine OX and UX reasoning. It is a  first-order tool which  ``alternates'' between OX and UX reasoning 
  to speed up the program analysis implemented by the tool.
   However, 
citing Le et al.~\cite{Le22},  who in turn report on personal communication with Godefroid, Smash-style analyses seem to have faced obstacles when put into practice in that they  were ``used in production at Microsoft, but are not used by default widely in their deployments, because other techniques were found which were better for fighting path explosion.'' 
Our CSE engine, in contrast,  has a completely different motivation in that its purpose is to  host different types of OX and UX analyses.

\subparagraph*{Program Correctness and Incorrectness.} We know of two program logics that work with both program correctness  and incorrectness. Exact separation logic (ESL)~\cite{esl} combines the guarantees of both SL and ISL, providing exact function specifications compatible with both OX verification and UX true bug-finding. Exact specifications are compatible with our CSE engine, e.g., in UX mode the engine can call exact unbounded function specifications of list algorithms and still preserve true bug-finding.
Outcome logic (OL)~\cite{Zilberstein23} is based on OX Hoare logic with a different approach to handling incorrectness based on the reachability of sets of states. No tool is currently based on OL; in the future, we hope to be able to extend straightforwardly our unified approach to~incorporate~OL.


\section{Conclusions}


We have introduced a  compositional symbolic execution engine
capable of creating and using function
specifications arising from an underlying separation logic.
Our engine is formally defined using a novel axiomatic interface which ensures a sound link between
the execution engine and the function specifications using consume and produce operations.
Thus, our engine creates function specifications usable by other tools, and uses function specifications 
from various sources, including theorem provers and 
pen-and-paper proofs. Additionally, we have captured the essence of the Gillian consume and produce implementations both operationally, using inference rules, and via an accompanying Haskell implementation, and shown that our operational description satisfies the properties of the axiomatic interface.
In this way, we offer a degree of assurance that the real-world, heavily-optimised Gillian implementation is correct.
 
A surprising property of our semantics is that it provides a common foundation for both OX reasoning based on SL, and UX reasoning based on ISL. By leveraging the minimal  differences between the OX and UX engines,
we have extended the OX Gillian platform to support UX  reasoning.  This extension includes function specifications underpinned by ISL,  enabling automatic true bug-finding using UX  bi-abduction which our engine incorporates by creating fixes from missing-resource errors. 
We evaluate our extension using the Gillian instantiation to C, 
the first real-world tool to support both compositional correctness and incorrectness reasoning, 
grounded on a common formal compositional symbolic execution engine.
Our instantiation preserves the previous OX verification evaluated on AWS code~\cite{gilliancav} and now 
automatically synthesises UX function specifications for the real-world Collections-C library using our UX bi-abduction technique.

We believe that our axiomatic interface and formalisation of UX bi-abduction serve as re-usable techniques, which we hope will provide valuable guidance for the implementation of~the next-generation~compositional~symbolic~execution~engines.

\bibliographystyle{plainurl}
\bibliography{match.bib}

\begin{thebibliography}{10}

\bibitem{symb:exec:survey}
Roberto Baldoni, Emilio Coppa, Daniele~Cono D’elia, Camil Demetrescu, and
  Irene Finocchi.
\newblock A survey of symbolic execution techniques.
\newblock {\em ACM Computing Surveys}, 51(3), 2018.
\newblock \href {https://doi.org/10.1145/3182657} {\path{doi:10.1145/3182657}}.

\bibitem{berdine:fmco:2005}
Josh Berdine, Cristiano Calcagno, and Peter~W. O'Hearn.
\newblock Smallfoot: Modular automatic assertion checking with separation
  logic.
\newblock In {\em International Conference on Formal Methods for Components and
  Objects}, 2005.
\newblock \href {https://doi.org/10.1007/11804192_6}
  {\path{doi:10.1007/11804192_6}}.

\bibitem{berdine:aplas:2005}
Josh Berdine, Cristiano Calcagno, and Peter~W. O'Hearn.
\newblock Symbolic execution with separation logic.
\newblock In {\em Asian Symposium on Programming Languages and Systems}, 2005.
\newblock \href {https://doi.org/10.1007/11575467_5}
  {\path{doi:10.1007/11575467_5}}.

\bibitem{calcagno:nasa:2011}
Cristiano Calcagno and Dino Distefano.
\newblock Infer: An automatic program verifier for memory safety of {C}
  programs.
\newblock In {\em NASA Formal Methods Symposium}, 2011.
\newblock \href {https://doi.org/10.1007/978-3-642-20398-5_33}
  {\path{doi:10.1007/978-3-642-20398-5_33}}.

\bibitem{calcagno:popl:2009}
Cristiano Calcagno, Dino Distefano, Peter O'Hearn, and Hongseok Yang.
\newblock Compositional shape analysis by means of bi-abduction.
\newblock In {\em Principles of Programming Languages}, 2009.
\newblock \href {https://doi.org/10.1145/1480881.1480917}
  {\path{doi:10.1145/1480881.1480917}}.

\bibitem{calcagno:jacm:2011}
Cristiano Calcagno, Dino Distefano, Peter~W. O’Hearn, and Hongseok Yang.
\newblock Compositional shape analysis by means of bi-abduction.
\newblock {\em Journal of the {ACM}}, 58(6), 2011.
\newblock \href {https://doi.org/10.1145/2049697.2049700}
  {\path{doi:10.1145/2049697.2049700}}.

\bibitem{darais:icfp:2017}
David Darais, Nicholas Labich, Ph{\'{u}}c~C. Nguyen, and David~Van Horn.
\newblock Abstracting definitional interpreters (functional pearl).
\newblock {\em Proceedings of the {ACM} on Programming Languages}, 1({ICFP}),
  2017.
\newblock \href {https://doi.org/10.1145/3110256} {\path{doi:10.1145/3110256}}.

\bibitem{cosette}
Jos\'{e} Fragoso~Santos, Petar Maksimovi\'{c}, Th\'{e}otime Grohens, Julian
  Dolby, and Philippa Gardner.
\newblock Symbolic execution for {JavaScript}.
\newblock In {\em Principles and Practice of Declarative Programming}, 2018.
\newblock \href {https://doi.org/10.1145/3236950.3236956}
  {\path{doi:10.1145/3236950.3236956}}.

\bibitem{javert}
Jos\'{e} Fragoso~Santos, Petar Maksimovi\'{c}, Daiva Naudžiūnienė, Thomas
  Wood, and Philippa Gardner.
\newblock {JaVerT}: Javascript verification toolchain.
\newblock {\em Proceedings of the ACM on Programming Languages}, 2({POPL}),
  2018.
\newblock \href {https://doi.org/10.1145/3158138} {\path{doi:10.1145/3158138}}.

\bibitem{javert2}
Jos\'{e} Fragoso~Santos, Petar Maksimovi\'{c}, Gabriela Sampaio, and Philippa
  Gardner.
\newblock {JaVerT} 2.0: Compositional symbolic execution for {JavaScript}.
\newblock {\em Proceedings of the ACM on Programming Languages}, 3({POPL}),
  2019.
\newblock \href {https://doi.org/10.1145/3290379} {\path{doi:10.1145/3290379}}.

\bibitem{gillianpldi}
Jos{\'{e}} {Fragoso Santos}, Petar Maksimović, Sacha{-}{\'{E}}lie Ayoun, and
  Philippa Gardner.
\newblock Gillian, part {I}: A multi-language platform for symbolic execution.
\newblock In {\em Programming Language Design and Implementation}, 2020.
\newblock \href {https://doi.org/10.1145/3385412.3386014}
  {\path{doi:10.1145/3385412.3386014}}.

\bibitem{jslogic}
Philippa Gardner, Sergio Maffeis, and Gareth~David Smith.
\newblock Towards a program logic for {JavaScript}.
\newblock In {\em Principles of Programming Languages}, 2012.
\newblock \href {https://doi.org/10.1145/2103656.2103663}
  {\path{doi:10.1145/2103656.2103663}}.

\bibitem{smash}
Patrice Godefroid, Aditya~V. Nori, Sriram~K. Rajamani, and Sai~Deep Tetali.
\newblock Compositional may-must program analysis: Unleashing the power of
  alternation.
\newblock In {\em Principles of Programming Languages}, 2010.
\newblock \href {https://doi.org/10.1145/1706299.1706307}
  {\path{doi:10.1145/1706299.1706307}}.

\bibitem{verifast}
Bart Jacobs, Jan Smans, Pieter Philippaerts, Fr{\'{e}}d{\'{e}}ric Vogels,
  Willem Penninckx, and Frank Piessens.
\newblock {VeriFast}: A powerful, sound, predictable, fast verifier for {C} and
  {Java}.
\newblock In {\em NASA Formal Methods Symposium}, 2011.
\newblock \href {https://doi.org/10.1007/978-3-642-20398-5_4}
  {\path{doi:10.1007/978-3-642-20398-5_4}}.

\bibitem{Jacobs15}
Bart Jacobs, Frédéric Vogels, and Frank Piessens.
\newblock Featherweight {VeriFast}.
\newblock {\em Logical Methods in Computer Science}, 11, 2015.
\newblock \href {https://doi.org/10.2168/LMCS-11(3:19)2015}
  {\path{doi:10.2168/LMCS-11(3:19)2015}}.

\bibitem{cbmc}
Daniel Kroening and Michael Tautschnig.
\newblock {CBMC} -- {C} bounded model checker.
\newblock In {\em Tools and Algorithms for the Construction and Analysis of
  Systems}, 2014.
\newblock \href {https://doi.org/10.1007/978-3-642-54862-8_26}
  {\path{doi:10.1007/978-3-642-54862-8_26}}.

\bibitem{Le22}
Quang~Loc Le, Azalea Raad, Jules Villard, Josh Berdine, Derek Dreyer, and
  Peter~W. O'Hearn.
\newblock Finding real bugs in big programs with incorrectness logic.
\newblock {\em Proceedings of the ACM on Programming Languages}, 6(OOPSLA1),
  2022.
\newblock \href {https://doi.org/10.1145/3527325} {\path{doi:10.1145/3527325}}.

\bibitem{matchplanning}
Andreas Lööw, Daniele Nantes-Sobrinho, Sacha{-}Élie Ayoun, Petar
  Maksimović, and Philippa Gardner.
\newblock Matching plans for frame inference in compositional reasoning.
\newblock In {\em European Conference on Object-Oriented Programming}, 2024.
\newblock \href {https://doi.org/10.4230/LIPIcs.ECOOP.2024.26}
  {\path{doi:10.4230/LIPIcs.ECOOP.2024.26}}.

\bibitem{esl}
Petar Maksimovi\'{c}, Caroline Cronj\"{a}ger, Andreas L\"{o}\"{o}w, Julian
  Sutherland, and Philippa Gardner.
\newblock Exact separation logic.
\newblock In {\em European Conference on Object-Oriented Programming}, 2023.
\newblock \href {https://doi.org/10.4230/LIPIcs.ECOOP.2023.19}
  {\path{doi:10.4230/LIPIcs.ECOOP.2023.19}}.

\bibitem{gilliancav}
Petar Maksimović, Sacha{-}{\'{E}}lie Ayoun, Jos{\'{e}}~Fragoso Santos, and
  Philippa Gardner.
\newblock Gillian, part {II}: Real-world verification for {JavaScript} and {C}.
\newblock In {\em Computer Aided Verification}, 2021.
\newblock \href {https://doi.org/10.1007/978-3-030-81688-9_38}
  {\path{doi:10.1007/978-3-030-81688-9_38}}.

\bibitem{Mensing19}
Adrian~D. Mensing, Hendrik van Antwerpen, Casper Bach~Poulsen, and Eelco
  Visser.
\newblock From definitional interpreter to symbolic executor.
\newblock In {\em International Workshop on Meta-Programming Techniques and
  Reflection}, 2019.
\newblock \href {https://doi.org/10.1145/3358502.3361269}
  {\path{doi:10.1145/3358502.3361269}}.

\bibitem{Muller16}
Peter M{\"u}ller, Malte Schwerhoff, and Alexander~J. Summers.
\newblock Viper: A verification infrastructure for permission-based reasoning.
\newblock In {\em Verification, Model Checking, and Abstract Interpretation},
  2016.
\newblock \href {https://doi.org/10.1007/978-3-662-49122-5_2}
  {\path{doi:10.1007/978-3-662-49122-5_2}}.

\bibitem{seplogic}
Peter~W. O'Hearn, John~C. Reynolds, and Hongseok Yang.
\newblock Local reasoning about programs that alter data structures.
\newblock In {\em Computer Science Logic}, 2001.
\newblock \href {https://doi.org/10.1007/3-540-44802-0_1}
  {\path{doi:10.1007/3-540-44802-0_1}}.

\bibitem{collections}
Srdja Pani\'c.
\newblock {Collections-C}: A library of generic data structures.
\newblock \url{https://github.com/srdja/Collections-C}, 2014.

\bibitem{Parkinson11}
Matthew~J. Parkinson and Alexander~J. Summers.
\newblock The relationship between separation logic and implicit dynamic
  frames.
\newblock In {\em European Symposium on Programming}, 2011.
\newblock \href {https://doi.org/10.1007/978-3-642-19718-5_23}
  {\path{doi:10.1007/978-3-642-19718-5_23}}.

\bibitem{Porncharoenwase22}
Sorawee Porncharoenwase, Luke Nelson, Xi~Wang, and Emina Torlak.
\newblock A formal foundation for symbolic evaluation with merging.
\newblock {\em Proceedings of the ACM on Programming Languages}, 6(POPL), 2022.
\newblock \href {https://doi.org/10.1145/3498709} {\path{doi:10.1145/3498709}}.

\bibitem{isl}
Azalea Raad, Josh Berdine, Hoang-Hai Dang, Derek Dreyer, Peter O'Hearn, and
  Jules Villard.
\newblock Local reasoning about the presence of bugs: Incorrectness separation
  logic.
\newblock In {\em Computer Aided Verification}, 2020.
\newblock \href {https://doi.org/10.1007/978-3-030-53291-8_14}
  {\path{doi:10.1007/978-3-030-53291-8_14}}.

\bibitem{reyseplogic}
John~C. Reynolds.
\newblock Separation logic: A logic for shared mutable data structures.
\newblock In {\em Logic in Computer Science}, 2002.
\newblock \href {https://doi.org/10.1109/LICS.2002.1029817}
  {\path{doi:10.1109/LICS.2002.1029817}}.

\bibitem{Schwerhoff16}
Malte~Hermann Schwerhoff.
\newblock {\em Advancing Automated, Permission-Based Program Verification Using
  Symbolic Execution}.
\newblock PhD thesis, ETH Zürich, 2016.

\bibitem{Smans09}
Jan Smans, Bart Jacobs, and Frank Piessens.
\newblock Implicit dynamic frames: Combining dynamic frames and separation
  logic.
\newblock In {\em European Conference on Object-Oriented Programming
  ({ECOOP})}, 2009.
\newblock \href {https://doi.org/10.1007/978-3-642-03013-0_8}
  {\path{doi:10.1007/978-3-642-03013-0_8}}.

\bibitem{torlak:pldi:2014}
Emina Torlak and Rastislav Bodik.
\newblock A lightweight symbolic virtual machine for solver-aided host
  languages.
\newblock In {\em Conference on Programming Language Design and
  Implementation}, 2014.
\newblock \href {https://doi.org/10.1145/2594291.2594340}
  {\path{doi:10.1145/2594291.2594340}}.

\bibitem{Yangphd}
Hongseok Yang.
\newblock {\em Local Reasoning for Stateful Programs}.
\newblock PhD thesis, University of Illinois Urbana-Champaign, 2001.

\bibitem{Yang02}
Hongseok Yang and Peter O'Hearn.
\newblock A semantic basis for local reasoning.
\newblock In {\em Foundations of Software Science and Computation Structures},
  2002.
\newblock \href {https://doi.org/10.1007/3-540-45931-6_28}
  {\path{doi:10.1007/3-540-45931-6_28}}.

\bibitem{Zilberstein23}
Noam Zilberstein, Derek Dreyer, and Alexandra Silva.
\newblock Outcome logic: A unifying foundation for correctness and
  incorrectness reasoning.
\newblock {\em Proceedings of the ACM on Programming Languages}, 7(OOPSLA1),
  2023.
\newblock \href {https://doi.org/10.1145/3586045} {\path{doi:10.1145/3586045}}.

\bibitem{Zimmerman24}
Conrad Zimmerman, Jenna DiVincenzo, and Jonathan Aldrich.
\newblock Sound gradual verification with symbolic execution.
\newblock {\em Proceedings of the ACM on Programming Languages}, 8(POPL), 2024.
\newblock \href {https://doi.org/10.1145/3632927} {\path{doi:10.1145/3632927}}.

\end{thebibliography}



\ifarxiv
\appendix
\counterwithin{theorem}{section}

\newpage
\section{Concrete Semantics}\label{app:concrete}

This appendix provides an exhaustive list of the concrete semantics rules. The corresponding judgement is:
\[
	\st, \scmd \baction_{\fictx} \outcome: \st'
\]
which, informally, means that executing command $\scmd$ starting with state $\st$ ends in state $\st'$ with outcome $\outcome := \osucc \mid \oerr \mid \omiss$, where function calls are resolved using the function context $\fictx$.
 
\subsection{Expression Evaluation}

Expression evaluation is trivially defined, we provide some illustrative cases:
\[
\footnotesize
\begin{array}{rcl}
\esem{\gv}{\sto} & = &\gv \\
\esem{\pvar{x}}{\sto}& = &\sto(\pvar{x})\\
\esem{\pexp_1 ~{+}~ \pexp_2}{\sto} & = &
	\begin{cases}
		\esem{\pexp_1}{\sto} + \esem{\pexp_2}{\sto}, &
		\esem{\pexp_1}{\sto} \in \nats, \esem{\pexp_2}{\sto} \in \nats\\
		\undefd, & \mbox{otherwise} 
	\end{cases}\\
\esem{\pexp_1 ~{/}~ \pexp_2}{\sto} & = &
	\begin{cases}
		\esem{\pexp_1}{\sto} / \esem{\pexp_2}{\sto}, &
		\esem{\pexp_1}{\sto} \in \nats, \esem{\pexp_2}{\sto} \in \nats,\esem{\pexp_2}{\sto} \neq 0 \\
		\undefd, & \mbox{otherwise} 
	\end{cases}\\
\end{array}
\]

\subsection{Core Rules}
{\footnotesize
\begin{mathparpagebreakable}
       \infer{
       \st, \pskip \baction_{\fictx} \oxok: \st
       }{ \ } \and
       \infer{
          \sthreadp{ \sto }{ \hp }, \passign{\pvar{x}}{\pexp} \baction_{\fictx} \oxok: \sthreadp{\sto [\pvar{x} \storearrow v] }{ \hp }
       }{
       \esem{\pexp}{\sto} = v
       } \and
       \infer{
          \sthreadp{ \sto }{ \hp }, \passign{\pvar x}{\pexp} \baction_{\fictx} {\oxerr} : \sthreadp{ \sto_{\oxerr} }{ \hp }
       }{
       \begin{array}{c}
       \esem{\pexp}{\sto} = \undefd \quad \verr = [``\mathsf{ExprEval}", \stringify {\pexp}]
       \end{array}
       } \and
       \infer{
          \sthreadp{ \sto }{ \hp }, \passign{\pvar{x}}{\prandom} \baction_{\fictx} \oxok:\sthreadp{\sto [\pvar{x} \storearrow n] }{ \hp }
       }{
       n \in \Nat
       } \and
       \infer{
       \sthreadp{ \sto }{ \hp }, \perror(\pexp) \baction_\fictx \oxerr :  \sthreadp{ \sto_{\oxerr} }{ \hp }
       }{\esem{\pexp}{\sto} = \gv \quad \verr = [``\mathsf{Error}", \gv]}
       \and
       \infer{
          \sthreadp{ \sto }{ \hp }, \perror(\pexp) \baction_{\fictx} {\oxerr} : \sthreadp{ \sto_{\oxerr} }{ \hp }
       }{
       \begin{array}{c}
       \esem{\pexp}{\sto} = \undefd \quad \verr = [``\mathsf{ExprEval}", \stringify {\pexp}]
       \end{array}
       } \and
       \infer{
          \st, \pifelse{\pexp}{\scmd_1}{\scmd_2} \baction_{\fictx} \outcome: \st'
       }{
       \esem{ \pexp }{ \st } = \true \quad  \st, \scmd_1 \baction_{\fictx} \outcome: \st'
       } 
       \and
       \infer{
          \st, \pifelse{\pexp}{\scmd_1}{\scmd_2} \baction_{\fictx} \outcome: \st'
       }{
          \esem{ \pexp }{ \st } = \false \quad  \st, \scmd_2 \baction_{\fictx} \outcome: \st'
       } \and
       \infer{
          \sthreadp{ \sto }{ \hp }, \pifelse{\pexp}{\scmd_1}{\scmd_2} \baction_{\fictx} {\oxerr} : \sthreadp{ \sto_{\oxerr} }{ \hp }
       }{
       \begin{array}{c}
       \esem{\pexp}{\sto} = \undefd \quad \verr = [``\mathsf{ExprEval}", \stringify {\pexp}]
       \end{array}
       }
       \and
       \infer{
          \sthreadp{ \sto }{ \hp },  \pifelse{\pexp}{\scmd_1}{\scmd_2}
          \baction_{\fictx} {\oxerr} : \sthreadp{ \sto_{\oxerr} }{ \hp }
       }{
              \begin{array}{c}
       \esem{\pexp}{\sto} = v \notin \Bool  \\ \verr = [``\mathsf{Type}", \stringify {\pexp}, v, ``\mathsf{Bool}"]
       \end{array}
       } \and
       \infer{
    \st, \scmd_1 ; \scmd_2 \baction_{\fictx} \outcome : \st'
       }{
       \begin{array}{c}
       \st, \scmd_1 \baction_{\fictx} \st'' \quad  \st'', \scmd_2 \baction_{\fictx} \outcome:  \st' \end{array}
       } \and
       \infer{
    \st, \scmd_1 ; \scmd_2 \baction_{\fictx} \outcome: \st'
       }{
       \begin{array}{c}
       \st, \scmd_1 \baction_{\fictx} \outcome:   \st' \quad  \outcome \neq \oxok \end{array}
       } \and
\and
%
 \infer{
     \sthreadp{ \sto }{ \hp },
     \pfuncall{\pvar{y}}{\procname}{\vec{\pexp}} \baction_{\fictx} \oxok : 
     \sthreadp{\sto [\pvar{y} \storearrow v' ] }{ \hp' }
  }{
    \begin{array}{c}
    \pfunction{\procname}{\vec{\pvar{x}}}{\scmd; \preturn{\pexp'}} \in \scontext
    \\
    \esem{\vec{\pexp}}{\sto} = \vec{v} 
     \quad \pv{\scmd} \setminus
      \{\vec{\pvar{x}}\} = \{\vec{\pvar{z}} \}
    \\
 \sto_p  =  \emptyset [ \vec{\pvar{x}} \storearrow \vec{v}] [ \vec{\pvar{z}} \storearrow \nil]
     \\ (\sto_p, \hp), \scmd
      \baction_{\fictx} \oxok:\sthreadp{ \sto_q }{ \hp' }  \quad  \esem{\pexp'}{\sto_q} =v' 
    \end{array}
  }
  \and
  \infer{
     \sthreadp{ \sto }{ \hp },
     \pfuncall{\pvar{y}}{\procname}{\vec{\pexp}} \baction_{\fictx}
     \oxerr: \sthreadp{\sto_{\oxerr} }{ \hp' }
  }{
    \begin{array}{c}
    \pfunction{\procname}{\vec{\pvar{x}}}{\scmd; \preturn{\pexp'}} \in \scontext
    \\
    \esem{\vec{\pexp}}{\sto} = \vec{v} 
     \quad \pv{\scmd} \setminus
      \{\vec{\pvar{x}}\} = \{\vec{\pvar{z}} \}
    \\
 \sto_p  =  \emptyset [ \vec{\pvar{x}} \storearrow \vec{v}] [ \vec{\pvar{z}} \storearrow \nil]
     \\ (\sto_p, \hp), \scmd
      \baction_{\fictx} \oxok:\sthreadp{ \sto_q }{ \hp' }  \quad 
      \esem{\pexp'}{\sto_q} = \undefd \\ \verr = [``\mathsf{ExprEval}", \stringify {\pexp'}]
    \end{array}
  } \and
  \infer{
     \sthreadp{ \sto }{ \hp },
     \pfuncall{\pvar{y}}{\procname}{\vec{\pexp}} \baction_{\fictx}
     \oxerr: \sthreadp{\sto_{\oxerr} }{ \hp }
  }{
    \begin{array}{c}
    \pfunction{\procname}{\vec{\pvar{x}}}{\scmd; \preturn{\pexp'}} \in \scontext
    \\
    |\vec{\pvar x}| \neq |\vec{\pexp}| \\
    \verr = [``\mathsf{ParamCount}", f]
    \end{array}
  } \and
  \infer{
     \sthreadp{ \sto }{ \hp },
     \pfuncall{\pvar{y}}{\procname}{\vec{\pexp}} \baction_{\fictx}
     \outcome: \sthreadp{\sto [\pvar{err} \storearrow \sto_q(\pvar{err}) ] }{ \hp }
  }{
    \begin{array}{c}
    \pfunction{\procname}{\vec{\pvar{x}}}{\scmd; \preturn{\pexp'}} \in \scontext
    \\
    \esem{\vec{\pexp}}{\sto} = \vec{v} 
     \quad \pv{\scmd} \setminus
      \{\vec{\pvar{x}}\} = \{\vec{\pvar{z}} \}
    \\
 \sto_p  =  \emptyset [ \vec{\pvar{x}} \storearrow \vec{v}] [ \vec{\pvar{z}} \storearrow \nil]
     \\ (\sto_p, \hp), \scmd
      \baction_{\fictx} \outcome: \sthreadp{ \sto_q }{ \hp' }  \quad 
      \outcome \neq \oxok
    \end{array}
  }\and
  \infer{
     \sthreadp{ \sto }{ \hp },
     \pfuncall{\pvar{y}}{\procname}{\pexp_1, \ldots \pexp_n} \baction_{\fictx}
     \oxerr: \sthreadp{\sto_{\oxerr} }{ \hp }
  }{
    \begin{array}{c}
    \pfunction{\procname}{\vec{\pvar{x}}}{\scmd; \preturn{\pexp'}} \in \scontext 
    \\
    k \in \{ 1, \ldots n \} \quad (\esem{\pexp_i}{\sto} = v_i)|_{i=1}^{k-1}  \quad \esem{\pexp_k}{\sto} = \undefd  \\
    |\vec{\pvar{x}}| = n \quad \verr = [``\mathsf{ExprEval}", \stringify {\pexp_k}]
    \end{array}
  } \and
        \infer{
     \sthreadp{ \sto }{ \hp }, \pfuncall{\pvar{x}}{\procname}{\vec{\pexp}} \baction_{\fictx} {\oxerr} :  (\sto_{\oxerr}, \hp)
  }{
    \fid \notin \dom(\scontext) \quad
    \verr = [``\mathsf{NoFunc}", \fid]
  }
\and
       \infer{
          \sthreadp{ \sto }{ \hp }, \pderef{\pvar{x}}{\pexp} \baction_{\fictx} \oxok:\sthreadp{\sto [\pvar{x} \storearrow v] }{ \hp }
       }{
       \esem{\pexp}{\sto} = n \quad \hp( n) = v
       }  \and
       \infer{
          \sthreadp{ \sto }{ \hp }, \pderef{\pvar x}{\pexp} \baction_{\fictx} {\oxerr} : \sthreadp{ \sto_{\oxerr} }{ \hp }
       }{
       \begin{array}{c}
       \esem{\pexp}{\sto} = \undefd \\ \verr = [``\mathsf{ExprEval}", \stringify {\pexp}]
       \end{array}
       }
\and
       \infer{
          \sthreadp{ \sto }{ \hp }, \pderef{\pvar x}{\pexp}
          \baction_{\fictx} {\oxerr} : \sthreadp{ \sto_{\oxerr} }{ \hp }
       }{
              \begin{array}{c}
       \esem{\pexp}{\sto} = v \notin \nats  \\ \verr = [``\mathsf{Type}", \stringify {\pexp}, v, ``\mathsf{Nat}"]
       \end{array}
       } \and
        \infer{
          \sthreadp{ \sto }{ \hp }, \pderef{\pvar x}{\pexp}
          \baction_{\fictx} {\oxm} : \sthreadp{ \sto_{\oxerr} }{ \hp }
       }{
       \begin{array}{c}\esem{\pexp}{\sto} = n  \notin \dom (\hp) \\ \verr = [``\mathsf{MissingCell}", \stringify{\pexp}, n]\end{array}
       }
       \and
       \infer{
          \sthreadp{ \sto }{ \hp }, \pderef{\pvar x}{\pexp}
          \baction_{\fictx} {\oxerr} : \sthreadp{ \sto_{\oxerr} }{ \hp }
       }{
       \begin{array}{c}\esem{\pexp}{\sto} = n  \quad \hp(n) = \cfreed \\ \verr = [``\mathsf{UseAfterFree}", \stringify{\pexp}, n]\end{array}
       } \and
       \infer{
          \sthreadp{ \sto }{ \hp }, \pmutate{\pexp_1}{\pexp_2}
          \baction_{\fictx} \oxok: \sthreadp{ \sto }{ \hp [n \mapsto v] }
       }{
       \esem{\pexp_1}{\sto} = n \quad \hp(n) \in \Val \quad \esem{\pexp_2}{\sto} = v}
\and
\infer{
          \sthreadp{ \sto }{ \hp }, \pmutate{\pexp_1}{\pexp_2} \baction_{\fictx} {\oxerr} : \sthreadp{ \sto_{\oxerr} }{ \hp }
       }{
       \begin{array}{c}
       \esem{\pexp_1}{\sto} = \undefd \\ \verr = [``\mathsf{ExprEval}", \stringify {\pexp_1}]
       \end{array}
       }
       \and
       \infer{
          \sthreadp{ \sto }{ \hp }, \pmutate{\pexp_1}{\pexp_2}
          \baction_{\fictx} {\oxerr} : \sthreadp{ \sto_{\oxerr} }{ \hp }
       }{
              \begin{array}{c}
       \esem{\pexp_1}{\sto} = v \notin \nats  \\ \verr = [``\mathsf{Type}", \stringify {\pexp_1}, v, ``\mathsf{Nat}"]
       \end{array}
       } 
       \and
        \infer{
          \sthreadp{ \sto }{ \hp }, \pmutate{\pexp_1}{\pexp_2}
          \baction_{\fictx} {\oxm} : \sthreadp{ \sto_{\oxerr} }{ \hp }
       }{
       \begin{array}{c}\esem{\pexp_1}{\sto} = n  \notin \dom (\hp) \\ \verr = [``\mathsf{MissingCell}", \stringify{\pexp_1}, n]\end{array}
       } 
       \and
       \infer{
          \sthreadp{ \sto }{ \hp }, \pmutate{\pexp_1}{\pexp_2}
          \baction_{\fictx} {\oxerr} : \sthreadp{ \sto_{\oxerr} }{ \hp }
       }{
       \begin{array}{c}\esem{\pexp_1}{\sto} = n  \quad \hp(n) = \cfreed \\ \verr = [``\mathsf{UseAfterFree}", \stringify{\pexp_1}, n]\end{array}
       } \and
       \infer{
          \sthreadp{ \sto }{ \hp }, \pmutate{\pexp_1}{\pexp_2}
          \baction_{\fictx} {\oxerr} : \sthreadp{ \sto_{\oxerr} }{ \hp }  }{
       \begin{array}{c}
       \esem{\pexp_1}{\sto} = n \quad \hp(n) \in \Val \quad \esem{\pexp_2}{\sto} = \undefd \\ \verr = [``\mathsf{ExprEval}", \stringify {\pexp_2}]
       \end{array}
       } \and
       \infer{
          \sthreadp{ \sto }{ \hp }, \palloc{\pvar{x}}{n} \baction_{\fictx} \oxok:\sthreadp{\sto [\pvar{x} \storearrow n'] }{ \hp' }
       }{
       \begin{array}{c}
       		\qquad (n' + i \notin \dom (\hp))|_{0 \le i < n} \\
       		\hp' = \hp[n' \mapsto \nil]\cdots[n' + n - 1 \mapsto \nil]
       \end{array}
       }
       \and
%
%
       \infer{
          \sthreadp{ \sto }{ \hp }, \pdealloc{\pexp} \baction_{\fictx} \oxok:\sthreadp{ \sto }{ \hp[n \mapsto \cfreed]}
       }{
       \esem{\pexp}{\sto} = n \quad  \hp(n) \in \Val 
       } \and
       \infer{
          \sthreadp{ \sto }{ \hp }, \pdealloc{\pexp} \baction_{\fictx} {\oxerr} : \sthreadp{ \sto_{\oxerr} }{ \hp }
       }{
       \begin{array}{c}
       \esem{\pexp}{\sto} = \undefd \\ \verr = [``\mathsf{ExprEval}", \stringify {\pexp}]
       \end{array}
       }
       \and
       \infer{
          \sthreadp{ \sto }{ \hp }, \pdealloc{\pexp}
          \baction_{\fictx} {\oxerr} : \sthreadp{ \sto_{\oxerr} }{ \hp }
       }{
              \begin{array}{c}
       \esem{\pexp}{\sto} = v \notin \nats  \\ \verr = [``\mathsf{Type}", \stringify {\pexp}, v, ``\mathsf{Nat}"]
       \end{array}
       } \and
        \infer{
          \sthreadp{ \sto }{ \hp }, \pdealloc{\pexp}
          \baction_{\fictx} {\oxm} : \sthreadp{ \sto_{\oxerr} }{ \hp }
       }{
       \begin{array}{c}\esem{\pexp}{\sto} = n  \notin \dom (\hp) \\ \verr = [``\mathsf{MissingCell}", \stringify{\pexp}, n]\end{array}
       } 
       \and
       \infer{
          \sthreadp{ \sto }{ \hp }, \pdealloc{\pexp}
          \baction_{\fictx} {\oxerr} : \sthreadp{ \sto_{\oxerr} }{ \hp }
       }{
       \begin{array}{c}\esem{\pexp}{\sto} = n  \quad \hp(n) = \cfreed \\ \verr = [``\mathsf{UseAfterFree}", \stringify{\pexp}, n]\end{array}
       } 	
    \end{mathparpagebreakable}
}

\vspace{1em}
\noindent
where $\sto_{\oxerr} \defeq \serr$.

\subsection{Predicate Folding and Unfolding}

The predicate folding and unfolding commands are ghost commands and therefore have no effect on the program state:
\begin{mathpar}
\footnotesize
\infer{\st, \pfold{\pred}{\vec{\pexp}} \baction_{\fictx} \oxok: \st}{ \ }
\and
\infer{\st, \punfold{\pred}{\vec{\pexp}} \baction_{\fictx} \oxok: \st}{ \ }
\end{mathpar}

\subsection{Symbolic Testing}%
\label{app:concretesymtesting}

We provide the rules for the classic commands supported by symbolic testing tools and bounded model checkers.
\begin{itemize}
\item	$\mathsf{assert}$ checks that the provided boolean expression is true
\item $\mathsf{assume}$ cuts execution branches which do not satisfy the provided boolean expression
\item $\mathsf{sym}$ is like $\mathsf{nondet}$, but it can yield any value instead of just a $\nats$
\end{itemize}
 
\begin{mathpar}
\footnotesize
\infer{ \sthreadp{\sto}{\hp}, \mathsf{assume}(\pexp) \baction_{\fictx} \osucc : (\sto, \hp)}{\esem{\pexp}{\sto} = \true}
\and
\infer{\sthreadp{\sto}{\hp}, \mathsf{assume}(\pexp) \baction_{\fictx} \oerr : (\sto_{\oerr}, \hp) }{ \esem{\pexp}{\sto} = \undefd \quad v_{\oerr} \defeq [``\mathsf{ExprEval}", \stringify {\pexp}]}
\and
\infer{\sthreadp{\sto}{\hp}, \mathsf{assert}(\pexp) \baction_{\fictx} \osucc : (\sto, \hp)}{\esem{\pexp}{\sto} = \true}
\and
\infer{\sthreadp{\sto}{\hp}, \mathsf{assert}(\pexp) \baction_{\fictx} \oerr : (\sto_{\oerr}, \hp)}{\esem{\pexp}{\sto} = \false \quad \sverr \defeq [\mathstr{\mathsf{Assert}}, \stringify{\pexp}]}
\and
\infer{\sthreadp{\sto}{\hp}, \mathsf{assert}(\pexp) \baction_{\fictx} \oerr : (\sto_{\oerr}, \hp) }{ \esem{\pexp}{\sto} = \undefd \quad v_{\oerr} \defeq [``\mathsf{ExprEval}", \stringify {\pexp}]}
\and
\infer{\sthreadp{\sto}{\hp}, \pvar{x} := \mathsf{sym} \baction_{\fictx} \osucc: (\sto[\pvar{x} \mapsto v],\hp)}{v \in \vals}
\end{mathpar}
\vspace{1em}
\noindent
where $\sto_{\oxerr} \defeq \serr$.

\subsection{OX and UX Frame Properties}


The language is compositional in that it obeys the standard OX and UX frame properties when the outcome is successful or a language error. As we now discuss.

We denote the set of modified variables for commands $\cmd$ be denoted by $\updt(\cmd)$. We use state composition to frame on and frame off state $\st_f$ as long as its variables are not modified by the current command $C$: formally, we write $\myunmod{(\st_f, \cmd)}$ if and only if $\updt(\cmd) \cap \pv{\st_f} = \emptyset$:
\begin{lemma}[OX and UX Frame properties]\label{lem:ux-ox-frame}

\hspace{2em} \\
  
\begin{tabular}{ll}
OX: & $\st \cdot \st_f, \cmd \baction_{\fictx} \result : \st'
            \land \pv{\cmd} \subseteq \pv{\st} \land
            \myunmod{(\st_f, C)} \implies $\\
& $\qquad \exists \st'', \result'.
~\st, \cmd \baction_{\fictx} \result' : \st''~\land 
 (\result' \neq \omiss
 \Rightarrow 
 (\st' = \st'' \cdot \st_f \land \result' = \result))$
\\[1mm]
UX:  & $\st, \cmd \baction_{\fictx} \result : \st' \land \result
             \neq \omiss
             \land \myunmod{(\st_f  , C)} 
  \land \st'\cdot\st_f \text{ defined} \implies 
\st \cdot \st_f, \cmd \baction_{\fictx} \result : \st'\cdot\st_f $
\end{tabular}
\end{lemma}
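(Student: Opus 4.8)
The plan is to prove both frame properties by induction on the derivation of the operational semantics judgement $\st, \cmd \baction_{\fictx} \result : \st'$ (for the UX case) and $\st \cdot \st_f, \cmd \baction_{\fictx} \result : \st'$ (for the OX case), following the structure of the concrete semantics rules given in Appendix~\ref{app:concrete}. Before starting, I would isolate the key observations that make the induction routine: (i) expression evaluation $\esem{\pexp}{\sto}$ depends only on the store, so if $\pv{\pexp} \subseteq \dom(\sto)$ and $\sto = \sto_1 \cup \sto_2$ then $\esem{\pexp}{\sto} = \esem{\pexp}{\sto_1}$ when $\pv{\pexp} \subseteq \dom(\sto_1)$; (ii) heap composition $\uplus$ is defined precisely when domains are disjoint, and for heap-reading/writing commands the relevant cell either lies in the local heap (in which case the frame is untouched and composition is preserved) or does not (in which case the local execution faults with $\omiss$, and in the OX direction this is exactly the escape hatch allowed by the statement); and (iii) $\myunmod{(\st_f, \cmd)}$ guarantees that the store-update portion of any rule only writes variables outside $\pv{\st_f}$, so $(\sto[\pvar{x} \mapsto v]) \cup \sto_f = (\sto \cup \sto_f)[\pvar{x} \mapsto v]$ and store compatibility on the intersection is maintained.

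For the UX direction I would proceed case-by-case on the last rule used. The atomic store-only commands ($\pskip$, $\passign{\pvar{x}}{\pexp}$, $\passign{\pvar{x}}{\prandom}$, $\perror$, and their error cases) are immediate from observations (i) and (iii): the heap is unchanged, and $\st' \cdot \st_f$ defined means the updated store still composes. For the heap commands ($\pderef{}{}$, $\pmutate{}{}$, $\pmyalloc{}$, $\pdealloc{}$): in the successful and language-error cases the accessed address is in $\dom(\hp)$, so it is in $\dom(\hp \uplus \hp_f)$ too, the same rule applies to the composed state, and the resulting heap composes because $\hp' \cdot \hp_f$ is assumed defined; for allocation one must additionally note that the fresh address chosen locally can be chosen fresh for $\hp \uplus \hp_f$ as well, but here the hypothesis that $\st' \cdot \st_f$ is defined forces the allocated address to be outside $\dom(\hp_f)$, so the same derivation replays. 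The sequencing rules follow by applying the IH twice (first to $\cmd_1$, then to $\cmd_2$, using $\updt(\cmd_1), \updt(\cmd_2) \subseteq \updt(\cmd_1;\cmd_2)$ to preserve $\myunmod{}$), and the conditional rule by a single IH application after observing the guard evaluates identically. The function-call rule is handled by the IH applied to the body execution; since function-local stores are fresh, framing does not interfere with the callee's store, and the $\pvar{y}$-update at the end only touches a variable assumed unmodified by the frame.

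For the OX direction the induction runs the same way but in reverse: from a derivation over $\st \cdot \st_f$ we extract a derivation over $\st$, accounting for the possibility that $\st$ alone lacks a heap cell that $\st_f$ supplied — in that case the local execution produces $\omiss$ (with possibly a different outcome $\result'$ than $\result$), which is exactly what the disjunct $\result' \neq \omiss \Rightarrow \ldots$ permits. The main obstacle, and the case deserving the most care, is the function-call rule together with allocation: one must argue that the callee's fresh-name choices and the whole-program execution's fresh-name choices can be reconciled — concretely, that a renaming of the symbolic/concrete addresses allocated during the call yields a valid local derivation whose result still composes with $\st_f$. This is the standard ``allocation is deterministic up to renaming'' argument, and I would either build it in by quotienting states by address permutation or, more simply, observe that our big-step semantics is parametric in the choice of fresh addresses so any locally-valid choice disjoint from $\dom(\hp_f)$ suffices. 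The remaining cases are mechanical, so I expect the proof to be a long but unsurprising structural induction, with the allocation/fresh-name bookkeeping in the function-call case being the only genuinely delicate point.
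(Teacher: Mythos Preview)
The paper does not actually prove this lemma: it states it, notes that the result is standard, explains why the $\omiss$ exclusion is necessary, and cites~\cite{Yang02} for the asymmetric shape of the OX property. Your structural-induction sketch is the standard argument and is essentially correct; the three observations you isolate (expression evaluation depends only on the store restricted to $\pv{\pexp}$, heap composition is disjoint union, and $\myunmod$ ensures store updates commute with the frame) are exactly the ingredients that make each base case routine, and the inductive cases (sequencing, conditionals, function call) go through as you describe.

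One small overcomplication: your worry about address renaming in the OX direction is unnecessary. In that direction you start from an execution on the larger heap $\hp \uplus \hp_f$; any address chosen fresh for $\hp \uplus \hp_f$ is automatically fresh for $\hp$ alone, so the \emph{same} allocation choices replay locally and the result composes back with $\st_f$ on the nose. The renaming issue you have in mind is what forces the OX statement to be phrased existentially (``$\exists \st'',\result'$'') rather than as the symmetric analogue of the UX property---this is precisely the point the paper's remark after the lemma is making---but once the statement is phrased this way, no quotienting or permutation argument is needed in the proof itself. In the UX direction your observation that the hypothesis ``$\st' \cdot \st_f$ defined'' forces the locally-allocated addresses to avoid $\dom(\hp_f)$ is the right one and handles allocation cleanly.
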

%
%
%
\noindent The frame properties do not hold when the outcome denotes a missing-resource error since the outcome can change if the missing resource is added by the frame. The OX frame property is standard but might appear surprising, but the  simpler property analogous to the  UX frame property does not hold as the frame might interfere with allocations created in the given~execution~trace~\cite{Yang02}.

\newpage

\section{Symbolic Semantics}\label{app:symbolic}
This appendix provides an exhaustive list of the symbolic execution semantics rules. The corresponding judgement is
\[
	\csesemtransabstract{\sst}{\scmd}{\sst'}{\fsctx}{\outcome}
\]
which intuitively says that symbolically executing command $\scmd$ in state $\sst$ ends in state $\sst'$ with outcome $\outcome := \osucc \mid \oerr \mid \omiss$, in the specification context $\fsctx$.

In cases where a rule is only applicable to one mode of reasoning, that is, OX and UX, we write:
\[
	\csesemtransabstractm{\sst}{\scmd}{\sst'}{\fsctx}{\macUX}{\outcome} \qquad \csesemtransabstractm{\sst}{\scmd}{\sst'}{\fsctx}{\macOX}{\outcome}
\]
For rules applicable to both OX and UX reasoning, one needs to keep in mind is that for the recursively defined rules, all recursive executions must happen in the same mode as the initial executions.

These three judgements corresponds to the \textit{symbolic semantics with function calls by specifications}. There is also an alternative judgement for the \textit{symbolic semantics with function calls by inlining}, which uses a function context $\scontext$ instead of a specification context.

        
\subsection{Expression Evaluation}

The symbolic semantics of expression evaluation may branch, we provide a few rules here, matching the ones given in the concrete semantics.
{\footnotesize
\begin{mathpar}
\inferrule[Eval-PVar]{}
{\cseeval{\pvar{x}}{\ssto}{\spc}{\ssto(x)}{\spc}}
\and
\inferrule[Eval-Value]{}
{\cseeval{v}{\ssto}{\spc}{v}{\spc}}
\and
\inferrule[Eval-Plus]{
	\cseeval{\pexp_1}{\ssto}{\spc}{\sval_1}{\spc'}\quad
	\cseeval{\pexp_2}{\ssto}{\spc'}{\sval_2}{\spc''}\\\\
	\spc''' = \spc'' \land \sval_1 \in \nats \land \sval_2 \in \nats\\\\
	\sat(\spc''')
}
{\cseeval{\pexp_1~{+}~\pexp_2}{\ssto}{\spc}{\sval_1 + \sval_2}{\spc'''}}
\and
\inferrule[Eval-Div]{
	\cseeval{\pexp_1}{\ssto}{\spc}{\sval_1}{\spc'}\quad
	\cseeval{\pexp_2}{\ssto}{\spc'}{\sval_2}{\spc''}\\\\
	\spc''' = \spc'' \land \sval_1 \in \nats \land \sval_2 \in \nats \land \sval_2 \neq 0\\\\
	\sat(\spc''')
}
{\cseeval{\pexp_1~{/}~\pexp_2}{\ssto}{\spc}{\sval_1 / \sval_2}{\spc'''}}
\end{mathpar}}

\subsection{Core Rules}\label{ssec:sym_rules}

The rules given here are the ones common to all kinds of symbolic execution. Although we provide them using the judgement with specification context $\fsctx$, these rules are also valid for the semantics with function inlining using a function context $\scontext$.

{
\footnotesize
\begin{mathparpagebreakable}
\inferrule[\textsc{Skip}]
{}
{\csesemtransabstract{\sst}{\pskip}{\sst}{\fsctx}{\osucc}}
\and
\inferrule[\textsc{Assign}]
{\cseeval{\pexp}{\ssto}{\spc}{\sval}{\spc'} \quad \ssto' \defeq \ssto[\pvar{x} \mapsto \sval] }
{\csesemtrans{\ssto, \smem, \spc}{\passign{\pvar{x}}{\pexp}}{\ssto', \smem, \spc'}{\fsctx}{\osucc}}
\and
\inferrule[\textsc{Assign (Error)}]
{\cseeval{\pexp}{\ssto}{\spc}{\undefd}{\spc'} \\\\ \sverr \defeq [\mathstr{\mathsf{ExprEval}}, \stringify{\pexp}] }
{\csesemtrans{\ssto, \smem, \spc}{\passign{\pvar{x}}{\pexp}}{\ssto_{\oerr}, \smem, \spc'}{\fsctx}{\oerr}}
\and
\inferrule[\textsc{Nondet}]
{\svar{r} \text{ fresh} \quad \spc' \defeq \svar{r} \in \nats \land \spc \quad \ssto' = \ssto[\pvar{x} \mapsto \svar{r}]}
{\csesemtrans{\ssto, \smem, \spc}{\passign{\pvar{x}}{\prandom}}{\ssto', \smem, \spc'}{\fsctx}{\osucc}}
\and
\inferrule[\textsc{Error}]
{\cseeval{\pexp}{\ssto}{\spc}{\sval}{\spc'} \quad \sverr \defeq [\mathstr{\mathsf{Error}}, \sval]}
{\csesemtrans{\ssto, \smem, \spc}{\perror(\pexp)}{\ssto_{\oerr}, \smem, \spc'}{\fsctx}{\oerr}}
\and
\inferrule[\textsc{Error (Error)}]
{\cseeval{\pexp}{\ssto}{\spc}{\undefd}{\spc'} \quad \sverr \defeq [\mathstr{\mathsf{ExprEval}}, \stringify{\pexp}]}
{\csesemtrans{\ssto, \smem, \spc}{\perror(\pexp)}{\ssto_{\oerr}, \smem, \spc'}{\fsctx}{\oerr}}
\and
\inferrule[\textsc{If-Then}]
{\cseeval{\pexp}{\ssto}{\spc}{\sval}{\spc'} \quad \spc'' \defeq \spc' \land \sval \quad \sat(\spc'') \\\\ \csesemtrans{\ssto, \smem, \spc''}{C_1}{\ssto', \smem', \spc'''}{\fsctx}{\result}}
{\csesemtrans{\ssto, \smem, \spc}{\pifelse{\pexp}{C_1}{C_2}}{ \ssto', \smem', \spc'''}{\fsctx}{\result}}
\and
\inferrule[\textsc{If-Else}]
{\cseeval{\pexp}{\ssto}{\spc}{\sval}{\spc'} \quad \spc'' \defeq \spc' \land \neg\sval \quad \sat(\spc'') \\\\ \csesemtrans{\ssto, \smem, \spc''}{C_2}{\ssto', \smem', \spc'''}{\fsctx}{\result}}
{\csesemtrans{\ssto, \smem, \spc}{\pifelse{\pexp}{C_1}{C_2}}{ \ssto', \smem', \spc'''}{\fsctx}{\result}}
\and
\inferrule[\textsc{If-Err-Val}]
{\cseeval{\pexp}{\ssto}{\spc}{\undefd}{\spc'} \quad \sverr \defeq [\mathstr{\mathsf{ExprEval}}, \stringify{\pexp}]}
{\csesemtrans{\ssto, \smem, \spc}{\pifelse{\pexp}{C_1}{C_2}} {\ssto_{\oerr}, \smem,  \spc'}{\fsctx}{\oerr}}
\and
\inferrule[\textsc{If-Err-Type}]
{\cseeval{\pexp}{\ssto}{\spc}{\sval}{\spc'} \quad \spc'' \defeq \spc' \land \sval \not\in \bools \quad \sat(\spc'') \\\\ \sverr \defeq [\mathstr{\mathsf{Type}}, \stringify{\pexp}, \sval, \mathstr{\mathsf{Bool}}]}
{\csesemtrans{\ssto, \smem, \spc}{\pifelse{\pexp}{C_1}{C_2}} {\ssto_{\oerr}, \smem,  \spc''}{\fsctx}{\oerr}}
\and
\inferrule[\textsc{Seq}]
{\csesemtransabstract{\sst}{C_1}{\sst'}{\fsctx}{\osucc} \\\\ \csesemtransabstract{\sst'}{C_2}{\sst''}{\fsctx}{\result}}
{\csesemtransabstract{\sst}{C_1;C_2}{\sst''}{\fsctx}{\result}}
\and
\inferrule[\textsc{Seq-Err}]
{\csesemtransabstract{\sst}{C_1}{\sst'}{\fsctx}{\result} \quad \result \neq \osucc}
{\csesemtransabstract{\sst}{C_1;C_2}{\sst'}{\fsctx}{\result}}
\and
\inferrule[\textsc{Lookup}]
 {\cseeval{\pexp}{\ssto}{\spc}{\sval}{\spc'} \quad \smem(\sloc) = \sval_m \\\\ \spc'' \defeq (\sloc = \sval) \land \spc' \quad \sat(\spc'')}
 {\csesemtrans{\ssto, \smem,\spc}{\pderef{\pvar{x}}{\pexp}}{ \ssto[\pvar{x} \mapsto \sval_m], \smem, \spc''}{\fsctx}{\osucc}}
\and
\inferrule[\textsc{Lookup-Err-Val}]
  {\cseeval{\pexp}{\ssto}{\spc}{\undefd}{\spc'}\quad \sverr \defeq [\mathstr{\mathsf{ExprEval}}, \stringify{\pexp}] }
  {\csesemtrans{\ssto, \smem,  \spc}{\pderef{\pvar{x}}{\pexp}}{\ssto_{\oerr}, \smem, \spc'}{\fsctx}{\oerr}}
\and
\inferrule[\textsc{Lookup-Err-Type}]
 {\cseeval{\pexp}{\ssto}{\spc}{\sval}{\spc'} \quad \spc'' \defeq \sval \not\in \nats \land \spc' \quad \sat(\spc'') \quad \sverr \defeq [\mathstr{\mathsf{Type}}, \stringify{\pexp}, \sval, \mathstr{\mathsf{Nat}}]}
 {\csesemtrans{\ssto, \smem,\spc}{\pderef{\pvar{x}}{\pexp}}{\ssto_{\oerr}, \smem, \spc''}{\fsctx}{\oerr}}
\and
\inferrule[\textsc{Lookup-Err-Use-After-Free}]
 {\cseeval{\pexp}{\ssto}{\spc}{\sval}{\spc'} \quad \smem(\sloc) = \cfreed \quad \spc'' \defeq \sval \in \nats \land (\sloc = \sval) \land \spc' \\\\ \sverr \defeq [\mathstr{\mathsf{UseAfterFree}}, \stringify{\pexp}, \sval] \quad \sat(\spc'')}
 {\csesemtrans{\ssto, \smem, \spc}{\pderef{\pvar{x}}{\pexp}}{\ssto_\oerr , \smem, \spc''}{\fsctx}{\oerr}}
\and
\inferrule[\textsc{Lookup-Err-Missing}]
 {\cseeval{\pexp}{\ssto}{\spc}{\sval}{\spc'} \quad \spc'' \defeq \sval \in \nats \land \sval \not\in \domain(\smem) \land \spc' \\\\ \sverr \defeq [\mathstr{\mathsf{MissingCell}}, \stringify{\pexp}, \sval]\quad \sat(\spc'') }
 {\csesemtrans{\ssto, \smem, \spc}{\pderef{\pvar{x}}{\pexp}}{\ssto_\oerr, \smem, \spc''}{\fsctx}{\omiss}}
\and
\inferrule[\textsc{Mutate}]
 {\cseeval{\pexp_1}{\ssto}{\spc}{\sval_1}{\spc'} \quad \smem(\sloc) = \sval_m \quad \spc'' \defeq (\sloc = \sval_1) \land \spc' \\\\ \sat(\spc'') \quad \cseeval{\pexp_2}{\ssto}{\spc''}{\sval_2}{\spc'''} \quad \smem' = \smem[\sloc \mapsto \sval_2]  }
 {\csesemtrans{\ssto, \smem,  \spc}{\pmutate{\pexp_1}{\pexp_2}}{\ssto, \smem',\spc'''}{\fsctx}{\osucc}}
\and
\inferrule[\textsc{Mutate-Err-Val-1}]
  {\cseeval{\pexp_1}{\ssto}{\spc}{\undefd}{\spc'} \quad 
  \sverr \defeq [\mathstr{\mathsf{ExprEval}}, \stringify{\pexp_1}]}
  {\csesemtrans{\ssto, \smem,  \spc}{\pmutate{\pexp_1}{\pexp_2}}{\ssto_{\oerr}, \smem, \spc'}{\fsctx}{\oerr}}
\and
\inferrule[\textsc{Mutate-Err-Type}]
  {\cseeval{\pexp_1}{\ssto}{\spc}{\sval_1}{\spc'} \quad \spc'' \defeq \sval_1 \not\in \nats \land \spc' \quad \sat(\spc'') \\\\ \sverr \defeq [\mathstr{\mathsf{Type}}, \stringify{\pexp_1}, \sval_1, \mathstr{\mathsf{Nat}}]}
  {\csesemtrans{\ssto, \smem, \spc}{\pmutate{\pexp_1}{\pexp_2}}{\ssto_\oerr, \smem,  \spc''}{\fsctx}{\oerr}}
\and
\inferrule[\textsc{Mutate-Err-Missing}]
 {\cseeval{\pexp_1}{\ssto}{\spc}{\sval_1}{\spc'} \quad \spc'' \defeq \sval_1 \in \nats \land \sval_1 \not\in \domain(\smem) \land \spc' \\\\ \sat(\spc'') \quad \sverr \defeq [\mathstr{\mathsf{MissingCell}}, \stringify{\pexp_1}, \sval_1]}
 {\csesemtrans{\ssto, \smem,\spc}{\pmutate{\pexp_1}{\pexp_2}}{\ssto_\oerr, \smem,  \spc''}{\fsctx}{\omiss}}
\and
\inferrule[\textsc{Mutate-Err-Use-After-Free}]
 {\cseeval{\pexp_1}{\ssto}{\spc}{\sval}{\spc'} \quad \smem(\sloc) = \cfreed \\\\ \spc'' \defeq (\sloc = \sval) \land \spc' \quad \sat(\spc'') \\\\ \sverr \defeq [\mathstr{\mathsf{UseAfterFree}}, \stringify{\pexp_1}, \sval]}
 {\csesemtrans{\ssto, \smem, \spc}{\pmutate{\pexp_1}{\pexp_2}}{\ssto_\oerr, \smem,  \spc''}{\fsctx}{\oerr}}
\and
\inferrule[\textsc{Mutate-Err-Val-2}]
  {\cseeval{\pexp_1}{\ssto}{\spc}{\sval_1}{\spc'} \quad \cseeval{\pexp_2}{\ssto}{\spc'}{\undefd}{\spc''} \quad \spc''' = \sval_1 \in \nats \land \spc'' \\\\ \sat(\spc''') \quad \sverr \defeq [\mathstr{\mathsf{ExprEval}}, \stringify{\pexp_2}]}
  {\csesemtrans{\ssto, \smem, \spc}{\pmutate{\pexp_1}{\pexp_2}}{\ssto_\oerr, \smem,  \spc'''}{\fsctx}{\oerr}}
  \and
\inferrule[\textsc{New-Zero}]
{\sloc \text{ fresh} \quad \spc' \defeq \sloc \in \nats \land \spc}
{\csesemtrans{\ssto, \smem, \spc}{\palloc{\pvar{x}}{0}}{\ssto[\pvar{x} \mapsto \sloc], \smem, \spc'}{\fsctx}{\osucc}}
\and
\inferrule[\textsc{New-Nonzero}]
 {n \neq 0 \quad \sloc \text{ fresh} \quad \spc' \defeq \sloc,\dots,\sloc + n - 1 \in \nats \land \sloc,\dots,\sloc + n - 1 \not\in \domain(\smem) \land \spc}
 {\csesemtrans{\ssto, \smem, \spc}{\palloc{\pvar{x}}{n}}{\ssto[\pvar{x} \mapsto \sloc], \smem[\sloc \mapsto \texttt{null}]\cdots[\sloc + n - 1 \mapsto \texttt{null}], \spc'}{\fsctx}{\osucc}}
\and
\inferrule[\textsc{Free}]
 {\cseeval{\pexp}{\ssto}{\spc}{\sval}{\spc'} \quad \smem(\sloc) = \sval_m \\\\ \spc'' \defeq (\sloc = \sval) \land \spc' \\\\ \sat(\spc'') \quad
 \smem' = \smem[\sloc \mapsto \cfreed]}
 {\csesemtrans{\ssto, \smem, \spc}{\pdealloc{\pexp}}{\ssto, \smem', \spc''}{\fsctx}{\osucc}}
\and
\inferrule[\textsc{Free-Err-Eval}]
  {\cseeval{\pexp}{\ssto}{\spc}{\undefd}{\spc'} \quad \sverr \defeq [\mathstr{\mathsf{ExprEval}}, \stringify{\pexp}]}
  {\csesemtrans{\ssto, \smem, \spc}{\pdealloc{\pexp}}{\ssto_\oerr, \smem,  \spc'}{\fsctx}{\oerr}}
\and
\inferrule[\textsc{Free-Err-Type}]
  {\cseeval{\pexp}{\ssto}{\spc}{\sval}{\spc'} \quad \spc'' \defeq \sval \not\in \nats \land \spc' \quad \sat(\spc'') \\\\ \sverr \defeq [\mathstr{\mathsf{Type}}, \stringify{\pexp}, \sval, \mathstr{\mathsf{Nat}}]}
  {\csesemtrans{\ssto, \smem, \spc}{\pdealloc{\pexp}}{\ssto_\oerr, \smem,  \spc''}{\fsctx}{\oerr}}
\and
\inferrule[\textsc{Free-Err-Missing}]
 {\cseeval{\pexp}{\ssto}{\spc}{\sval}{\spc'} \quad \spc'' \defeq \sval \in \nats \land \sval \not\in \domain(\smem) \land \spc'  \\\\ \sverr \defeq [\mathstr{\mathsf{MissingNegCell}}, \stringify{\pexp}, \sval]\quad \sat(\spc'')}
 {\csesemtrans{\ssto, \smem,  \spc}{\pdealloc{\pexp}}{\ssto_\oerr, \smem,  \spc''}{\fsctx}{\omiss}}
\and
\inferrule[\textsc{Free-Err-Use-After-Free}]
 {\cseeval{\pexp}{\ssto}{\spc}{\sval}{\spc'} \quad \smem(\sloc) = \cfreed \\\\
  \spc'' \defeq \sval \in \nats \land (\sloc = \sval) \land \spc' \\\\ 
 \sat(\spc'') \quad \sverr \defeq [\mathstr{\mathsf{UseAfterFree}}, \stringify{\pexp}, \sval]}
 {\csesemtrans{\ssto, \smem, \spc}{\pdealloc{\pexp}}{\ssto_\oerr, \smem,  \spc''}{\fsctx}{\oerr}}
\and
\inferrule[\textsc{Fcall-Err-ParamCount}]
  {\uquadruple{\vec{\pvar{x}} = \vec{x} \lstar P}{\fid(\vec{\pvar{x}})}{\Qok}{\Qerr} \in \fsctx \quad |\vec{\pexp}| \neq |\vec{\pvar{x}}| \\\\
   \sverr \defeq [\mathstr{\mathsf{ParamCount}}, \fid]}
  {\csesemtrans{\ssto, \smem, \spc}{\passign{\pvar y}{\fid(\vec{\pexp})}}{\ssto_\oerr, \smem, \spc}{\fsctx}{\oerr}}
\and
\inferrule[\textsc{Fcall-Err-Val}]
  {\uquadruple{\vec{\pvar{x}} = \vec{x} \lstar P}{\fid(\vec{\pvar{x}})}{\Qok}{\Qerr} \in \fsctx \quad |\vec{\pvar{x}}| = n \\\\
   1 \leq m \leq n \quad \spc_0 = \spc \quad (\cseeval{\pexp_i}{\ssto}{\spc_{i-1}}{\sym v_i}{\spc_i})|_{i=1}^{m-1} \quad \cseeval{\pexp_m}{\ssto}{\spc_{m-1}}{\undefd}{\spc'} \quad \sverr \defeq [\mathstr{\mathsf{ExprEval}}, \stringify{\pexp_m}]}
  {\csesemtrans{\ssto, \smem, \spc}{\passign{\pvar y}{\fid(\pexp_1, \ldots, \pexp_n)}}{\ssto_\oerr, \smem,  \spc'}{\fsctx}{\oerr}}
\end{mathparpagebreakable}
}

\vspace{1em}
\noindent
where $\ssto_{\oerr} \defeq \ssto[\pvar{err} \rightarrow \sverr]$.

\subsection{Function Call: Function Inlinling}

The following rules only apply in symbolic execution with function call by inlining. The corresponding rules for function call by specifications are given right after.

\begin{mathpar}
\footnotesize
  \inferrule[\textsc{Fcall-Inline}]
  {\pfunction{\procname}{\vec{\pvar{x}}}{\scmd; \preturn{\pexp'}} \in \scontext\quad
  \cseeval{\vec{\pexp}}{\ssto}{\spc}{\vec{\sval}}{\spc'}\\\\
  \pv{\scmd} \setminus \{\vec{\pvar{x}}\} = \{\vec{\pvar{z}}\} \quad
  \ssto_p = \emptyset[\vec{\pvar{x}} \storearrow{} \vec{\sval}][\vec{\pvar{z}} \storearrow{} \nil]\\\\
  \csesemtrans{\ssto_p, \smem, \spc'}{\scmd}{\ssto_q, \smem', \spc''}{\scontext}{\osucc}\\
  \cseeval{\pexp'}{\ssto_q}{\spc''}{\svar{r}}{\spc'''}}
  {\csesemtrans{\ssto, \smem, \spc}{\passign{\pvar{y}}{\fid(\vec{\pexp})}}{\ssto[\pvar{y} \mapsto \svar{r}], \smem', \spc'''}{\scontext}{\osucc}}
  \hfill
  \inferrule[\textsc{Fcall-Inline-Err}]
  {\pfunction{\procname}{\vec{\pvar{x}}}{\scmd; \preturn{\pexp'}} \in \scontext\quad
  \cseeval{\vec{\pexp}}{\ssto}{\spc}{\vec{\sval}}{\spc'}\\\\
  \pv{\scmd} \setminus \{\vec{\pvar{x}}\} = \{\vec{\pvar{z}}\} \quad
  \ssto_p = \emptyset[\vec{\pvar{x}} \storearrow{} \vec{\sval}][\vec{\pvar{z}} \storearrow{} \nil]\\\\
  \csesemtrans{\ssto_p, \smem, \spc}{\scmd}{\ssto_p', \smem', \spc'''}{\scontext}{\result} \quad \result \neq \osucc}
  {\csesemtrans{\ssto, \smem, \spc}{\passign{\pvar{y}}{\fid(\vec{\pexp})}}{\ssto[\pvar{err} \mapsto \ssto_q(\pvar{err})], \smem', \spc'''}{\scontext}{\result}}
  \and
  \inferrule[\textsc{Fcall-Inline-Err-RetVal}]
  {\pfunction{\procname}{\vec{\pvar{x}}}{\scmd; \preturn{\pexp'}} \in \scontext\\
  \cseeval{\vec{\pexp}}{\ssto}{\spc}{\vec{\sval}}{\spc'} \\
  \pv{\scmd} \setminus \{\vec{\pvar{x}}\} = \{\vec{\pvar{z}}\} \quad
  \ssto_p = \emptyset[\vec{\pvar{x}} \storearrow{} \vec{\sval}][\vec{\pvar{z}} \storearrow{} \nil]\\
  \csesemtrans{\ssto_p, \smem, \spc'}{\scmd}{\ssto_q, \smem', \spc''}{\scontext}{\osucc}\\\\
  \cseeval{\pexp'}{\ssto_q}{\spc''}{\undefd}{\spc'''}\\
  \sverr \defeq [\mathstr{\mathsf{ExprEval}}, \stringify{\pexp'}]
  }
  {\csesemtrans{\ssto, \smem, \spc}{\passign{\pvar{y}}{\fid(\vec{\pexp})}}{\ssto'_{q_\oerr}, \smem', \spc'''}{\scontext}{\oerr}}
\end{mathpar}

\subsection{Function Call: Function Specifications}

The following rules only apply in symbolic execution with function call by specification.
\begin{mathpar}
\footnotesize

\inferrule[\textsc{Fcall}]
{
\cseeval{\vec{\pexp}}{\ssto}{\spc}{\vec{\sval}}{\spc'} \\
\genquadruple{\vec{\pvar{x}} = \vec{x} \lstar P}{\fid(\vec{\pvar{x}})}{\Qok}{\Qerr} \in \fsctx(f)|_m \\
\hat{\theta} \defeq [\vec{x} \mapsto \vec{\sval}] \\\\
\mac(m, P, \hat{\theta}, \sst[\sstupdate{pc}{\spc'}]) \rightsquigarrow (\hat{\theta}', \sst') \\
\Qok = \exists \vec{y}.~\Qok' \\
\ssubst'' \defeq \ssubst{'}[\vec{y}\mapsto \vec{\hat{z}}] \\\\
r, \hat{r},\vec{\hat{z}}  \text{ fresh} \\
\Qok'' = \Qok'[r/\pvar{ret}] \text{~and~} \ssubst''' = \ssubst''[r \mapsto \hat{r}] \\
\produce(\Qok'', \ssubst''', \sst') \rightsquigarrow  \sst''
}
{\csesemtransabstractm{\sst}{\passign{\pvar{y}}{\fid(\vec{\pexp})}}{\sst''[\sstupdate{sto}{\ssto[\pvar{y} \mapsto \svar{r}]}]}{\fsctx}{m}{\osucc}}
\end{mathpar}
\begin{mathpar}
\footnotesize

\inferrule[\textsc{Fcall-Qerr}]
{
\cseeval{\vec{\pexp}}{\ssto}{\spc}{\vec{\sval}}{\spc'} \\
\genquadruple{\vec{\pvar{x}} = \vec{x} \lstar P}{\fid(\vec{\pvar{x}})}{\Qok}{\Qerr} \in \fsctx(f)|_m \\
\hat{\theta} \defeq [\vec{x} \mapsto \vec{\sval}] \\\\
\mac(m, P, \hat{\theta}, \sst[\sstupdate{pc}{\spc'}]) \rightsquigarrow (\hat{\theta}', \sst') \\
\Qerr = \exists \vec{y}.~\Qerr'  \\
\vec{\hat{z}}, r, \hat{r} \text{ fresh} \\
\ssubst'' \defeq \ssubst{'}[\vec{y}\mapsto \vec{\hat{z}}] \\\\
\Qerr'' = \Qerr'[r/\pvar{err}] \\ \ssubst''' = \ssubst''[r \mapsto \hat{r}] \\
\produce(\Qerr'', \ssubst''', \sst') \rightsquigarrow  \sst''
}
{\csesemtransabstractm{\sst}{\passign{\pvar{y}}{\fid(\vec{\pexp})}}{\sst''[\sstupdate{sto}{\ssto[\pvar{err} \mapsto \svar{r}]}]}{\fsctx}{m}{\oerr}}
\and
\inferrule[\textsc{Fcall-Abort}]
{\cseeval{\vec{\pexp}}{\ssto}{\spc}{\vec{\sval}}{\spc'} \\ \genquadruple{\vec{\pvar{x}} = \vec{x} \lstar P}{\fid(\vec{\pvar{x}})}{\Qok}{\Qerr} \in \fsctx(f)|_{m} \\
\hat{\theta} = [\vec{x} \mapsto \vec{\sval}]\\
\mac(m, P, \hat{\theta}, \sst[\sstupdate{pc}{\spc'}]) \rightsquigarrow \oxabort(\sverr)}
{\csesemtransabstractm{\sst}{\passign{\pvar{y}}{\fid(\vec{\pexp})}}{\sst[\sstupdate{pc}{\spc'}, \sstupdate{sto}{\ssto[\pvar{err} \mapsto \sverr]}]}{\fsctx}{m}{\oxabort}}
\end{mathpar}
where $\sst = (\ssto, \smem, \spred, \spc)$.

\subsection{Predicate Folding and Unfolding}\label{ssec:fold_unfold}


The premise $\pred(\predin; \predout)~\{\bigvee_{i \in I} (\exists \vec{x}_i.~A_i)\} \in \preds $ is required for the rules below.

{\footnotesize

\begin{mathpar}
\mprset{flushleft}
\inferrule[Fold]{
 \cseeval{\vec{\pexp}}{\ssto}{\spc}{\vec{\sval}}{\spc} \\
 \ssubst \defeq [\predin \mapsto \vec{\sval} ] \\\\
 \mac(m, A_i, \ssubst, \sst) \rightsquigarrow  (\ssubst', \sst') \\\\
 \hat{\mathcal{P}}'' \defeq \{ \pred(\vec{\sval}; \ssubst'(\predout)) \} \cup \sst'.\fieldsst{preds}}
 {\csesemtransabstractm{\sst}{\pfold{\pred}{\vec{\pexp}}}{\sst'[\sstupdate{preds}{\sps''}]}{\fsctx}{m}{\osucc}}
\and
\inferrule[Fold-Err-ParamCount]
 {
 |\vec{\pexp}| \neq |\predin|}
 {\csesemtransabstractm{\sst}{\pfold{\pred}{\vec{\pexp}}}{\sst_{\oerr}}{\fsctx}{m}{\oxabort}}
\and
\inferrule[Fold-Err-Eval]
 {
  |\vec{\pexp}| = |\predin| \\
  \cseeval{\vec{\pexp}}{\ssto}{\spc}{\undefd}{\spc'}}
 {\csesemtransabstractm{\sst}{\pfold{\pred}{\vec{\pexp}}}{\sst_{\oerr}[\sstupdate{pc}{\spc'}]}{\fsctx}{m}{\oxabort}}
\and
\inferrule[Fold-Err]
 {
  \cseeval{\vec{\pexp}}{\ssto}{\spc}{\vec{\sval}}{\spc} \\
  \ssubst \defeq [\predin \mapsto \vec{\sval} ] \\\\
  \forall i \in I.~\mac(m, A_i, \ssubst, \sst) \rightsquigarrow  \oxabort}
 {\csesemtransabstractm{\sst}{\pfold{\pred}{\vec{\pexp}}}{\sst_{\oerr}}{\fsctx}{m}{\oxabort}}
\and
\mprset{flushleft}
\inferrule[Unfold]{
 \cseeval{\vec{\pexp}}{\ssto}{\spc}{\vec{\sval}}{\spc} \quad 
 \conspred(\pred, \vec{\sval},\sst)\rightsquigarrow ( \spredout,\sst') \\\\
 \vec{\sym{z}} \text{ fresh} \quad 
 \ssubst \defeq [\predin \mapsto \vec{\sval}, \predout\mapsto \spredout, \vec{x}_i \mapsto \vec{\sym{z}}] \\\\
 \produce(A_i,\ssubst,\sst') \rightsquigarrow \sst''}
 {\csesemtransabstractm{\sst}{\punfold{\pred}{\vec{\pexp}}}{\sst''}{\fsctx}{m}{\osucc}}
\and 
\inferrule[Unfold-Err-ParamCount]
 {
   |\vec{\pexp}| \neq |\predin|}
 {\csesemtransabstractm{\sst}{\punfold{\pred}{\vec{\pexp}}}{\sst_{\oerr}}{\fsctx}{m}{\oxabort}}
\and
\inferrule[Unfold-Err-Eval]
 {
  |\vec{\pexp}| = |\predin| \\
  \cseeval{\vec{\pexp}}{\ssto}{\spc}{\undefd}{\spc'}}
 {\csesemtransabstractm{\sst}{\punfold{\pred}{\vec{\pexp}}}{\sst_{\oerr}[\sstupdate{pc}{\spc'}]}{\fsctx}{m}{\oxabort}}
\and
\inferrule[Unfold-Err]
{
 |\vec{\pexp}| = |\predin| \\
 \cseeval{\vec{\pexp}}{\ssto}{\spc}{\vec{\sval}}{\spc} \\\\
 \conspred(\pred, \vec{\sval},\sst)\rightsquigarrow \oxabort}
 {\csesemtransabstractm{\sst}{\punfold{\pred}{\vec{\pexp}}}{\sst_{\oerr}}{\fsctx}{m}{\oxabort}}
\end{mathpar}}
where $\sst = (\ssto, \smem, \spred, \spc)$, $\sst_{\oerr} \defeq \sst[\sstupdate{sto}{\ssto[\pvar{err} \rightarrow \mathstr{\mathsf{fold/unfold}}]}]$, and $\conspred$ is defined as follows:
{\footnotesize
\begin{mathpar}
\mprset{flushleft}
\inferrule
{\spred = \{ \pred(\spredin; \spredout) \} \cup \spred' \\\\
\spc' \defeq \spc \wedge \vec{\sval}= \spredin \qquad \sat(\spc')}
{\conspred(\pred,\vec{\sval},\sst)\rightsquigarrow (\spredout, \sst[\sstupdate{preds}{\sps'}, \sstupdate{pc}{\spc'}])}
\and
\mprset{flushleft}
\inferrule
{\spred= \{\pred(\spredini; \spredouti) \mid i \in I \}\uplus\spred' \quad \pred\notin \spred' \\\\
\sat(\spc \wedge \neg(\bigwedge_{i \in I}\vec{\sval}=  \spredini)) }
{\conspred(\pred,\vec{\sval},\sst)\rightsquigarrow abort}
\end{mathpar}}

\subsection{Symbolic Testing}

The following rules correspond to the symbolic semantics of the three commands added for symbolic testing. See App.~\ref{app:concretesymtesting} for more details.

\begin{mathpar}
\footnotesize
\inferrule[Assume]{
    \cseeval{\pexp}{\ssto}{\spc}{\sval}{\spc'} \quad \sat(\sval)
}{
    \csesemtrans{(\ssto, \smem, \spc)}{\passume{\pexp}}{(\ssto, \smem, \spc' \land \sval)}{\scontext}{\osucc}
}
\hfill
\inferrule[Assume-Err]{
    \cseeval{\pexp}{\ssto}{\spc}{\undefd}{\spc'} \quad
    \sverr \defeq [\mathstr{\mathsf{ExprEval}}, \stringify{\pexp}]
}{
    \csesemtrans{(\ssto, \smem, \spc)}{\passume{\pexp}}{(\ssto_{\oerr}, \smem, \spc')}{\scontext}{\oerr}
}
\and
\inferrule[Assert-Ok] {
  \cseeval{\pexp}{\ssto}{\spc}{\sval}{\spc'} \quad \sat(\sval)
}{
  \csesemtrans{(\ssto, \smem, \spc)}{\passert{\pexp}}{(\ssto, \smem, \spc' \land \sval)}{\scontext}{\osucc}
}
\hfill
\inferrule[Assert-Fail]{
  \cseeval{\pexp}{\ssto}{\spc}{\sval}{\spc'} \quad \sat(\neg \sval) \\\\ \sverr \defeq [\mathstr{\mathsf{Assert}}, \stringify{\pexp}]
}{
  \csesemtrans{(\ssto, \smem, \spc)}{\passert{\pexp}}{(\ssto_{\oerr}, \smem, \spc' \land \neg \sval)}{\scontext}{\osucc}
}
\and
\inferrule[Assert-Err]{
  \cseeval{\pexp}{\ssto}{\spc}{\undefd}{\spc'}\quad
  \sverr \defeq [\mathstr{\mathsf{ExprEval}}, \stringify{\pexp'}]
}{
  \csesemtrans{(\ssto, \smem, \spc)}{\passert{\pexp}}{(\ssto_{\oerr}, \smem, \spc')}{\scontext}{\oerr}
}
\hfill
\inferrule[Sym]{
	\sval \text{  fresh} \quad \spc' = \spc \land \sval \in \vals
}{
\csesemtrans
	{(\ssto, \smem, \spc)}
	{\pvar{x} := \textsf{sym}}
	{\ssto[\pvar{x} \mapsto \sval], \smem, \spc')}
	{\scontext}
	{\osucc}
}
\end{mathpar}

\newpage

\section{Core Symbolic Execution: Soundness}\label{app:symbolic-soundness}

This appendix provides the relevant definitions, lemmas and proofs for the symbolic soundness theorems.

\subsection{Further Definitions}\label{appssec:furtherdef}

Symbolic states $\sst$ must satisfy certain well-formedness constraints, denoted by $\sinv(\sst)$. To define well-formedness, we need the following definition: $\spc_1 \models \spc_2 \defeq \forall \vint.~ \vint(\spc_1) = \true \Rightarrow \vint(\spc_2) = \true$. Also note that symbolic values are countable; we enumerate them using the notation $\SVal = \{\hat v_i:  i \in \nats\}$. Now, we can define well-formedness: the store co-domain must not contain $\undefd$, given by $\sinvc(\ssto)$; the heap domain must contain disjoint addresses and its co-domain must not contain $\undefd$, given by $\sinvc(\smem)$; and the path condition must be satisfiable and include all the symbolic variables of the state, ensuring that when the path condition can be interpreted so can the entire state:
\[
\begin{array}{l}
\sinvc(\ssto) \defeq \codomain(\ssto) \subseteq \Val \\
\sinvc(\smem) \defeq \domain(\smem) \subseteq \nats \land \codomain(\smem) \subseteq \vals \wedge (\forall \hat{v}_{i}, \hat{v}_{j}\in \domain(\smem). \; i \neq j \Rightarrow \hat{v}_{i} \neq \hat{v}_{j}) \\
\sinv((\ssto, \smem, \spc)) \defeq \sat(\spc)  \land   (\svs{\ssto} \cup \svs{\smem} \subseteq \svs{\spc}) \land \spc \models (\sinvc(\ssto) \land \sinvc(\smem))
\end{array}
\]

Given well-formedness, the definition of symbolic value interpretation is naturally extended to symbolic store, heap and state interpretation. The formal definitions are provided below:
\[
\begin{array}{r@{~\defeq~}l@{~}l}
\vint(\ssto) & \{ \pvar x \mapsto \vint (\ssto(\pvar x)) \mid \pvar x \in \domain(\ssto) \}, & \text{if~} \vint(\sinvc(\ssto)) = \true  \\
\vint(\smem) & \{ \vint(\sval) \mapsto \vint(\smem(\sval)) \mid \sval \in \domain(\smem) \}, & \text{if~} \vint(\sinvc(\smem)) = \true \\
\sint((\ssto, \smem, \spc)) & 
(\sint(\ssto), \sint(\smem)), & \text{if~} \vint(\pc) = \true \land \sinv((\ssto, \smem, \spc))
\end{array}
\]

\subsection{Evaluation Soundness}

\begin{lemma}[Symbolic evaluation: Symbolic variables]\label{thm:eval-symvar}
\[
\cseeval{\pexp}{\ssto}{\spc}{\sym{w}}{\spc'} \implies \svs{\sym{w}} \subseteq \svs{\ssto}
\]
\end{lemma}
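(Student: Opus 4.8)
The plan is to argue by induction on the derivation of the symbolic evaluation judgement $\cseeval{\pexp}{\ssto}{\spc}{\sym{w}}{\spc'}$, which, since the rules for $\cseeval{\cdot}{\cdot}{\cdot}{\cdot}{\cdot}$ are syntax-directed on the program expression $\pexp$, amounts to a structural induction on $\pexp$. Throughout I use that $\svs{\cdot}$ on a symbolic store is the union of the symbolic variables occurring in its codomain entries, and that $\svs{\cdot}$ on a compound symbolic value (e.g.\ $\sval_1 + \sval_2$, $\neg\,\sval$, $\sval_1 \wedge \sval_2$, $\sval \in \tau$) is the union of the $\svs{\cdot}$ of its immediate subterms.

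\emph{Base cases.} If $\pexp$ is a value $v$, the only applicable rule (\textsc{Eval-Value}) yields $\sym{w} = v$, and then $\svs{v} = \emptyset \subseteq \svs{\ssto}$. If $\pexp$ is a program variable $\pvar{x}$, the rule \textsc{Eval-PVar} yields $\sym{w} = \ssto(x)$, and $\svs{\ssto(x)} \subseteq \svs{\ssto}$ holds directly by the definition of $\svs{\cdot}$ on stores.

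\emph{Inductive cases.} Each remaining rule corresponds to an operator applied to subexpressions (e.g.\ $\pexp = \pexp_1 + \pexp_2$, $\pexp = \pexp_1 / \pexp_2$, Boolean and comparison operators, type tests). In every such rule the subexpressions are evaluated in order, threading the path condition (as in $\cseeval{\pexp_1}{\ssto}{\spc}{\sval_1}{\spc'}$ followed by $\cseeval{\pexp_2}{\ssto}{\spc'}{\sval_2}{\spc''}$), and the returned $\sym{w}$ is built as the corresponding symbolic term over the results $\sval_i$, with no fresh symbolic variable introduced. Applying the induction hypothesis to each premise gives $\svs{\sval_i} \subseteq \svs{\ssto}$; since $\svs{\sym{w}}$ is the union of the $\svs{\sval_i}$, we obtain $\svs{\sym{w}} \subseteq \svs{\ssto}$. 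The error cases, in which $\sym{w} = \undefd$, are immediate because $\svs{\undefd} = \emptyset$.

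\emph{Main obstacle.} There is no genuine difficulty; the only point needing care is to confirm, by inspection of the full rule set for symbolic expression evaluation (App.~\ref{app:symbolic}), that \emph{no} expression rule introduces a fresh symbolic variable into $\sym{w}$ — in contrast to the command-level rules for $\passign{\pvar{x}}{\prandom}$ and $\pvar{x} := \mathsf{sym}$, which do. Since expression evaluation is free of such side effects, the induction goes through routinely once the rules are enumerated.
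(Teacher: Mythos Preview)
Your proposal is correct and is exactly the natural argument: structural induction on $\pexp$ (equivalently, on the derivation of the evaluation judgement), using that the rules for $\cseeval{\cdot}{\cdot}{\cdot}{\cdot}{\cdot}$ build $\sym{w}$ only from values, store lookups, and the recursive results, without introducing fresh symbolic variables. The paper does not spell out a proof for this lemma---it is stated without proof alongside the other evaluation lemmas in the appendix and treated as routine---so there is nothing further to compare; your write-up is already more detailed than what the paper provides.
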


\begin{lemma}[Symbolic evaluation: Satisfiable outcome]\label{thm:eval-sat}
\[
\cseeval{\pexp}{\ssto}{\spc}{\sym{w}}{\spc'} \implies \sat(\spc')
\]
\end{lemma}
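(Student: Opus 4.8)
The plan is to prove the statement by induction on the derivation of $\cseeval{\pexp}{\ssto}{\spc}{\sym{w}}{\spc'}$ (equivalently, case analysis on the last evaluation rule applied, with appeals to the induction hypothesis on any sub-derivations). The symbolic evaluation rules of App.~\ref{app:symbolic} fall into three shapes, and I would treat them accordingly.

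First, the atomic rules \textsc{Eval-PVar} and \textsc{Eval-Value} leave the path condition unchanged, i.e.\ $\spc' = \spc$. Here I would discharge $\sat(\spc')$ from the standing well-formedness convention of the paper: $\cseeval$ is only ever invoked with the store and path condition of a well-formed symbolic state $\sst$, and $\sinv(\sst)$ already requires $\sat(\spc)$. (Equivalently, one could strengthen the statement with an explicit hypothesis $\sat(\spc)$; these base cases are the only place it is needed.) Second, every rule that strengthens the path condition --- e.g.\ \textsc{Eval-Plus}, \textsc{Eval-Div}, and more generally each rule for a binary numerical or Boolean operator that conjoins typing constraints such as $\sval_1 \in \nats$ or a non-zero-divisor constraint --- carries an explicit satisfiability side condition on its output path condition ($\sat(\spc''')$ in \textsc{Eval-Plus}/\textsc{Eval-Div}); for these, $\sat(\spc')$ is literally a premise, so the case is immediate. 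Third, for any rule that simply threads the path condition through its premises without conjoining anything new (so the output equals the path condition produced by the last recursive evaluation premise), I would invoke the induction hypothesis on that premise.

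The argument is routine and there is no substantial obstacle; the only point requiring care is bookkeeping. One must verify that every rule of the symbolic evaluation relation that strengthens the path condition does indeed carry an explicit $\sat(\cdot)$ side condition (a quick scan of App.~\ref{app:symbolic} confirms this), and that the atomic cases rely only on the well-formedness of the incoming state. The companion fact $\spc' \Rightarrow \spc$ (path strengthening) is obtained by the same induction and could be carried along simultaneously if convenient, though it is not required here, and the symmetric induction also yields Lemma~\ref{thm:eval-symvar}.
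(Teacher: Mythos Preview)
The paper states this lemma without proof (it appears in the ``Evaluation Soundness'' block of App.~\ref{app:symbolic-soundness} alongside Lemmas~\ref{thm:eval-symvar} and~\ref{thm:eval-pc}, none of which are given explicit arguments). Your induction on the derivation is the natural proof, and your case split is accurate: the compound rules (\textsc{Eval-Plus}, \textsc{Eval-Div}, \ldots) each carry an explicit $\sat$ premise on their output path condition, so those cases are immediate, while the atomic rules \textsc{Eval-PVar} and \textsc{Eval-Value} return $\spc' = \spc$ and thus need $\sat(\spc)$ as input. You correctly flag that this last point relies on the paper's standing convention that symbolic states are well-formed (hence $\sat(\spc)$ holds whenever evaluation is invoked); without that convention the lemma as literally stated would fail at the base cases. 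Nothing is missing.
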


\begin{lemma}[Symbolic evaluation: Path condition]\label{thm:eval-pc}
\[
\cseeval{\pexp}{\ssto}{\spc}{\sym{w}}{\spc'} \implies \sint(\spc') = \true \implies \sint(\spc) = \true
\]
\end{lemma}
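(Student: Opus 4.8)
The plan is to prove this by induction on the derivation of the symbolic evaluation judgement $\cseeval{\pexp}{\ssto}{\spc}{\sym{w}}{\spc'}$. The statement is essentially the semantic entailment ``$\spc' \models \spc$'', phrased pointwise over symbolic interpretations: every $\sint$ satisfying the output path condition also satisfies the input one. Intuitively this holds because symbolic evaluation never weakens the path condition --- each rule either leaves it unchanged or conjoins additional side-conditions onto it (e.g., arising from type checks, or from ruling out division by zero), and conjunction only ever strengthens a formula.

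For the base cases (\textsc{Eval-PVar}, \textsc{Eval-Value}, and any other leaf rules, including the evaluation-error rules that yield $\undefd$), the output path condition is syntactically the input path condition, so the implication is immediate. For the inductive cases corresponding to compound expressions --- such as \textsc{Eval-Plus} and \textsc{Eval-Div} --- the rule first evaluates the immediate subexpressions in sequence, threading the path condition: $\cseeval{\pexp_1}{\ssto}{\spc}{\sval_1}{\spc_1}$, then $\cseeval{\pexp_2}{\ssto}{\spc_1}{\sval_2}{\spc_2}$, and finally forms the output $\spc' \defeq \spc_2 \land \pi$ for some additional pure constraint $\pi$ (for \textsc{Eval-Plus}, $\pi$ is $\sval_1 \in \nats \land \sval_2 \in \nats$; for \textsc{Eval-Div} it additionally contains $\sval_2 \neq 0$). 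Assuming $\sint(\spc') = \true$, conjunction-elimination gives $\sint(\spc_2) = \true$; applying the induction hypothesis to the second subderivation yields $\sint(\spc_1) = \true$, and then to the first subderivation yields $\sint(\spc) = \true$, as required. The same pattern --- conjunction-elimination followed by repeated use of the induction hypothesis along the chain of subevaluations, in the reverse of the evaluation order --- covers all remaining compound rules, including $n$-ary operators where one iterates over the evaluated arguments in order.

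There is no genuinely hard part here: the only thing that requires care is the bookkeeping of how the path condition is threaded left-to-right through the subexpressions of a compound expression, so that the induction hypotheses are chained in the correct order; and confirming, rule by rule over the full symbolic-evaluation rule set, that every rule indeed only strengthens the path condition by conjunction and never drops or rewrites an existing conjunct. Once that invariant is checked, the result follows mechanically. (Note also that satisfiability of $\spc'$, guaranteed by Lemma~\ref{thm:eval-sat}, is not needed for this argument --- the implication is vacuously compatible with $\spc'$ being satisfiable or not --- but it confirms the implication is non-trivial.)
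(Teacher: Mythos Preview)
Your proposal is correct and is exactly the natural argument. The paper itself does not spell out a proof of this lemma: it is stated without proof in the appendix alongside the other evaluation lemmas, all of which are treated as routine. Your induction on the derivation of $\cseeval{\pexp}{\ssto}{\spc}{\sym{w}}{\spc'}$, observing that every rule either keeps $\spc$ unchanged or outputs $\spc' = \spc'' \land \pi$ for some $\spc''$ reachable from $\spc$ by the induction hypotheses, is the standard way to discharge it and matches what the paper implicitly relies on.
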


An aside: Note that it is, to be UX sound, important to not overreport errors: For example, for a languages with lazy conjunction semantics, evaluating the expression $\pvar c \land (5~/~\pvar{x} = 4)$ with path condition $\spc$ and a store $\ssto$ with $\ssto(\pvar x) = \svar x$ and $\ssto(\pvar c) = \svar c$, evaluation will branch into $\undefd$ with $\spc'_1 = \spc \land \neg \svar c \land \svar x = 0$ and $\svar c \land (5~/~\svar x = 4)$ with $\spc'_2 = \spc \land \neg(\neg \svar c \land \svar x = 0)$ which simplifies to $\spc \land (\svar c \lor \svar x \neq 0)$. Note that the simpler path condition $\spc \land \svar x = 0$ for $\undefd$ would overapproximate the error, and therefore not a possible path condition for a UX-sound symbolic expression evaluator. Here, however, we do not dwell into details.

\begin{lemma}[Symbolic evaluation: UX soundness]\label{thm:ux-sound-eval}
\begin{gather*}
\cseeval{\pexp}{\ssto}{\spc}{\sym{w}}{\spc'} \land \sint(\sinvc(\ssto)) = \true \land \sint(\spc')  = \true \implies \ceval{\pexp}{\sintp(\ssto)} = \sintp(\sym{w})
\end{gather*}
where $\sym w\in\SVal\cup\{\undefd\}$.
\end{lemma}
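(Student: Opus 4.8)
The plan is to proceed by rule induction on the derivation of $\cseeval{\pexp}{\ssto}{\spc}{\sym{w}}{\spc'}$, following the structure of the symbolic expression-evaluation rules (\textsc{Eval-PVar}, \textsc{Eval-Value}, \textsc{Eval-Plus}, \textsc{Eval-Div}, and the remaining arithmetic/Boolean operator rules, including the branching error rules that may yield $\undefd$). The overall strategy is to show that under the hypotheses $\sint(\sinvc(\ssto)) = \true$ and $\sint(\spc') = \true$, each symbolic step mirrors exactly one concrete step of $\ceval{\cdot}{\cdot}$, so that no concrete behaviour is added by the symbolic evaluator. The assumption $\sint(\spc') = \true$ is what pins down a single concrete path: since the symbolic rules branch by conjoining side-conditions (e.g.\ $\sval_1 \in \nats$, $\sval_2 \neq 0$) into $\spc'$, having $\sint(\spc') = \true$ forces those side-conditions to hold under $\sint$, which is precisely the guard the concrete semantics checks.

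First I would handle the base cases. For \textsc{Eval-Value}, $\sym{w} = v$ and $\spc' = \spc$, and $\ceval{v}{\sintp(\ssto)} = v = \sintp(v)$ immediately. For \textsc{Eval-PVar}, $\sym{w} = \ssto(\pvar x)$ and $\spc' = \spc$; here $\ceval{\pvar x}{\sintp(\ssto)} = \sintp(\ssto)(\pvar x)$, and by the definition of $\sint(\ssto)$ (App.~\ref{appssec:furtherdef}), which requires $\sint(\sinvc(\ssto)) = \true$ — supplied by hypothesis — this equals $\sintp(\ssto(\pvar x)) = \sintp(\sym{w})$.

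Next I would do the inductive step for the compound operators, taking \textsc{Eval-Plus} as the representative case. The premises give $\cseeval{\pexp_1}{\ssto}{\spc}{\sval_1}{\spc'}$, $\cseeval{\pexp_2}{\ssto}{\spc'}{\sval_2}{\spc''}$, $\spc''' = \spc'' \land \sval_1 \in \nats \land \sval_2 \in \nats$, and $\spc'''$ is the output path condition. From $\sint(\spc''') = \true$ I get $\sint(\spc'') = \true$, hence (by Lem.~\ref{thm:eval-pc} applied to the second premise) $\sint(\spc') = \true$; this lets me invoke the induction hypothesis on both premises to obtain $\ceval{\pexp_i}{\sintp(\ssto)} = \sintp(\sval_i)$. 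Also from $\sint(\spc''') = \true$ I read off $\sint(\sval_1 \in \nats) = \true$ and $\sint(\sval_2 \in \nats) = \true$, i.e.\ $\sintp(\sval_i) \in \nats$, so the concrete rule for $\pexp_1 + \pexp_2$ takes its non-$\undefd$ branch and yields $\sintp(\sval_1) + \sintp(\sval_2) = \sintp(\sval_1 + \sval_2) = \sintp(\sym w)$, using that $\sint$ lifts homomorphically over $+$ on values in $\nats$. The \textsc{Eval-Div} case is identical except that I additionally extract $\sint(\sval_2 \neq 0) = \true$ from $\spc'''$, which matches the concrete side-condition $\esem{\pexp_2}{\sto} \neq 0$. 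The error-producing branches (where $\sym{w} = \undefd$) are handled symmetrically: the symbolic rule conjoins the negation of some well-formedness/type condition into $\spc'$, so $\sint(\spc') = \true$ forces that condition to fail under $\sint$, which is exactly when the concrete semantics produces $\undefd$.

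The main obstacle I expect is bookkeeping rather than conceptual: being careful that the intermediate path conditions are threaded correctly so that the induction hypotheses are applicable (each sub-evaluation's output path condition must be shown true under $\sint$, which is where Lem.~\ref{thm:eval-pc} is used repeatedly), and, for the $\undefd$ branches, matching the exact set of side-conditions the symbolic rule negates against the exact guard the concrete rule checks — in particular respecting any lazy-evaluation order (as flagged in the aside before the lemma statement) so that an error is not reported on a concrete path where evaluation would short-circuit before reaching the faulting subexpression. Finally, one should confirm that $\sint(\ssto)$ is well-defined throughout, which follows from the standing hypothesis $\sint(\sinvc(\ssto)) = \true$ together with Lem.~\ref{thm:eval-symvar} ensuring the relevant symbolic variables are accounted for.
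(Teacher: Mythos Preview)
The paper does not actually give a proof of this lemma: in App.~C it is stated alongside the other evaluation lemmas (Lem.~\ref{thm:eval-symvar}--\ref{thm:ox-sound-eval}) without any accompanying argument, the surrounding text treating these results as standard. Your proposal---rule induction on the symbolic-evaluation derivation, using Lem.~\ref{thm:eval-pc} to push $\sint(\spc')=\true$ back through intermediate path conditions so the inductive hypotheses apply, and reading off the concrete side-conditions from the conjuncts added to $\spc'$---is exactly the expected standard proof and is correct.
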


\begin{lemma}[Symbolic evaluation: OX soundness]\label{thm:ox-sound-eval}
\begin{gather*}
 \ceval{\pexp}{\sint(\ssto)} = w \gand \sint(\sinvc(\ssto)) = \true \gand \sint(\spc)  = \true \implies \\
 \exists \spc', \sym{w}.~\cseeval{\pexp}{\ssto}{\spc}{\sym{w}}{\spc'} \land \sint(\spc') = \true \gand \sint(\sym{w}) = w
\end{gather*}
where $w \in \Val \cup \{\undefd\}$ and $\sym w\in\SVal\cup\{\undefd\}$.
\end{lemma}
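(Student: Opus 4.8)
The plan is to prove the lemma by structural induction on the expression $\pexp$, mirroring the structural-recursive definition of the concrete evaluation function $\esem{\cdot}{\cdot}$ and exploiting the fact that the symbolic evaluation relation $\cseeval{\pexp}{\ssto}{\spc}{\sym{w}}{\spc'}$ is syntax-directed on $\pexp$. Throughout, the well-formedness hypothesis $\sint(\sinvc(\ssto)) = \true$ is carried unchanged, since expression evaluation never modifies the store; it guarantees that $\sint(\ssto)$ is a genuine concrete store and that interpretation commutes with variable lookup. We will only ever need to exhibit \emph{one} symbolic branch matching the given concrete outcome, which is what makes this the OX direction (unlike the UX direction, where one must ensure no spurious branches).

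For the base cases: a literal value $v$ is handled by \textsc{Eval-Value}, taking $\spc' := \spc$ and $\sym{w} := v$; the hypothesis $\sint(\spc) = \true$ discharges the path-condition obligation and $\sint(v) = v$ the value obligation. A program variable $\pvar{x} \in \dom(\ssto)$ is handled by \textsc{Eval-PVar} with $\sym{w} := \ssto(\pvar{x})$: here $\sint(\ssto(\pvar{x})) = \sint(\ssto)(\pvar{x}) = w$ by $\sint(\sinvc(\ssto))$; the case $\pvar{x} \notin \dom(\ssto)$, for which the concrete semantics returns $\undefd$, is matched by the corresponding symbolic error rule for an unbound variable.

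For the inductive cases (binary numerical operators such as $+$ and $/$, comparisons, negation, Boolean connectives, and type tests $\pexp \in \tau$), I split on whether the concrete outcome $w$ is a proper value or $\undefd$. When $w$ is a value, the concrete clause forces each immediate subexpression $\pexp_i$ to evaluate to a value of the required type; I apply the induction hypothesis to $\pexp_1$ with path condition $\spc$ to obtain a symbolic branch with output $\sval_1$ and path condition $\spc_1'$ with $\sint(\spc_1') = \true$ and $\sint(\sval_1)$ equal to the concrete value of $\pexp_1$, then apply it again to $\pexp_2$ with $\spc_1'$ (legitimate since $\sint(\spc_1') = \true$), obtaining $\sval_2$ and $\spc_2'$. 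Since $\sint(\sval_i)$ is a proper value we have $\sval_i \neq \undefd$, so Lemma~\ref{thm:eval-symvar} keeps $\sint$ defined on the outputs, and $\sint$ itself witnesses satisfiability of the compound path condition built by the operator rule (e.g.\ $\spc_2' \land \sval_1 \in \nats \land \sval_2 \in \nats$ for $+$, together with $\sval_2 \neq 0$ for $/$). I therefore fire the matching operator rule, and the value obligation follows by pushing $\sint$ through the symbolic operator and using the semantics of the concrete operator. When $w = \undefd$, the concrete clause pinpoints which subexpression is the first to misbehave and how ($\undefd$, wrong type, or zero divisor); I reproduce that behaviour symbolically via the induction hypothesis and then fire the symbolic error rule matching exactly that failure mode, again using $\sint$ to discharge the satisfiability side-condition of that rule.

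The main obstacle is bookkeeping over the error rules: the concrete semantics has several distinct ways to reach $\undefd$ for each operator, and soundness requires each to be covered by a symbolic rule whose satisfiability guard is met by the given $\sint$; this is routine but tedious. The one genuinely delicate point is the short-circuiting Boolean connectives flagged in the aside preceding this lemma: the symbolic rules there branch on whether the first operand already determines the result, so I must apply the induction hypothesis to the first operand, read off from its interpreted value which branch the concrete execution took, and only then, if needed, recurse on the second operand. In every branch $\sint$ continues to satisfy the accumulated path condition because it satisfied each intermediate one, so no path-condition minimality argument is needed and the Boolean case stays manageable.
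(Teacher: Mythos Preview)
The paper states this lemma without proof, treating it as a standard result about symbolic expression evaluation; your structural induction on $\pexp$ is exactly the expected argument and is correct. The threading of path conditions through successive subexpression evaluations, the use of $\sint$ itself as the satisfiability witness for each accumulated guard, and the case split on whether the concrete outcome is a value or $\undefd$ all match how the symbolic evaluation rules (e.g.\ \textsc{Eval-Plus}, \textsc{Eval-Div}) are set up, so there is nothing to add.
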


\subsection{Execution Soundness}

We give the proof of Thm.~\ref{thm:ux-ox-sound} -- the OX and UX soundness of the engine -- for a selection of the rules that are representative. All other cases work analogously.
\subsubsection*{OX Soundness}
\subparagraph{Mutate.}
%
%
\begin{mathpar}
    \inferrule[\textsc{Mutate}]
 {\cseeval{\pexp_1}{\ssto}{\spc}{\sval_1}{\spc'} \quad  \quad \smem(\sloc) = \sval_m \quad \spc'' \defeq (\sloc = \sval_1) \land \spc' \\\\ \sat(\spc'') \quad \cseeval{\pexp_2}{\ssto}{\spc''}{\sval_2}{\spc'''} \quad \smem' = \smem[\sloc \mapsto \sval_2]  }
 {\csesemtrans{\ssto, \smem,  \spc}{\pmutate{\pexp_1}{\pexp_2}}{\ssto, \smem',\spc'''}{\fictx}{\osucc}}
\end{mathpar}
We assume a successful execution of the mutate command, i.e. some $\sto$, $\hp$, $n$ and $v$ such that

\begin{minipage}{7cm} 
\begin{description}
    \item[(H1)] $\esem{\pexp_1}{s}=n$
    \item[(H2)] $\esem{\pexp_2}{s}=v$
\end{description}
\end{minipage}
\begin{minipage}{7cm} 
\begin{description}
    \item[(H3)] $h(n)\in\Val$
    \item[(H4)] $(\sto,\hp), \pmutate{\pexp_1}{\pexp_2} \baction_{\fictx} \osucc: (\sto,\hp[n \mapsto v])$
\end{description}
\end{minipage}

\noindent and some $\sint$, $\ssto$, $\smem$ and $\spc$ such that \textbf{(H5)}~$\sint(\ssto,\smem,\spc)=(\sto,\hp)$, i.e.

{\qquad 
{\bf (H5a)} $\sint(\ssto)=\sto$ \quad 
{\bf (H5b)} $\sint(\smem)=\hp$ \quad 
{\bf (H5c)} $\sint(\spc)=\spc$
}

\noindent (H1), (H5) and Lemma \ref{thm:ox-sound-eval} imply the existence of some $\sym{n}$ and $\spc'$ such that
\qquad      {\bf (H6a) }$\cseeval{\pexp_1}{\ssto}{\spc}{\sym{n}}{\spc'}$ \quad 
    {\bf (H6b) } $\sint(\spc') = \true$ \quad 
    {\bf (H6c) } $\sint(\sym{n}) = n$
(H3), (H5b) and (H6c) imply the existence of some $\sym{m}\in\dom(\smem)$ such that
\begin{description}
    \item[(H7a)] $\sint(\sym{n})=\sint(\sym{m})$
    \item[(H7b)] $\spc''\eqdef(\sym{n}=\sym{m})~\land~\spc'$ {\it (This is merely a new definition.) }
    \item[(H7c)] $\sint(\spc'')=\true$. {\it (This is a trivially implication of the above two, and it implies $\sat(\spc'')$.) }
    \item[(H7d)] $\smem(\sym{m})\neq\sym{w}$ for some $\sym{w}\in\SVal$ (i.e. it is not negative resource.)
\end{description}
(H2), (H5), (H7c) and Lemma \ref{thm:ox-sound-eval} imply the existence of some $\sym{v}$ and $\spc'''$ such that
\begin{description}
    \item[(H8a)] $\cseeval{\pexp_2}{\ssto}{\spc''}{\sym{v}}{\spc'''}$
    \item[(H8b)] $\sint(\spc''') = \true$
    \item[(H8c)] $\sint(\sym{v}) = v$
\end{description}
We define \textbf{(H9a)}~$\smem'\eqdef\smem[\sym{m}\mapsto\sym{v}]$. 
\begin{align*}
    \sint(\smem') &= \sint(\smem[\sym{m}\mapsto\sym{v}]) 
    = \sint(\smem)[\sint(\sym{m}) \mapsto \sint(\sym{v})] 
    = \hp[n \mapsto v] ~\textbf{(H9b)}
\end{align*}
where the last equality follows from (H5b), (H6c), (H7a) and (H8c). 

(H5a), (H8b) and (H9b) imply \textbf{(H10)}~ $\sint(\ssto,\smem',\spc''') = (\sto, \hp[n\mapsto v])$.

(H6a), (H7b), (H7c), (H7d), (H8a) and (H9a) together with the symbolic execution rule for mutate yield $\csesemtrans{\ssto, \smem,  \spc}{\pmutate{\pexp_1}{\pexp_2}}{\ssto, \smem',\spc'''}{\fictx}{\osucc}$ which, toegether with (H10) implies the desired result.

\subsubsection*{UX Soundness} 

\subparagraph{Mutate.} Rule:
\[
        \inferrule[\textsc{Mutate}]
        {\cseeval{\pexp_1}{\ssto}{\spc}{\sval_1}{\spc'} \quad  \quad \smem(\sloc) = \sval_m \quad \spc'' = (\sloc = \sval_1) \land \spc' \\\\ \sat(\spc'') \quad \cseeval{\pexp_2}{\ssto}{\spc''}{\sval_2}{\spc'''} \quad \smem' = \smem[\sloc \mapsto \sval_2]  }
        {\csesemtrans{\ssto, \smem,  \spc}{\pmutate{\pexp_1}{\pexp_2}}{\ssto, \smem',\spc'''}{\fictx}{\osucc}}
\]
We assume
\[
\csesemtrans{\ssto, \smem,  \spc}{\pmutate{\pexp_1}{\pexp_2}}{\ssto, \smem',\spc'''}{\fictx}{\osucc}
\]
which yields

\begin{minipage}{6cm}
\begin{description}
        \item[(H0)] $\sinv_{\spc}(\ssto)$ and $\sinv_{\spc}(\smem)$
        \item[(H1)] $\cseeval{\pexp_1}{\ssto}{\spc}{\sval_1}{\spc'}$
        \item[(H2)] $\smem(\sloc) = \sval_m$
        \item[(H3)] $\spc'' = (\sloc = \sval_1) \land \spc'$
\end{description}
\end{minipage}
\begin{minipage}{6cm}
\begin{description}
        \item[(H3)] $\spc'' = (\sloc = \sval_1) \land \spc'$
        \item[(H4)] $\sat(\spc'')$
        \item[(H5)] $\cseeval{\pexp_2}{\ssto}{\spc''}{\sval_2}{\spc'''}$
        \item[(H6)] $\smem' = \smem[\sloc \mapsto \sval_2]$
\end{description}
\end{minipage}

Now, let $\sint(\ssto,\smem',\spc''')=(\sto, \hp')$, for some $\sint$, $\sto$ and $\hp'$. i.e. (given (H6))
\begin{description}
        \item[(H7a)] $\sint(\ssto) = \sto$
        \item[(H7b)] $\sint(\smem[\sval_l\mapsto\sval_2]) = \hp'$
        \item[(H7c)] $\sint(\spc''') = \true$
\end{description}

(H1), (H3) and (H5) imply \textbf{(H8)} $\spc'''\Rightarrow\spc''\Rightarrow\spc'\Rightarrow\spc$ and (H7c) yields \textbf{(H9)} $\sint(\spc''')=\sint(\spc'')=\sint(\spc')=\sint(\spc)=\true$.

(H0), (H2) and (H9) implies \textbf{(H10)} $\sint(\sval_l)\in\Nat$.

(H3) and (H9) imply \textbf{(H11a)} $\sint(\sval_l)=\sint(\sval_1)$ and we define  \textbf{(H11b)} $n=\sint(\sval_l)=\sint(\sval_1)\in\Nat$, given (H10).

We define \textbf{(H12a)} $\hp=\sint(\smem)$ and (H11b) and (H12) then implies \textbf{(H12b)} $\hp(n)\in\Val$.

(H6), (H7b), (H11a) and (H12a) implies \textbf{(H13)} $\hp'=\hp[n\mapsto\sint(\sval_2)]$.

Given (H0), (H1), (H7a), (H9) and (H11a), Lemma~\ref{thm:ux-sound-eval} implies \textbf{(H14)} $\esem{\pexp_1}{\sto}=n$.

Given (H5), (H7a), (H9) and Lemma \ref{thm:ux-sound-eval} yields \textbf{(H16)} $\esem{\pexp_2}{\sto}=\sintpe{\vint'}(\sval_2)$.

Given (H12b), (H13), (H14) and (H16), the concrete semantics yields
\[
(\sto,\hp), \pmutate{\pexp_1}{\pexp_2} \baction_{\fictx} (\sto,\hp')
\]
and since (H7a), (H9) and (H13) implies $(\sto,\hp)=\sint(\ssto,\smem,\spc)$, concluding the proof.

\subparagraph{Free.} Rule:
\[
        \inferrule[\textsc{Free}]
        {\cseeval{\pexp}{\ssto}{\spc}{\sval}{\spc'} \quad \smem(\sloc) = \sval_m \\\\ \spc'' = (\sloc = \sval) \land \spc' \\\\ \sat(\spc'') \quad
        \smem' = \smem[\sloc \mapsto \cfreed]}
        {\csesemtrans{\ssto, \smem, \spc}{\pdealloc{\pexp}}{\ssto, \smem', \spc''}{\fictx}{\osucc}}
\]

We assume
\[
\csesemtrans{\ssto, \smem, \spc}{\pdealloc{\pexp}}{\ssto, \smem[\sloc \mapsto \cfreed], \spc''}{\fictx}{\osucc}
\]
which yields

\begin{minipage}{7cm}
\begin{description}
        \item[(H0)] $\sinv_{\spc}(\ssto)$ and $\sinv_{\spc}(\smem)$
        \item[(H1)] $\cseeval{\pexp}{\ssto}{\spc}{\sval}{\spc'}$
        \item[(H2)] $\smem(\sloc) = \sval_m$
\end{description}
\end{minipage}
\begin{minipage}{7cm}
\begin{description}
        \item[(H3)] $\spc'' = (\sloc = \sval) \land \spc'$
        \item[(H4)] $\sat(\spc'') \quad$
\end{description}
\end{minipage}

\noindent Now, let $(\sto, \hp')=\sint(\ssto,\smem[\sval_l\mapsto\sval],\spc'')$, for some $\sint$, $\sto$ and $\hp'$. i.e.
\begin{description}
        \item[(H6a)] $\sintp(\ssto) = \sto$
        \item[(H6b)] $\sint(\smem[\sval_l\mapsto\cfreed]) = \hp'$
        \item[(H6c)] $\sint(\spc'') = \true$
\end{description}
(H1) and (H3) imply \textbf{(H7)} $\spc''\Rightarrow\spc'\Rightarrow\spc$ and through (H6c) we obtain \textbf{(H8)} $\sint(\spc'')=\sint(\spc')=\sint(\spc)=\true$.

We define \textbf{(H9)} $\hp=\sint(\smem)$.

Given (H0), (H1), (H8) amd (H6a), Lemma \ref{thm:ux-sound-eval} yields \textbf{(H10a)} $\esem{\pexp}{\sto}=\sint(\sexp)$. Defining $n=\sint(\sexp)$, (H0), (H2), (H3), (H8) and (H10a) imply \textbf{(H10b)} $n=\esem{\pexp}{\sto}=\sint(\sexp)=\sint(\sexp_l)$.
 With (H2) and (H9) we obtain \textbf{(H10c)} $\hp(n)\in\Val$.

(H6b), (H9) and (H10b) yield \textbf{(H11)} $\hp'=\hp[n\mapsto\cfreed]$.

Given (H10b), (H10c) and (H11), the concrete semantics yields
\[
(\sto,\hp),\pdealloc{\pexp} \baction{\fictx} (\sto,\hp')
\]
and (H6a), (H8) and (H9) imply $(\sto,\hp)=\sint(\ssto,\smem,\spc)$, concluding the proof.

\subparagraph{Seq.} Rule:
\[
\inferrule[\textsc{Seq}]
{\csesemtransabstract{\sst}{C_1}{\sst'}{\fictx}{\osucc} \quad \csesemtransabstract{\sst'}{C_2}{\sst''}{\fictx}{\result}}
{\csesemtransabstract{\sst}{C_1;C_2}{\sst''}{\fictx}{\result}}
\]
We assume
\[
\csesemtransabstract{\sst}{C_1;C_2}{\sst''}{\fictx}{\result}
\]
which yields
\begin{description}
        \item[(H1)] $\csesemtransabstract{\sst}{C_1}{\sst'}{\fictx}{\osucc}$
        \item[(H2)] $\csesemtransabstract{\sst'}{C_2}{\sst''}{\fictx}{\result}$
\end{description}
Let $\st''=\sint(\sst'')$ for some $\sint$.
The inductive hypothesis and (H2) imply that for $\st'=\sint(\sst')$ that we have \textbf{(H3)} $\st',\cmd_2\baction_{\gamma} \outcome: \st''$.

The inductive hypothesis and (H1) imply that for $\st=\sint(\sst)$ that we have \textbf{(H4)} $\st,\cmd_1\baction_{\gamma}\osucc: \st'$.

Given (H3) and (H4), the concrete semantics imply
\[
\st,\cmd_1;\cmd_2\baction_{\gamma} \outcome: \st''
\]
As $\st=\sint(\sst)$, the proof is concluded.

\subparagraph{Mutate-Err-Val-1.} Rule:
\[
        \inferrule[\textsc{Mutate-Err-Val-1}]
        {\cseeval{\pexp_1}{\ssto}{\spc}{\undefd}{\spc'} \quad \\
        \sverr = [\mathstr{\mathsf{ExprEval}}, \stringify{\pexp_1}]}
        {\csesemtrans{\ssto, \smem,  \spc}{\pmutate{\pexp_1}{\pexp_2}}{\ssto_{\oerr}, \smem, \spc'}{\fictx}{\oerr}}
\]
We assume
\[
\csesemtrans{\ssto, \smem, \spc}{\pmutate{\pexp_1}{\pexp_2}}{\ssto_\oerr, \smem,  \spc'}{\fictx}{\oerr}
\]
which yields

\begin{minipage}{7cm}
\begin{description}
        \item[(H0)] $\sinv_{\spc}(\ssto)$
        \item[(H1)] $\sat(\spc')$ (from Lemma~\ref{thm:eval-sat})
\end{description}
\end{minipage}
\begin{minipage}{7cm}
\begin{description}
        \item[(H2)] $\cseeval{\pexp_1}{\ssto}{\spc}{\undefd}{\spc'}$
        \item[(H3)] $\sverr = [\mathstr{\mathsf{ExprEval}}, \stringify{\pexp_1}]$
\end{description}
\end{minipage}

Now, let $(\sto',\hp)=\sint(\ssto_\oerr,\smem,\spc')$, i.e.
\begin{description}
        \item[(H4a)] $\sint(\ssto[\pvar{err} \mapsto \sverr])=\sto'$
        \item[(H4b)] $\sint(\smem)=\hp$
        \item[(H4c)] $\sint(\spc')=\true$
\end{description}

(H2) implies \textbf{(H5)} $\spc'\Rightarrow\spc$, which implies with (H4c) that \textbf{(H6)} $\sint(\spc')=\sint(\spc)=\true$.

Define \textbf{(H7)} $\sto=\sint(\ssto)$.

Given (H0), (H2), (H6) and (H7), Lemma~\ref{thm:ux-sound-eval} yields \textbf{(H8)} $\esem{\pexp}{\sto}=\undefd$.

As $\sverr$ has no symbolic variables, we have \textbf{(H9)} $\verr=\sint(\sverr)=\sverr$.

(H4a), (H7) and (H9) implies that \textbf{(H10)} $\sto'=\sto[\pvar{err}\mapsto\verr]$.

Given (H10), (H8), (H9) and (H3), the operational semantics implies
\[
        (\sto,\hp),\pmutate{\pexp_1}{\pexp_2} \baction_{\fictx} \oxerr: (\sto',\hp)
\]
(H7), (H4b) and (H6) imply that $ (\sto,\hp)=\sint(\ssto,\smem,\spc)$, concluding the proof.

\newpage

\section{Compositional Symbolic Execution: Soundness}\label{app:symbolic-soundness-func}

In this appendix we present the two soundness proofs of Thm.~\ref{thm:ux-ox-sound-func} for the consume-produce-based function call, predicate fold, and predicate unfold rules. We present a selection of proof cases, the other rules are either the same as in the proof of Thm.~\ref{thm:ux-ox-sound}, similar to the cases we present here, or simple.

\subsection{Further Definitions for Function Specifications}

\subparagraph{Satisfaction Relation for Assertions.} The logical expression evaluation, $\esem{\lexp}{\subst,\sto}$, extends program expression evaluation to interpret the logical variables using $\theta$. The satisfaction relation for assertions, denoted by $\subst, \st \models P$, is standard:
\[
\begin{array}{@{}l@{~}c@{~\ }l}
 \subst, (\sto, \hp)  \models \\
\begin{array}{@{}l@{~}c@{~\ }l@{\qquad\qquad}l@{~}c@{~\ }l}
 \quad {{\pc}} & \hspace{0.05cm}\Leftrightarrow & \esem{{\pc}}{\subst,\sto} = \true \wedge \hp = \emptyset &
  \AssFalse&\Leftrightarrow& \mbox{never}\\
 \quad P_1 \Rightarrow  P_2 &\hspace{0.05cm}\Leftrightarrow& \subst, (\sto, \hp)  \models P_1
                                         \Rightarrow  \subst, (\sto ,
                                          \hp) \models P_2 &
  P_1 \lor P_2 &\Leftrightarrow& \subst, (\sto, \hp)  \models P_1 \lor  \subst, (\sto, \hp) \models P_2 \\
  \quad\exists \lvar{x} \ldotp P &\hspace{0.05cm}\Leftrightarrow& \exists v \in \Val \ldotp \subst[\lvar{x} \mapsto v], (\sto, \hp) \models P &
  {\emp} &\Leftrightarrow& {\hp = \emptyset } \\
 \quad {\lexp_1 \mapsto \lexp_2} &\hspace{0.05cm}\Leftrightarrow& \hp = \{ \esem{\lexp_1}{\subst, \sto} \mapsto \esem{\lexp_2}{\subst, \sto}\} 
 & {\lexp \mapsto \cfreed} &\Leftrightarrow& \hp = \{ \esem{\lexp}{\subst, \sto} \mapsto \cfreed\} \\
\end{array} \\
\begin{array}{@{}l@{~}c@{~\ }l@{\qquad\qquad}l@{~}c@{~\ }l}
 \quad P_1 \lstar P_2 &\Leftrightarrow& \exists \hp_1, \hp_2 \ldotp \hp = \hp_1 \uplus \hp_2 \land \subst, (\sto, \hp_1) \models P_1 \wedge \subst, (\sto, \hp_2) \models P_2 \\
 \quad \Predd{\vec{\lexp_1}}{\vec{\lexp_2}} &\Leftrightarrow& \subst[\predin \mapsto \esem{\vec{\lexp_1}}{\subst, \sto}, \predout \mapsto \esem{\vec{\lexp_2}}{\subst, \sto}], (\sto, \hp) \models P \text{ for } \Predd{\predin}{\predout}~\{ P \} \in \preds
\end{array}
\end{array}
\]
Note that the meaning of Boolean assertions $\pi$ is defined using the empty heap; the alternative is for them to be satisfied in any heap, definable here as $\pi \lstar \AssTrue$. The meaning of the predicate assertion $\Predd{\vec{\lexp_1}}{\vec{\lexp_2}}$ is defined in terms of its unfolding. An assertion $P$ is {\it valid}, denoted $\models P$, iff $\forall \subst, \sto, \hp.~\subst, (\sto, \hp) \models P$.

\subparagraph{OX and UX Specifications.} 
OX and UX specifications
have, respectively, the form
\[
  \quad \quad \quadruple{P}{}{\Qok}{\Qerr}
  \islquadruple{P}{}{\Qok}{\Qerr} 
  \]
for precondition $P $, successful postcondition $\Qok$  and faulting
postcondition~$\Qerr$. To denote either type of specification, we write 
$\genquadruple{P}{\!}{\Qok}{\Qerr}$.
If execution only succeeds or only faults, we omit the
other post-condition instead of putting an unsatisfiable assertion.
Given implementation context $\gamma$, we define $\fictx \models
\genquadruple{P}{\cmd}{\Qok}{\Qerr}$, denoting that specification
$
\genquadruple{P}{\!}{\Qok}{\Qerr}$ of command $\cmd$ is
$\gamma$-valid, by
%
\[
  \begin{array}{l}
\fictx \models \quadruple{P}{\cmd}{\Qok}{\Qerr} \eqdef \forall \subst, \sto, \hp, \outcome, \sto', \hp'. \\
\qquad \subst, (\sto, \hp) \models P \land (\sto, \hp), \cmd \baction_{\fictx} \outcome: (\sto', \hp') \implies 
(\outcome \neq \oxm \land \subst, (\sto', \hp') \models Q_\outcome) \\[1mm]
\fictx \models \islquadruple{P}{\cmd}{\Qok}{\Qerr} \eqdef \forall \subst, \sto', \hp', \outcome.\\
\qquad \subst, (\sto', \hp') \models Q_\outcome \implies (\exsts{\sto,\hp}~\subst, (\sto, \hp) \models
    P~\land~(\sto, \hp), \cmd \baction_\fictx \outcome: (\sto', \hp'))
\end{array}
\]

\subparagraph{Internal and External Function Specifications.} Function specifications comprise  {\em external} specifications, which describe the function interface toward its callers,  and {\em internal} specifications, satisfied by the function body. An {external} specification has the form
\[
\genquadruple{\vec{\pvar{x}} = \vec x \lstar P}{}{\Qok}{\Qerr}
\]
where:
\begin{itemize}[leftmargin=*]
\item $\vec{\pvar{x}}$/$\vec x$ are distinct program/logical variables,
$\pv{P} = \emptyset$, and $P$ has no existentials; 
\item $\Qok/\Qerr$ are either unsatisfiable or are of the form $\exists \vec y.~Q$, with 
$Q$ having no existential quantification and $\pv{Q} = \{ \pvar{ret}/\pvar{err}\}$.
\end{itemize}
%
%
The constraints on the program variables in external specifications
are well-known from OX logics and follow the usual scoping of the
parameters and local variables of functions. No other program variables can be present in the two post-conditions due to variable scope being limited to the function body.




The relationship between external and internal UX specifications
is more complex than in OX reasoning. In particular, while the
internal UX pre-condition extends the
external by simply instantiating the locals to null,
the transition from internal to external post-condition must ensure
that: \etag{P1} no information is lost (for UX soundness); and~\etag{P2}~no locals leak into
the calling context (given variable scoping). To capture this, as is done in the work on ESL~\cite{esl}, we define an 
{\em internalisation} function, $\fext_{\fictx, \fid}^\macUX$, which takes an 
  implementation context~$\fictx$, a function
  $\pfunction{\fid}{\lst{\pvar x}}{\cmd;
  \preturn{\pexp}} \in  \gamma$, and a UX specification
and returns the set of all of its internal specifications:
\[
\begin{array}{l}
    \fext_{\fictx, \fid}^{\macUX} (\islquadruple{
  \mathit{\vec{\pvar x} = \vec x \lstar P}}{\!}{\Qok}{\Qerr}) \defeq \\
 \qquad
	\{~\islquadruple{\mathit{\vec{\pvar x} = \vec x \lstar P} \lstar \vec {\pvar
           z}\doteq\nil}{\!}{\Qok'}{\Qerr'}~{:}~\\\qquad\qquad\qquad~\models (\Qok'
                                              \Rightarrow \pexp \in
                                              \Val \lstar
                                            \AssTrue), \\\qquad\qquad\qquad~
        \models (\Qok \Rightarrow \exsts{\vec{p}} \Qok'\{\vec{p} / \vec{\pvar p}\}\lstar\pvar{ret}\doteq \pexp\{\vec{p} / \vec{\pvar p}\}),\\\qquad\qquad\qquad~\models (\Qerr \Rightarrow \exsts{\vec{p}} \Qerr'\{\vec{p} / \vec{\pvar p}\})~\}
\end{array}
\]
where 
$\pvvar z=\pv{\cmd}\backslash \{\vec{\pvar{x}} \} $, $\vec{\pvar p} = \{\vec{\pvar{x}} \} \uplus \{\pvvar z\}$, and the logical variables $\vec p$ are fresh w.r.t.~$\Qok$ and $\Qerr$.
In particular, the external post-condition must be equal to or stronger than the internal one (ensuring P1) in which the parameters and local variables have been replaced by fresh existentially quantified logical variables (ensuring P2).
OX specification internalisation, denoted by
$\fext_{\fictx, \fid}^\macOX$, is defined analogously, with the two latter
implications reversed: e.g.,
$\models (\Qok \Leftarrow \exsts{\vec{p}} \Qok'[\vec{p} / \vec{\pvar
    p}]\lstar\pvar{ret}\doteq \pexp\{\vec{p} / \vec{\pvar p}\}$).
Explicit OX internalisation is not strictly necessary in SL, however, as information about program variables can be forgotten in internal post-conditions using forward consequence.

The set of external specifications is denoted by $\especs$. A function specification context (also: specification context), $\fsctx \in \fids \rightharpoonup_{\mathit{fin}} \mathcal{P} (\especs)$, is a finite partial function from function identifiers to a finite set of external specifications.

\subparagraph{Function Environments.} A function environment, $(\fictx,\fsctx)$, is a pair comprising an implementation context $\fictx$ and a specification context $\fsctx$, where specification contexts map function identifiers to sets of their external specifications. A function environment $(\fictx,\fsctx)$ is {valid}, written $\models (\fictx, \fsctx)$, iff every function in $\fsctx$ has an implementation in $\fictx$ and for every external specification in $\fsctx$ there exists a corresponding internal specification valid under $\fictx$. Lastly, a function specification is valid in  specification context $\fsctx$, denoted by $\fsctx \models \genquadruple{P}{\fid(\vec{\pvar{x}})}{\Qok}{\Qerr}$ iff for all $\fictx$ such that $\models (\fictx,\fsctx)$ it holds that $\fictx \models \genquadruple{P}{\fid(\vec{\pvar{x}})}{\Qok}{\Qerr}$.

\subsection{Further Definitions for Extended Symbolic States}\label{appssec:extsstates}

From now on, instead of  denoting extended symbolic states as $(\ssto, \hat{H},\spc)$ we will expose the contents of $\hat{H}$ and write 
extended symbolic states in full, as $(\ssto, \smem, \sps, \spc)$.

Well-formedness of extended states is defined by:
\[
\sinv((\ssto, \smem, \sps, \spc)) \defeq \sinv((\ssto, \smem, \spc)) \land \svs{\sps} \subseteq \svs{\spc}
\]
In the main text, we use 
\[\sinvc((\smem_1, \sps_1)) \defeq \sinvc(\smem_1) \text{ and  }(\smem_1, \sps_1) \cup (\smem_2, \sps_2) \defeq (\smem_1 \uplus \smem_2, \sps_1 \cup \sps_2).\]
and we refer to the definition of well-formed symbolic heaps $\sinvc(\smem_1) $ in \S\ref{appssec:furtherdef}.

Well-formed of a symbolic substitution $\ssubst$, used in the consume and produce functions, with respect to $\spc$ is defined to be 
\[\sinv(\ssubst, \spc)\defeq \svs{\ssubst} \subseteq \svs{\spc} \land \spc \models \codom(\ssubst) \subseteq \Val.\]

As some symbolic predicates can be satisfied by multiple states, symbolic interpretations become a relation, $\smodels$, similar to the satisfaction relation for assertions:
\[
\sint, (\sto, \hp) \smodels (\ssto, \smem, \sps, \spc) \Leftrightarrow \exists \hp_1, \hp_2.~\hp = \hp_1 \uplus \hp_2 \land \sint((\ssto, \smem_1, \spc)) = (\sto, \hp_1) \wedge \sint, (\sto, \hp_2) \smodels \sps
\]
where $\sint, \st \smodels \sps$ denotes $\sint, \st \smodels
\scalerel*{\lstar}{\textstyle\sum} \{
\Predd{\vec{\sval}_1}{\vec{\sval}_2} ~|~
\Predd{\vec{\sval}_1}{\vec{\sval}_2} \in \sps \}$.

\begin{example}[Interpretation of symbolic states] Let $(\emptyset, (\{1\mapsto 5\}, \{\llist{\svar{x};[\svar{x}],[15]}\}),\svar{x}\in\Val )$ be a symbolic state, $\sint$ be an interpretation such that $\sint(\svar{x})=2$. In addition, consider the concrete state $(\emptyset, \{1\mapsto 5, 2\mapsto 15, 3\mapsto {\sf null}\})$. We can split the concrete heap in two disjoint parts  such that $\sint(\emptyset,\{1\mapsto 5\},\svar{x}\in\Val)=(\emptyset, \{1\mapsto 5\})$ and $\sint,(\emptyset, \{2\mapsto 15,3\mapsto {\sf null}\}\smodels \llist{\svar{x};[\svar{x}],[15]} $ hold. By definition the latter  is equivalent to  $\sint,(\emptyset, \{2\mapsto 15,3\mapsto {\sf null}\}\smodels P_{list} $, where $P_{list}$ is the predicate's body  as defined in \S~\ref{sub:funspec}.
\end{example}
\subsection{Additional Lemmas}

The following lemmas are convenient in our proofs and follows easily from the fact that our concrete language is compositional (Lem.~\ref{lem:ux-ox-frame}) and from the definition of symbolic state satisfaction (respectively):
\begin{lemma}[Specification Validity with Explicit Frame]\label{lem:valid-with-frame}
\[
\begin{array}{l}
\fictx \models \islquadruple{P}{\cmd}{\Qok}{\Qerr} \eqdef \\
\quad (\forall \subst, \sto', \hp', \hp_f, \outcome.~\subst, (\sto', \hp') \models Q_\outcome \land \hp_f~\sharp~\hp' \implies \\
    \quad \quad  (\exsts{\sto,\hp}~\subst, (\sto, \hp) \models
    P~\land~(\sto, \hp \uplus \hp_f), \cmd \baction_\fictx \outcome:
    (\sto', \hp' \uplus \hp_f)))\\
\fictx \models \quadruple{P}{\cmd}{\Qok}{\Qerr} \eqdef \\
\quad (\forall \subst, \sto, \hp, \hp_f, \outcome, \sto', \hp''.~\subst, (\sto, \hp) \models P \land (\sto, \hp \uplus \hp_f), \cmd \baction_{\fictx} \outcome: (\sto', \hp'') \implies \\
\qquad (\outcome \neq \oxm \land \exists \hp'.~\hp'' = \hp' \uplus \hp_f \land \subst, (\sto', \hp') \models Q_\outcome)) \\
\end{array}
\]
\end{lemma}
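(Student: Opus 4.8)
Both equivalences follow by instantiating the concrete frame properties (Lem.~\ref{lem:ux-ox-frame}) with the frame state $\st_f = (\emptyset, \hp_f)$, whose store is empty. For each of the two statements one direction is immediate: taking $\hp_f = \emptyset$ in the frame-annotated formulation collapses $\hp \uplus \hp_f$ to $\hp$ and $\hp' \uplus \hp_f$ to $\hp'$ (and $\hp'' = \hp' \uplus \hp_f$ to $\hp'' = \hp'$), recovering verbatim the original definitions of $\fictx \models \islquadruple{P}{\cmd}{\Qok}{\Qerr}$ and $\fictx \models \quadruple{P}{\cmd}{\Qok}{\Qerr}$. So all the content lies in threading a non-empty frame heap through, which is exactly what Lem.~\ref{lem:ux-ox-frame} provides once its side conditions are discharged: $\myunmod{(\st_f, \cmd)}$ always holds because $\st_f$ has empty store, and the composition with $\st_f$ is defined precisely when $\hp_f~\sharp~\hp'$ (resp.\ $\hp_f~\sharp~\hp$).

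For UX, assume $\fictx \models \islquadruple{P}{\cmd}{\Qok}{\Qerr}$ in the original sense and fix $\subst, \sto', \hp', \hp_f, \outcome$ with $\subst, (\sto', \hp') \models Q_\outcome$ and $\hp_f~\sharp~\hp'$. Original validity produces $\sto, \hp$ with $\subst, (\sto, \hp) \models P$ and $(\sto, \hp), \cmd \baction_\fictx \outcome : (\sto', \hp')$. Since external post-conditions carry only the outcomes $\osucc$ and $\oerr$ we have $\outcome \neq \omiss$, so the UX frame property applies to $\st_f = (\emptyset, \hp_f)$ and yields $(\sto, \hp \uplus \hp_f), \cmd \baction_\fictx \outcome : (\sto', \hp' \uplus \hp_f)$, which together with $\subst, (\sto, \hp) \models P$ is the frame-annotated conclusion.

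For OX, assume $\fictx \models \quadruple{P}{\cmd}{\Qok}{\Qerr}$ and fix $\subst, \sto, \hp, \hp_f, \outcome, \sto', \hp''$ with $\subst, (\sto, \hp) \models P$ and $(\sto, \hp \uplus \hp_f), \cmd \baction_\fictx \outcome : (\sto', \hp'')$. The OX frame property returns a small-heap execution $(\sto, \hp), \cmd \baction_\fictx \outcome' : \sigma$ together with the conditional decomposition that $\outcome' \neq \omiss$ forces $(\sto', \hp'') = \sigma \cdot \st_f$ and $\outcome' = \outcome$. Applying the original OX validity to this small-heap execution from the $P$-state $(\sto, \hp)$ gives $\outcome' \neq \omiss$ and $\subst, \sigma \models Q_{\outcome'}$; then, since $\outcome' \neq \omiss$, the decomposition kicks in, and as $\st_f$ has empty store this means $\sigma = (\sto', \hp')$ with $\hp'' = \hp' \uplus \hp_f$ and $\outcome = \outcome'$, so $\subst, (\sto', \hp') \models Q_\outcome$, as required. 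The one point that needs care — and the main obstacle — is the side condition $\pv{\cmd} \subseteq \pv{\st}$ of the OX frame property used in this step: it must be justified at the point of invocation, which is why the frame-annotated reformulation is intended for the specification forms whose pre-conditions constrain exactly $\cmd$'s program variables (the internalised function pre-conditions of shape $\vec{\pvar x} = \vec x \lstar P_0 \lstar \vec{\pvar z} \doteq \nil$), where $\subst, (\sto, \hp) \models P$ entails $\pv{\cmd} \subseteq \dom(\sto)$; the full proof should make this assumption explicit.
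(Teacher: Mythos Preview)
Your proposal is correct and matches the paper's approach: the paper gives no detailed proof, stating only that the lemma ``follows easily from the fact that our concrete language is compositional (Lem.~\ref{lem:ux-ox-frame})'', which is precisely the strategy you spell out. Your observation about the side condition $\pv{\cmd} \subseteq \pv{\st}$ in the OX frame property is a genuine subtlety that the paper glosses over; as you note, the lemma is in fact only invoked on internalised specifications whose pre-conditions fix all of $\cmd$'s program variables, so the condition is met at every use site even though it is not recorded in the lemma statement.
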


\begin{lemma}[State decomposability]
\label{lem:state-decompose}
\[
\begin{array}{r@{\ }l}
& (\ssto, \smem, \sps, \spc) = 
 (\ssto, \smem_a, \sps_a, \spc) \cdot (\ssto, \smem_b, \sps_b, \spc) \\
\land & \sint, \st \models (\ssto, \smem, \sps, \spc) \\
\land & \sint, \st_a \models (\ssto, \smem_a, \sps_a, \spc)\\
\implies & \exists \st_b.\ \sint,\st_b \models (\ssto, \smem_b, \sps_b, \spc) \land \st = \st_a \cdot \st_b
\end{array}
\]
\end{lemma}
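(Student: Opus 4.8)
The plan is to prove the lemma by unfolding the definition of the symbolic-state satisfaction relation $\smodels$ on both hypotheses and then reassembling the witnesses. Write $\st = (\sto, \hp)$ and $\st_a = (\sto_a, \hp_a)$, and recall that in the composition $(\ssto, \smem_a, \sps_a, \spc) \cdot (\ssto, \smem_b, \sps_b, \spc)$ the store $\ssto$ and path condition $\spc$ are shared, the symbolic heaps are joined by disjoint union ($\smem = \smem_a \uplus \smem_b$), the predicate multisets are joined ($\sps = \sps_a \cup \sps_b$), and well-formedness of the result forces $\spc \models \sinvc(\smem)$. Unfolding $\sint, \st \smodels (\ssto, \smem, \sps, \spc)$ gives heaps $\hp_1, \hp_2$ with $\hp = \hp_1 \uplus \hp_2$, $\sint((\ssto,\smem,\spc)) = (\sto,\hp_1)$ and $\sint, (\sto, \hp_2) \smodels \sps$; unfolding the second hypothesis gives $\hp_{a1}, \hp_{a2}$ with $\hp_a = \hp_{a1} \uplus \hp_{a2}$, $\sint((\ssto,\smem_a,\spc)) = (\sto_a,\hp_{a1})$ and $\sint, (\sto_a, \hp_{a2}) \smodels \sps_a$. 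Since $\hp_1$ and $\hp_{a1}$ both arise from interpreting $\ssto$ under the same $\sint$, we get $\sto = \sto_a = \sint(\ssto)$ at once.

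First I would handle the store and the symbolic-heap parts, which are routine. The key auxiliary fact is that symbolic interpretation distributes over disjoint union of well-formed symbolic heaps: using that $\sint$ validates $\sinvc(\smem)$ — so $\dom(\smem)$ consists of pairwise-distinct symbolic addresses that $\sint$ maps to distinct naturals — one shows $\sint(\smem) = \sint(\smem_a) \uplus \sint(\smem_b)$, with all three interpretations defined. Hence $\hp_1 = \hp_{a1} \uplus \hp_{b1}$ where $\hp_{b1} \defeq \sint(\smem_b)$, and moreover $\sint((\ssto, \smem_b, \spc)) = (\sto, \hp_{b1})$. This is essentially bookkeeping about interpretation commuting with heap composition, and would be isolated as a small sub-lemma.

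The real work is the predicate component. From $\sps = \sps_a \cup \sps_b$ the iterated separating conjunction factors into the separating conjunction of the predicates of $\sps_a$ starred with that of $\sps_b$, so $\sint, (\sto, \hp_2) \smodels \sps$ yields a split $\hp_2 = \hp_2^a \uplus \hp_2^b$ with $\sint, (\sto, \hp_2^a) \smodels \sps_a$ and $\sint, (\sto, \hp_2^b) \smodels \sps_b$. The obstacle is that this $\hp_2^a$ need not a priori equal the heap $\hp_{a2}$ already fixed by $\st_a$, yet the conclusion $\st = \st_a \cdot \st_b$ demands $\hp_{a2} = \hp_2^a$. I would close this gap by appealing to the side condition under which the lemma is invoked: the predicates appearing in symbolic states are exact — every symbolic predicate, hence $\sps_a$ under $\sint$, is satisfied by at most one concrete heap, which in the UX setting is guaranteed because the engine restricts folding to strictly exact predicates — and this forces $\hp_{a2} = \hp_2^a$. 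Then, setting $\hp_b \defeq \hp_{b1} \uplus \hp_2^b$ and $\st_b \defeq (\sto, \hp_b)$, the disjointness facts (from $\smem_a \uplus \smem_b$ being defined and from the two $\lstar$-splits of $\hp_2$) show that $\st_a \cdot \st_b$ is defined and equals $\st$, while $\sint((\ssto, \smem_b, \spc)) = (\sto, \hp_{b1})$ together with $\sint, (\sto, \hp_2^b) \smodels \sps_b$ gives $\sint, \st_b \smodels (\ssto, \smem_b, \sps_b, \spc)$, completing the argument.
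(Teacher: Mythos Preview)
The paper does not give a detailed proof of this lemma; it only asserts that it ``follows easily from \ldots\ the definition of symbolic state satisfaction.'' Your unfolding of the definitions and your treatment of the store and symbolic-heap components are correct and considerably more explicit than anything the paper provides.

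You have, however, put your finger on a genuine subtlety that the paper glosses over. Without additional hypotheses the predicate component does not go through: if a predicate in $\sps_a$ is not strictly exact, the heap $\hp_{a2}$ witnessing $\sint,\st_a \smodels (\ssto,\smem_a,\sps_a,\spc)$ need not coincide with any $\hp_2^a$ obtained by splitting the predicate heap of $\st$, and then $\st_a$ need not be a sub-state of $\st$ at all (a concrete counterexample: take $\sps_a$ to contain a single predicate satisfiable by two distinct one-cell heaps, $\sps_b=\emptyset$, $\smem_a=\smem_b=\emptyset$, and choose $\st$ and $\st_a$ using the two different satisfying heaps). Your proposed resolution---appealing to strict exactness---does close the gap in the UX setting, where Theorem~5.1 restricts folding to strictly exact predicates. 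But the lemma is also invoked in the OX soundness proofs (for function call and for fold), where no exactness restriction on predicates is assumed. In those uses your appeal to strict exactness is unavailable, so your argument as written does not cover all the paper's applications of the lemma. Either the lemma tacitly carries an exactness hypothesis that the paper does not state, or the OX proofs implicitly rely on something stronger than the axiomatic Property~5 for consume---namely that the witnessing $\st_f$ is already a sub-state of the input $\st$, which the paper's concrete consume implementation happens to deliver but the axiomatic interface does not demand.
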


\subsection{Underapproximate Soundness Proof}\label{app:symbolic-soundness-func-ux}

The formal definition of strictly exact assertions~\cite[p.~149]{Yangphd}, which are assertions that are satisfiable by at most one heap, is, if $\subst, (\sto, \hp) \models P$ and $\subst, (\sto, \hp') \models P$, then $\hp' = \hp$. We call a predicate strictly exact iff its body is strictly exact.

\subparagraph{UX soundness function call case.} Here, we present the proof of for $\oxok$-outcome for function calls, that is, when $\cmd = \pfuncall{\pvar{y}}{\procname}{\vec{\pexp}}$.

We have a symbolic step as follows:
\begin{mathpar}
\small
\infer[\mathsf{FCall}]
{\csesemtransabstractm{(\ssto,\smem,\sps,\spc)}{\passign{\pvar{y}}{\fid(\vec{\pexp})}}{(\ssto[\pvar{y} \mapsto \svar{r}],\smem',\sps',\spc''')}{\fsctx}{\macUX}{\osucc}}
{
\begin{array}{r@{\hspace{.5cm}}l}
\mbox{\hyp{1}} & \cseeval{\vec{\pexp}}{\ssto}{\spc}{\vec{\sval}}{\spc'}  \\
\mbox{\hyp{2}}&\islquadruple{\vec{\pvar{x}} = \vec{x} \lstar P}{\fid(\vec{\pvar{x}})}{\Qok}{\Qerr} \in \fsctx(f)|_{\macUX} \\
\mbox{\hyp{3}}&\hat{\theta} = \{  \vec{\sval}/\vec{x}\} \\
\mbox{\hyp{4}}&\mac(\macUX, P, \hat{\theta},(\ssto,\smem,\sps,\spc')) \rightsquigarrow (\hat{\theta}', (\ssto,\smem_f,\sps_f,\spc''))\\
\mbox{\hyp{5}}&\Qok = \exists \vec{y}.~\Qok'  \\
\mbox{\hyp{6}}&\vec{\hat{z}} \text{ fresh} \\
\mbox{\hyp{7}}&\ssubst''=\ssubst{'}\uplus \{ \vec{\hat{z}}/\vec{y}\}\\
\mbox{\hyp{8}}& r, \hat{r} \text{ fresh}   \\
\mbox{\hyp{9}}&\Qok'' = \Qok'\{r/\pvar{ret}\} \text{~and~} \ssubst''' = \ssubst''\uplus\{\hat{r}/r \}  \\
\mbox{\hyp{10}}&\textsf{produce}(\Qok'', \ssubst''', (\ssto,\smem_f, \sps_f,\spc'')) \rightsquigarrow  (\ssto,\smem',\sps', \spc''')
\end{array}
}
\end{mathpar}
where  \(\sst=(\ssto,\smem,\sps,\spc)\) and the final symbolic state is $\sst'=(\ssto[\pvar{y} \mapsto \svar{r}],\smem',\sps',\spc''')$ and, by definition of \produce, \hyp{10a} \(\spc'''=\spc''\wedge \spc_q\), for some $\spc_q$, such that $\sat(\spc''')$.
Our task is now to construct an analogous concrete step.


 Let $\sint$ and $\st'=(\sto', \cmem')$ be such that $\sint, \st'\smodels \sst'$. By definition of $\smodels$:
 
 \[
\begin{aligned}
 \qquad \sint, \st' \smodels \sst' \Leftrightarrow~ &\exists \hp_1', \hp_2'.~\hp = \hp_1' \uplus \hp_2' \land \sint(\sst'.\fieldsst{sto}) = \sto' \wedge \sint(\smem') = \hp_1' \wedge \\
 &\sint(\spc''') = \true~\wedge 
                                          \sint, (\sto', \hp_2') \smodels \sps'
\end{aligned}
\]


 From soundness of \mac and \produce (Prop.~\ref{prop:soundness}), we have:
\[
\begin{array}{l@{\hspace{1cm}}l}
\mbox{\hyp{4a}}\quad \smem= \smem_p \uplus \smem_f,  & \mbox{\hyp{4b}} \quad \sps = \sps_p\cup \sps_f\\
\mbox{\hyp{10b}}\quad \smem'= \smem_f \uplus \smem_{\Qok''} &\mbox{\hyp{10c}} \quad \sps' = \sps_f\cup \sps_{\Qok''}
\end{array}
\]


From the model $\st'$ of $\sst'$ and, by  the definition of $\smodels$, one has  that  
 \[\mbox{\hyp{10d}} \qquad \exists \hp_f,\hp_{q}.~\hp_1'=\hp_f'\uplus \hp_{q}'\wedge \sint(\smem_f)=\hp_f'\wedge  \sint(\smem_{\Qok''})=\hp_{q}'\]

The following hold from \hyp{10a} - \hyp{10d}:
\begin{description}

 \item[\hyp{10e}] \( \exists \hp_f'',\hp_{q}''.~\hp_2'=\hp_f''\uplus \hp_{q}''\wedge \sint,(\sto',\hp_f'')\smodels\sps_f\wedge   \sint,(\sto',\hp_q'')\smodels\sps_{\Qok''}\); 

\item[\hyp{10f}] $\sint, (\sto', \hp_f)\smodels (\ssto,\smem_f,\sps_f,\spc'')$, where $\hp_f=\hp_f'\uplus \hp_f''$;
\item[\hyp{10g}] Define $\sst_f=(\ssto,\smem_f,\sps_f,\spc'')$.
 \end{description}
 
 Similarly, it follows that there exists  a concrete state 
 $\st_q'=(\sto',\hp_q)$ such that   
 \begin{description}
  \item[\hyp{11}]~$\sint,\st_q'\smodels (\ssto,\smem_{\Qok''},\sps_{\Qok''},
  \spc''')$
  \item[\hyp{11a}]~$\hp_q=\hp_{q}'\uplus \hp_q''$.
  \item[\hyp{11b}]~$\sint(\ssubst'''),\st_q'\models \Qok''$, by soundness of \produce (Prop.~\ref{prop:soundness}). 
  Thus,
   $\sint(\ssubst'''),\st_q'\models \Qok'\{r/\pvar{ret}\}$, by definition.
   \item[\hyp{11c}]\(\sint(\ssubst'''), (\emptyset, \hp_q) \models \Qok'\{r/\pvar{ret}\}\),
 since there are no program variables in 
 \(\Qok'\{r/\pvar{ret}\}\). 
 \end{description}
   
Notice that  \hyp{10e} $\dish{\hp_f}{\hp_q}$, \hyp{10f}, \hyp{11c} and
 completeness of \produce (Prop.~\ref{prop:prod-compl}), implies that  
 there exists $\sst_{\Qok''}$ such that $\sst'=\sst_f\cdot \sst_{\Qok''}$ 
 and that   \hyp{11d}~$\sint, (\emptyset,\hp_q)\models \sst_{\Qok''}$. 
 Take $\sst_{\Qok''}=(\emptyset,\smem_{\Qok''},\sps_{\Qok''},\spc''')$.

By the hypothesis \(\models (\fictx, \fsctx) \), one has, for some $\cmd'$ and $\pexp'$, \hyp{12} $f(\vec{\pvar{x}}) \{\cmd'; \preturn{\pexp'}\} \in \fictx$ and a valid UX quadruple
\[
\mbox{\hyp{12a}}\quad \gamma \models \islquadruple{\vec{\pvar{x}} = \vec{x} \star P\lstar \vec{\pvar{z}}=\pvar{null} }{\cmd'}{\Qok^*}{\Qerr^*}
\]
such that 

\begin{description}
\item[\hyp{12b}] \( \models \Qok^* \implies \pvar{E}'\in \Val\)
\item[\hyp{12c}] \( \models \Qok \implies\exsts{\vec{p}} \Qok^*\{\vec{p} / \vec{\pvar p}\}\lstar\pvar{ret}\doteq \pexp'\{\vec{p} / \vec{\pvar p}\}~\text{and}\)
\item[\hyp{12d}] \( \models \Qerr \implies\exsts{\vec{p}} \Qerr^*\{\vec{p} / \vec{\pvar p}\} \)
\end{description}
where  $\pvvar z=\pv{\cmd'}\backslash\{\vec{\pvar{x}}\}$, $\vec{\pvar p} = \{\vec{\pvar{x}}, \vec{\pvar{z}}\}$, and the logical variables $\vec p$ are fresh with respect to $\Qok$ and $\Qerr$.

Now,  \hyp{5} \(\Qok=\exists \vec{y}.\Qok'\) and  \hyp{11c} \(\sint(\ssubst'''), (\emptyset, \hp_q) \models \Qok'\{r/\pvar{ret}\}\) entails $\sint(\ssubst'''), (\emptyset, \hp_q) \models~\Qok\{r/\pvar{ret}\}$. By property
 \hyp{12c} one has  \( \sint(\ssubst'''), (\emptyset,\hp_q) \models \exsts{\vec{p}} \Qok^*\{\vec{p} / \vec{\pvar p}\}\lstar r \doteq \pexp'\{\vec{p} / \vec{\pvar p}\}\). Thus, there exists $\vec{w}\in\Val$ such that 
 \[\mbox{\hyp{13}} \qquad\sint(\ssubst'''), (\emptyset[\vec{\pvar{p}}\mapsto \vec{w}], \hp_q) \models  \Qok^*\lstar r \doteq \pexp'.\]
Define
\begin{description}
\item[\hyp{13a}] $s_q = \emptyset [\vec{\pvar{p}}\mapsto \vec{w}]$;
\item[\hyp{13b}] $\theta_0=\sint(\ssubst''')$.
\end{description}
By hypothesis \hyp{9} \(\ssubst'''=\ssubst''\uplus\{r\mapsto \svar{r}\}\) we obtain \hyp{13d} $\esem{\pexp'}{\subst_0,\sto_q}=\theta_0(r)=\sint(\svar{r})$.

Validity of the UX quadruple \hyp{12a} \(\gamma \models \islquadruple{\vec{\pvar{x}} = \vec{x} \star P\lstar \vec{\pvar{z}}=\pvar{null} }{\cmd'}{\Qok^*}{\Qerr^*} \) together with Lem.~\ref{lem:valid-with-frame}, \hyp{13}, \hyp{10a}, and  \hyp{10e} $\dish{\hp_f}{\hp_{q}}$ imply that 
\[\mbox{\hyp{14}}\quad \exsts{\sto_p,\hp_p}~\subst_0, (\sto_p, \hp_p) \models \vec{\pvar{x}}=\vec{x}\lstar P \lstar \vec{\pvar{z}}=\pvar{null}~\land~(\sto_p, \hp_p \uplus \hp_f), \cmd' \baction_\fictx ok: (\sto_q, h_q \uplus \hp_f)
\]
Using \hyp{13b},  we obtain $\sint(\ssubst'''), (\sto_p, \hp_p) \models P$. Since $P$ does 
not have any program variable, we can conclude
$\sint(\ssubst'''), (\emptyset, \hp_p) \models P$. 
From the fact that  $r,\svar{r}$ are fresh, and  we can 
assume w.l.o.g. that  $\vec{y}$ do not occur in $P$;
 thus, we can conclude that 
  \hyp{15} $\sint(\ssubst'), (\emptyset, \hp_p) \models P$. 
 In particular, we have:
\begin{description}
\item[\hyp{15a}]\(\vint(\ssubst'), (\sto_p,\hp_p)\models \vec{\pvar{x}}=\vec{x}\lstar P\lstar \vec{\pvar{z}}=\pvar{null}\)
\item[\hyp{15b}] \((\sto_p, \hp_p \uplus \hp_f), \cmd' \baction_\fictx ok: (\sto_q, h_q \uplus \hp_f)\), 
\end{description}
where  $\ssto_p= \emptyset[\vec{\pvar{x}}\mapsto \vec{x}][\vec{\pvar{z}}\mapsto \pvar{null}]$ and $s_p=\sint(\ssto_p)=\emptyset[\vec{\pvar{x}}\mapsto \sint(\ssubst'(\vec{x}))][\vec{\pvar{z}}\mapsto \pvar{null}]$ (because any other program variables cannot affect execution), that is, \hyp{15c} $s_p=[\vec{\pvar{x}}\mapsto \vec{v}, \vec{\pvar{z}}\mapsto \pvar{null}]$, where $\sint(\ssubst'(\vec{x}))=\sint(\vec{\sval})=\vec{v}$.

Now, we define $\sst_P=(\emptyset,\smem_p,\sps_p,\spc'')$. 
From \hyp{4}, \hyp{15}, \hyp{10f} and \hyp{14} $\dish{\hp_f}{\hp_p}$, 
we have all the conditions for the completeness of UX \mac (Prop.~\ref{prop:mac-ux-comp})
which implies  $\sint, (\sto,\hp_f\uplus\hp_p)\smodels \sst \wedge \sint, (\emptyset,\hp_p)\smodels \sst_P $.
Since \(\sint(\spc')=\true\), UX soundness of evaluation (Thm.~\ref{thm:ux-sound-eval}), and \hyp{1}, the following holds: \hyp{16} \(\esem{\pexp}{s}=\vec{v}\), for $s=\sint(\ssto)$.
 
From \hyp{12}, \hyp{13d}, \hyp{15b}, \hyp{15c} and  \hyp{16} we obtain all components required for our concrete function call:
\[
   \infer{
     \sthreadp{ \sto }{ \hp_p\uplus \hp_f},
     \pfuncall{\pvar{y}}{\procname}{\vec{\pexp}} \baction_{\fictx}
     \sthreadp{\sto [\pvar{y} \storearrow v' ] }{ \hp_q\uplus \hp_f}
  }{
    \begin{array}{c}
    \pfunction{\procname}{\vec{\pvar{x}}}{\scmd'; \preturn{\pexp'}} \in \scontext
    \\
    \esem{\vec{\pexp}}{\sto} = \vec{v} 
     \quad \pv{\scmd'} \setminus
      \{\vec{\pvar{x}}\} = \{\vec{\pvar{z}} \}
    \\
 \sto_p  =  \emptyset [ \vec{\pvar{x}} \storearrow \vec{v}] [ \vec{\pvar{z}} \storearrow \nil]
     \\ (\sto_p, \hp_p\uplus \hp_f), \scmd'
      \baction_{\fictx} \sthreadp{ \sto_q }{ \hp_q\uplus \hp_f}  \quad  \esem{\pexp'}{\sto_q} =v' 
    \end{array}
  }
\]
and the results follows.

\subparagraph{Folding Predicates Case: $\cmd = \pfold{\pred}{\vec{\pexp}}$.} We want to prove the following:
\[\models (\fictx, \fsctx) \land
\csesemtransabstractm{\sst}{\pfold{\pred}{\vec{\pexp}}}{\sst'}{\fsctx}{\macUX}{\result} \land
\vint, \st' \smodels \sst' \implies
\exists \st.~\vint, \st \smodels \sst \land \st, \pfold{\pred}{\vec{\pexp}} \baction_{\fictx} \result : \st'\]

Since $\st, \pfold{\pred}{\vec{\pexp}} \baction_{\fictx} \osucc : \st$, we take $\st = \st'$. I.e., we must show $\sint, \st' \smodels \sst$.

The rule for folding predicates is: 

\begin{mathpar}
\mprset{flushleft}
\inferrule{
\mbox{ \hyp{1}}\\ \pred(\predin)(\predout)~\{\bigvee_i (\exists \vec{x}_i.~A_i)\} \in \preds \\\\
\mbox{ \hyp{2}}\\  \cseeval{\vec{\pexp}}{\ssto}{\spc}{\vec{\sval}}{\spc} \\\\
\mbox{ \hyp{3}}\\ \ssubst \defeq [\predin \mapsto \vec{\sval} ] \\\\
\mbox{ \hyp{4}}\\ \mac(m, A_i, \ssubst, \sst) \rightsquigarrow  (\ssubst', \sst_i) \\\\
\mbox{ \hyp{5}} \\  \hat{\mathcal{P}}' \defeq \{ \pred(\vec{\sval})(\ssubst'(\predout)) \} \cup \sst_i.\fieldsst{preds}}
 {\csesemtransabstractm{\sst}{\pfold{\pred}{\vec{\pexp}}}{\sst_i[\sstupdate{preds}{\sps'}]}{\fsctx}{m}{\osucc}}
\end{mathpar}
where $\sst=(\ssto,\smem,\sps,\spc)$ and \mbox{ \hyp{6}} $\sst'=\sst_i[\sstupdate{preds}{\sps'}]$.

Let $\sint$ and $\st'=(\sto,\hp')$ be such that $\sint,\st'\smodels \sst'$. Then, by definition of $\smodels$:
\[
\begin{aligned}
\mbox{\hyp{7}} \qquad \sint, \st' \smodels \sst' \Leftrightarrow~ &\exists \hp_1', \hp_2'.~\hp = \hp_1' \uplus \hp_2' \land \sint(\sst'.\fieldsst{sto}) = \sto \wedge \sint(\sst'.\fieldsst{heap}) = \hp_1' \wedge \sint(\sst'.\fieldsst{pc}) = \true~\wedge \\
                                          &\sint, (\emptyset, \hp_2') \smodels \scalerel*{\lstar}{\textstyle\sum} \{ \pred(\vec{\sval}_1)(\vec{\sval}_2) ~|~ \pred(\vec{\sval}_1)(\vec{\sval}_2) \in \sps' \}
\end{aligned}
\]
From  $\sint, (\emptyset, \hp_2')\smodels \scalerel*{\lstar}{\textstyle\sum} \{ \pred(\vec{\sval}_1)(\vec{\sval}_2) ~|~ \pred(\vec{\sval}_1)(\vec{\sval}_2) \in \sps' \}
$, it follows that 
\[
\begin{aligned}
 \exists \hp_{21}', \hp_{22}'. ~\hp_2'= \hp_{21}'\uplus \hp_{22}'~\wedge~ &\sint, (\emptyset, \hp_{21}') \smodels \pred(\vec{\sval})(\ssubst'(\predout))\\
 & \wedge \sint, (\emptyset, \hp_{22}') \smodels \scalerel*{\lstar}{\textstyle\sum} \{ \pred(\vec{\sval}_1)(\vec{\sval}_2) ~|~ \pred(\vec{\sval}_1)(\vec{\sval}_2) \in \sps' \}
 \end{aligned}
\]

Thus, \mbox{ \hyp{8}} $\sint,(\sto, \hp_1'\uplus \hp_{22}')\smodels \sst_i$.

Moreover, by definition, $\sint, (\emptyset, \hp_{21}')\smodels \pred(\vec{\sval})(\ssubst'(\predout)) \Leftrightarrow \subst, (\emptyset, \hp_{21}')\models \bigvee_j (\exists \vec{x}_j.~A_j)$ where $\subst = \{ \sint(\vec{\sval})/ \predin, \sint(\ssubst'(\predout))/\predout \}$.

By soundness of \mac (Prop.~\ref{prop:soundness}), we have $\smem_f$ and $\sps_f$ such that:
\[
\begin{array}{l@{\hspace{1cm}}l}
\mbox{\hyp{4a}}\quad \smem= \smem_{A_i} \uplus \smem_f,  & \mbox{\hyp{4b}} \quad \sps = \sps_{A_i}\cup \sps_f
\end{array}
\]
which implies that \hyp{9} $\sst_i = (\ssto,\smem_f,\sps_f,\spc')$, where $\spc'=\spc\wedge \spc_i$ (Prop.~\ref{prop:path_strength}), for some $\spc_i$, and \hyp{10} $\sst = (\ssto,\smem_{A_i},\sps_{A_i},\spc) \cdot (\ssto,\smem_f,\sps_f,\spc)$.



We have $\sint(\spc') = \true \Rightarrow \sint(\spc) = \true$, and hence, there exist $\cmem$ such that $\sint, (\sto, \cmem) \smodels \sst$. In consequence together with \hyp{10}, there exist $\cmem_{A_i}$ such that $\sint, (\sto, \cmem_{A_i}) \smodels (\ssto, \smem_{A_i}, \sps_{A_i}, \spc)$. It follows from soundness of \mac that $\sint(\ssubst'), (\emptyset, \cmem_{A_i}) \models A_i$. This in turn implies $\sint(\ssubst'), (\emptyset, \cmem_{A_i}) \models \exists \vec{x}_i.~A_i$, which in turn together with Prop.~\ref{prop:coverage} implies $\subst, (\emptyset, \cmem_{A_i}) \models \exists \vec{x}_i.~A_i$, which in turn implies $\subst, (\emptyset, \cmem_{A_i}) \models \bigvee_j \exists\vec{x}_j.~A_j$. Since for UX folding we require the predicate to be strictly exact, it follows that $\cmem_{A_i} = \cmem'_{21}$.

\hyp{8} and \hyp{9} implies $\sint, (\sto, \hp_1'\uplus \hp_{22}') \smodels (\ssto, \smem_f, \sps_f, \spc)$. \hyp{10} now gives $\sint, (\sto, \hp'_{21} \uplus (\hp_1' \uplus \hp_{22}')) \smodels \sst$.

\subparagraph{Unfolding Predicates Case: $\cmd = \punfold{\pred}{\vec{\pexp}}$.}
We want to prove the following:
\[
\models (\fictx, \fsctx) \land
\csesemtransabstractm{\sst}{\punfold{\pred}{\vec{\pexp}}}{\sst'}{\fsctx}{\macUX}{\result} \land
\vint, \st' \smodels \sst' \implies
\exists \st.~\vint, \st \smodels \sst \land \st, \punfold{\pred}{\vec{\pexp}} \baction_{\fictx} \result : \st'
\]

Since $\st, \punfold{\pred}{\vec{\pexp}} \baction_{\fictx} \osucc : \st$, we take $\st = \st'$. We must now show $\sint, \st' \smodels \sst$.

The symbolic rule for unfolding a predicate is:
\begin{mathpar}
\mprset{flushleft}
\inferrule{
\mbox{ \hyp{1}}\\ \pred(\predin)(\predout)~\{\bigvee_i (\exists \vec{x}_i.~A_i)\}\in \preds \\\\
 \mbox{ \hyp{2}}\\\cseeval{\vec{\pexp}}{\ssto}{\spc}{\vec{\sval}}{\spc} \\\\
 \mbox{ \hyp{3}}\\\conspred(\pred, \vec{\sval},\sst)\rightsquigarrow ( \spredout,\sst_p) \\\\
 \mbox{ \hyp{4}}\\\vec{\sym{z}} \text{ fresh} \\\\
\mbox{ \hyp{5}}\\  \ssubst \defeq [\predin \mapsto \vec{\sval}, \predout\mapsto \spredout, \vec{x}_i \mapsto \vec{\sym{z}}] \\\\
 \mbox{ \hyp{6}}\\ \produce(A_i,\ssubst,\sst_p) \rightsquigarrow \sst_p'}
 {\csesemtransabstractm{\sst}{\punfold{\pred}{\vec{\pexp}}}{\sst_p'}{\fsctx}{m}{\osucc}}
\end{mathpar}
where $\sst=(\ssto,\smem,\sps,\spc)$, $\sst_p=(\ssto,\smem_p,\sps_p,\spc_p)$, and \mbox{ \hyp{8}} $\sst' = \sst_p' = (\ssto, \smem'_p, \sps'_p, \spc'_p)$.

From soundness of \produce (Prop.~\ref{prop:soundness}), we obtain that there exist $\smem_{A_i}$ and $\sps_{A_i}$ such that $\sst'_p = (\ssto, \smem_p, \sps_p, \spc'_p) \cdot (\ssto, \smem_{A_i}, \sps_{A_i}, \spc'_p)$.

We have $\sint, \st' \smodels \sst'_p$. Say $\st'=(\sto, \hp')$. There exist $\hp_{A_i}$ and $\hp'_p$ such that $\hp' = \hp_{A_i} \uplus \hp_p$, $\sint, (\sto, \hp_p) \smodels (\ssto, \smem_p, \sps_p, \spc'_p)$, and $\sint, (\sto, \hp_{A_i}) \smodels (\ssto, \smem_{A_i}, \sps_{A_i}, \spc'_p)$. From this it follows that $\sint, (\sto, \hp_p) \smodels (\ssto, \smem_p, \sps_p, \spc_p)$.

From the definition of $\conspred$ it follows that there exist $\spredin$ such that $\sps = \{ \pred(\spredin)(\spredout) \} \cup \sps_p$ and $\spc_p = (\spc \wedge \vec{\sval} = \spredin)$. 

Note that we have from soundness of \produce that $\sint(\ssubst), (\emptyset, \hp_{A_i}) \models A_i$, and hence $\sint(\ssubst), (\emptyset, \hp_{A_i}) \models \pred(\predin)(\predout)$, and hence $\sint, (\emptyset, \hp_{A_i}) \smodels \pred(\ssubst(\predin))(\ssubst(\predout))$, and hence $\sint, (\emptyset, \hp_{A_i}) \smodels \pred(\vec{\sval})(\spredout)$, and hence $\sint, (\emptyset, \hp_{A_i}) \smodels \pred(\spredin)(\spredout)$.

Since $\hp_{A_i}$ models $\pred(\spredin)(\spredout)$ and $\hp'_p$ models $\sps_p$, the result now follows.

\subsection{Overapproximate Soundness Proof}\label{app:symbolic-soundness-func-ox}

We take, as in the previous section, successful function call as our illustrative example for the proof of Thm.~\ref{thm:ux-ox-sound-func}.

\subparagraph{OX soundness function call case.}

We have
\begin{description}
\item[\hyp{1}] $\models (\gamma,\Gamma)$
\item[\hyp{2}] $\csesemtranscollectm{\sst}{\cmd}{\hat\Sigma'}{\fsctx}{\macEX}$
\item[\hyp{3}] $\oxabort \not\in \hat\Sigma'$
\item[\hyp{4}] $\sint, \st \models \sst \land \st,\cmd\baction_{\fictx} \result : \st'$
\end{description}

We start out with the following step in the concrete semantics \hyp{5}:
\[
\infer{
     \sthreadp{ \sto }{ \hp },
     \pfuncall{\pvar{y}}{\procname}{\vec{\pexp}} \baction_{\fictx}
     \sthreadp{\sto [\pvar{y} \storearrow v' ] }{ \hp' }
  }{
    \begin{array}{c}
    \pfunction{\procname}{\vec{\pvar{x}}}{\scmd; \preturn{\pexp'}} \in \scontext
    \\
    \esem{\vec{\pexp}}{\sto} = \vec{v} 
     \quad \pv{\scmd} \setminus \{\vec{\pvar{x}}\} = \vec{\pvar{z}}
    \\
 \sto_p  =  \emptyset [ \vec{\pvar{x}} \storearrow \vec{v}] [ \vec{\pvar{z}} \storearrow \nil]
     \\ (\sto_p, \hp), \scmd
      \baction_{\fictx} \sthreadp{ \sto_q }{ \hp' }  \quad  \esem{\pexp'}{\sto_q} =v' 
    \end{array}
  }
\]
and must now construct an analogous step in the symbolic semantics.

That is, a step using the following rule:
\begin{mathpar}
\small
\infer[\mathsf{FCall}]
{\csesemtransm{\ssto, \smem, \spc}{\passign{\pvar{y}}{\fid(\vec{\pexp})}}{\ssto[\pvar{y} \mapsto \svar{r}], \smem', \spc'''}{\fsctx}{\macOX}{\osucc}}
{
\begin{array}{l}
\cseeval{\vec{\pexp}}{\ssto}{\spc}{\vec{\sval}}{\spc'}  \\
\quadruple{\vec{\pvar{x}} = \vec{x} \lstar P}{\fid(\vec{\pvar{x}})}{\Qok}{\Qerr} \in \fsctx(f)|_{\macOX} \\
\hat{\theta} = \{\vec{\sval}/\vec{x} \}\\
\mac(\macOX, P, \hat{\theta}, (\ssto,\smem, \sps, \spc')) \rightsquigarrow (\hat{\theta}', (\ssto,\smem_f, \sps_f, \spc''))\\
\Qok = \exists \vec{y}.~\Qok'  \\
\vec{\hat{z}} \text{ fresh} \\
\ssubst''=\ssubst{'}\uplus \{\vec{\hat{z}}/\vec{y}\}\\
r, \hat{r} \text{ fresh}   \\
\Qok'' = \Qok'\{r/\pvar{ret}\} \text{~and~} \ssubst''' = \ssubst''\uplus \{\hat{r}/r\}  \\
\produce(\Qok'', \ssubst''', (\ssto,\smem_f, \sps', \spc'')) \rightsquigarrow  (\ssto, \smem', \sps', \spc''')
\end{array}
}
\end{mathpar}

Let $\sigma=(s,h), \st'=(s[\pvar{y}\mapsto v'],\hp')$, and \(\sst=(\ssto,\smem,\sps, \spc)\) such that \hyp{6} $\sint, \st \models \sst$ (from \hyp{4}). Note that in particular we have $\sint(\spc)=\true$.

Say $f(\vec{\pvar{x}}) \{\cmd; \preturn{\pexp'}\} \in \fictx$. From \hyp{1}, the definition of OX valid environments, and the definition of OX specification internalisation one has 
\begin{itemize}
\item \hyp{1a} \(\gamma \models \quadruple{\vec{\pvar{x}} = \vec{x} \star P\lstar \vec{\pvar{z}}=\pvar{null} }{\cmd}{\Qok^*}{\Qerr^*} \)
\item \hyp{1b} \( \models \Qok^* \implies \pvar{E}'\in \Val\)
\item \hyp{1c} \( \models \Qok \Longleftarrow\exsts{\vec{p}} \Qok^*\{\vec{p} / \vec{\pvar p}\}\lstar\pvar{ret}\doteq \pexp'\{\vec{p} / \vec{\pvar p}\}~\text{and}\)
\item \hyp{1d} \( \models \Qerr \Longleftarrow\exsts{\vec{p}} \Qerr^*\{\vec{p} / \vec{\pvar p}\} \)
\end{itemize}
where  $\pvvar z=\pv{\cmd}\backslash\{\vec{\pvar{x}}\}$, $\vec{\pvar p} = \{\vec{\pvar{x, z}}\}$, and the logical variables $\vec p$ are fresh with respect to $\Qok$ and $\Qerr$.

Since \(\esem{\vec{\pexp}}{\sto} = \vec{v}\), \(\sint(\sinvc(\sst))=\true\)  and $\sint(\spc)=\true$,  OX soundness of evaluation (Thm.~\ref{thm:ox-sound-eval}) implies that  there exist  \( \spc'\) and \(\sym{v}\) such that \(~\cseeval{\pexp}{\ssto}{\spc}{\sym{v}}{\spc'}\), \( \sint(\spc') = \true\) and \( \sint(\sym{v}) = v\).
By {\em conditional branch completeness} (Prop.~\ref{prop:mac-branch-comp}) and \hyp{3} there exist  \(\ssubst',\smem_f, \sps_f, \spc'', \st_f\) such that 
\[ \mbox{\hyp{7}}\quad \mac(\macOX, \ssubst, P, (\ssto,\smem, \sps, \spc'))\rightsquigarrow (\ssubst', (\ssto,\smem_f, \sps_f, \spc'')) \gand \sint, \st_f \models (\ssto,\smem_f, \sps_f, \spc'') \]

Now, from {\em soundness of \mac} (Prop.~\ref{prop:soundness}) it follows that there exists $\smem_p, \sps_p$ such that $\smem = \smem_p \uplus \smem_f$ and $\sps = \sps_p \cup \sps_f$. Thus, from \hyp{6} and Lem.~\ref{lem:state-decompose}, we define $\st_p = (s, \hp_p)$ such that
\begin{description}
\item[\hyp{8}] $\sint, \st_p \models (\ssto, \smem_p, \sps_p, \spc'')$
\item[\hyp{9}] $\st = \st_p \cdot \st_f$;
\end{description}

Moreover, from soundness it also follows that \hyp{10}\ \(\sint(\ssubst'), \st_p \models P\). Now, from \hyp{4} and the definition of validity of an OX quadruple (specifically, Lem.~\ref{lem:valid-with-frame}), we obtain that there exists $\hp_q$ such that $\hp'=\hp_q\uplus \hp_f$, i.e., the heap after function body execution, for which the following holds \hyp{11a} \(\sint(\ssubst'), (\sto_q,\hp_q)\models \Qok^*\).
By abstracting away program variables $\vec{\pvar{p}}$ in \(\Qok^*\) with fresh logical variables $\vec{p}$, we obtain
\[ \mbox{\hyp{11b}} \ \sint(\ssubst'), (\sto,\hp_q)\models \exists \vec{p}.~\Qok^*\{\vec{p} / \vec{\pvar p}\}\]
where we can use the store $s$ since \(\exists \vec{p}. \Qok^*\{\vec{p} / \vec{\pvar p}\}\) does not contain program variables. Also,
\[ \mbox{\hyp{11c}} \ \sint(\ssubst'), (\sto[\pvar{ret}\mapsto v'],\hp_q)\models \exists \vec{p}.~\Qok^*\{\vec{p} / \vec{\pvar p}\}\lstar \pvar{ret}=\pexp'\{\vec{p} / \vec{\pvar p}\}\]
where \(\esem{\pexp'}{\sto_q} =v' \).

From \hyp{1c} and \hyp{11c} we can conclude that  the model that satisfies the postcondition of the internal spec \(\Qok^*\) satisfies the postcondition of the corresponding external spec  \(\Qok\): 
\[ \mbox{\hyp{12a}} \ \sint(\ssubst'), (\sto[\pvar{ret}\mapsto v'],\hp_q)\models \Qok\]

Since $\Qok$ is a postcondition, $\Qok=\exists\vec{y}.\Qok'$, which entails 
\[ \mbox{\hyp{12b}} \ \sint(\ssubst'), (\sto[\pvar{ret}\mapsto v'],\hp_q)\models \exists \vec{y}.~\Qok'\]

Since $\Qok''=\Qok'\{r/\pvar{ret}\}$, we also get
\[ \mbox{\hyp{12c}} \ \sint(\ssubst')[\vec{y}\mapsto \vec{z}, r\mapsto v'], (\sto,\hp_q)\models \Qok''\]

Let \(\ssubst'''=\ssubst'\uplus\{\hat{\vec{z}}/\vec{y}~~,  \svar{r}/r\}\) where $\hat{\vec{z}}, \hat{\vec{v}}$ and $\svar{z}$ are fresh. Take $\sint'$ that extends $\sint $ as  $\sint'=\sint\uplus\{\vec{w}/\hat{\vec{z}},  v'/\hat{r}\}$. Then, we have \hyp{12d} $\sint'(\ssubst'''), (s,h_q)\models \Qok''$.

From \hyp{7} (specifically, $\sint(\spc'')=\true$), we get $\sint'(\spc'')=\true$, since $\sint'$ extends $\sint$. Now, $\hp'=\hp_q \uplus \hp_f$ gives $\dish{\hp_q}{\sint'(\smem_f)}$.

By \emph{completeness of \produce} (Prop.~\ref{prop:prod-compl}) it follows that there exist $\smem_q, \sps_q$  and $\spc'''$ such that 
\begin{multline*}
\produce(\Qok'', \ssubst'', (\ssto,\smem_f, \sps_f, \spc''))\rightsquigarrow  (\ssto, \smem_f, \sps_f, \spc'') \cdot (\emptyset,\smem_q, \sps_q, \spc''')=\sst''
\gand \sint, (\emptyset,\hp_q) \models (\emptyset,\smem_q, \sps_q, \spc''').
\end{multline*}

Let $\smem' = \smem_q \uplus \smem_f$ and $\sps' = \sps_q \cup \sps_f$. Let $\sst'=(\ssto[\pvar{y}\mapsto \svar{r}], \smem', \sps', \spc^*)$. We now have constructed everything needed for a {\sf FCall} rule invocation ending in $\sst'$. Since $\sint, (\sto[\pvar{y}\mapsto \lvar{v}'], \cmem') \models \sst'$, the proof is concluded.

\subparagraph{Folding Predicates Case: $\cmd = \pfold{\pred}{\vec{\pexp}}$.}

\begin{description}
\item[\hyp{1}] $\models (\gamma,\Gamma)$
\item[\hyp{2}] $\csesemtranscollectm{\sst}{\cmd}{\hat\Sigma'}{\fsctx}{\macEX}$
\item[\hyp{3}] $\oxabort \not\in \hat\Sigma'$
\item[\hyp{4}] $\sint, \st \models \sst \land \st,\cmd\baction_{\fictx} \result : \st'$
\end{description}

\textbf{To prove}: \[
\exists \sst', \sint' \geq \sint.~(\result, \sst') \in \Sigma' \land \sint', \st' \models \sst'
\]

First of all, from the fact that $\texttt{fold}$ is a ghost command, we know
\[
\mhyp{5}\quad \st' = \st \land \result = \oxok
\]

Then, because symbolic execution does not abort \hyp{3}, by inversion on the fold rule, and by completeness of OX consume (Prop.~\ref{prop:mac-branch-comp}), and the fact that consume does not modify the store (Prop.~\ref{prop:wf}), we learn \hyp{6-11}, $\exists i, \st_f$:
\begin{align*}
 \mhyp{6}\quad & \pred(\predin)(\predout)~\{\bigvee_j (\exists \vec{x}_j.~P_j)\} \in \preds \\
 \mhyp{7}\quad & \cseeval{\vec{\pexp}}{\ssto}{\spc}{\vec{\sval}}{\spc} \quad \ssubst \defeq \{  \vec{\sval}/\predin \} \\
 \mhyp{8}\quad &  \mac(\macOX, P_i, \ssubst, (\ssto, \smem, \sps, \spc)) \rightsquigarrow  (\ssubst', (\ssto, \smem_f, \sps_f, \spc')) \\
 \mhyp{9}\quad & \sint, \st_f \models (\ssto, \smem_f, \sps_f, \spc') = \sst_f\\
 \mhyp{10}\quad & \sps' \defeq \{ \pred(\vec{\sval})(\ssubst'(\predout)) \} \cup \sps_f\\
 \mhyp{11}\quad & \begin{array}{rcl}
   \sst' = (\ssto, \smem_f, \sps', \spc') & = & (\ssto, \smem_f, \sps_f, \spc') \cdot (\ssto, \emptyset, \{ \pred(\vec{\sval})(\ssubst'(\predout)) \}, \spc')\\
   & = & \sst_f \cdot \sst_p
 \end{array}
\end{align*}

where $i$ is the index of the definition of the predicate that has been correctly consumed. By soundness of $\mac$ (Prop.~\ref{prop:soundness}), \hyp{6-8}, $\exists \smem_i, \sps_i$:
\begin{align*}
\mhyp{12}\quad & \smem = \smem_i \uplus \smem_f \\
\mhyp{13}\quad & \sps = \sps_i \cup \sps_f\\
\mhyp{14}\quad & \forall \sint, \st.\ \sint,\st \models (\ssto, \smem_i, \sps_i, \spc') \implies \sint(\ssubst'), \st \models P_i
\end{align*}

We define
\[
\sst_i = (\ssto, \smem_i, \sps_i, \spc')
\]

We have, by definition of compositionality of symbolic states that
\begin{align*}
\mhyp{15}\quad & \sst_i \cdot \sst_f = \sst \cdot (\emptyset, \emptyset, \emptyset, \spc')
\end{align*}

In addition, from \hyp{4}, the fact that $\spc' \implies \spc$ and the fact that $\sint(\spc') = \true$ (from \hyp{9}), we also have that
\begin{align*}
\mhyp{16}\quad & \st \models \sst_i \cdot \sst_f
\end{align*}

And therefore from \hyp{4}, \hyp{9}, \hyp{16} and Lem.~\ref{lem:state-decompose}: $\exists \st_i.$
\begin{align*}
\mhyp{17}\quad & \st = \st_i \cdot \st_f \\
\mhyp{18}\quad & \sint, \st_i \models \sst_i
\end{align*}

Then, by \hyp{14} and \hyp{18}, we have that:
\begin{align*}
\mhyp{19}\quad & \sint(\ssubst'), \st_i \models P_i
\end{align*}

Then, by definition of state satisfaction for a predicate, and definition of satisfaction of a disjunction, we can derive that 
\begin{align*}
\mhyp{20}\quad & \sint, \st_i \models \sst_p = (\ssto, \emptyset, \{ \pred(\vec{\sval})(\ssubst'(\predout)) \}, \spc')
\end{align*}

Finally, from \hyp{9}, \hyp{11}, \hyp{17}, \hyp{20} and the definition of satisfaction, we get that
\[
\sint, \st \models \sst'
\]

\subparagraph{Unfolding predicates Case: C = $\punfold{\pred}{\vec{\pexp}}$.}

Remember the Unfold rule:
\begin{mathpar}
\mprset{flushleft}
\inferrule[\textsc{Unfold}]{
 \pred(\predin)(\predout)~\{\bigvee_i (\exists \vec{x}_i.~P_i)\}\in \preds \\\\
 \cseeval{\vec{\pexp}}{\ssto}{\spc}{\vec{\sval}}{\spc} \quad 
 \conspred( \pred, \vec{\sval},\sst)\rightsquigarrow ( \spredout,\sst') \\\\
 \vec{\sym{z}} \text{ fresh} \quad \ssubst \defeq \{ \vec{\sval}/\predin,  \spredout/\predout,  \vec{\sym{z}}/\vec{x}_i\} \\\\
 \produce(P_i,\ssubst,\sst') \rightsquigarrow \sst''}
 {\csesemtransabstractm{\sst}{\punfold{\pred}{\vec{\pexp}}}{\sst''}{\fsctx}{\macOX}{\osucc}}
\end{mathpar}

\begin{description}
\item[\hyp{1}] $\models (\gamma,\Gamma)$
\item[\hyp{2}] $\csesemtranscollectm{\sst}{\cmd}{\hat\Sigma'}{\fsctx}{\macEX}$
\item[\hyp{3}] $\oxabort \not\in \hat\Sigma'$
\item[\hyp{4}] $\sint, \st \models \sst \land \st,\cmd\baction_{\fictx} \result : \st'$
\end{description}

\textbf{To prove}: \[
\exists \sst'', \sint' \geq \sint.~(\result, \sst') \in \Sigma' \land \sint', \st' \models \sst'
\]

First of all, from the fact that $\texttt{fold}$ is a ghost command, we know
\[
\mhyp{5}\quad \st' = \st \land \result = \oxok
\]

\begin{align*}
\mbox{\textbf{Let}}~&(\ssto, \smem, \sps, \spc) = \sst
\mbox{\textbf{Let}}~&(\sto, \hp) = \st
\end{align*}

Since execution did not abort \hyp{3}, given the two rules of $\conspred$, in OX mode, we know know
\begin{align*}
\mhyp{6}\quad & \sps = \{\Predd{\spredini}{\spredouti} \mid i \in I \}\uplus\spred' \quad \pred\notin \spred'\\
\mhyp{7}\quad & \exists i.\ \sat(\spc \land \vec{\sval} = \spredini)\\
\mhyp{8}\quad & \unsat(\spc \wedge \neg(\bigwedge_{i \in I}\vec{\sval} = \spredini))\\
\end{align*}

From \hyp{4}, \hyp{7}, \hyp{8}, we learn that, by choosing an adapted $i$, we have: $\exists \spredin, \spredout, \sps'.$
\begin{align*}
\mhyp{9}\quad & \sps = \{ \Predd{\spredin}{\spredout} \} \cup \sps'\\
\mhyp{10}\quad & \spc' = (\spc \land \vec{\sval} = \spredin)\\
\mhyp{11}\quad & \sint(\spc') = \true\\
\mhyp{12}\quad & \conspred( \pred, \vec{\sval}, \sst) \rightsquigarrow (\spredout, (\ssto, \smem, \sps', \spc'))
\end{align*}

From \hyp{11} and \hyp{4}, we know 
\begin{align*}
\mhyp{13}\quad & \sint, \st \models (\ssto, \smem, \sps, \spc')
\end{align*}

Moreover, by definition of composition on symbolic states, and satisfaction of symbolic states, we learn: $\exists \hp_f, \hp_p.$
\begin{align*}
\mhyp{14}\quad & (\ssto, \smem, \sps, \spc') = (\ssto, \smem, \sps', \spc') \cdot (\emptyset, \emptyset, \{ \Predd{\spredin}{\spredout} \}, \spc')\\
\mhyp{15}\quad & \sint, (\sto, \hp_f) \models (\ssto, \smem, \sps', \spc')\\
\mhyp{16}\quad & \sint, (\emptyset, \hp_p) \models (\emptyset, \emptyset, \{ \Predd{\spredin}{\spredout} \}, \spc') \\
\mhyp{17}\quad & \hp = \hp_f \uplus \hp_p \quad (\text{which implies } \dish{\hp_f}{\hp_p}) \
\end{align*}

\[
\begin{array}{ll@{\hspace{\tabcolsep}=\hspace{\tabcolsep}}ll}
\mbox{\textbf{Let}}~&\ssubst_{\text{init}} & \{ \vec{\sval}/\predin,  \spredout/\predout\} &  \\
\mbox{\textbf{Let}}~&\ssubst & \ssubst_{\text{init}}\uplus \{ \vec{\sym{z}}/\vec{x}_i\} & \vec{\sym{z}} \text{ fresh}
\end{array}
\]

From \hyp{11}, \hyp{16} and the definition of state satisfaction, we obtain:
\[
\begin{array}{lll}
 & \subst_{\text{init}}, (\sto, \hp_p) \models \bigvee\limits_i (\exists \vec{x}_i.~P_i) & \implies\\
 & \exists i, \subst_{\text{init}}, (\sto, \hp_p) \models \exists \vec{x}_i.~P_i & \implies\\
 & \exists i, \vec{v}.\ \subst_{\text{init}}\uplus \{ \vec{v}/\vec{x}_i\}, (\sto, \hp_p) \models P_i & \implies \\
\mhyp{18} & \exists i, \sint'.~ \sint \leq \sint' \land \sint'(\ssubst), (\sto, \hp_p) \models P_i & \text{ where } \sint' = \sint\uplus\{\vec{v}/\vec{\sym{z}} \}
 \end{array}
\]

Let us select such an $i$ and $\sint'$. From completeness of produce (Prop.~\ref{prop:prod-compl}), \hyp{15}, \hyp{18} and \hyp{17}, and the fact that produce does not modify the store (Prod.~\ref{prop:wf}), and the definition of symbolic state satisfaction: we have $\exists \smem_p, \sps_p, \spc''.$
\[\begin{array}{ll}
 \mhyp{19} & \produce(P, \ssubst, (\ssto, \smem, \sps', \spc')) \rightsquigarrow (\ssto, \smem \uplus \smem_p, \sps' \cup \sps_p, \spc'')\\
 \mhyp{20} & \sint', (\emptyset, \hp_p) \models (\emptyset, \smem_p, \sps_p, \spc'')\\
\mhyp{21} & \sint'(\spc'') = \true \\
\end{array}\]

From \hyp{15}, \hyp{21} and $\sint' \geq \sint$, we get:
\[\begin{array}{ll}
\mhyp{22} & \sint', (\sto, \hp_f) \models (\ssto, \smem, \sps', \spc'')
\end{array}\]

And therefore from \hyp{20} and \hyp{22}:
\[
\begin{array}{ll}
  \mhyp{23} & \sint', (\sto, \hp) \models (\ssto, \smem \uplus \smem_p, \sps' \cup \sps_p, \spc'')\\ & \\
  \mbox{\textbf{Let}} & \sst'' = (\ssto, \smem \uplus \smem_p, \sps' \cup \sps_p, \spc'')
\end{array}
\]

Putting it all together.

\[\begin{array}{l@{\hspace{\tabcolsep}\text{from }}l}
\sst, \punfold{\pred}{\vec{\pexp}} \baction_{\fsctx}^{\macOX} \oxok: (\ssto, \smem \uplus \smem_p, \sps' \cup \sps_p, \spc'') & \mhyp{9}, \mhyp{10}, \mhyp{11}, \mhyp{12} \text{ and } \mhyp{19}\\
\sint', \st' \models \sst'' & \mhyp{5} \text{ and } \mhyp{23}
\end{array}\]

\qed

\newpage
\section{Consume and Produce Implementation: Definitions and Correctness}\label{app:mac-produce-soundness}

This appendix provides the proofs of correctness for our $\mac$ and $\produce$ implementations. We start by providing the exhaustive definitions, algorithms, and rules of our implementation and then give the proofs in question.

\subsection{Rules for \mac}\label{app:rules-consume}

In this section we present the complete set of  rules for \mac: auxiliary rules in Fig.~\ref{fig:mac-additional}, the successful rules in Fig.~\ref{fig:mac-success}, and the error rules in Fig.~\ref{fig:mac-error}.

\begin{figure}[h]
\footnotesize
    \begin{mathpar}
    \mprset{flushleft}
        \inferrule{\smem=\smem_f \uplus \{\sval_1 \mapsto \sval_2\}\\\\ 
          \spc'=\spc\wedge (\sval = \sval_1) \\ \sat(\spc')}
        {\conscell(\sval,\sst)
        \rightsquigarrow ({\sval_2}, \sst[\sstupdate{heap}{\smem_f}, \sstupdate{pc}{\spc'}])}
\and
        \inferrule{\sat(\spc \wedge v\notin \dom(\smem))}
         {\conscell(\sval,\sst) \rightsquigarrow abort}
    \end{mathpar}
    \hdashrule{130mm}{1pt}{1pt} 
    \begin{mathpar}
%
\inferrule
{\spred = \{ \Predd{\spredin}{\spredout} \} \cup \spred' \\
\spc' \defeq \spc \wedge \vec{\sval}= \spredin \qquad \sat(\spc')}
{\conspred( \pred,\vec{\sval},\sst)\rightsquigarrow (\spredout, \sst[\sstupdate{preds}{\sps'}, \sstupdate{pc}{\spc'}])}
\and
%
\inferrule
{\spred= \{\Predd{\spredini}{\spredouti} \mid i \in I \}\uplus\spred' \\ \pred\notin \spred' \\
\sat(\spc \wedge \neg(\bigwedge_{i \in I}\vec{\sval}=  \spredini)) }
{\conspred( \pred,\vec{\sval},\sst)\rightsquigarrow abort}
\end{mathpar}
\caption{Rules for $\conscell$ and \(\conspred\)}\label{fig:mac-additional}
\end{figure}

\begin{figure}[h]
\begin{mathpar}
\footnotesize
\mprset{flushleft}
\inferrule*[left=(Inductive)]{
 \macMP(m, [(P,outs)], \ssubst, \sst)\rightsquigarrow (\ssubst', \sst')\\\\
 \macMP(m, mp, \ssubst', \sst')\rightsquigarrow {O}
}
{\macMP(m, (P,outs)::mp, \ssubst,  \sst)\rightsquigarrow {O}}
\and 
\inferrule*[left=(Pure)]{
 P \text{ is pure } \\ outs=[(\lvar{x}_i, \lexp_i)|_{i=1}^n]\\\\
 \ssubst'=\ssubst \uplus \{(\ssubst (\lexp_i)/\lvar{x}_i )|_{i=1}^n\} \\
 \consPure(m,\spc,\ssubst'(P)) = \spc' }
 {\macMP( m, [(P, outs)], \ssubst ,\sst)\rightsquigarrow ( \ssubst', \sst[\sstupdate{pc}{\spc'}])) }
 \and
 \inferrule*[left=(CellNeg)]
 {\funcAlg{conscell}{\ssubst(\lexp_1),\sst } =(\cfreed, \sst')}
 {\macMP(m, (\lexp_1\mapsto \cfreed, []), \ssubst, \sst ) 
       \rightsquigarrow  ( \ssubst', \sst')}
  \and
 \inferrule*[left=(CellPos)]
 {\consPure(m, \spc, \ssubst(\lexp_a)\in \Val) =\spc' \\
\conscell( \ssubst(\lexp_a), \sst[\sstupdate{pc}{\spc'}]) \rightsquigarrow  (
       \sval,\sst')\\\\
       \ssubst_{subst}=\{\sval/\mathsf{O}\} \\
       outs=[(\lvar{x}_i, \lexp_i)|_{i=1}^n]  \\
     ((\ssubst\uplus \ssubst_{subst})(E_i)=\hat v_i)|_{i=1}^n
   \\
      \ssubst'=\ssubst \uplus \{( \hat v_i/x_i)|_{i=1}^n\} \\
        \consPure(m, (\sst').\fieldsst{pc}, \ssubst' (\lexp_v)=\sval)= \spc'''}
 {\macMP(m, [(\lexp_a\mapsto \lexp_v, outs)],  \ssubst,  \sst) 
       \rightsquigarrow  (\ssubst', \sst'[ \sstupdate{pc}{\spc'''}])}
\and
\inferrule*[left=(ConsPred)]
{ \consPure(m, \spc,\ssubst(\vec{\lexp}_{ins}) \subseteq \Val) = \spc'\\
   \conspred( \pred,\ssubst(\vec{\lexp}_{ins}), \sst[\sstupdate{pc}{\spc'}])\rightsquigarrow (\vec{\sval}_{outs},\sst') \\
    outs = [(x_i,\lexp_i)|_{i=1}^n]\\
    \vec{\sval}_{outs}=[\sval_{o_1},\dots, \sval_{o_k}]\\
    \ssubst_{subst}=\{(\sval_{o_i}/{\sf O}_i)|_{i=1}^k\} \\
    ((\ssubst\uplus \ssubst_{subst})(E_i)=\hat v_i)|_{i=1}^n \\
    \ssubst'=\ssubst \uplus \{(\hat v_i/x_i)|_{i=1}^n\}  \\  \consPure(m, (\sst').\fieldsst{pc}, \ssubst' (\vec{\lexp}_{outs})=\vec{\sval}_{outs})= \spc''' 
}
{\macMP(m,[(\pred(\vec{\lexp}_{ins})(\vec{\lexp}_{outs}),outs)], \ssubst,\sst)\rightsquigarrow (\ssubst',\sst'[\sstupdate{pc}{\spc'''}])}
\end{mathpar}

\caption{Rules \macMP~-- successful cases}\label{fig:mac-success}
\end{figure}

\begin{figure}[!t]
\footnotesize   
\begin{mathpar}
\mprset{flushleft}
\inferrule*[left=(Pure-fail)]{P \text{ is pure }\\
 outs=[(\lvar{x}_i, \lexp_i)|_{i=1}^n]\\\\
 \ssubst'=\ssubst \uplus \{(\lvar{x}_i \mapsto \ssubst (\lexp_i))|_{i=1}^n\} \\
   \consPure(\macEX,\spc, \ssubst'(P))= abort }
   {\macMP(\macEX, [(P, outs)], \ssubst, \sst) \rightsquigarrow abort([\mathstr{\sf consPure}, \ssubst'(P), \spc])}
\and
\inferrule*[left=(Points2Error1)]
{\conscell( \ssubst(\lexp_a),\sst) =abort}
{\macMP(m, (\lexp_a\mapsto \lexp_v, outs),  \ssubst,\sst) \rightsquigarrow  abort([\mathstr{\sf MissingCell}, \ssubst(\lexp_a), \spc \land \ssubst(\lexp_a) \not\in \dom(\smem)])}
\and
\inferrule*[left=(Points2Error2)]
{\conscell(\ssubst(\lexp_a),\sst) =abort}
{\macMP(m, (\lexp_a\mapsto \cfreed, outs),  \ssubst,\sst) \rightsquigarrow  abort([\mathstr{\sf MissingNegCell}, \ssubst(\lexp_a), \spc \land \ssubst(\lexp_a) \not\in \dom(\smem)])}
\and
\mprset{flushleft}
\inferrule*[left=(CellError2)]
{\conscell(\ssubst(\lexp_a),\sst ) = (
        \sval, \sst')\\\\
\ssubst_{subst}=\{\mathsf{O}\mapsto \sval\}  \\
        outs=[(\lvar{x}_i, \lexp_i)|_{i=1}^n] \\\\
          ((\ssubst\uplus \ssubst_{subst})(E_i)=\sym{v}_i)|_{i=1}^n\\
        \ssubst'=\ssubst \uplus \{(\lvar{x}_i \mapsto \ssubst \lexp_i)|_{i=1}^n\} \\\\
         \consPure(\macEX,(\sst').\fieldsst{pc},\ssubst' (\lexp_v)=\sval))= abort
        }
 {\macMP(\macEX, (\lexp_a\mapsto \lexp_v, outs), \ssubst,\sst) \rightsquigarrow abort([\mathstr{\sf consPure}, \spc])}
\and
\inferrule*[left=(CellError3)]{\conscell(\ssubst(\lexp_a),\sst) = (
       \cfreed,\sst')}
{\macMP(\macEX, (\lexp_a\mapsto \lexp_v, outs), \ssubst, \sst) \rightsquigarrow abort([\mathstr{\sf consError}, \spc])}
\and 
\inferrule*[left=(CellError4)]{\conscell(\ssubst(\lexp_a),\sst) = (
       \sval,\sst')}
       {\macMP(\macEX, (\lexp_a \mapsto \cfreed, outs), \ssubst, \sst) \rightsquigarrow abort([\mathstr{\sf consError}, \spc])}
\and
\mprset{flushleft}
\inferrule*[left=(PropagateError)]{
 \macMP(m, [(P,outs)], \ssubst, \sst)\rightsquigarrow abort(\sval)}
{\macMP(m, (P,outs)::mp, \ssubst,  \sst)\rightsquigarrow abort(\sval)}
\and
\inferrule*[left=(PredError1)]{  \conspred( \pred,\ssubst(\vec{\lexp}_{ins}),\sst)\rightsquigarrow abort}
       {\macMP(m,[(\pred(\vec{\lexp}_{ins})(\vec{\lexp}_{outs}),outs)], \ssubst,\sst)\rightsquigarrow abort([\mathstr{\sf Pred}, \ssubst(\vec{\lexp}_{ins}), \spc] )}
       \and
  \inferrule*[left=(PredError2)]{ \consPure(\macEX, \spc,\ssubst(\vec{\lexp}_{ins}) \subseteq \Val) = abort}
       {\macMP(\macOX,[(\pred(\vec{\lexp}_{ins})(\vec{\lexp}_{outs}),outs)], \ssubst,\sst)\rightsquigarrow abort([\mathstr{\sf consError}, \spc] )}
       \end{mathpar}

\begin{prooftree}
    \AxiomC{\(
    \begin{array}{l@{\hspace{.5cm}}l}
    \consPure(m, \spc,\ssubst(\vec{\lexp}_{ins}) \subseteq \Val) = \spc'&
    \conspred( \pred,\ssubst(\vec{\lexp}_{ins}),]\sst)\rightsquigarrow ( \vec{\sval}_{outs},\sst') \\
    outs = [(x_i,\lexp_i)|_{i=1}^n]&
    \vec{\sval}_{outs}=[\sval_{o_1},\dots, \sval_{o_k}]\\
    \ssubst_{subst}=\{({\sf O}_i\mapsto \sval_{o_i})|_{i=1}^k\} & 
    ((\ssubst\uplus \ssubst_{subst})(E_i)=\hat v_i)|_{i=1}^n \\
    \ssubst'=\ssubst \uplus \{(x_i\mapsto \hat v_i)|_{i=1}^n\}&   \consPure(m, (\sst').\fieldsst{pc}, \ssubst' (\vec{\lexp}_{outs})=\vec{\sval}_{outs})= abort 
   \end{array}
    \)
    }
    \UnaryInfC{\(\macMP(\macEX,[(\Pred{p}{\vec{\lexp}_{ins}}{\vec{\lexp}_{outs}},outs)], \ssubst,\sst)\rightsquigarrow abort([\mathstr{\sf consError}, \spc])\)}
\end{prooftree}

\caption{Rules \macMP~-- error rules, where $\sst = (\ssto, \smem, \spred, \spc)$ }\label{fig:mac-error}
\end{figure}

\begin{remark}
Rules [CellPos] and [Pure] can also be applied when $outs=[]$. In such cases, $\ssubst=\ssubst'$, i.e., the initial substitution already covers the assertion and does not need to be extended.
\end{remark}

\subsection{Correctness of \mac}\label{sec:macMP-correct}

\setcounter{theorem}{0}
Suppose 
\[
\begin{array}{r@{~\rightsquigarrow~}l}
\mac(m, P, \ssubst, (\ssto,\smem, \sps, \spc)) & (\ssubst', (\ssto', \smem_f, \sps_f, \spc')) 
\end{array}
\]

\begin{theorem}[Property~\ref{prop:wf}]
$\sinv((\ssto, \smem_f,\sps_f, \spc'))$  and $\sinv(\ssubst',\sps')$ hold.
\end{theorem}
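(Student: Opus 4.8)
The plan is to unfold $\mac(m, P, \ssubst, \sst) \defeq \textsf{let } mp = \plan(\dom(\ssubst), P) \textsf{ in } \macMP(m, mp, \ssubst, \sst)$ and argue by induction on the derivation of the successful $\macMP$ judgement (equivalently, induction on the matching plan $mp$, whose structural well-formedness we take from the correctness of $\plan$ established by Lööw et al.~\cite{matchplanning}). There are three things to establish at each step: that the store is unchanged, $\ssto' = \ssto$; that the output state $(\ssto, \smem_f, \sps_f, \spc')$ is well-formed; and that $(\ssubst', \spc')$ is well-formed, i.e. $\svs{\ssubst'} \subseteq \svs{\spc'}$ and $\spc' \models \codom(\ssubst') \subseteq \Val$.

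For the base cases -- the rules (Pure), (CellPos), (CellNeg), (ConsPred) of Fig.~\ref{fig:mac-success} -- the store is preserved immediately, since every state update touches only the heap, the predicate multiset, or the path condition. The path condition is only ever strengthened: $\consPure$ leaves $\spc$ intact (OX, entailment case) or conjoins a pure constraint (UX), while $\conscell$ and $\conspred$ conjoin the matching constraints $\sval = \sval_1$ resp. $\vec{\sval} = \spredin$; each rule carries an explicit $\sat$ check on the resulting $\spc'$, giving $\sat(\spc')$. Since consumption only removes resource, $\svs{\smem_f} \cup \svs{\sps_f} \subseteq \svs{\smem} \cup \svs{\sps} \subseteq \svs{\spc} \subseteq \svs{\spc'}$, which is the variable-inclusion clause of $\sinv$; and $\spc' \models \sinvc(\smem_f)$ follows from $\spc' \Rightarrow \spc \models \sinvc(\smem)$ together with the fact that $\sinvc(\smem_f)$ is a weakening of $\sinvc(\smem)$, since it ranges over a subheap (the predicate component of $\sinvc$ being vacuous by definition).

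For the substitution, $\ssubst' = \ssubst \uplus \{\hat v_i / x_i\}$, where each $\hat v_i$ is either a matched cell value -- hence a co-domain element of $\smem$, so $\spc \models \hat v_i \in \Val$ by $\sinvc(\smem)$, which persists to $\spc'$ -- or a matched predicate out, or the instantiation of an out-expression $E_i$ supplied by the matching plan. In the last case, $\plan$-correctness ensures $E_i$ mentions only previously-learned variables, giving $\svs{\hat v_i} \subseteq \svs{\spc'}$, while $\spc' \models \hat v_i \in \Val$ is forced by the concluding $\consPure(m, (\sst').\fieldsst{pc}, \ssubst'(\lexp_v) = \sval)$ check (resp. its predicate-outs analogue), which in either mode makes $\spc'$ consistent with those bindings denoting genuine values. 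The inductive case is the (Inductive) rule, which consumes the head simple assertion via $\macMP(m, [(P,outs)], \ssubst, \sst) \rightsquigarrow (\ssubst', \sst')$ and the tail via $\macMP(m, mp, \ssubst', \sst') \rightsquigarrow O$: the induction hypothesis on the first premise yields well-formedness of $\sst'$ and of $(\ssubst', \spc'')$ (with $\spc''$ the path condition of $\sst'$) and that $\sst'$ has store $\ssto$, which are exactly the hypotheses needed to apply it to the second premise, and chaining the two gives the claim (monotonicity of the path condition along the plan makes the variable-inclusion clauses compose).

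The main obstacle is the clause $\spc' \models \codom(\ssubst') \subseteq \Val$: one must follow precisely how the out-expressions $E_i$ of the matching plan get instantiated during consumption and verify that the constraints accumulated in $\spc'$ -- those added by the cell and predicate matches and by the $\consPure$ calls consuming cell contents and predicate outs -- really do exclude any $\undefd$-valued binding. This is where the interplay between $\plan$-correctness (outs are resolvable) and the soundness of $\consPure$ in each mode is essential; the remaining parts are routine bookkeeping about heaps, variable sets, and path-condition strengthening.
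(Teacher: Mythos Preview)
Your proposal is correct and follows essentially the same approach as the paper: the paper phrases the argument as induction on the structure of $P$ (with cases for pure, cell, predicate, and $\lstar$-composition), while you phrase it as induction on the $\macMP$ derivation (equivalently, the matching plan), but since the matching plan mirrors the $\lstar$-structure of $P$ these coincide. Your case analyses---store preservation, path-condition strengthening with $\sat$ checks, the subheap argument for $\sinvc(\smem_f)$, and the handling of $\sinvc(\ssubst')$ via the codomain-of-$\smem$ observation together with the final $\consPure$ check---match the paper's reasoning closely, and your explicit identification of $\spc' \models \codom(\ssubst') \subseteq \Val$ as the non-routine clause is apt.
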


\begin{proof}
We assume the following:
\begin{description}
\item[(A1a)] $\sinv((\ssto, \smem, \sps, \spc))\defeq \sinv((\ssto,\smem,\spc)) \wedge \svs{\sps}\subseteq \svs{\spc}$, where $\sinv((\ssto,\smem,\spc))\defeq(\svs{\ssto} \cup \svs{\smem}
  \subseteq \svs{\spc}) \land \sat(\spc) \land \spc \models (\sinvc(\ssto) \land \sinvc(\smem))$
\item[(A1b)] $  \sinv(\ssubst,\spc)\defeq \svs{\ssubst} \subseteq \svs{\spc} \wedge \spc \models \sinvc(\ssubst)$
\item[(A1c)] $\lv{P}\subseteq  \dom(\ssubst)$
\end{description}

The proof now follows by induction on $P$:
\begin{description}
\item[(\(P\) is pure)] 
In this case,   $\plan(P,\dom(\ssubst))= (P,outs)$.
 
 The interesting case is for $outs(P)\neq []$.

\begin{mathpar}
\inferrule*[left=(Pure)]{
 P \text{ is pure } \\ outs=[(\lvar{x}_i, \lexp_i)|_{i=1}^n]\\\\
 \ssubst'=\ssubst \uplus \{(\lvar{x}_i \mapsto \ssubst (\lexp_i))|_{i=1}^n\} \\
 \consPure(m,\spc,\ssubst'(P)) = \spc' }
 {\macMP( m, [(P, outs)], \ssubst ,\sst)\rightsquigarrow ( \ssubst', \sst[\sstupdate{pc}{\spc'}])) }
\end{mathpar}
Thus, $\smem_f=\smem$  and $\sps_f=\sps$. Depending on the mode we have:
\begin{itemize}
\item if $m=\macUX$ then $\spc'=\spc\wedge \ssubst'(P)$ and $\sat(\spc')$.
\item if $m=\macEX$ then $\spc=\spc'$ and $\spc \models \ssubst'(P)$
\end{itemize}
Notice that 

\hyp{1}  the resulting state is well-formed. i.e., $\sinv((\ssto, \smem, \sps, \spc'))$. It follows directly from the initial assumption  $\sinv((\ssto, \smem, \sps, \spc))$ and the shape of $\spc'$.

%

\hyp{2} \(\svs{\ssubst'}= \svs{\ssubst}\cup \svs{\ssubst(\lexp_i)}\subseteq \svs{\spc} \subseteq \svs{\spc'}\),  from (A1b).

\hyp{3} $\spc' \models \sinvc(\ssubst')$.

In fact, for each $\sint$ s.t. $\sint(\spc')=\true$, it follows that $\sint(\spc)=\true$, which implies, by (A1c) that   $\spc' \models \sinvc(\ssubst)$ holds. Since $\ssubst'=\ssubst \uplus \{(x_i\mapsto \ssubst(\lexp_i))|_{i=1}^n\}$, it follows that $\ssubst(\lexp_i)\in \Val$. Thus, \(\sinvc( \ssubst')\) holds.

The result follows from \hyp{1}, \hyp{2} and \hyp{3}.

\item[(\(P\) is spatial)] \(P=\lexp_a\mapsto \lexp_v\)  (The case \(P=\lexp_a\mapsto \cfreed\)  is similar.)

In this case,   $\plan(P,\dom(\ssubst))= (P,outs)$. 

  The interesting case is for $outs(P)\neq []$.

\begin{mathpar}
 \inferrule*[left=(CellPos)]
 {\consPure(m, \spc, \ssubst(\lexp_a)\in \Val) =\spc' \\
\conscell(\ssubst(\lexp_a), \sst[\sstupdate{pc}{\spc'}]) \rightsquigarrow  (
       \sval,\sst')\\
       \ssubst_{subst}=\{\mathsf{O}\mapsto \sval\} \\
       outs=[(\lvar{x}_i, \lexp_i)|_{i=1}^n]  \\
     ((\ssubst\uplus \ssubst_{subst})(E_i)=\hat v_i)|_{i=1}^n
   \\
      \ssubst'=\ssubst \uplus \{(x_i\mapsto \hat v_i)|_{i=1}^n\} \\
        \consPure(m, (\sst').\fieldsst{pc}, \ssubst' (\lexp_v)=\sval)= \spc''}
 {\macMP(m, [(\lexp_a\mapsto \lexp_v, outs)],  \ssubst,  \sst) 
       \rightsquigarrow  (\ssubst', \sst'[ \sstupdate{pc}{\spc''}])}
\end{mathpar}

If we analyse each mode separately:
\begin{itemize}
\item if $m=\macUX$ then \conscell gives  $\smem=\smem_f\cdot \{\sym{v}_a\mapsto \sym{v}\}$,  $\spc'=\spc\wedge \ssubst'(\lexp_a)=\sym{v}_a$ and $\sat(\spc')$. Besides,  \(\consPure(\macUX,(\sst').\fieldsst{pc}, \ssubst' (\lexp_v)=\sym{v})= \spc''\) gives 
 $\spc''=\spc'\wedge\ssubst' (\lexp_v)=\sym{v}$ and $\sat(\spc'')$.
\item if $m=\macEX$ then \conscell gives  $\smem=\smem_f\cdot \{\sym{v}_a\mapsto \sym{v}\}$,  $\spc'=\spc$ and $\spc \models \ssubst'(\lexp_a)=\sym{v}$. Besides, \( \consPure(\macEX,(\sst').\fieldsst{pc}, \ssubst' (\lexp_v)=\sym{v})=\spc''\) gives $\spc''=\spc$ and $\spc \models \ssubst' (\lexp_v)=\sym{v}$.
\end{itemize}
\hyp{1}  the resulting state is well-formed, i.e., \(\sinv((\ssto, \smem_f, \sps,\spc''))\). This follows directly from the construction above.

%
%

\hyp{2} \(\svs{\ssubst'}\subseteq  \svs{\spc''}\). 

 In fact, by definition, \(\svs{\ssubst'}= \svs{\ssubst}\cup \bigcup_i \svs{\sym{v}_i}\). From the construction above $\ssubst'(\lexp_v)=\sym{v}$ and $\svs{\sym{v}}\subseteq \svs{\smem}$. For \macUX:  first,  \(\svs{\sym{v}} \subseteq \svs{\spc'}\subseteq \svs{\spc''}\);  second, from (A1b), $\svs{\ssubst}\subseteq \svs{\spc}$. For \macEX: first $\spc=\spc''$; second, \(\svs{\ssubst'}= \svs{\ssubst}\cup \svs{\sym{v}}\subseteq \svs{\ssubst}\cup \svs{\smem} \subseteq \svs{\spc}\).

\hyp{3} $\spc'' \models \sinvc(\ssubst')$.

For \macUX: In fact, for each $\sint$ s.t. $\sint(\spc'')=\true$, it follows that $\sint(\spc)=\true$, which implies, by (A1c) that $\spc \models \sinvc(\ssubst)$ holds. Since $\sat(\spc'')$,  $\spc''=\spc'\wedge\ssubst' (\lexp_v)=\sym{v}$ and \(\sinvc(\smem)\) we get $\sym{v} \in \Val$.  Therefore, \(\sinvc( \ssubst')\) holds. For \macEX: the analysis is similar.

The result follows from \hyp{1}, \hyp{2} and \hyp{3}.
\item[(Predicates)] $P=\pred(\vec{\lexp_1})(\vec{\lexp_2})$ this case is analogous to the previous.
\item[ (Inductive Step)] \(P=P_1\lstar P_2\)

This case follows by the induction hypothesis.
\end{description}

\end{proof}

\begin{theorem}[Property~\ref{prop:path_strength}]
$\spc' \implies \spc$ holds.
\end{theorem}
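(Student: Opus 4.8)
The plan is to prove $\spc' \Rightarrow \spc$ by induction on $P$ --- equivalently, on the matching plan $mp = \plan(\dom(\ssubst), P)$ --- following the same case split as the proof of Property~\ref{prop:wf}, since $\macMP$ consumes one simple assertion at a time and the overall outcome is a successful pair (so only the success rules (Pure), (CellPos), (CellNeg), (ConsPred), (Inductive) are in play). Before the induction I would record three monotonicity facts about the helper operations, each immediate by inspection of the rules: (i) if $\consPure(m, \spc_1, \spc_2) = \spc_3$ then $\spc_3 \Rightarrow \spc_1$, because $\spc_3$ is either $\spc_1$ itself (the $\macEX$ success case) or $\spc_1 \wedge \spc_2$ (the $\macUX$ case); (ii) if $\conscell(\sval, \sst) \rightsquigarrow (\sval', \sst')$ then $(\sst').\fieldsst{pc} \Rightarrow \sst.\fieldsst{pc}$, since the $\conscell$ rule in Fig.~\ref{fig:mac-additional} sets the new path condition to $\spc \wedge (\sval = \sval_1)$; and (iii) if $\conspred(\pred, \vec{\sval}, \sst) \rightsquigarrow (\vec{\sval}_{outs}, \sst')$ then $(\sst').\fieldsst{pc} \Rightarrow \sst.\fieldsst{pc}$, by the same reading of the corresponding rule.

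With these facts the base cases are routine. For $P$ pure (rule (Pure)) the output path condition is $\consPure(m, \spc, \ssubst'(P))$, so (i) gives the claim directly. For a positive cell assertion $P = \lexp_a \mapsto \lexp_v$ (rule (CellPos)) the path condition is threaded through a first $\consPure$ producing some $\spc_1$, then $\conscell$ on the state with path condition $\spc_1$, whose output state I name $\sst'$, and finally a second $\consPure$ on $(\sst').\fieldsst{pc}$ producing $\spc'' = \spc'$; chaining (i), then (ii), then (i) yields $\spc'' \Rightarrow (\sst').\fieldsst{pc} \Rightarrow \spc_1 \Rightarrow \spc$. The negative cell assertion $\lexp_a \mapsto \cfreed$ (rule (CellNeg)) is the same argument with only the $\conscell$ step, and the predicate assertion (rule (ConsPred)) is identical to (CellPos) with (iii) used in place of (ii). For the inductive step $P = P_1 \lstar P_2$ (rule (Inductive)), $\macMP$ first consumes $P_1$ reaching an intermediate state with path condition $\spc_1$ and then consumes the remaining plan from that state, reaching $\spc'$; applying the induction hypothesis to each of the two sub-derivations gives $\spc' \Rightarrow \spc_1$ and $\spc_1 \Rightarrow \spc$, and transitivity of implication closes the case.

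There is no real obstacle here: the only point needing care is the bookkeeping in the (CellPos) and (ConsPred) rules, where the path condition passes through three successive (potentially strengthening) operations, so the chain of implications must be composed in the right order; everything else is a direct reading of the rule premises. (The same three monotonicity facts, together with the observation that $\produce$ only ever conjoins constraints arising from its input assertion, will be reused when establishing Property~\ref{prop:path_strength} for $\produce$.)
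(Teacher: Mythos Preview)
Your proof is correct. The paper's own proof is considerably terser: rather than an explicit structural induction on $P$, it does a two-line case split on the mode $m$ and simply observes that, by inspection of the $\macMP$ rules, the output path condition always has the form $\spc' = \spc \wedge \spc''$ for some $\spc''$ (in $\macUX$ mode the extra conjunct comes from $\consPure$ and the branching in $\conscell$/$\conspred$; in $\macEX$ mode only the latter contribute). Your approach unpacks that inspection into three explicit monotonicity lemmas about $\consPure$, $\conscell$, and $\conspred$ and threads them through the cases, which is more work but also more self-contained and arguably more robust: it makes clear exactly where each strengthening step occurs, especially in the (CellPos)/(ConsPred) rules where three such steps are chained. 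The underlying observation is the same in both proofs; yours just carries it out rather than leaving it to the reader.
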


\begin{proof}
   The analysis depends on the \(mode\) used.
   \begin{itemize}
    \item  \(mode=\macUX\)
    
    By inspection of the rules for $\mac$ it is easy to see that
      $\spc'=\spc \wedge \spc''$,
     for some $\spc''$ such that $\sat(\spc')$, which trivially gives $\spc' \implies \spc$.
     \item \(mode=\macEX\)
     
     This case is trivial since when the path condition is modified in \(\macEX\) it is of the form  $\spc'=\spc\wedge \spc''$, for some $\spc''$.
     \end{itemize}
\end{proof}

%
\begin{theorem}[Property~\ref{prop:coverage}] $ \ssubst' \geq \ssubst \gand \dom(\ssubst') \supseteq \lv{P}$.

\end{theorem}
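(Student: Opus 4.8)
The plan is to unfold $\mac(m, P, \ssubst, \sst) \defeq \textsf{let } mp \textsf{ = } \plan(\dom(\ssubst), P) \textsf{ in } \macMP(m, mp, \ssubst, \sst)$ and reason about $\macMP$, leaning on the correctness of the planning phase established by Lööw et al.~\cite{matchplanning}. Concretely, I would use the following consequences of $\plan$: if $\plan(\dom(\ssubst), P) = mp$ with $mp = [(P_1, \mathit{outs}_1), \ldots, (P_n, \mathit{outs}_n)]$, then (i) $P$ is a reassociation of $P_1 \lstar \cdots \lstar P_n$, so that $\lv{P} = \bigcup_{i} \lv{P_i}$; (ii) writing $O_i = \{ x \mid (x, \lexp) \in \mathit{outs}_i \}$ for the out-variables named by $\mathit{outs}_i$, the in-parameters of $P_i$ are contained in $\dom(\ssubst) \cup \bigcup_{j < i} O_j$; and (iii) $\lv{P_i}$ is contained in the in-parameters of $P_i$ together with $O_i$. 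Intuitively, $\plan$ only succeeds when every logical variable of $P$ is either initially known or learnt by some later consumption step, which is exactly the information needed here.

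First I would establish, by induction on the list $mp$, the strengthened claim: whenever $\macMP(m, mp, \ssubst, \sst) \rightsquigarrow (\ssubst', \sst')$ with $mp = [(P_1, \mathit{outs}_1), \ldots, (P_n, \mathit{outs}_n)]$, we have $\ssubst' \geq \ssubst$ and $\dom(\ssubst') = \dom(\ssubst) \cup \bigcup_{i=1}^{n} O_i$. For $n = 1$, inspection of the single-assertion success rules in Fig.~\ref{fig:mac-success} --- (Pure), (CellNeg), (CellPos) and (ConsPred) --- shows that in each case $\ssubst'$ is obtained from $\ssubst$ by a disjoint extension $\ssubst \uplus \{ (\sval_i / x_i) \mid (x_i, \lexp_i) \in \mathit{outs}_1 \}$ for appropriate values $\sval_i$, so both conjuncts hold. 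For the inductive step, the only applicable rule is (Inductive): it first consumes $(P_1, \mathit{outs}_1)$, yielding $(\ssubst_1, \sst_1)$, and then runs $\macMP(m, [(P_2, \mathit{outs}_2), \ldots], \ssubst_1, \sst_1) \rightsquigarrow (\ssubst', \sst')$; applying the base case to the first step and the induction hypothesis to the second and composing gives $\ssubst' \geq \ssubst_1 \geq \ssubst$ and $\dom(\ssubst') = \dom(\ssubst_1) \cup \bigcup_{i \geq 2} O_i = \dom(\ssubst) \cup \bigcup_{i \geq 1} O_i$. The $\oxabort$ outcomes of $\macMP$ need not be considered, as the statement concerns only successful consumption.

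Finally I would combine the induction with the planning invariants. The conjunct $\ssubst' \geq \ssubst$ is immediate. For the second conjunct, fix $i$: by (ii) the in-parameters of $P_i$ lie in $\dom(\ssubst) \cup \bigcup_{j < i} O_j$, hence by (iii) $\lv{P_i} \subseteq \dom(\ssubst) \cup \bigcup_{j \leq i} O_j \subseteq \dom(\ssubst')$ using the equation from the induction. Taking the union over all $i$ and using (i), $\lv{P} = \bigcup_i \lv{P_i} \subseteq \dom(\ssubst')$, which is $\dom(\ssubst') \supseteq \lv{P}$. I expect the main obstacle to be setting up the interface to~\cite{matchplanning} precisely: the proof hinges on the invariant that, at the moment $P_i$ is consumed by $\macMP$, the substitution in hand already covers the in-parameters of $P_i$ --- this is what makes the relevant single-assertion rule applicable in the first place, and it is only guaranteed because $mp$ was produced by $\plan$ from $\dom(\ssubst)$; the rest is routine bookkeeping of substitution domains.
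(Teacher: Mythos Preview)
Your proposal is correct and takes a somewhat different, cleaner route than the paper. The paper proves $\lv{P}\subseteq\dom(\ssubst')$ by contradiction: it assumes some $y\in\lv{P}\setminus\dom(\ssubst')$, picks a simple sub-assertion $P_i$ with $y\in\lv{P_i}$, and then case-splits on whether $P_i$ is pure or spatial, in each case unpacking the internals of $\plan$ (the $\getinsouts$ and $\learn$ machinery) to argue that $y$ must have been added to the substitution either as an out of $P_i$ itself or as an out of some earlier $P_j$ in the plan, contradicting $y\notin\dom(\ssubst')$. Your approach instead factors this out: you abstract the planning phase into invariants (i)--(iii), prove directly by list induction on $mp$ that $\dom(\ssubst') = \dom(\ssubst)\cup\bigcup_i O_i$, and combine. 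The content is the same---both arguments ultimately rest on ``every logical variable of $P$ is either initially known or is an out of some step in the plan''---but your decomposition is more modular and avoids the per-assertion-kind case analysis the paper performs. The trade-off, which you already flag, is that you must pin down invariants (ii) and (iii) precisely from~\cite{matchplanning}; the paper's proof is self-contained in that it re-derives what it needs about $\plan$ inline.
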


\begin{proof}
By hypothesis, $\dom(\ssubst)=\{\vec{\lvar{x}}\}\subseteq \lv{\vec{\pvar{x}} = \vec{x} \lstar P}$. 
        By inspecting the rules, it is easy to check that $\ssubst'$  extends $\ssubst$. 
        
        It remains to  prove that 
        \(\lv{P}\subseteq \dom(\ssubst')\). Suppose that there exists $y\in\lv{P}$ such that 
        $y\notin dom(\ssubst')$. Then, clearly $y\notin \vec{\lvar{x}}$ and there exists a 
        simple assertion $P_i$ in the composition of $P$ such that $y\in \lv{P_i}$. 
        There are two cases to consider:
        \begin{enumerate}
            \item $P_i$ is a pure assertion. 
            
               The interesting case is for equality $E_1=E_2$ and \(y\in \lv{E_2}\). The case for \(y\in \lv{E_1}\)
               can be verified analogously.
               Since $\plan(\knowb, P)=mp$, there exists some ordering of the simple assertions in
               $P$ such that  at least one of the $E_i$'s is known and 
               $\getinsouts(E_1=E_2)=(\lv{E_1,E_2}\setminus \lv{outs},outs)$. 
               Therefore, $ins=\lv{E_1,E_2}\setminus \lv{outs}\subseteq \knowb$
               and \((E_1=E_2, outs)\) is in $mp$. 
               \begin{itemize}
                   \item If $outs=[]$ then $y\in ins=\lv{E_1,E_2} \subseteq \knowb$, 
                   which implies that $y$ was added in $\knowb$ in a previous point of the search. 
                   Then, there exists a simple assertion $P_j$ that occurs before $E_1=E_2$ such that $(P_j, [(y, \lexp_j)])$ is in $mp$. In that case, since 
                   $\mac$ is successful, $[y\mapsto E_j]$ was added to the substitution 
                   \(\ssubst\) after one application of rule [Pure], i.e, $y\in \dom(\ssubst')$. Contradiction.
                \item If $outs\neq []$ and $y\notin ins=\lv{E_1,E_2} \subseteq \knowb$, then 
                  \((y, \lexp')\) is in $outs$ for some $E'$. With an application
                  of [Pure] in \(\mac\), which gives a contradiction as above.
               \end{itemize}
            \item \(P_i\) is a spatial assertion.
            
              The interesting case is for $P_i$ of the form $\lexp_a\mapsto E_v$ and $y\in \lv{\lexp_a}$. The case for $y\in \lv{\lexp_v}$ 
              can be  verified analogously.
            
              In the computation  of the $\mps$  $mp$ for $P$ (by hypothesis it exists) we compute $(ins,outs)=\getinsouts(\knowb',\lexp_a\mapsto \lexp_v)$
              where $\knowb_i=\knowb\cup V$, and $V$ is a possibly empty set of logical variables 
              that were added in a previous point in the computation of $mp$. 
        
              Thus, there must exist an ordering of the simple assertions  of $P$ such that $\lv{\lexp_a}\in\knowb_i$, 
              in which case  one has  $(ins,outs)=(\lv{\lexp_a,\lexp_v}\setminus {\rlvar}(outs),outs)$ and 
              $outs=\learn(\knowb_i,\mathsf{O},\lexp_v)$. 
              \begin{itemize}
                 \item If $outs=[]$ and it is the case that $\lv{\lexp_v}\subseteq \knowb'$, then
                 $y\in \lv{E_2} \subseteq \knowb_i$. This implies that there exists a simple 
                 assertion $P_{j}$ and $\knowb_{j}$ (for some $i>j$) such that $(P_j,outs_j)$ is in $mp$ and $(y, \lexp_j)$ is in 
                 $outs_j$. 
                 \item Otherwise, $[]\neq outs=[(y, \lexp_v')]$
                    for some $\lexp_v'$ obtained after manipulating $\lexp_v$. 
            \end{itemize}
          In both cases, since \mac
          is successful, $\ssubst$ will be extended with $[y\mapsto \_]$, where $\_$ denotes
          either $E_v'$ or $E_j$. Contradiction.
     \end{enumerate}
 \end{proof}
\begin{lemma}\label{lem:esem_comm} 
Let $P$ be a pure assertion. Then,
\(\esem{P}{\sint(\ssubst)}= \sint(\ssubst(P))\).
\end{lemma}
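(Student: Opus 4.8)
The plan is to prove the statement by structural induction on the pure assertion $P$, viewing $P$ as an element of $\LExp$ built from logical variables, values, and the numerical/Boolean/type operators, and exploiting the fact that the substitution $\ssubst$, the lifted symbolic interpretation $\sint$, and the logical-expression evaluation $\esem{\cdot}{\cdot}$ all act homomorphically on these operators, with matching (partial) operator semantics. Throughout I assume the standing side-condition that $\ssubst$ covers $P$, i.e.\ $\lv{P} \subseteq \dom(\ssubst)$, so that $\ssubst(P)$ is defined; since $P$ is pure and contains no program variables in the contexts where this lemma is applied, the variable-store component plays no role and is omitted from $\esem{\cdot}{\cdot}$. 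Recall that $\sint(\ssubst)$ denotes the logical interpretation $\{\, x \mapsto \sint(\ssubst(x)) \mid x \in \dom(\ssubst)\,\}$ obtained by overloading the lift of $\sint$ (as for symbolic stores in \S\ref{ssec:sstates}), and that both $\sint(\cdot)$ on symbolic values and $\esem{\cdot}{\cdot}$ on logical expressions range over $\Val \cup \{\undefd\}$.

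First I would dispatch the base cases. If $P$ is a value $\gv$, then $\ssubst(\gv) = \gv$, $\sint(\gv) = \gv$, and $\esem{\gv}{\sint(\ssubst)} = \gv$, so the equation is immediate. If $P$ is a logical variable $x$, then $x \in \dom(\ssubst)$ by the covering hypothesis, and $\esem{x}{\sint(\ssubst)} = (\sint(\ssubst))(x) = \sint(\ssubst(x))$ directly from the definition of the composite interpretation, which is exactly the right-hand side.

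Next I would handle the inductive step. For a unary connective, say $P = \lnot\, P_1$, we have $\ssubst(\lnot P_1) = \lnot\, \ssubst(P_1)$ and, by the definition of the lift of $\sint$ to $\SVal$, $\sint(\lnot\, \ssubst(P_1)) = \lnot\, \sint(\ssubst(P_1))$ if $\sint(\ssubst(P_1)) \in \bools$ and $\undefd$ otherwise; by the induction hypothesis this equals $\lnot\, \esem{P_1}{\sint(\ssubst)}$ (resp.\ $\undefd$), which is precisely $\esem{\lnot P_1}{\sint(\ssubst)}$ by the definition of logical-expression evaluation. The binary cases ($P_1 + P_2$, $P_1 = P_2$, $P_1 \land P_2$, and the remaining arithmetic/Boolean operators) are identical in structure: the substitution distributes over the operator, the lifted interpretation distributes over the operator using the same partial operator semantics as $\esem{\cdot}{\cdot}$ (e.g.\ $+$ is total addition on $\nats \times \nats$ and undefined elsewhere, and division additionally requires a nonzero divisor), and the induction hypotheses on the subexpressions close the case; crucially, since the operator semantics coincide, the two sides become defined, or undefined, simultaneously. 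The type-membership case $P = E \in \tau$ is analogous: $\sint(\ssubst(E) \in \tau) = (\sint(\ssubst(E)) \in \tau)$, which by the induction hypothesis equals $(\esem{E}{\sint(\ssubst)} \in \tau) = \esem{E \in \tau}{\sint(\ssubst)}$.

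The only delicate point, and hence the main obstacle, is the bookkeeping around partiality: every case must be phrased so that the equality holds in $\Val \cup \{\undefd\}$, and one must check that the ``fault'' conditions built into the lift of $\sint$ to $\SVal$ (where $\sint$ is undefined precisely when the concrete evaluation faults) match those built into $\esem{\cdot}{\cdot}$ for logical expressions. This holds essentially by construction, since both are defined from the same underlying concrete operator semantics, so the induction goes through once the operator-by-operator correspondence is made explicit; no conceptual difficulties arise beyond this.
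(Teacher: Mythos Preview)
Your proposal is correct and takes exactly the same approach as the paper: the paper's proof is the single line ``By induction on the structure of the pure assertion $P$,'' and you have spelled out precisely that induction, including the partiality bookkeeping, in appropriate detail.
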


\begin{proof}
By induction on the structure of the pure assertion $P$.
\end{proof}

\begin{theorem}[Property~\ref{prop:soundness}] Soundness of \mac holds.
\hfill
\[
\exists \smem_P, \sps_P.~\smem = \smem_P \uplus \smem_f \land \sps = \sps_P \cup \sps_f \land (\forall \vint, \st.~\sint, \st \smodels (\ssto, \smem_p, \sps_p, \spc') \implies \sint(\sym\theta'), \st \models P).
\]
%
\end{theorem}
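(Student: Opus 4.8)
The plan is to unfold the definition $\mac(m,P,\ssubst,\sst) \defeq \macMP(m,\plan(\dom(\ssubst),P),\ssubst,\sst)$ and argue by induction on the matching plan $mp = \plan(\dom(\ssubst),P)$, whose entries are a permutation of the simple conjuncts of $P = P_1 \lstar \cdots \lstar P_n$ paired with their \emph{outs}. The witnessing decomposition is assembled from the resource removed by the per-conjunct rules: $\conscell$ removes one heap cell, $\conspred$ removes one symbolic predicate, and $\consPure$ removes nothing. So for the base case of a singleton plan $[(P_i, outs)]$ I would case split on the shape of $P_i$.

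If $P_i$ is pure, then $\smem_f = \smem$ and $\sps_f = \sps$, so take $\smem_P \defeq \emptyset$, $\sps_P \defeq \emptyset$; any $\sint,\st \smodels (\ssto,\emptyset,\emptyset,\spc')$ forces $\st = (\sto,\emptyset)$ and $\sint(\spc') = \true$, and since both defining clauses of $\consPure$ give $\spc' \models \ssubst'(P_i)$ (in $\macOX$: $\spc' = \spc$ and $\spc \models \ssubst'(P_i)$; in $\macUX$: $\spc' = \spc \land \ssubst'(P_i)$), we get $\sint(\ssubst'(P_i)) = \true$, hence by Lemma~\ref{lem:esem_comm} and $\pv{P_i} = \emptyset$, $\sint(\ssubst'),(\sto,\emptyset) \models P_i$. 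If $P_i = \lexp_a \mapsto \lexp_v$ (and symmetrically $\lexp_a \mapsto \cfreed$), rule (CellPos) splits $\smem = \smem_f \uplus \{\sval_1 \mapsto \sval_2\}$, so take $\smem_P \defeq \{\sval_1 \mapsto \sval_2\}$, $\sps_P \defeq \emptyset$; the final path condition entails $\ssubst'(\lexp_a) = \sval_1$ (branching condition of $\conscell$, threaded through $\consPure$) and $\ssubst'(\lexp_v) = \sval_2$ (the closing $\consPure$ on cell contents), so any $\sint,\st$ with $\sint,\st \smodels (\ssto,\{\sval_1\mapsto\sval_2\},\emptyset,\spc')$ has heap $\{\esem{\lexp_a}{\sint(\ssubst')} \mapsto \esem{\lexp_v}{\sint(\ssubst')}\}$, i.e.\ $\sint(\ssubst'),\st \models \lexp_a \mapsto \lexp_v$. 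If $P_i = \Predd{\vec{\lexp}_1}{\vec{\lexp}_2}$, the argument is analogous using $\conspred$, with $\sps_P \defeq \{\Predd{\spredin}{\spredout}\}$, $\smem_P \defeq \emptyset$, and $\spc'$ entailing $\ssubst'(\vec{\lexp}_1) = \spredin$ and $\ssubst'(\vec{\lexp}_2) = \spredout$; the conclusion follows because symbolic-predicate satisfaction and assertion satisfaction for predicates are defined by the same unfolding.

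For the inductive step, a plan $(P_1, outs_1) :: mp'$, $\macMP$ first does $\macMP(m,[(P_1,outs_1)],\ssubst,\sst) \rightsquigarrow (\ssubst_1,\sst_1)$ and then $\macMP(m,mp',\ssubst_1,\sst_1) \rightsquigarrow (\ssubst',\sst_f)$. The base case decomposes the resource of $\sst$ as $\smem = \smem_{P_1} \uplus \smem_1$, $\sps = \sps_{P_1} \cup \sps_1$ (where $(\smem_1,\sps_1)$ is the resource of $\sst_1$), and the inductive hypothesis decomposes $\smem_1 = \smem_{P'} \uplus \smem_f$, $\sps_1 = \sps_{P'} \cup \sps_f$ for $P'$ the conjunction of the conjuncts in $mp'$; set $\smem_P \defeq \smem_{P_1} \uplus \smem_{P'}$, $\sps_P \defeq \sps_{P_1} \cup \sps_{P'}$. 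Given $\sint,\st \smodels (\ssto,\smem_P,\sps_P,\spc')$, I would split $\st = \st_{P_1} \cdot \st_{P'}$ via Lemma~\ref{lem:state-decompose} (using well-formedness from Property~\ref{prop:wf}), and note that since $\spc'$ implies the intermediate path condition $\sst_1.\fieldsst{pc}$ by Property~\ref{prop:path_strength}, both halves are also models with respect to those intermediate path conditions. The base case then yields $\sint(\ssubst_1),\st_{P_1} \models P_1$ and the inductive hypothesis yields $\sint(\ssubst'),\st_{P'} \models P'$; finally, because $\ssubst' \geq \ssubst_1$ (Property~\ref{prop:coverage}) and $\lv{P_1} \subseteq \dom(\ssubst_1)$ (matching-plan invariant / Property~\ref{prop:coverage}), the interpretations $\sint(\ssubst_1)$ and $\sint(\ssubst')$ agree on $\lv{P_1}$, so $\sint(\ssubst'),\st_{P_1} \models P_1$ as well, hence $\sint(\ssubst'),\st \models P_1 \lstar P' = P$.

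The main obstacle I expect is the bookkeeping around the evolving substitution and path condition: one must check that satisfaction of an earlier conjunct $P_1$ under the smaller substitution $\ssubst_1$ carries over to the larger $\ssubst'$ (which is exactly why the matching-plan guarantee $\lv{P_1} \subseteq \dom(\ssubst_1)$ matters — the later extension is irrelevant on $\lv{P_1}$), and that models of the final state $\sst_f \cdot \sst_P$ restrict to models of every intermediate state (which is exactly monotone path strengthening, Property~\ref{prop:path_strength}). By contrast, the uniform treatment of $\consPure$ across $\macOX$ and $\macUX$ — it only ever contributes an entailment $\spc' \models (\ldots)$ — is routine once its two defining clauses are unpacked, as is the threading of type-check and out-consumption $\consPure$ calls inside the cell and predicate rules.
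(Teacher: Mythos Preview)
Your proposal is correct and matches the paper's approach closely. The paper also proceeds by case analysis on the simple conjuncts (pure, cell, predicate) with exactly the witnesses you give ($\smem_P=\emptyset$ for pure, the removed cell for $\lexp_a\mapsto\lexp_v$, the removed predicate instance for $\Predd{\cdot}{\cdot}$), and uses the same observation that in both modes $\consPure$ yields $\spc' \models \ssubst'(\cdot)$.

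The one organisational difference is that the paper phrases the induction on the structure of $P$ rather than on the matching plan, and for the $\lstar$ case simply writes ``follows easily by induction hypothesis'' without spelling out the substitution-monotonicity and path-strengthening bookkeeping you identify. Your choice to induct on the plan is arguably cleaner, since $\macMP$ is defined on plans and the recursive call in the \textsc{(Inductive)} rule is on the tail $mp'$, which need not literally be $\plan(\dom(\ssubst_1),P')$; your formulation sidesteps that mismatch. The bookkeeping you flag---that $\ssubst' \geq \ssubst_1$ together with $\lv{P_1} \subseteq \dom(\ssubst_1)$ lets satisfaction of $P_1$ transfer from $\sint(\ssubst_1)$ to $\sint(\ssubst')$, and that $\spc' \Rightarrow \sst_1.\fieldsst{pc}$ lets models restrict---is exactly what makes the inductive step go through, and the paper leaves this implicit.
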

\begin{proof}
Suppose \(\mac(m, P, \ssubst, (\ssto,\smem,\sps, \spc)) \rightsquigarrow (\ssubst', (\ssto, \smem_f, \sps_f, \spc')) \). 
The proof is by induction on the structure of $P$.
\begin{description}
\item[($P$ is pure)] and not an equality.

In this case $outs=[]$ and the rule [Pure] from Fig.~\ref{fig:mac-success} is instantiated as:

\begin{mathpar}
    \inferrule*[left=(Pure)]{P \text{ is pure } \qquad  \consPure(m,\spc,\ssubst(P))= \spc'}
    {\macMP( m, (P, []), \ssubst,\sst)\rightsquigarrow ( \ssubst, \sst[\sstupdate{pc}{\spc')}]}
\end{mathpar}
Then, $\smem_f=\smem$ and $\sps_f=\sps$ which gives $\smem_P=\emptyset$
 and $\sps_P=\emptyset$.

The rest of the analysis relies on the mode used:
\begin{itemize}
\item if \(m=\macUX\) then $\spc'=\spc \wedge \ssubst(P)$ and \(\sat(\spc')\).

Let $\sint$ and $\st=(\sto, \hp)$ such that $\sint,\st\smodels (\ssto,\emptyset,\emptyset,\spc')$. Then, $\sint(\ssto)=\sto$, $\sint(\smem)=\hp=\emptyset$ and   $\sint(\spc')=\sint(\spc \wedge \ssubst(P))=\true$. By definition, the following holds:
\begin{align*}
\sint(\ssubst), (\sto, \emptyset) \models P & \iff \sint(\ssubst), (\sto, \emptyset)\models P \\
& \iff \esem{P}{\sint(\ssubst)}{}=\true \wedge \emp\\
& \iff \sint(\ssubst)(P)=\true \wedge \emp ~\text{ (Lemma~\ref{lem:esem_comm})}
\end{align*}
which trivially follows from the hypothesis that \(\sint(\spc \wedge \ssubst(P))=\true\).
\item if \(m=\macEX\) then $\spc'=\spc$ and \(\spc \models \ssubst(P)\). 

Let $\sint$ and $\st=(\sto,\hp)$ be  such that $\sint,\st\smodels (\ssto,\emptyset,\emptyset,\spc)$.  From \(\spc \models \ssubst(P)\)  in \macEX, we have $\sint(\ssubst(P))=\true$ and the result follows similarly to the case above.
\end{itemize}
\item[($P$ is an equality)] $P=(\lexp_1= \lexp_2)$

The interesting case is when $outs\neq []$ (otherwise, it is similar to the case above).

\begin{mathpar}
\inferrule*[left=(Pure)]{
 P \text{ is pure } \\ outs=[(\lvar{x}_i, \lexp_i)|_{i=1}^n]\\\\
 \ssubst'=\ssubst \uplus \{(\lvar{x}_i \mapsto \ssubst (\lexp_i))|_{i=1}^n\} \\
 \consPure(m,\spc,\ssubst'(P)) = \spc' }
 {\macMP( m, [(P, outs)], \ssubst ,\sst)\rightsquigarrow ( \ssubst', \sst[\sstupdate{pc}{\spc'}])) }
\end{mathpar}

Then, \(\smem_f=\smem\) and $\smem_P=\emptyset$. 

Let $\sint$ be such that and $\st=(\sto,\hp)$ be  such that $\sint,\st\smodels (\ssto,\emptyset,\emptyset,\spc)$.
The rest of the analysis relies on the mode used and is similar to the previous case.

\item[($P$ is an spatial assertion)]  $P=\lexp_a\mapsto \lexp_v$.

Then,  the following rule  from Fig.~\ref{fig:mac-success} was applied 

\begin{mathpar}
 \inferrule*[left=(CellPos)]
 {\consPure(m, \spc, \ssubst(\lexp_a)\in \Val) =\spc' \quad
\conscell( \ssubst(\lexp_a), \sst[\sstupdate{pc}{\spc'}]) \rightsquigarrow  (
       \sval,\sst')\\\\
       \ssubst_{subst}=\{\mathsf{O}\mapsto \sval\} \\
       outs=[(\lvar{x}_i, \lexp_i)|_{i=1}^n]  \\
     ((\ssubst\uplus \ssubst_{subst})(E_i)=\hat v_i)|_{i=1}^n
   \\
      \ssubst'=\ssubst \uplus \{(x_i\mapsto \hat v_i)|_{i=1}^n\} \\
        \consPure(m, (\sst').\fieldsst{pc}, \ssubst' (\lexp_v)=\sval)= \spc'''}
 {\macMP(m, [(\lexp_a\mapsto \lexp_v, outs)],  \ssubst,  \sst) 
       \rightsquigarrow  (\ssubst', \sst'[ \sstupdate{pc}{\spc'''}])}
  \end{mathpar}

where 

\begin{mathpar}
        \inferrule{\smem=\smem_f\cdot \{\sval_a \mapsto \sval\}\\ 
          \spc''=\spc\wedge (\ssubst(\lexp_a) = \sval_a) \\ \sat(\spc'')}
        {\conscell(\ssubst(\lexp_a),\sst[\sstupdate{pc}{\spc'}])
        \rightsquigarrow ({\sval}, \sst[\sstupdate{heap}{\smem_f}, \sstupdate{pc}{\spc''}])}
\end{mathpar}
Take  $\smem_P = \{\sval_a \mapsto \sval\}$, $\sps_P=\emptyset$ and $\sst_P=(\ssto, \smem_P,\emptyset,\spc'')$.
\begin{itemize}
\item if $m=\macUX$ then $\spc''=\spc'\wedge  ( \ssubst'( \lexp_a)=\sval_a)
 \wedge  ( \ssubst'( \lexp_v)=\sval)$ and $\sat(\spc'')$.

Let \(\sint\) and \(\st=(\sto,\hp)\)  be such that  $\sint, (\sto,\hp)\smodels \sst_P$.
Then, \(\sint(\ssto)=\sto\), \(\sint(\smem_P)=\hp\), and \(\sint(\spc'')=\true\), which implies
\hyp{1} \( \sint(\ssubst'(\lexp_a)=\sval_a)=\true \) and
  \(\sint(\ssubst' (\lexp_v)=\sval)=\true\). 
 By definition,

\begin{align*}
\sint(\sym\theta'), (\sto, \hp) \models \lexp_a\mapsto \lexp_v 
&\iff \sint(\sym\theta'), (\sto, \hp) \models \lexp_a\mapsto \lexp_v \\
&\iff \sint(\smem_p)=\{\esem{\lexp_a}{\sint(\ssubst')}\mapsto \esem{\lexp_v}{\sint(\ssubst')}\}, \text{ by \hyp{1}}
\end{align*}
and the result follows.
\item if \(mode=\macEX\) then the proof is analogous.
\end{itemize}
\item[(\(P\) is a predicate assertion)] \(P=\Pred{p}{\expredin}{\expredout}\)

Then, the following rule from Fig.~\ref{fig:mac-success} was applied

\begin{mathpar}
  \inferrule*[left=(ConsPred)]
{ \consPure(m, \spc,\ssubst(\expredin) \subseteq \Val) = \spc'\\ 
   \conspred( \pred,\ssubst(\expredin), \sst[\sstupdate{pc}{\spc'}])\rightsquigarrow (\vec{\sval}_{outs},\sst') \\\\
    outs = [(x_i,\lexp_i)|_{i=1}^n]\\
    \vec{\sval}_{outs}=[\sval_{o_1},\dots, \sval_{o_k}]\\
    \ssubst_{subst}=\{({\sf O}_i\mapsto \sval_{o_i})|_{i=1}^k\} \\
    ((\ssubst\uplus \ssubst_{subst})(E_i)=\hat v_i)|_{i=1}^n \\
    \ssubst'=\ssubst \uplus \{(x_i\mapsto \hat v_i)|_{i=1}^n\}  \\ 
     \consPure(m, (\sst').\fieldsst{pc}, \ssubst' (\expredout)=\vec{\sval}_{outs})= \spc''' 
}
{\macMP(m,[(\Pred{p}{\expredin}{\expredout},outs)], \ssubst,\sst)\rightsquigarrow (\ssubst',\sst'[\sstupdate{pc}{\spc'''}])}
\end{mathpar}
where 

\begin{mathpar}
  \inferrule
{\spred = \{ \Predd{\spredin}{\spredout} \} \cup \spred_f \\
\spc'' \defeq \spc' \wedge \ssubst(\expredin)= \spredin \\ \sat(\spc'')}
{\conspred(\pred,\ssubst(\expredin),\sst[\sstupdate{pc}{\spc'}])
\rightsquigarrow (\spredout, \sst[\sstupdate{preds}{\sps_f}, \sstupdate{pc}{\spc''}])}
\end{mathpar}
Then  $\smem_f=\smem$, since this case does not affect
 the symbolic heap, which gives $\smem_P=\emptyset$.  Also, $\sps=\sps_P\cup \sps_f$ where 
 $\sps_P=\{\Predd{\spredin}{\spredout} \}$.

 The rest of the proof depends on the mode. 
 \begin{itemize}
 \item \(m=\macUX\)
 
 By the construction
 above $\spc'''=\spc' \wedge \ssubst(\expredin)=\spredin \wedge \ssubst'(\expredout)=\spredout$.

 Let $\sint, \st=(\sto,\hp)$ be such that
  $\sint,\st \smodels (\emptyset,\emptyset,\sps_P,\spc''')$. Then, by 
  definition 
\[
\begin{aligned}
  \sint,\st \smodels (\emptyset,\emptyset,\sps_P,\spc''') & \iff 
  \sto=\emptyset \wedge \sint(\spc''')=\true \wedge \sint,(\emptyset, h)\models \sps_P\\
    & \iff 
    \sto=\emptyset \wedge \sint(\spc''')=\true \wedge \sint,(\emptyset, h)\models \Predd{\spredin}{\spredout}\\
\end{aligned}
\]
Now,  from the assumption that  $\Predd{\predin}{\predout}~\{ P' \} \in \preds $ 
and the definition of $\models$:
\[
  \begin{aligned}
    \sint(\ssubst'), \st \models \Pred{p}{\expredin}{\expredout} & \iff 
    \sint(\ssubst')[\predin \mapsto \esem{\expredin}{\sint(\ssubst'), \emptyset},
     \predout \mapsto \esem{\expredout}{\sint(\ssubst'), \emptyset}], (\emptyset, \hp) \models P' \\
      & \iff
      \sint(\ssubst')[\predin \mapsto \sint(\ssubst')(\expredin),
      \predout \mapsto \sint(\ssubst')(\expredout)], (\emptyset, \hp) \models P' \\ 
      & \iff
      \sint(\ssubst')[\predin \mapsto \spredin,
      \predout \mapsto \spredout], (\emptyset, \hp) \models P'
\end{aligned}
\]
and the result follows from $ \sint,(\emptyset, h)\models \Predd{\spredin}{\spredout}$
above.
 \item \(m=\macEX\)
 
 This case is similar.
\end{itemize}

\item[(\(P\) is a star-conjunction)] $P=P_1\lstar P_2\lstar \ldots \lstar P_n$ (for some $n$).

This case follows easily by induction hypothesis.

\end{description}
\end{proof}

\begin{theorem}[Property~\ref{prop:mac-branch-comp}]Completeness of \mac in \macEX mode:
\[
  \begin{array}{l}
  \oxabort \not\in \mac(\macEX, P, \ssubst, \sst) \land
  \vint, \st \smodels \sst \implies \\
  \qquad \exists \ssubst', \st_f, \sst_f.~\mac(\macEX, P, \ssubst, \sst) \rightsquigarrow (\ssubst', \sst_f) \land
  \vint, \st_f \smodels \sst_f
  \end{array}
  \]

%
%
\end{theorem}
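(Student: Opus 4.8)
The plan is to prove Property~\ref{prop:mac-branch-comp} by structural induction on the assertion $P$, following the same decomposition of the $\mac$ operation used for Property~\ref{prop:soundness}: since $\mac(\macEX, P, \ssubst, \sst) = \macMP(\macEX, mp, \ssubst, \sst)$ for $mp = \plan(\dom(\ssubst), P)$, and a matching plan is a $\lstar$-list of simple assertions, I will actually induct on the structure of the matching plan $mp$, carrying along the loop invariant that all recursive calls happen in $\macEX$ mode. The base case is the singleton plan and the inductive step is the \textsc{(Inductive)} rule of Fig.~\ref{fig:mac-success}, which chains two $\macMP$ calls.

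For the base cases, I would split on the kind of simple assertion. For \emph{pure assertions} ($P$ pure, possibly an equality), the only applicable non-abort rule is \textsc{(Pure)}, which does not touch $\smem$ or $\sps$ and sets $\spc' = \spc$ in $\macEX$ mode (since no abort means $\neg\sat(\spc \land \neg\ssubst'(P))$, i.e. $\spc \Rightarrow \ssubst'(P)$). Hence the output state $\sst_f$ has the same resource as $\sst$ and an equivalent path condition, so $\sst_f = \sst$ up to the path condition and $\vint, \st \smodels \sst_f$ holds taking $\st_f = \st$; moreover $\vint(\spc') = \vint(\spc) = \true$. For \emph{cell assertions} $\lexp_a \mapsto \lexp_v$, no abort means $\conscell$ does not abort, which (given the $\conscell$ abort rule) means $\neg\sat(\spc' \land \ssubst(\lexp_a) \notin \dom(\smem))$, i.e. the address is provably in the heap; since $\vint, \st \smodels \sst$, there is a concrete cell at $\vint(\ssubst(\lexp_a))$ in (the heap component of) $\st$, so some symbolic cell $\sval_1 \mapsto \sval_2$ in $\smem$ must have $\vint(\sval_1) = \vint(\ssubst(\lexp_a))$; this is the branch $\conscell$ takes, and I read off $\st_f$ as $\st$ with that concrete cell removed. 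The subsequent $\consPure$ for the cell contents cannot abort (hypothesis), so again it only asserts an entailment that $\spc$ already satisfies; I then check $\vint$ still models the resulting $\spc'''$ using $\vint, \st \smodels \sst$ and the equalities learnt. The \emph{predicate assertion} case is analogous, using the $\conspred$ abort rule: no abort means $\neg\sat(\spc' \land \neg \bigwedge_i \vec{\sval} = \spredini)$, so every symbolic predicate of name $\pred$ has matching ins on all models, and since $\st$ models $\sst$ there is a concrete sub-heap for the predicate, so the matching $\conspred$ branch fires and I split $\st$ accordingly.

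For the inductive step, given $mp = (P_1, outs_1) :: mp'$ and no abort for the whole call, I first observe no abort occurs for the head call $\macMP(\macEX, [(P_1, outs_1)], \ssubst, \sst)$ (otherwise \textsc{(PropagateError)} would give an abort), so by the base case I obtain $\ssubst_1, \st_1, \sst_1$ with $\macMP(\macEX, [(P_1,outs_1)], \ssubst, \sst) \rightsquigarrow (\ssubst_1, \sst_1)$ and $\vint, \st_1 \smodels \sst_1$; I then need no abort for $\macMP(\macEX, mp', \ssubst_1, \sst_1)$ — which again follows because an abort there would propagate — and apply the induction hypothesis to that call with the model $\vint, \st_1 \smodels \sst_1$, obtaining $\ssubst', \st_f, \sst_f$ with $\vint, \st_f \smodels \sst_f$; composing via the \textsc{(Inductive)} rule yields the required conclusion. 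I also need the parenthetical claim $\vint(\sst.\fieldsst{pc}) = \true \Rightarrow \vint(\sst_f.\fieldsst{pc}) = \true$, which follows by transitivity from each step preserving satisfaction of the (only strengthened, by Prop.~\ref{prop:path_strength}, but in $\macEX$ equivalent up to entailed conjuncts) path condition.

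The main obstacle I anticipate is the bookkeeping in the cell and predicate base cases: connecting the symbolic-state model $\vint, \st \smodels \sst$ to the existence of the \emph{particular} $\conscell$/$\conspred$ branch that keeps $\vint$ as a valid interpretation, and then checking that the $\consPure$ calls on cell/predicate contents and outs do not force the branch to abort — this is where the hypothesis "$\oxabort \not\in \mac(\macEX,\dots)$" has to be unpacked carefully, since it rules out \emph{all} abort outcomes across \emph{all} branches, in particular the one we want to follow. A secondary subtlety is that $\plan$ may produce several matching plans and $\macMP$ branches over matches, so I must be precise that "$\oxabort \notin \mac(\dots)$" means every branch of the chosen plan avoids abort, and that it suffices to exhibit one successful branch whose output state $\vint$ models; the well-formedness and coverage facts (Properties~\ref{prop:wf} and~\ref{prop:coverage}, already proved) can be invoked freely to keep substitutions and states in good shape along the way.
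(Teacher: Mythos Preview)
Your proposal is correct and follows essentially the same route as the paper's proof: structural induction on $P$ (equivalently, on the matching plan), with the key step in the cell/predicate base cases being that the no-abort hypothesis forces $\neg\sat(\spc \land \ssubst(\lexp_a) \notin \dom(\smem))$, so the given model $\vint$ singles out a particular $\conscell$/$\conspred$ branch whose added equality it satisfies. Your treatment of the inductive step---chaining via the \textsc{(Inductive)} rule and using \textsc{(PropagateError)} to push the no-abort hypothesis down to the tail call on the chosen branch---is in fact more explicit than the paper, which omits the $\lstar$-case entirely.
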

\begin{proof}
Let $\sst=(\ssto,\smem, \sps,\spc)$ be a symbolic state  and \(\sint\) and 
$\st=(\sto, \hp)$ be such that $\sint, \st \smodels \sst$.
Since $abort$ is not an outcome for \mac, none of the rules in Fig.~\ref{fig:mac-error} were applied and \(\plan(\ssubst,P)\)
 does not fail.  The proof is by induction on the structure of $P$. 

\begin{description}
\item[(\(P\) is pure)] Then $\plan(\ssubst,P)=(P,outs)$.

The interesting case is for $outs\neq []$. The rule applied is:

\begin{mathpar}
\inferrule*[left=(Pure)]{
 P \text{ is pure } \\ outs=[(\lvar{x}_i, \lexp_i)|_{i=1}^n]\\\\
 \ssubst'=\ssubst \uplus \{(\lvar{x}_i \mapsto \ssubst (\lexp_i))|_{i=1}^n\} \\
 \consPure(\macEX,\spc,\ssubst'(P)) = \spc' }
 {\macMP( \macEX, [(P, outs)], \ssubst ,\sst)\rightsquigarrow ( \ssubst', \sst[\sstupdate{pc}{\spc'}]) }
\end{mathpar}

Then, $\sst_f=\sst[\sstupdate{pc}{\spc'}]=(\ssto, \smem,\sps, \spc')$.
Since  \(m=\macEX\) then   $\spc=\spc'$ and the result follows trivially.

\item[(\(P\) is spatial)] $P=\lexp_a\mapsto \lexp_v$

The interesting case happens for $\lexp_v\neq \cfreed.$

Then,  the rule [CellPos]  from Fig.~\ref{fig:mac-success} was applied 

\begin{mathpar}
\inferrule*[left=(CellPos)]
{\consPure(\macEX, \spc, \ssubst(\lexp_a)\in \Nat) =\spc' \\
\conscell(\ssubst(\lexp_a), \sst[\sstupdate{pc}{\spc'}]) \rightsquigarrow  (
      \sval,\sst')\\\\
      \ssubst_{subst}=\{\mathsf{O}\mapsto \sval\} \\
      outs=[(\lvar{x}_i, \lexp_i)|_{i=1}^n]  \\
    ((\ssubst\uplus \ssubst_{subst})(E_i)=\hat v_i)|_{i=1}^n
  \\
     \ssubst'=\ssubst \uplus \{(x_i\mapsto \hat v_i)|_{i=1}^n\} \\
       \consPure(\macEX, (\sst').\fieldsst{pc}, \ssubst' (\lexp_v)=\sval)= \spc'''}
{\macMP(\macEX, [(\lexp_a\mapsto \lexp_v, outs)],  \ssubst,  \sst) 
      \rightsquigarrow  (\ssubst', \sst'[ \sstupdate{pc}{\spc'''}])}
\end{mathpar}

where 
\begin{mathpar}
  \mprset{flushleft}
      \inferrule{\smem=\smem_f \uplus \{\sval_1 \mapsto \sval\}\\
        \spc''=\spc'\wedge (\ssubst(\lexp_a) = \sval_1) \\ \sat(\spc'')}
      {\conscell( \ssubst(\lexp_a),\sst[\sstupdate{pc}{\spc'}])
      \rightsquigarrow ({\sval}, \underbrace{\sst[\sstupdate{heap}{\smem_f}, \sstupdate{pc}{\spc''}]}_{\sst'})}
  \end{mathpar}

 Note that $\spc'=\spc$  and $ \neg(\sat(\spc\wedge \neg(\ssubst'(\lexp_a)\in \Nat))$
  by definition of $\consPure$.
  From $\conscell$, we have  $\spc''=\spc\wedge \ssubst(\lexp_a)=\sval_1$. From the last $\consPure$ call,  $\neg(\sat(\spc'' \wedge \neg(\ssubst'(\lexp_v)=\sval))$ and $\spc'''=\spc''$.
  By soundness of \mac, we have \hyp{1}~$\smem=\smem_f\uplus \smem_P$, 
  where $\smem_P=\{\sval_1\mapsto \sval\}$.

By definition  of $\smodels$:

\[
  \begin{aligned} 
 \mbox{\hyp{2}}~
 \sint, (\sto, \hp)\smodels \sst & \iff \exists \hp_1,\hp_2.~\hp=\hp_1\uplus \hp_2 \wedge \sint(\ssto)=\sto
  \wedge \sint(\smem)=\hp_1\wedge \sint(\spc)=\true \\
  & \qquad \wedge \sint, (\sto, \hp_2)\smodels \sps\\
  \end{aligned}
\]
Take $\sst_f=(\ssto, \smem_f, \sps,\spc'')$. 
We will show that  $\sint,(\sto, \hp_f\uplus \hp_2)\smodels \sst_f$. In fact, from \hyp{1} and \hyp{2}, it follows that there exist $\hp_P,\hp_f$ such that 
$\hp_1=\hp_f\uplus \hp_P$ and \hyp{3}~$\sint(\smem_f)=\hp_f\wedge 
  \sint(\smem_P)=\hp_P$. Notice that there exists at least one branch (in \conscell) such that $\sint(\spc'')=\sint(\spc\wedge \ssubst'(\lexp_a)=\sval_1)=\true$, otherwise we would have $\ssubst'(\lexp_a)\notin \dom(\smem)$ and \conscell, therefore \mac, would $abort$, and this contradicts our initial hypothesis. Therefore,

\[
\begin{aligned}
 \exists \hp_f,\hp_2.~\dish{\hp_2}{\hp_f} \wedge \sint(\smem_f)=\hp_f \wedge \sint(\spc'')=\true\wedge \sint, (\sto, \hp_2)\smodels \sps
  \end{aligned}
\]
 and the result follows for 
$\st_f=(\sto,\hp_f\uplus \hp_2)$, i.e., $\sint,\st_f\smodels \sst_f$.

\item[\bf (\(P\) is a predicate assertion)] $P=\Predd{\expredin}{\expredout}$

This case follows similarly to the previous case.
\end{description}
\end{proof}

\begin{theorem}[Property~\ref{prop:mac-ux-comp}]Completeness of UX \mac.
\[
\begin{array}{l}
\mac(\macUX, A, \ssubst, \sst)\rightsquigarrow (\ssubst', \sst') 
 \gand 
 \sint(\ssubst'), (\emptyset ,h_A)\models A \wedge \dish{\cmem'}{\hp_A} \gand \sint, (\sto, \cmem') \smodels \sst' \implies \\ 
\qquad
 \sint, (\emptyset, \hp_A) \smodels \sst_A  \gand \sint, (\sto, \hp' \uplus \hp_A) \smodels \sst 
\end{array}
\]
\end{theorem}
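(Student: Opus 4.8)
The plan is to prove Property~\ref{prop:mac-ux-comp} by induction on the structure of the assertion $A$. Since \mac first plans $A$ into a matching plan and then consumes it one simple assertion at a time via \macMP, this is really an induction on the length of the $\lstar$-separated list of simple assertions underlying $A$, with base cases: a single pure assertion, a single cell assertion $\lexp_a \mapsto \lexp_v$ (and the freed variant $\lexp_a \mapsto \cfreed$), and a single predicate assertion $\pred(\vec{\lexp}_{ins})(\vec{\lexp}_{outs})$. The crucial observation throughout is that in \macUX mode every fact that consumption discovers is recorded as a conjunct of the output path condition: \consPure returns $\spc \land \ssubst'(P)$, \conscell records $\ssubst'(\lexp_a) = \sval_a$, and \conspred records $\ssubst'(\vec{\lexp}_{ins}) = \spredin$ and (through the trailing \consPure call) $\ssubst'(\vec{\lexp}_{outs}) = \spredout$. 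Since the hypothesis $\sint, (\sto, \cmem') \smodels \sst'$ implies that $\sint$ validates the frame's (= output) path condition, and Lemma~\ref{lem:esem_comm} lets us rewrite $\esem{E}{\sint(\ssubst')}$ as $\sint(\ssubst'(E))$ for pure $E$, these recorded equalities will pin down the concrete heap $\hp_A$ witnessing $A$ and identify it with the interpretation of the symbolic resource that \mac removed.

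For the base cases I would first invoke Property~\ref{prop:soundness} (soundness of \mac) to obtain the decomposition $\smem = \smem_A \uplus \smem_f$ and $\sps = \sps_A \cup \sps_f$, fixing $\sst_A = (\emptyset, \smem_A, \sps_A, \spc')$ where $\spc'$ is the output path condition and $\smem_f, \sps_f$ are the resources of the frame $\sst'$. In the pure case $\smem_A$ and $\sps_A$ are empty and $\sint(\ssubst'), (\emptyset, \hp_A) \models A$ forces $\hp_A = \emptyset$, so $\sint, (\emptyset, \hp_A) \smodels \sst_A$ is immediate (it only needs $\sint(\spc') = \true$, which holds); and for any $\cmem'$ with $\sint, (\sto, \cmem') \smodels \sst'$ we have $\cmem' \uplus \hp_A = \cmem'$, and Property~\ref{prop:path_strength} ($\spc' \Rightarrow \spc$) together with Property~\ref{prop:wf} (the store is unchanged) upgrades this to $\sint, (\sto, \cmem') \smodels \sst$. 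In the cell case $\smem_A = \{\sval_a \mapsto \sval\}$; from $\sint(\ssubst'), (\emptyset, \hp_A) \models \lexp_a \mapsto \lexp_v$ we get $\hp_A = \{\esem{\lexp_a}{\sint(\ssubst')} \mapsto \esem{\lexp_v}{\sint(\ssubst')}\}$, and the recorded equalities plus Lemma~\ref{lem:esem_comm} give $\esem{\lexp_a}{\sint(\ssubst')} = \sint(\sval_a)$ and $\esem{\lexp_v}{\sint(\ssubst')} = \sint(\sval)$, hence $\hp_A = \sint(\smem_A)$ and $\sint, (\emptyset, \hp_A) \smodels \sst_A$; then $\sint(\smem) = \sint(\smem_f) \uplus \sint(\smem_A)$, the pieces being disjoint at the concrete level by the well-formedness invariants $\sinvc$, $\sinv$, which, with $\sint(\spc) = \true$ and an unchanged store, yields $\sint, (\sto, \cmem' \uplus \hp_A) \smodels \sst$. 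The freed-cell and predicate cases are analogous; in the predicate case I would unfold the assertion to its body, match the ins and outs via the equalities recorded by \conspred, and use that the body's free logical variables lie in $\{\predin\} \cup \{\predout\}$ so the interpretation elsewhere is irrelevant.

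For the inductive step, write $A = P_0 \lstar A'$ with $P_0$ the first simple assertion consumed, let $(\ssubst_1, \sst_1)$ be the intermediate result, and record that $\ssubst' \ge \ssubst_1 \ge \ssubst$ with $\lv{P_0} \subseteq \dom(\ssubst_1)$ and $\lv{A'} \subseteq \dom(\ssubst')$ by Property~\ref{prop:coverage}, so that the interpretations of $P_0$ (resp. $A'$) under $\ssubst_1$ and $\ssubst'$ coincide. Split $\hp_A = \hp_{P_0} \uplus \hp_{A'}$ from $\sint(\ssubst'), (\emptyset, \hp_A) \models P_0 \lstar A'$; apply the base case to the consume of $P_0$ in $\sst$ (strengthening the path condition of the resulting $\sst_{P_0}$ to the final $\spc'$, which is legitimate since $\spc' \Rightarrow \sst_1.\fieldsst{pc}$ and $\sint(\spc') = \true$ and does not change the heap), and apply the induction hypothesis to the consume of $A'$ in $\sst_1$, using the frame model $\cmem'$ of $\sst'$ and the heap $\hp_{A'}$, which is disjoint from $\cmem'$. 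Combining, $\smem_A = \smem_{P_0} \uplus \smem_{A'}$ and $\sps_A = \sps_{P_0} \cup \sps_{A'}$, giving $\sint, (\emptyset, \hp_A) \smodels \sst_A$, and chaining the two composition statements gives $\sint, (\sto, \cmem' \uplus \hp_A) \smodels \sst$. I expect this bookkeeping in the inductive step to be the main obstacle: threading the substitution extensions so that all satisfaction statements live over a common interpretation, and checking that the successive concrete heap decompositions and their disjointness conditions compose into well-defined heap unions. Once those alignments are in place, each case reduces to reading off the equalities that \macUX consumption explicitly records and appealing to Properties~\ref{prop:wf}--\ref{prop:soundness}.
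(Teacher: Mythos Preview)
Your proposal is correct and follows essentially the same route as the paper's proof: induction on the structure of $A$, invoking Property~\ref{prop:soundness} to obtain the resource decomposition, and then using the equalities that \macUX explicitly records in the output path condition (via \consPure, \conscell, \conspred) together with Lemma~\ref{lem:esem_comm} to identify $\hp_A$ with the interpretation of $\smem_A,\sps_A$. The paper in fact spells out only the cell case in detail and leaves the pure, predicate, and $\lstar$ cases as ``follows by induction hypothesis'', so your more explicit treatment of the inductive-step bookkeeping (threading $\ssubst' \ge \ssubst_1 \ge \ssubst$ via Property~\ref{prop:coverage} and composing the two heap decompositions) goes beyond what the paper writes down but matches its intent.
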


\begin{proof} By induction on the structure of $A$.

Let  $\sst=(\ssto,\smem,\sps,\spc)$. We have the following: 

\begin{enumerate}
\item $\smem=\smem_A\uplus \smem_f$ and  $\sps=\sps_A\uplus \sps_f$ (soundness of UX \mac)
\item $\sst'.\fieldsst{pc}=\spc\wedge \spc'$, for some $\spc'$ (definition of UX \mac)
\end{enumerate}

The remaining of the proof is done after the planning phase, considering the \mac of the \mps~obtained from $A$, which we will denote as $\mps(A)$. 

Thus, we analyse the rules for  $\macMP(\macUX, [\mps(A)], \ssubst, \sst)$.
\begin{description}
\item[ ($A$ is pure)]
\item[ ($A$ is an spatial assertion)] $A=\lexp_a\mapsto \lexp_v$.

Then, the rule applied is:
\begin{mathpar}
\infer{
\macMP(\macUX, [(\lexp_a\mapsto \lexp_v, outs)],  \ssubst,  \sst) 
       \rightsquigarrow  (\ssubst', \sst_f[ \sstupdate{pc}{\spc'''}])
}
{
\begin{array}{lr}
\consPure(\macUX, \spc, \ssubst(\lexp_a)\in \Val) =\spc'& \text{check for evaluation error} \\
\sst[\sstupdate{pc}{\spc'}].\conscell(\ssubst(\lexp_f)) \rightsquigarrow  (
       \sval,\sst_a) &  \text{consume cell}\\
        \ssubst_{subst}=\{\mathsf{O}\mapsto \sval\} &   \text{ substitution with cell contents}\\
       outs=[(\lvar{x}_i, \lexp_i)|_{i=1}^n]  &  \text{collect outs}\\
     ((\ssubst\uplus \ssubst_{subst})(E_i)=\hat v_i)|_{i=1}^n
     &  \text{instantiate outs}\\
      \ssubst'=\ssubst \uplus \{(x_i\mapsto \hat v_i)|_{i=1}^n\} &   \text{extend substitution with outs}\ \\
        \consPure(\macUX, (\sst_f).\fieldsst{pc}, \ssubst' (\lexp_v)=\sval)= \spc''' &   \text{consume  cell contents}
\end{array}
}
\end{mathpar}
and, in this case,  $\sst'=\sst_f[ \sstupdate{pc}{\spc'''}]$.

The definition of $\consPure$ implies 
\begin{enumerate}
\setcounter{enumi}{2}
\item $\exists\sval_a,\sval,\smem_f.~\smem=\smem_f\uplus \{\sval_a\mapsto \sval\}$
\item $\sst'.\fieldsst{pc}=\spc'''=\spc\wedge (\ssubst(\lexp_a)\in\Val)\wedge (\ssubst(\lexp_a)=\sval_a)\wedge (\ssubst'(\lexp_v)=\sval)$
\item $\sat(\spc''')$
\item well-formedness of $\sst$ implies that $\spc\models \sval_a\notin dom(\smem_f)$
\end{enumerate}
Thus, $\sst'=(\ssto,\smem_f,\sps,\spc''')$, $\sps_A=\emptyset$ 
and we can
define $\sst_A=(\emptyset,\{\sval_a\mapsto \sval\},\emptyset,\spc''')$.

Let $\sint,\hp_A$ and $\hp'$ be such that 
\begin{enumerate}
\setcounter{enumi}{6}
\item $ \sint(\ssubst'), (\emptyset ,h_A)\models \lexp_a\mapsto \lexp_v$;
\item $ \dish{\cmem'}{\hp_A} $
\item $ \sint, (\sto, \cmem') \smodels \sst' $
\item By definition of $\smodels$:

\[
\begin{aligned}
\sint, (\sto, \cmem') \smodels \sst' \Leftrightarrow \exists \hp_1',\hp_2'.~&\hp'=\hp'_1\uplus\hp'_2\wedge \sint(\smem_f)=\hp_1' \wedge \sint(\ssto)=\sto\\
&\wedge \sint(\spc''')=\true \wedge \sint, (\sto,\hp_2')\smodels \sps
\end{aligned}
\]
\item By definition of $\models$:

\[
\begin{aligned}
 \sint(\ssubst'), (\emptyset ,h_A)\models \lexp_a\mapsto \lexp_v &\Leftrightarrow \hp_A=\{\sint(\ssubst'(\lexp_a))\mapsto \sint(\ssubst'(\lexp_v))\}\\
 & \Leftrightarrow \hp_A=\{\sint(\sval_a)\mapsto \sint(\sval_v)\}
\end{aligned}
\]
\end{enumerate}

From $\sint(\spc''')=\true$, it follows that 
$ \sint, (\emptyset, \hp_A) \smodels \sst_A $, by definition.

From  (8) $\dish{\hp'}{\hp_A}$, in combination with (4), (10) and (11)  we have 
\[
\begin{aligned}
   \sint, (\sto, \hp' \uplus \hp_A) \smodels \sst &\iff 
   \sint, (\sto, \hp' \uplus \hp_A) \smodels (\ssto, 
   \smem_f\uplus \smem_A,\sps, \spc) \\
   & \iff  \exists \hp_1',\hp_2'.~\hp'=\hp'_1\uplus\hp'_2\wedge
    \sint(\smem_f)=\hp_1' \wedge \sint(\smem_A)=\hp_A\wedge \sint(\ssto)=\sto\\
   & \qquad \wedge \sint(\spc)=\true \wedge \sint, (\sto,\hp_2')\smodels \sps
\end{aligned}
\]
and the result follows.
\end{description}
\end{proof}

\begin{example}[Consume Computation of Ex. \ref{ex:fcall-ux}]\label{ex:fcall-ux-complete}
Suppose the symbolic execution is in a state $\sst=(\ssto,\smemb,\spc)$  with the initial substitution is  $\ssubst=\{50/c,1/x\}$ and $P=x\mapsto v$ that we want to consume. Let  $\smemb=(\{\svar{x}\mapsto \svar{c},\svar{y}\mapsto 1, 3\mapsto 5\},\emptyset)$ be the symbolic resource, and $\spc=\svar{c}\geq 42\land\svar{x}\neq \svar{y}\land \svar{x},\svar{y}\in \Nat$ be  symbolic path condition  (the symbolic store $\ssto$ is irrelevant to this computation and left opaque). 
We will use our consume implementation:

We use the ${\sf CellPos}$,  in Fig.~\ref{fig:mac-success}, that requires to check:
\begin{itemize}
\item $\consPure(\macUX,\spc,\ssubst(x)\in\Val)=\spc \land 1\in \Val$.
\item conscell branches: 
\begin{description}
\item[(branch 1)] $\conscell(1,\sst[ pc:=\spc\land 1\in \Val])\rightsquigarrow (\svar{c},\sst')$, where $\sst'=\sst[heap:= \smem_f,pc:= \spc'']$, with $\smem=\smem_f\uplus \{\svar{x}\mapsto \svar{c}\}$, $\spc''=\spc'\land (\svar{x}=1)$ and $\sat(\spc'')$.
\item[(branch 2)] $\conscell(1,\sst[ pc:=\spc\land 1\in \Val])\rightsquigarrow (1,\sst',pc:= \spc''])$, where $\sst'=\sst[heap:= \smem_f',pc:= \spc'']$ with $\smem=\smem_f'\uplus \{\svar{y}\mapsto 1\}$, $\spc''=\spc'\land (\svar{y}=1)$ and $\sat(\spc'')$.
\end{description}
\item $\ssubst_{subst}$ for each branch:
\begin{description}
\item[(branch 1)] $\ssubst_{subst}=\{\svar{c}/{\sf O}\}$ and {\bf (branch 2)}~  $\ssubst_{subst}=\{1/{\sf O}\}$
\end{description}
\item $outs=[(v,{\sf O}]$ 
\item $\ssubst'$ for each branch:
\begin{description}
\item[(branch 1)]  $\ssubst'=\ssubst\uplus \{\svar{c}/v\}$ and {\bf (branch 2)}  $\ssubst'=\ssubst\uplus \{1/v\}$

\end{description}
\item $\consPure(\macUX,(\sst').pc,\ssubst'(v)=\svar{v})$ for each branch:
\begin{description}
\item[(branch 1)]  $\consPure(\macUX,(\sst').pc,\svar{c}=\svar{c})=\spc'\land (\svar{x}=1) $

\item[(branch 2)]  $\consPure(\macUX,(\sst').pc,1=1)=\spc'\land (\svar{y}=1)$
\end{description}
\end{itemize}
Thus, 
\[
\begin{aligned}
\macMP(\macUX, [(x\mapsto v, outs)],  \ssubst,  \sst) 
      & \rightsquigarrow  (\ssubst', \sst'[ \sstupdate{pc}{\spc'\land \svar{x}=1}]) \qquad {\bf (branch 1)}\\
        & \rightsquigarrow  (\ssubst', \sst'[ \sstupdate{pc}{\spc'\land \svar{y}=1}])\qquad  {\bf (branch 2)}\\
  \end{aligned}
       \]
The computation of produce is in Ex.~\ref{app:ex-produce}.

\end{example} 

\begin{example}[Ilustrating Properties]\label{ex:illustrate_props} Consider the  consume step in Example~\ref{ex:fcall-ux}:
\[\mac(\macUX, x\mapsto v, \ssubst, \sst)\rightsquigarrow (\ssubst \cup \{ \svar{c}/v\}, (\ssto, (\{\svar{y}\mapsto 1, 3\mapsto 5\},\emptyset),\spc\land \svar{x}=1))\]
 The substitution is
 $\ssubst'=\ssubst \cup \{\svar{c}/v\}$, the assertion is  $P=x\mapsto v$ and the consumed resource is $\sst_P=(\emptyset, (\{\svar{x}\mapsto \svar{c}\},\emptyset) ,\spc\land \svar{x}=1)$.
For Property~\ref{prop:soundness}, consider  the decomposition of resource $\smemb$ into the parts $\smemb_P=(\{\svar{x}\mapsto \svar{c}\},\emptyset)$ and $\smemb_f=(\{\svar{y}\mapsto 1, 3\mapsto 5\},\emptyset)$. Take the interpretation $\sint=\{1/\svar{x} , 3/\svar{y}, 42/\svar{c}\}$.
By definition of  interpretation of symbolic states (\S\ref{sub:funspec}), $\sint,(\emptyset, \{1\mapsto 42\})\models \sst_P$ holds  if and only if $\sint((\emptyset, \{\svar{x}\mapsto \svar{c}\}))=(\emptyset, \{1\mapsto 42\})$ and $\sint( \spc\wedge \svar{x}=1)=\true$, and  to see that the latter holds, just apply the $\sint$ to the symbolic variables and verify the equality.  Thus, $\sint,(\emptyset, \{1\mapsto 42\})$ is a model of $\sst_P$. Also, note that $\sint(\ssubst')=\{1/x, 5/c, 42/v\}$ and  $\sint(\ssubst'),(\emptyset, \{1\mapsto 42\})\models P$ holds. Therefore,  $\sint(\ssubst'),(\emptyset, \{1\mapsto 42\})$ is a model of $P$.

 For Property~\ref{prop:mac-ux-comp}, consider an interpretation $\sint$ such that $\sint(\spc \wedge \svar{x}=1)=\true$ and $\sint(\svar{c})= 50$. In addition, consider the concrete state $(\emptyset, \{1\mapsto 50\})$.  Then $\sint(\ssubst')=\{1/x,50/c,50/v\}$, and by definition of satisfaction (App.~\ref{appssec:furtherdef}),  the judgement  $\sint(\ssubst'),(\emptyset, \{1\mapsto 50\})\models x\mapsto v$ holds. Thus, the premises of Property~\ref{prop:mac-ux-comp} are satisfied.  The following holds $\sint,  (\emptyset, \{1\mapsto 50\})\models (\emptyset, (\{\svar{x}\mapsto \svar{c}\},\emptyset) ,\spc \land \svar{x}=1)$ iff $\sint((\emptyset, \{\svar{x}\mapsto \svar{c}\}, \spc\wedge \svar{x}=1))=  (\emptyset, \{1\mapsto 50\}) $ as $ \sint(\spc \land \svar{x}=1)=\true$. 
\end{example}

\subsection{Rules for \produce}\label{app:prod_rules}

Fig.~\ref{fig:produce} contains the rules for \produce.

\begin{figure}[!h]
\footnotesize

\begin{mathpar}
\inferrule
{P ~\text{pure}\\ \spc' \defeq \spc \wedge  \ssubst(P) \quad \sat(\spc') }
{\funcAlg{produce}{P,\ssubst, \sst}\rightsquigarrow \sst[\sstupdate{pc}{\spc'}]}
\and
\inferrule
{\spc' \defeq \spc \wedge \ssubst(\lexp_a)\in \Nat\wedge \ssubst(\lexp_v)\in \Val \\\\
\prodcell(\ssubst(\lexp_a), \ssubst(\lexp_v),\sst[\sstupdate{pc}{\spc'}]) =\sst'}
{\produce(\lexp_a\mapsto \lexp_v, \ssubst, \sst) \rightsquigarrow \sst'}
\and
\inferrule
{\spc' \defeq \spc \wedge \ssubst(\lexp_a)\in \Nat \\\\
\prodcell(\ssubst(\lexp_a), \cfreed, \sst[\sstupdate{pc}{\spc'}]) =\sst'}
{\produce(\lexp_a\mapsto \cfreed, \ssubst, \sst) \rightsquigarrow \sst'}
\and
\inferrule[]{\produce(P_1, \ssubst,\sst)\rightsquigarrow \sst'\\\\
\produce(P_2, \ssubst, \sst')\rightsquigarrow \sst''}{\produce(P_1\lstar P_2, \ssubst, \sst)\rightsquigarrow \sst'}
\and
\inferrule
{\sps' \defeq \Predd{\ssubst(\vec{\lexp}_{ins})}{\subst(\vec{\lexp}_{outs})}\cup \sps \\\\
\spc' \defeq \spc\wedge \ssubst(\vec{\lexp}_{ins})\subseteq \Val \wedge \ssubst(\vec{\lexp}_{outs})\subseteq \Val \\
\sat(\spc')}
{\produce(\Predd{\vec{\lexp}_{ins}}{\vec{\lexp}_{outs}}, \ssubst,\sst)\rightsquigarrow \sst[\sstupdate{pred}{\sps'}, \sstupdate{pc}{\spc'}]}
\and
\inferrule[]{\produce(P_1, \ssubst,\sst)\rightsquigarrow \sst'}{\produce(P_1\lor P_2, \ssubst,\sst)\rightsquigarrow \sst'}
\and
\inferrule[]
{\produce(P_2, \ssubst,\sst)\rightsquigarrow \sst'}{\produce(P_1\lor P_2, \ssubst,\sst)\rightsquigarrow \sst'}
\end{mathpar}	
\caption{\produce rules}\label{fig:produce}
\end{figure}

\begin{example}[Produce computation of Ex.~\ref{ex:fcall-ux}]\label{app:ex-produce}
In this example, we give the complete computation of the produce step used in Ex.~\ref{ex:fcall-ux}. Consider the symbolic state $\sst_f=(\ssto,(\{\svar{y}\mapsto 1, 3\mapsto 5\},\emptyset ), \spc\land \svar{x}=1)$, with the notations  $\smemb_f$ for the symbolic heap and  $\spc_f$ for the symbolic path condition.  The substitution is  $\ssubst''=\ssubst'\cup \{1/x, 50/c, \svar{c}/v, \svar{r}/r\}$, with  $r,\svar{r}$ are fresh  variables. The post-condition to be produced is $\Qok''=x \mapsto c \lstar c \ge 42 \lstar r = v$. First, we apply the rule to produce cell assertions, which adds to the path condition $\spc'=\spc_f\land \ssubst'(x)\in \Nat \land \ssubst'(c)\in \Nat$ and check that it is $\sat$. Second, we apply $\prodcell$ rule which checks that $1\notin \dom(\smemb_f)$, updates the path condition to $\spc''=\spc'\land 1\notin \dom(\smemb_f)\land 50\in \Val$ and extends the symbolic heap with $\ssubst'(x)\mapsto \ssubst'(c)$ obtaining a new symbolic state $(\ssto,\smemb_f\cup (\{1\mapsto 50\},\emptyset),\spc'')$. Finally, we have to produce the pure assertions $c\geq 42 \lstar r=v$ and these only add to the symbolic path condition $\spc'''=\spc''\land \ssubst(c)\geq 42\land \ssubst'(r)=\ssubst'(v)$ which is the same as $\spc'''=\spc''\land 50\geq 42\land \svar{r}=\svar{c}$ 

The produce step for {\bf (branch 2)} is computed analogously.


\end{example}

\subsection{Correctness of \produce}\label{sec:produce-correct}
Suppose
\[
 \funcAlg{produce}{Q, \ssubst, (\ssto,\smem,\sps,\spc)}\rightsquigarrow
   (\ssto,\smem',\sps', \spc')
   \]
 and assume 
\begin{description}
\item[(A1)]  $\sinv((\ssto, \smem, \sps, \spc),\ssubst)$;
\item[(A3)] $\ssubst$ covers $P$.
\end{description}

%

\begin{theorem}[Property~\ref{prop:wf} for \({\tt produce}\)]
\(\sinv((\ssto, \smem', \sps', \spc'))
 \)
\end{theorem}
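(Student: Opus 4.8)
The plan is to prove $\sinv((\ssto, \smem', \sps', \spc'))$ by structural induction on the produced assertion $Q$, following the case split of the \produce~rules in Fig.~\ref{fig:produce}. Before the cases I would record two standing observations. First, no \produce~rule ever modifies the symbolic store, so the resulting store is literally $\ssto$; consequently the store-facing conjuncts of $\sinv$, namely $\svs{\ssto} \subseteq \svs{\spc'}$ and $\spc' \models \sinvc(\ssto)$, will follow from (A1) as soon as we know $\svs{\spc} \subseteq \svs{\spc'}$ and $\spc' \models \spc$. Second, every rule sets the output path condition to $\spc$ conjoined with additional pure constraints --- directly in the base cases, and by composing the sub-derivations in the $\lstar$- and $\lor$-cases --- so $\spc' \models \spc$, $\svs{\spc} \subseteq \svs{\spc'}$, and every entailment of the form $\spc \models \varphi$ available from (A1) lifts to $\spc' \models \varphi$. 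With these in hand each case reduces to checking $\sat(\spc')$ (always a side condition of the relevant rule), the containment $\svs{\smem'} \cup \svs{\sps'} \subseteq \svs{\spc'}$, and $\spc' \models \sinvc(\smem')$.

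Then I would dispatch the cases. For a pure $Q$ the heap and predicate multiset are unchanged, so there is nothing to do beyond the two observations above. For a predicate assertion $Q = \Predd{\vec{\lexp}_{ins}}{\vec{\lexp}_{outs}}$ we again keep $\smem' = \smem$, but $\sps' = \{\Predd{\ssubst(\vec{\lexp}_{ins})}{\ssubst(\vec{\lexp}_{outs})}\} \cup \sps$; here I would use the coverage assumption (A3) to note that $\ssubst(\vec{\lexp}_{ins})$ and $\ssubst(\vec{\lexp}_{outs})$ are genuine symbolic values whose symbolic variables lie in $\svs{\ssubst}$, and $\svs{\ssubst} \subseteq \svs{\spc} \subseteq \svs{\spc'}$ by the substitution part of (A1). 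For a cell assertion $Q = \lexp_a \mapsto \lexp_v$ (and the freed variant $\lexp_a \mapsto \cfreed$ symmetrically), writing $\svar{l} = \ssubst(\lexp_a)$ and $\svar{v} = \ssubst(\lexp_v)$, the rules give $\smem' = \smem \uplus \{\svar{l} \mapsto \svar{v}\}$ and $\spc' = \spc \wedge \svar{l} \in \nats \wedge \svar{v} \in \Val \wedge \svar{l} \notin \dom(\smem)$, with $\sat(\spc')$ a side condition; containment of $\svs{\smem'}$ follows exactly as in the predicate case, and $\spc' \models \sinvc(\smem')$ splits into the domain/codomain typing of the old part (from (A1)), the typing of $\svar{l}$ and $\svar{v}$ from the freshly added constraints, and the pairwise-distinctness of $\dom(\smem') = \dom(\smem) \cup \{\svar{l}\}$, which follows because (A1) handles distinctness within $\dom(\smem)$ while $\spc' \models \svar{l} \notin \dom(\smem)$ forces $\svar{l}$ to be interpreted differently from every existing address. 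Finally, $Q = P_1 \lstar P_2$ (producing the two conjuncts in sequence, threading well-formedness through the intermediate state) and $Q = P_1 \lor P_2$ (producing $P_1$ or $P_2$) are immediate from the induction hypothesis, using that $\ssubst$ covers each conjunct/disjunct whenever it covers the compound assertion.

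I expect the only genuinely delicate point to be the heap-disjointness invariant in the cell case: adding $\svar{l} \mapsto \svar{v}$ must leave the symbolic heap well-formed, and this is precisely what the explicit side condition $\svar{l} \notin \dom(\smem)$ of \prodcell~is there to guarantee. Everything else is routine bookkeeping of symbolic-variable containment riding on the well-formedness of $\ssubst$ supplied by (A1) and the coverage assumption (A3); I do not anticipate surprises there.
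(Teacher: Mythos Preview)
Your proposal is correct and follows essentially the same approach as the paper: structural induction on $Q$ over the \produce~rules, verifying in each case the three conjuncts of $\sinv$ (symbolic-variable containment, $\sat(\spc')$, and $\spc' \models \sinvc(\ssto)\wedge\sinvc(\smem')$), with the cell case hinging on the $\svar{l}\notin\dom(\smem)$ side condition of \prodcell. Your two standing observations are a slightly cleaner way of packaging the store-invariance and path-condition-strengthening facts that the paper re-derives inline per case, and your explicit treatment of $\lor$ fills in a case the paper leaves implicit, but the substance is the same.
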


\begin{proof} By induction on the structure of $Q$ and also assuming 
 \(\sinv((\ssto,\smem, \sps, \spc), \ssubst)\) which we recall below:
\begin{align*}
\sinv((\ssto, \smem, \sps, \spc),\ssubst)&\defeq \sinv((\ssto, \smem, \sps, \spc))\wedge \svs{\ssubst}\subseteq \svs{\spc} \gand \spc \models \sinvc(\ssubst)\\
 \sinv((\ssto, \smem, \sps, \spc)) & \defeq (\svs{\ssto}\cup \svs{\sps} \cup \svs{\smem}\subseteq \svs{\spc})\wedge \sat(\spc)\wedge \spc \models (\sinvc(\ssto) \wedge \sinvc(\smem))
\end{align*} 
There are some cases to consider:
\begin{description}
  \item[(\(Q\) is pure)] \hfill
  
  In this case, the following rule from Fig.~\ref{fig:produce}  was applied
  
  \begin{prooftree}
\AxiomC{\(P\text{ is pure} \quad \spc'=(\spc\wedge \ssubst P) \quad \sat(\spc')\)}
\UnaryInfC{\(\produce(P,\ssubst, \sst)\rightsquigarrow \sst[\sstupdate{pc}{\spc'}]\)}
  \end{prooftree}
  

Notice that
\begin{description}
\item[\hyp{1}] \(\svs{\ssto}\cup \svs{\smem}\cup\svs{\sps}\subseteq \svs{\spc'}\).

A consequence of  the \(\sinv((\ssto,\smem, \sps, \spc))\) hypothesis and the fact that $\svs{\spc}\subseteq\svs{\spc'}$.
\item[\hyp{2}]\( \sat(\spc')\).

Trivially. 

\item[\hyp{3}]\(\spc' \models (\sinvc(\ssto)\wedge \sinvc(\smem))\)
\end{description}
  Now, let $\sint$ be such that \( \vint(\spc')=\true\). Then, $\vint(\spc)=\true$ and 
   
  $\vint(\spc')=\vint(\spc\gand Q\ssubst)=\true$. This trivially gives $\vint(\spc)=\true$,
$\sinv(\ssto,\smem,\sps,\spc)$ implies $\sinvc(\ssto)\wedge \sinvc(\smem)$.

From \hyp{1}, \hyp{2} and \hyp{3} one has $\sinv(\ssto,\smem,\sps,\spc')$, and the result follows.
  \item[(\(Q\) is spatial assertion)]\(Q=\lexp_1\mapsto \lexp_2\)  (The case \(Q=E_1\mapsto \cfreed \) is analogous
)\hfill
  
    In this case, the following rule from Fig.~\ref{fig:produce}  was applied

\begin{mathpar}
\inferrule
{\spc' \defeq \spc \wedge \ssubst(\lexp_a)\in \Nat\wedge \ssubst(\lexp_v)\in \Val \\\\
\prodcell(\ssubst(\lexp_a), \ssubst(\lexp_v),\sst[\sstupdate{pc}{\spc'}]) =\sst'}
{\produce(\lexp_a\mapsto \lexp_v, \ssubst, \sst) \rightsquigarrow \sst'}  
\end{mathpar}

  Let $\sst = (\ssto, \smem, \sps, \spc)$. The definition of  \(\prodcell \) gives:
\begin{description}
\item[\hyp{0}] $\sst' = (\ssto, \smem', \sps, \spc'')$
\item[\hyp{1}] $\smem'=\{\ssubst(\lexp_1)\mapsto \ssubst(\lexp_2))\} \uplus \smem$;
\item[\hyp{2}] $\spc''=\spc'\wedge (\ssubst(\lexp_1)\notin \dom(\ssubst))$; and 
\item[\hyp{3}] \(\sat(\spc'')\).
\end{description}

Notice that
\begin{description}
\item[\hyp{4}] \(\svs{\ssto}\cup \svs{\smem'} \cup \svs{\sps}\subseteq \svs{\spc''}\).

From \hyp{1} \(\svs{\smem'}=\svs{\smem}\cup \svs{\{\ssubst(\lexp_1)\mapsto \ssubst(\lexp_2)\}}\subseteq \svs{\spc''}\).
The \(\sinv(\sst,\ssubst)\) hypothesis implies that $\svs{\ssubst}\cup\svs{\smem}\cup\svs{\sps}\subseteq \svs{\spc}$. In combination with the fact that  $\svs{\spc}\subseteq\svs{\spc''}$, the result follows.
\item[\hyp{5}]\( \sat(\spc'')\) 

Trivially, from the definition of \hyp{3}.
\item[\hyp{6}]\(\spc'' \models (\sinvc(\ssto)\wedge \sinvc(\smem'))\)
\end{description}
  Now, let $\sint$ be such that \( \vint(\spc'')=\true\) i.e., such that  
  \[ \vint(\spc\wedge  \ssubst(\lexp_1)\in \Nat\wedge \ssubst(\lexp_2)\in \Val\wedge (\ssubst(\lexp_1)\notin \dom(\ssubst)))=\true.\] 
  Then, $\vint(\spc)=\true$ and \(\sinv(\sst)\) gives $\spc\models (\sinvc(\ssto)\wedge \sinvc(\smem))$, which entails \(\sinvc(\ssto)\wedge \sinvc(\smem)\). However, to prove $\sinvc(\smem')=\sinvc(\smem\uplus\{\ssubst(\lexp_1)\mapsto \ssubst(\lexp_2)\})$ we need to verify 
  \[\dom(\smem')\subseteq \Nat \wedge \codom(\smem')\subseteq \Val \wedge (\forall v_i,v_j\in \dom(\smem').i\neq j\implies v_i\neq v_j)\]
   which clearly follows from $\sint(\spc'')=\true$.
   
From \hyp{4}, \hyp{5} and \hyp{6} one has $\sinv(\sst')$, and the result follows.

\item[(Q is a predicate assertion)] $Q = \Predd{\vec{\lexp}_{ins}}{\vec{\lexp}_{outs}}$
\begin{mathpar}
\inferrule
{\sps' \defeq \Predd{\ssubst(\vec{\lexp}_{ins})}{\subst(\vec{\lexp}_{outs})}\cup \sps \\\\
\spc' \defeq \spc\wedge \ssubst(\vec{\lexp}_{ins})\subseteq \Val \wedge \ssubst(\vec{\lexp}_{outs})\subseteq \Val \\
\sat(\spc')}
{\produce(\Predd{\vec{\lexp}_{ins}}{\vec{\lexp}_{outs}}, \ssubst,\sst)\rightsquigarrow \sst[\sstupdate{pred}{\sps'}, \sstupdate{pc}{\spc'}]}  
\end{mathpar}

Let $\sst = (\ssto, \smem, \sps, \spc)$ and $\sst' = (\ssto, \smem, \sps', \spc')$. 

Notice that
\begin{description}
\item[\hyp{1}] $\svs{\ssto} \cup \svs{\smem} \cup \svs{\sps} \subseteq \svs{\spc'}$

From the rule, $\svs{\sps'} = \svs{\sps} \cup \svs{\ssubst(\vec{\lexp}_{ins})} \cup \svs{\ssubst(\vec{\lexp}_{outs})} \subseteq \svs{\spc'}$. In addition, the $\sinv(\sst, \ssubst)$ hypothesis implies that $\svs{\ssubst}\cup\svs{\smem}\cup \svs{\ssto} \subseteq \svs{spc}$. Results hence follows from $\svs{\spc}\subseteq\svs{\spc'}$.

\item[\hyp{2}] $\sat(\spc')$

Follows trivially from the rule

\item[\hyp{3}] $\spc' \models (\sinvc(\ssto) \land \sinvc(\smem))$

Follows trivially from the fact that $\spc'$ is stronger than $\spc$ and $\spc \models (\sinvc(\ssto) \land \sinvc(\smem))$ from the hypothesis $\sinv(\sst)$.

From \hyp{1-3}, we immediately have the result.
\end{description}

\item[(Inductive Step)] $Q=Q_1\lstar \ldots \lstar Q_n$, where each $Q_i$ is a simple assertion.
   
 To simplify the reasoning, consider the case $n=2$. The general case is similar. 
 
     In this case, the following rule from Fig.~\ref{fig:produce}  was applied

\begin{mathpar}
\inferrule{\produce( Q_1, \ssubst, \sst)\rightsquigarrow \sst'\\\\
\produce(Q_2, \ssubst, \sst')\rightsquigarrow \sst''}{
\produce(Q_1\lstar Q_2, \ssubst, \sst)\rightsquigarrow \sst''}
\end{mathpar}

From \(\produce( Q_1, \ssubst,\sst)\rightsquigarrow \sst'\) and $\sinv(\sst,\ssubst)$ we obtain from inductive hypothesis $\sinv(\sst')$. Then, by applying the inductive hypothesis again, we also get that $\sinv(\sst', ssubst)$
\end{description}
\end{proof}

\begin{theorem}[Property~\ref{prop:soundness}]Soundness for \produce. 
\[
\begin{array}{l}
\produce(Q, \ssubst, (\ssto, \smem, \sps, \spc)) \rightsquigarrow(\ssto,\smem', \sps', \spc')  \implies \\
\qquad \exists \smem_q, \sps_q.~ \smem'=\smem_q\uplus \smem \wedge \sps' = \sps \cup \sps_q \land (\forall\vint, \st.~ \sint, \st \models (\ssto, \smem_q, \sps_q, \spc') \implies \sint(\ssubst), \st \models Q)
\end{array}
\]
\end{theorem}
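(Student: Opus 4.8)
The plan is to proceed by structural induction on the assertion $Q$, mirroring the structure of the \produce rules in Fig.~\ref{fig:produce} and the way the analogous proof for \mac (Property~\ref{prop:soundness} for consume) was carried out. Throughout, I will assume the initial well-formedness hypotheses (A1)~$\sinv((\ssto,\smem,\sps,\spc),\ssubst)$ and (A3) that $\ssubst$ covers $P$; note also that I may freely use Lemma~\ref{lem:esem_comm}, which lets me commute $\sint$ with $\ssubst$ on pure assertions. For each case I must exhibit the witness pair $(\smem_q,\sps_q)$ produced by the rule, verify the claimed decompositions $\smem' = \smem_q \uplus \smem$ and $\sps' = \sps \cup \sps_q$, and then show that any $\sint,\st$ modelling $(\ssto,\smem_q,\sps_q,\spc')$ also has $\sint(\ssubst),\st \models Q$.

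\textbf{Base cases.} For $Q$ pure, the rule leaves the heap and predicates untouched, so I take $\smem_q = \emptyset$ and $\sps_q = \emptyset$; then $\st$ modelling $(\ssto,\emptyset,\emptyset,\spc')$ forces an empty heap and $\sint(\spc') = \true$, and since $\spc' = \spc \wedge \ssubst(Q)$ I get $\sint(\ssubst(Q)) = \true$, hence by Lemma~\ref{lem:esem_comm} $\esem{Q}{\sint(\ssubst)} = \true$, giving $\sint(\ssubst),(\sto,\emptyset) \models Q$. For the cell assertion $Q = \lexp_a \mapsto \lexp_v$, the rule calls $\prodcell$, which sets $\smem' = \{\ssubst(\lexp_a)\mapsto \ssubst(\lexp_v)\}\uplus \smem$ and strengthens the path condition to $\spc''$ recording $\ssubst(\lexp_a)\in\Nat$, $\ssubst(\lexp_v)\in\Val$, and $\ssubst(\lexp_a)\notin\dom(\smem)$; I take $\smem_q = \{\ssubst(\lexp_a)\mapsto\ssubst(\lexp_v)\}$, $\sps_q = \emptyset$, and for any $\sint,\st$ modelling $(\ssto,\smem_q,\emptyset,\spc'')$ the interpretation of the singleton symbolic heap gives exactly $\{\esem{\lexp_a}{\sint(\ssubst)} \mapsto \esem{\lexp_v}{\sint(\ssubst)}\}$, which is the satisfaction condition for $\lexp_a\mapsto\lexp_v$. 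The freed-cell case $Q = \lexp_a\mapsto\cfreed$ is identical with $\cfreed$ in place of $\ssubst(\lexp_v)$. For the predicate assertion $Q = \Predd{\vec{\lexp}_{ins}}{\vec{\lexp}_{outs}}$, the rule adds $\Predd{\ssubst(\vec{\lexp}_{ins})}{\ssubst(\vec{\lexp}_{outs})}$ to $\sps$, so I take $\smem_q = \emptyset$ and $\sps_q = \{\Predd{\ssubst(\vec{\lexp}_{ins})}{\ssubst(\vec{\lexp}_{outs})}\}$; unfolding the definition of $\smodels$ for a symbolic predicate and then the satisfaction relation $\models$ for a predicate assertion, both reduce to satisfaction of the predicate body under the substitution $[\predin\mapsto\sint(\ssubst(\vec{\lexp}_{ins})), \predout\mapsto\sint(\ssubst(\vec{\lexp}_{outs}))]$, and these match by Lemma~\ref{lem:esem_comm}.

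\textbf{Inductive cases.} For $Q = Q_1 \lstar Q_2$, the rule produces $Q_1$ into $\sst$ giving $\sst'$, then $Q_2$ into $\sst'$ giving $\sst''$. By the induction hypothesis on $Q_1$ I obtain $\smem' = \smem_{q_1}\uplus\smem$, $\sps' = \sps\cup\sps_{q_1}$, and the satisfaction property for $Q_1$; applying the induction hypothesis again on $Q_2$ to the state $\sst'$ (whose well-formedness follows from Property~\ref{prop:wf} for \produce, already proved) gives $\smem'' = \smem_{q_2}\uplus\smem'$ and $\sps'' = \sps'\cup\sps_{q_2}$. Setting $\smem_q = \smem_{q_1}\uplus\smem_{q_2}$ and $\sps_q = \sps_{q_1}\cup\sps_{q_2}$ and using that the path condition only strengthens (Property~\ref{prop:path_strength}), I split a model of $(\ssto,\smem_q,\sps_q,\spc'')$ into models of the two parts and recombine them into a model of $Q_1\lstar Q_2$. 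The disjunction cases $Q = Q_1\vee Q_2$ are immediate: the rule picks one disjunct $Q_i$, the induction hypothesis gives the decomposition and $\sint(\ssubst),\st\models Q_i$, which implies $\sint(\ssubst),\st\models Q_1\vee Q_2$. The main obstacle I anticipate is the careful bookkeeping in the star case: ensuring that the witness heaps $\smem_{q_1},\smem_{q_2}$ are genuinely disjoint (which must be extracted from the side-conditions of $\prodcell$ that the produced addresses lie outside the current domain) and that the satisfaction property for $Q_1$, originally stated against the weaker path condition $\sst'.\fieldsst{pc}$, still holds once we pass to the stronger $\spc''$ — this follows from monotonicity, but needs to be spelled out. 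None of the individual steps is deep; the proof is a routine but somewhat tedious case analysis tracking the symbolic-state bookkeeping.
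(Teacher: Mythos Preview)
Your proposal is correct and follows essentially the same approach as the paper: structural induction on $Q$, with the same choice of witnesses $(\smem_q,\sps_q)$ in each base case (empty for pure, the produced singleton for cells, the produced predicate singleton for predicate assertions) and the same use of the definitions of $\smodels$ and $\models$ to close each case. If anything, you are more careful than the paper in the inductive cases: the paper dispatches the $\lstar$ case with a one-line ``follows by induction hypothesis'' and does not mention disjunction at all, whereas you correctly spell out the need for path-condition monotonicity (Property~\ref{prop:path_strength}) when lifting the IH for $Q_1$ from $\spc'$ to $\spc''$, and you handle the $\vee$ rules explicitly.
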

\begin{proof}
  First, we recall the assumption of well-formedness  \(\sinv((\ssto, \smem, \sps, \spc), \ssubst)\) and the fact that  Prop.~\ref{prop:wf} implies 
    \(\sinv(\ssto, \smem', \sps', \spc')\), and consequently $\sat(\spc')$.
   
    The proof follows by induction on the structure of \(Q\).


  \begin{description}
    \item[(\(Q\) is pure)] \hfill
    
%
    In this case,
     \(\funcAlg{produce}{Q,\ssubst, (\ssto, \smem, \sps, \spc)}\rightsquigarrow (\ssto, \smem, \sps, \spc')\)
    where \(\spc'=\spc\wedge \ssubst(Q)\) and \(\sat(\spc')\).  Also,    \({\tt produce}\) does not change the symbolic heap or predicates, i.e., \(\smem=\smem' \land \sps = \sps'\) and  we can take 
    \(\smem_q=\sps_q=\emptyset.\) 

   Let $\sint, \st$ such that $\sint, \st \models (\ssto, \emptyset, \emptyset, \spc')$. then $\st = (\sint(\ssto), \emptyset)$ and $\sint(\spc') = \true$. Therefore $\sint(\ssubst(Q)) = \true$, by definition of $\spc$, meaning $\sint(\ssubst), (\sint(\ssto), \emptyset) \models Q$, proving what the property.
    \item[(\(Q\) is spatial)] \(Q= \lexp_1\mapsto \lexp_2\) (The case \(Q= \lexp \mapsto \cfreed\) is analogous)

        In this case, the following rule from Fig.~\ref{fig:produce}  was applied

\begin{mathpar}
\inferrule
{\spc' \defeq \spc \wedge \ssubst(\lexp_a)\in \Nat\wedge \ssubst(\lexp_v)\in \Val \\\\
\prodcell(\ssubst(\lexp_a), \ssubst(\lexp_v),\sst[\sstupdate{pc}{\spc'}]) =\sst'}
{\produce(\lexp_a\mapsto \lexp_v, \ssubst, \sst) \rightsquigarrow \sst'}  
\end{mathpar}

 Let $(\ssto, \smem, \sps, \spc) = \sst$ and  $(\ssto, \smem', \sps, \spc'') = \sst'$
 
   The definition of  \(\prodcell \)
gives \hyp{1} $\smem'=\{\ssubst(\lexp_1)\mapsto \ssubst(\lexp_2))\}\uplus \smem$; \hyp{2} $\spc''=\spc'\wedge (\ssubst(\lexp_1)\notin \dom(\ssubst))$; and \hyp{3} \(\sat(\spc'')\). Take $\smem_q=\{\ssubst(\lexp_1)\mapsto \ssubst(\lexp_2))\}$.

   Let   $\sint$  such that $\sint(\spc')=\true$. Then, using well-formedness, $\sint, (\sint(\ssto), \sint(\smem_q)) \models (\ssto,\smem_q,\emptyset, \spc'')$ is a well-defined concrete state. By definition,

   \begin{align*}
    \sint(\ssubst), (\sint(\ssto), \sint(\smem_q)) \models E_1\mapsto E_2 \iff  \sint(\smem_q) =\{\esem{\lexp_1}{\sint(\ssubst)}\mapsto \esem{\lexp_2}{\sint(\ssubst)}\}
    \end{align*}
     which trivially holds.
     
\item[(Q is a predicate assertion)] $Q=\Predd{\vec{\lexp}_{ins}}{\vec{\lexp}_{outs}}$.

\begin{mathpar}
\inferrule
{\sps' \defeq \Predd{\ssubst(\vec{\lexp}_{ins})}{\subst(\vec{\lexp}_{outs})}\cup \sps \\\\
\spc' \defeq \spc\wedge \ssubst(\vec{\lexp}_{ins})\subseteq \Val \wedge \ssubst(\vec{\lexp}_{outs})\subseteq \Val \\
\sat(\spc')}
{\produce(\Predd{\vec{\lexp}_{ins}}{\vec{\lexp}_{outs}}, \ssubst,\sst)\rightsquigarrow \sst[\sstupdate{pred}{\sps'}, \sstupdate{pc}{\spc'}]}
\end{mathpar}

Let $(\ssto, \smem, \sps, \spc) = \sst$ and $(\ssto, \smem, \sps', \spc') = \sst'$.

We naturally pick $\sps_q = \{ \Predd{\ssubst(\vec{\lexp}_{ins})}{\ssubst(\vec{\lexp}_{outs}) } \}$ and $\smem_q = \emptyset$.
     
Let $\sint, \st$ such that $\sint, \st \models (\ssto, \emptyset, \sps_q, \spc')$, then by definition of satisfaction, we have that $\st = (\sto, \hp)$ where $\sto = \sint(\ssto)$ and $\sint, (\emptyset, \hp) \models \sps_q$. Therefore $\sint, \st \models \Predd{\ssubst(\vec{\lexp}_{ins})}{\ssubst(\vec{\lexp}_{outs})}$, giving $\sint(\ssubst), \st \models \Predd{\vec{\lexp}_{ins}}{\vec{\lexp}_{outs}}$, proving the required property.
     
\item[(Inductive Step)] \(Q=Q_1 \lstar Q_2\)
    
    This case follows by induction hypothesis.
  \end{description}
%
%
\end{proof}


\begin{theorem}[Property~\ref{prop:prod-compl}]Completeness of \produce.
\[
\begin{array}{l}
\sint, (\sto, \hp) \models \sst \land \sint(\ssubst), (\emptyset, \hp_p) \models P \land \dish{\hp}{\hp_p}\\
\qquad \implies \exists \sst_p. \produce(P, \ssubst, \sst) \rightsquigarrow \sst \cdot \sst_p \land \sint, (\emptyset, \hp_p) \models \sst_p)
\end{array}
\]
\end{theorem}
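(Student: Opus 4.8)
The plan is to prove Property~\ref{prop:prod-compl} by structural induction on the assertion $P$, mirroring the case split used in the soundness proof (Property~\ref{prop:soundness}) for \produce. For each case we must exhibit a concrete witness $\sst_p$ and show both that the \produce rule for that assertion shape fires, producing $\sst \cdot \sst_p$, and that $\sint, (\emptyset, \hp_p) \smodels \sst_p$. The initial hypotheses give us $\sint, (\sto, \hp) \smodels \sst$, $\sint(\ssubst), (\emptyset, \hp_p) \models P$, and $\dish{\hp}{\hp_p}$; crucially $\ssubst$ covers $P$ (the standing assumption for \produce in Fig.~\ref{fig:interface}), so $\sint(\ssubst)$ is defined on all of $\lv{P}$, and well-formedness of $\sst$ and $\ssubst$ gives satisfiability of the relevant path conditions.

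First I would handle the base cases. For $P$ pure: since $\sint(\ssubst), (\emptyset, \hp_p) \models P$ forces $\hp_p = \emptyset$ and $\esem{P}{\sint(\ssubst)} = \true$, Lemma~\ref{lem:esem_comm} gives $\sint(\ssubst(P)) = \true$, hence $\sint(\spc \wedge \ssubst(P)) = \true$, so $\sat(\spc \wedge \ssubst(P))$ and the pure \produce rule fires with $\spc' = \spc \wedge \ssubst(P)$; we take $\sst_p = (\emptyset, \emptyset, \emptyset, \spc')$ and check $\sint, (\emptyset, \emptyset) \smodels \sst_p$ directly. For $P = \lexp_a \mapsto \lexp_v$ (and analogously $\lexp_a \mapsto \cfreed$): from $\sint(\ssubst), (\emptyset, \hp_p) \models P$ we get $\hp_p = \{ \esem{\lexp_a}{\sint(\ssubst)} \mapsto \esem{\lexp_v}{\sint(\ssubst)} \}$, and from $\dish{\hp}{\hp_p}$ together with $\sint(\smem) = \hp$ (part of $\sint, (\sto,\hp) \smodels \sst$) we learn $\sint(\ssubst(\lexp_a)) \notin \dom(\sint(\smem))$, hence $\sat(\spc' \wedge \ssubst(\lexp_a) \notin \dom(\smem))$ where $\spc' \defeq \spc \wedge \ssubst(\lexp_a) \in \Nat \wedge \ssubst(\lexp_v) \in \Val$; this makes \prodcell fire, giving $\smem' = \smem \uplus \{ \ssubst(\lexp_a) \mapsto \ssubst(\lexp_v) \}$ and a strengthened path condition $\spc''$. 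We take $\sst_p = (\emptyset, \{ \ssubst(\lexp_a) \mapsto \ssubst(\lexp_v) \}, \emptyset, \spc'')$ and verify it models $(\emptyset, \hp_p)$. The predicate case $P = \Predd{\vec{\lexp}_{ins}}{\vec{\lexp}_{outs}}$ is similar but easier, as \produce merely adds a symbolic predicate and some pure constraints, so $\sst_p = (\emptyset, \emptyset, \{ \Predd{\ssubst(\vec{\lexp}_{ins})}{\ssubst(\vec{\lexp}_{outs})} \}, \spc')$; satisfiability of $\spc'$ follows from $\sint(\ssubst)$ mapping the ins/outs into $\Val$.

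For the inductive step $P = P_1 \lstar P_2$: from $\sint(\ssubst), (\emptyset, \hp_p) \models P_1 \lstar P_2$ split $\hp_p = \hp_{p_1} \uplus \hp_{p_2}$ with $\sint(\ssubst), (\emptyset, \hp_{p_i}) \models P_i$. Apply the IH to $P_1$ with the frame $\hp$ (noting $\dish{\hp}{\hp_{p_1}}$) to get $\sst'_1 = \sst \cdot \sst_{p_1}$ with $\sint, (\emptyset, \hp_{p_1}) \smodels \sst_{p_1}$; then $\sint, (\sto, \hp \uplus \hp_{p_1}) \smodels \sst'_1$, and since $\dish{\hp \uplus \hp_{p_1}}{\hp_{p_2}}$, apply the IH again to $P_2$ from state $\sst'_1$ to get $\sst''_1 = \sst'_1 \cdot \sst_{p_2}$. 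Setting $\sst_p = \sst_{p_1} \cdot \sst_{p_2}$, associativity and commutativity of state composition give $\sst''_1 = \sst \cdot \sst_p$, and $\sint, (\emptyset, \hp_{p_1} \uplus \hp_{p_2}) \smodels \sst_p$ follows from the definition of $\smodels$. Disjunction $P_1 \lor P_2$ is handled by using whichever disjunct $P_i$ satisfies $\sint(\ssubst), (\emptyset, \hp_p) \models P_i$ and invoking the IH plus the corresponding \produce disjunction rule.

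The main obstacle I expect is bookkeeping around well-formedness and path-condition strengthening in the composition step: one must check that at each point the intermediate symbolic state produced by a recursive \produce call is still well-formed (so that $\sint$ can interpret it and Lemma~\ref{lem:state-decompose}-style reasoning applies), that the path condition of the final composed state $\sst \cdot \sst_p$ is exactly the accumulated one (matching how $\cdot$ conjoins path conditions and adds $\sinvc$), and that $\sint$ validates it — which reduces to the separate facts $\sint(\hp) = \hp$, $\sint(\hp_p) = \hp_p$, and disjointness, all already in hand. The cell case's subtlety, namely deriving $\sint(\ssubst(\lexp_a)) \notin \dom(\sint(\smem))$ from $\dish{\hp}{\hp_p}$, is where the disjointness hypothesis does its real work and is the one place requiring care rather than routine unfolding.
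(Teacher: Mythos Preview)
Your proposal is correct and follows essentially the same approach as the paper's proof: structural induction on $P$ with the same case split (pure, cell, predicate, $\lstar$, $\lor$) and the same choice of witness $\sst_p$ in each case. If anything, you are more careful than the paper---your treatment of the $\lstar$ inductive step and of how $\dish{\hp}{\hp_p}$ yields the $\ssubst(\lexp_a)\notin\dom(\smem)$ side condition for \prodcell spells out details the paper dispatches in a single sentence (though note that with predicates present one has $\dom(\sint(\smem))\subseteq\dom(\hp)$ rather than $\sint(\smem)=\hp$, which still suffices for your argument).
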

\begin{proof}
The proof is by induction on the structure of \(P\).
\begin{description}
\item[(\(P\) is pure)] \hfill

Let  $\sst$, $\sint$ and $\ssubst$ be such that 
\begin{enumerate}
\item $\sst=(\ssto,\smem,\sps,\spc)$;
\item $\sint, (\sto, \hp) \models \sst$ (and hence, $ \sint(\spc)=\true$);
\item  $\dish{\hp}{\hp_p}$;
\item $\sint(\ssubst),(\emptyset, \hp_p)\models P$.
\end{enumerate}

By definition of satisfaction,  
\begin{align*}
\sint(\ssubst), (\emptyset ,h_p)\models P &\iff \esem{P}{\sint(\ssubst)}{}=\true \wedge \emp\\
& \iff \sint(\ssubst(P))=\true \wedge \hp_p=\emptyset.
\end{align*}

 Thus,  $\sint(\spc\wedge \ssubst(P))=\true$ and we  can apply the rule for produce from Fig.~\ref{fig:produce}:

    \begin{prooftree}
    \AxiomC{$\spc'=(\spc\wedge \ssubst(P)) \quad \sat(\spc')$}
    \UnaryInfC{$\funcAlg{produce}{P,\ssubst, \sst}\rightsquigarrow \sst[\sstupdate{pc}{\spc'}]$}
  \end{prooftree}
Notice that  $(\ssto,\smem, \sps,\spc')=(\ssto,\smem,\sps, \spc) \cdot (\emptyset, \emptyset, \emptyset, \ssubst(P))$ and the result follows trivially.
\item[(\(P\) is a spatial assertion)] \(P=\lexp_1\mapsto \lexp_2\)

Let  $\sst$, $\sint$ and $\ssubst$ be such that 
\begin{enumerate}
\item $\sst=(\ssto,\smem,\sps,\spc)$;
\item  $ \dish{\hp}{\hp_p}$;
\item $\sint(\ssubst),(\emptyset, \hp_p)\models \lexp_1\mapsto \lexp_2$.
\end{enumerate}

By definition of satisfaction,  
\begin{align*}
\sint(\ssubst), (\emptyset,h_p)\models \lexp_1\mapsto \lexp_2 &\iff \hp_p=\{\esem{\lexp_1}{\sint(\ssubst)}{} \mapsto \esem{\lexp_2}{\sint(\ssubst)}{}\}
\end{align*}
By definition of heaps, it follows that  $ \sint(\ssubst(\lexp_1))\in \Nat $ and $\sint(\ssubst(\lexp_2))\in \Val$.

Thus,  $\sint(\spc\wedge \ssubst(\lexp_1)\in \Nat \wedge \ssubst(\lexp_2))\in \Val)=\true$ and we  can apply the rule for produce fromFig.~\ref{fig:produce}:
\begin{mathpar}
\inferrule
{\spc' \defeq \spc \wedge \ssubst(\lexp_a)\in \Nat\wedge \ssubst(\lexp_v)\in \Val \\\\
\prodcell(\ssubst(\lexp_a), \ssubst(\lexp_v),\sst[\sstupdate{pc}{\spc'}]) =\sst'}
{\produce(\lexp_a\mapsto \lexp_v, \ssubst, \sst) \rightsquigarrow \sst'}  
\end{mathpar}
prodcell goes through successfuly by the hypothesis that $ \dish{\hp}{\hp_p} $. Take $\smem_p=\{\ssubst(\lexp_1)\mapsto \ssubst(\lexp_2)\}$, $\sst' = (\ssto, \smem', \spc'')$.

Notice  that \((\ssto,\smem',\spc'')=(\ssto, \smem,\spc)\cdot (\emptyset,\smem_p,\spc'')\), where \(\spc'=\ssubst(\lexp_1)\in \Nat\wedge \ssubst(\lexp_2)\in \Val\wedge  \ssubst (\lexp_1) \notin \dom(\smem)\). Also, 
\(\sint, (\sto, \hp_p) \models ((\emptyset,\smem_p,\emptyset, \spc''))\) and the result follows.

\item[(P is a predicate assertion)] $P = \Predd{\vec{\lexp}_{ins}}{\vec{\lexp}_{outs}}$

This case is the most trivial and does not even require the disjointness of heaps in the pre-condition.
$\sat(\spc \land \ssubst(\vec{\lexp}_{ins}) \subseteq \vals \land \ssubst(\vec{\lexp}_{ins}) \subseteq \vals)$ is obtained from the $\sint(\ssubst), (\emptyset, \hp_p) \models P$, and $\sst_p$ is trivially constructed as $\{ \Predd{\ssubst(\vec{\lexp}_{ins})}{\ssubst(\vec{\lexp}_{outs})} \}$ which satisfies the property by definition of the satisfaction.

\item[(Inductive Step)] \(P=P_1\lstar P_2\) and $P=P_1\vee P_2$

Follow immediately from the induction hypothesis.
\end{description}
\end{proof}

\newpage

\section{Bi-abduction: Semantics and Correctness}%
\label{app:bi-abduction-correct}

This appendix provides a soundness proof of our bi-abduction work.

\subsection{Semantics}%
\label{app:biabd-semantics}

For readability, we work with modified rules rather than the catch-fix-continue approach introduced in the main text. That is, bi-abduction can also be introduced by modifying the existing missing-resource rules of the engine (including the rules for UX \mac) -- this requires more invasive surgery but is easier to follow for the proofs.

To illustrate, e.g., consider the rule \textsc{Lookup-Err-Missing} from the symbolic semantics of the core engine. The bi-abductive version of the rule now instead of faulting extends the heap with the missing cell with a fresh symbolic variable as its value and records this cell in the anti-heap:
\[
\inferrule[\textsc{Biab-Lookup}]
 {\cseeval{\pexp}{\ssto}{\spc}{\sval_1}{\spc'} \quad
 \spc'' \defeq \spc' \land \sval_1 \in \nats \land \sval_1 \not\in \domain(\smem) \quad
 \sat(\spc'') \quad \sval_2 \text{ fresh}}
 {\biab{(\ssto, \smem, \spc)}{\pderef{\pvar{x}}{\pexp}}{ ((\ssto[\pvar{x} \mapsto \sval_2], \smem[\sval_1 \mapsto \sval_2], \spc'' \land \sval_2 \in \vals), \{ \sval_1 \mapsto \sval_2 \}) }{\fsctx}{\osucc}}
\]

Moreover, to simplify the presentation, we consider an engine without support for predicates; fixing missing-resource errors arising from missing predicates are analogous to fixing missing-resource errors arising from missing heap resources.

All in all, in this setting, the bi-abduction semantics uses a new judgement of the following form:
\[
	\biab{\sst}{\scmd}{(\sst', \smem)}{\fsctx}{\outcome}
\]
The judgement is similar to the judgements presented in App.~\ref{app:symbolic}, but additionally returns an anti-heap.

The rules for $\macMP$ and $\conscell$ are also extended to work appropriately in $\biexmode$ mode, with the following judgements:
\[
\begin{array}{c}
	\sst.\conscell(\biexmode, \sval) = ((\sval', \sst'), \smem)\\
	\macMP(\biexmode, (P_i, outs_i)|_{i=1}^n, \ssubst, \sst) \rightsquigarrow ((\ssubst', \sst'), \smem)
\end{array}
\]

Below, we provide the list of rules that significantly change compared to the semantics presented in App.~\ref{app:symbolic}. Rules we do not provide are generally the same as previously, but now return an additional empty anti-heap $\emptyset$.
{\footnotesize
\begin{mathpar}
\inferrule[\textsc{Biab-Lookup}]
 {\cseeval{\pexp}{\ssto}{\spc}{\sval_1}{\spc'} \quad\quad \sval_2 \text{ fresh}\\\\
 \spc'' \defeq \spc' \land \sval_1 \in \nats \land \sval_1 \not\in \domain(\smem) \quad
 \sat(\spc'') }
 {\biab{(\ssto, \smem, \spc)}{\pderef{\pvar{x}}{\pexp}}{ ((\ssto[\pvar{x} \mapsto \sval_2], \smem[\sval_1 \mapsto \sval_2], \spc'' \land \sval_2 \in \vals), \{ \sval_1 \mapsto \sval_2 \}) }{\fsctx}{\osucc}}
\and
\inferrule[\textsc{Biab-Mutate}]
 {\cseeval{\pexp_1}{\ssto}{\spc}{\sval_1}{\spc'} \quad
 \cseeval{\pexp_2}{\ssto}{\spc'}{\sval_2}{\spc''} \quad \sval_3 \text{ fresh} \\\\
 \spc''' \defeq \spc'' \land \sval_1 \in \nats \land \sval_1 \not\in \domain(\smem)\wedge \sval_3 \in \Val\quad
 \sat(\spc''') }
 {\biab{(\ssto, \smem, \spc)}{\pmutate{\pexp_1}{\pexp_2}}{ ((\ssto, \smem[\sval_1 \mapsto \sval_2], \spc'''), \{ \sval_1 \mapsto \sval_3 \}) }{\fsctx}{\osucc}}
\and
\inferrule[\textsc{Biab-Seq}]
{\biab{\sst}{\scmd_1}{(\sst', \smem_1)}{\fsctx}{\osucc} \quad \biab{\sst'}{\scmd_2}{(\sst'', \smem_2)}{\fsctx}{\result} \quad \svs{\domain(\smem_2)} \cap (\svs{\sst'} \setminus \svs{\sst}) = \emptyset}
{\biab{\sst}{\scmd_1; \scmd_2}{(\sst'', \smem_1 \uplus \smem_2)}{\fsctx}{\result}}
\end{mathpar}
\begin{mathpar}
\inferrule[\textsc{Biab-ConsCell}]
{\spc' \defeq \spc \wedge \sval \in \nats \land \sval \notin \dom(\smem)  \quad \sat(\spc')\\\\
\sval' \text{  fresh}\quad
\spc'' = \spc' \land \sval' \in \vals}
{(\ssto, \smem, \spc).\conscell(\biexmode, \sval) = ((\sym{v}', (\ssto, \smem, \spc'')), \{ \sval \mapsto \sval' \})}
\and
\inferrule[\textsc{Biab-MacMP-Seq}]{\macMP(\biexmode,(P,outs), \ssubst, \sst)\rightsquigarrow ((\ssubst', \sst'), \smem_1)\\\\
\macMP(\biexmode, ps,\ssubst', \sst')\rightsquigarrow ((\ssubst'', \sst''), \smem_2)}
{\macMP(\biexmode, (P,outs)::ps,\ssubst, \sst)\rightsquigarrow ((\ssubst'', \sst''), \smem_1 \uplus \smem_2)}
\end{mathpar}
}
\subsection{Soundness Proofs}

To prove Thm.~\ref{thm:bi-abduction}, we need the following lemma:

\begin{lemma}[Symbolic variables of codomains of anti-heaps]\label{thm:antiheap-codom-sv}
If $\biab{\sst}{\scmd}{(\sst', \smem)}{\fsctx}{\outcome}$, then $\svs{\codomain(\smem)} \subseteq (\svs{\sst'} \setminus \svs{\sst})$; which moreover implies $\svs{\codomain(\smem)} \cap \svs{\sst} = \emptyset$.
\end{lemma}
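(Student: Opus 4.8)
The plan is to prove Lemma~\ref{thm:antiheap-codom-sv} by structural induction on the derivation of $\biab{\sst}{\scmd}{(\sst', \smem)}{\fsctx}{\outcome}$, following the rules of the bi-abductive semantics given in App.~\ref{app:biabd-semantics}. The statement has two parts: the inclusion $\svs{\codomain(\smem)} \subseteq (\svs{\sst'} \setminus \svs{\sst})$, and the consequence $\svs{\codomain(\smem)} \cap \svs{\sst} = \emptyset$; the second follows immediately from the first since any set disjoint from $\svs{\sst}$ after removing $\svs{\sst}$ is trivially disjoint from $\svs{\sst}$. So the work is entirely in the inclusion. Note also that it is worth observing (and can be recorded as an auxiliary fact, either stated or folded into the induction) that symbolic execution only ever grows the state's symbolic variables, i.e.\ $\svs{\sst} \subseteq \svs{\sst'}$, which is needed to make sense of the ``fresh-variable'' bookkeeping.

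First I would handle the base cases: the rules which return a non-empty anti-heap are \textsc{Biab-Lookup}, \textsc{Biab-Mutate}, \textsc{Biab-ConsCell} (and the analogous missing-resource rule for function calls / \macMP leaves, if treated separately). In each of these a single fresh symbolic variable $\sval'$ (or $\sval_2$, $\sval_3$) is introduced, the anti-heap is the singleton $\{\sval_1 \mapsto \sval'\}$, and $\sval'$ is added to $\sst'$ (either in the store, the heap, or the path condition) while not occurring in $\sst$ by freshness. Hence $\svs{\codomain(\smem)} = \svs{\sval'} \subseteq \svs{\sst'}$ and $\sval' \notin \svs{\sst}$, giving the inclusion. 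All other leaf rules return $\smem = \emptyset$, for which $\svs{\codomain(\emptyset)} = \emptyset \subseteq (\svs{\sst'} \setminus \svs{\sst})$ vacuously.

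For the inductive cases, the rules that combine sub-derivations and accumulate anti-heaps are the sequencing rules \textsc{Biab-Seq} (and \textsc{Biab-Seq-Err}), \textsc{Biab-Miss} from the main text (or equivalently the propagation rules here), and \textsc{Biab-MacMP-Seq}. The single-sub-derivation rules (e.g.\ the conditional rules, which recurse into one branch) are easy: the anti-heap is passed through unchanged, $\svs{\sst} \subseteq \svs{\sst'_{\text{intermediate}}} \subseteq \svs{\sst'}$, and the IH gives the result directly. The genuinely interesting case is \textsc{Biab-Seq}: we have $\biab{\sst}{\scmd_1}{(\sst', \smem_1)}{\fsctx}{\osucc}$ and $\biab{\sst'}{\scmd_2}{(\sst'', \smem_2)}{\fsctx}{\result}$ with the side condition $\svs{\domain(\smem_2)} \cap (\svs{\sst'} \setminus \svs{\sst}) = \emptyset$, and the composite anti-heap is $\smem_1 \uplus \smem_2$. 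The IH on the first premise gives $\svs{\codomain(\smem_1)} \subseteq \svs{\sst'} \setminus \svs{\sst}$, and since $\svs{\sst'} \subseteq \svs{\sst''}$ this is contained in $\svs{\sst''} \setminus \svs{\sst}$. The IH on the second premise gives $\svs{\codomain(\smem_2)} \subseteq \svs{\sst''} \setminus \svs{\sst'}$, which is contained in $\svs{\sst''} \setminus \svs{\sst}$ because $\svs{\sst} \subseteq \svs{\sst'}$. Taking the union, $\svs{\codomain(\smem_1 \uplus \smem_2)} = \svs{\codomain(\smem_1)} \cup \svs{\codomain(\smem_2)} \subseteq \svs{\sst''} \setminus \svs{\sst}$, as required. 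The side condition on $\domain(\smem_2)$ is not needed for the \emph{codomain} claim itself, but it (together with Lemma~\ref{thm:antiheap-codom-sv} applied to the sub-derivations) is what guarantees the disjoint union $\smem_1 \uplus \smem_2$ is well-defined, so it is worth remarking on.

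I expect the main obstacle to be nailing down the auxiliary monotonicity fact $\svs{\sst} \subseteq \svs{\sst'}$ for the bi-abductive engine uniformly across all rules — in particular making sure that the fresh variables introduced in the anti-heap rules really do land in $\svs{\sst'}$ (they go into the path condition via conjuncts like $\sval' \in \vals$, which is why well-formedness of symbolic states, forcing $\svs{\smem} \subseteq \svs{\spc}$, is relevant) — and, relatedly, checking the \macMP/\conscell branch carefully, since there the fresh variable is threaded through \conscell into the heap and then the returned symbolic value is used to extend the substitution $\ssubst'$, so one must confirm no fresh anti-heap variable escapes tracking. Once the monotonicity lemma and the freshness side conditions are pinned down, every case is a short set-inclusion computation of the kind sketched above.
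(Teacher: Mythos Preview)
The paper states this lemma without proof; it is introduced purely as a helper for the proof of Thm.~\ref{thm:bi-abduction} and no argument is given. Your proposal by structural induction on the bi-abductive derivation, using the freshness of the codomain variables in the leaf rules and the monotonicity $\svs{\sst}\subseteq\svs{\sst'}$ to glue the sequencing cases, is the natural route and is correct.
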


%
We also need the following frame result:
\begin{theorem}\label{thm:ux-symbolic-frame}
\[
\begin{array}{l}
\csesemtransabstractm{\sst}{\cmd}{\sst'}{\fsctx}{m}{\result} \land ( \svs{\smem_f} \cap (\svs{\sst'} \setminus \svs\sst) = \emptyset) \implies \csesemtransabstractm{\sst \cdot \smem_f}{\cmd}{\sst' \cdot \smem_f}{\fsctx}{m}{\result}
\end{array}
\]
\end{theorem}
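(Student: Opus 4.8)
The plan is to prove Theorem~\ref{thm:ux-symbolic-frame} by induction on the derivation of $\csesemtransabstractm{\sst}{\cmd}{\sst'}{\fsctx}{m}{\result}$, and then derive Theorem~\ref{thm:bi-abduction} by a structural induction on $\scmd$ using this frame result together with Lemma~\ref{thm:antiheap-codom-sv}. For the frame theorem, the crucial observation is that the side-condition $\svs{\smem_f} \cap (\svs{\sst'} \setminus \svs\sst) = \emptyset$ says the frame heap mentions no symbolic variables that are \emph{freshly introduced} during the execution of $\cmd$; this is exactly what is needed so that the freshness conditions in rules such as \textsc{Nondet}, \textsc{New-Nonzero}, \textsc{Fcall}, and the \mac/\produce-based rules can still be satisfied after composing with $\smem_f$. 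For the non-recursive base rules (assignment, skip, error, lookup, mutate, free, allocation), one checks that $\sst \cdot \smem_f$ is well-formed (using that the address sets stay disjoint, since any cell accessed is either already in $\sst.\fieldsst{heap}$ or freshly allocated outside $\dom(\smem_f)$) and that the symbolic-expression evaluation and satisfiability checks are unaffected by the extra disjoint resource. The sequencing rules follow by straightforward use of the induction hypothesis, carefully tracking that the set of fresh variables accumulates.

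The interesting case of the frame theorem is the function-call rule \textsc{Fcall} (and, if predicates were included, \textsc{Fold}/\textsc{Unfold}), which is implemented via \mac and \produce. Here one needs the frame properties of these operations: I would first establish, from Properties~\ref{prop:wf}--\ref{prop:soundness} of the axiomatic interface, that $\mac(m, P, \ssubst, \sst)\rightsquigarrow(\ssubst', \sst_f)$ implies $\mac(m, P, \ssubst, \sst \cdot \smem_f)\rightsquigarrow(\ssubst', \sst_f \cdot \smem_f)$ whenever $\smem_f$ is disjoint from the resource and introduces no variables that clash with the fresh ones, and similarly for \produce via Property~\ref{prop:soundness} and composition-associativity. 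The soundness property (Property~\ref{prop:soundness}) gives $\smemb = \smemb_f \cup \smemb_P$ with $\sst_P$ describing $P$; since the consumed part $\smemb_P$ is determined by matching against $\sst$ only, adding $\smem_f$ to the input simply gets threaded through to the output frame. The path-strengthening (Property~\ref{prop:path_strength}) guarantees $\spc' \Rightarrow \spc$ so satisfiability of the enriched path condition conjoined with $\smem_f$'s well-formedness constraint is preserved.

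With the frame theorem in hand, Theorem~\ref{thm:bi-abduction} is proved by induction on $\scmd$, following the three bi-abduction rules \textsc{Biab}, \textsc{Biab-Miss}, \textsc{Biab-Seq-Err}, \textsc{Biab-Seq}. For \textsc{Biab}, there is nothing to prove beyond noting the anti-frame is empty and $(\smem, \true) = (\emptyset, \emptyset, \true)$ is the unit of composition. For \textsc{Biab-Miss}, the hypothesis $\biab{\sst \cdot (\smemb, \spc)}{\scmd}{(\sst'', \smemb')}{\fsctx}{\outcome'}$ gives, by the induction hypothesis, $\csesemtransabstractm{\sst \cdot (\smemb, \true) \cdot (\smemb', \true)}{\scmd}{\sst''}{\fsctx}{\macUX}{\outcome'}$; reassociating composition yields the goal $\csesemtransabstractm{\sst \cdot (\smemb \cup \smemb', \true)}{\scmd}{\sst''}{\fsctx}{\macUX}{\outcome'}$, since $\sst \cdot (\smemb, \spc)$ executing to a missing-resource-state that is then fixed is, after adding the fix resource to the input, just the ordinary (non-bi-abductive) UX execution — this uses that \texttt{fix} adds exactly the resource the missing-error rule reports, so that the modified missing-resource rule with the resource present coincides with an ordinary successful (or error) step. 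For \textsc{Biab-Seq}, I invoke the induction hypothesis on both sub-derivations to get $\csesemtransabstractm{\sst \cdot (\smemb_1, \true)}{\scmd_1}{\sst'}{\fsctx}{\macUX}{\osucc}$ and $\csesemtransabstractm{\sst' \cdot (\smemb_2, \true)}{\scmd_2}{\sst''}{\fsctx}{\macUX}{\result}$, then apply Theorem~\ref{thm:ux-symbolic-frame} to the first with frame $\smemb_2$ (the side-condition $\svs{\domain(\smemb_2)} \cap (\svs{\sst'} \setminus \svs{\sst}) = \emptyset$ from \textsc{Biab-Seq} plus Lemma~\ref{thm:antiheap-codom-sv} for the codomain of $\smemb_2$ supply exactly the freshness hypothesis the frame theorem needs), obtaining an execution from $\sst \cdot (\smemb_1 \cup \smemb_2, \true)$ that threads $\smemb_2$ through; chaining with the second execution via \textsc{Seq} gives the result.

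The main obstacle I anticipate is the bookkeeping of \emph{fresh symbolic variables} in the \textsc{Biab-Seq} case: one must ensure the variables allocated by $\scmd_1$ (which populate $\sst'$) do not collide with those in the anti-frame $\smemb_2$ discovered during $\scmd_2$, and conversely that framing $\smemb_2$ onto the $\scmd_1$ execution does not violate any freshness side-condition inside $\scmd_1$. This is precisely what the side-condition of \textsc{Biab-Seq} together with Lemmas~\ref{thm:antiheap-codom-sv} are designed to handle, but verifying that the frame theorem's hypothesis $\svs{\smem_f} \cap (\svs{\sst'}\setminus\svs{\sst}) = \emptyset$ follows from them (one needs it for \emph{both} the domain and codomain of $\smemb_2$) requires care, and the frame theorem itself must be stated and proved with this variable-freshness condition threaded through every inductive case — particularly the allocation and function-call cases where the condition is actually consumed rather than merely propagated.
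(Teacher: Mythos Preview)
The paper does not actually give a proof of Theorem~\ref{thm:ux-symbolic-frame}; it is stated as an auxiliary result needed for the proof of Theorem~\ref{thm:bi-abduction}, but its own proof is omitted. Your plan to argue by induction on the derivation of the symbolic step is the natural one, and for the core rules (assignment, lookup, mutate, free, allocation, sequencing, conditionals) it goes through essentially as you describe.

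There is, however, a genuine gap in your treatment of the function-call case. You propose to derive the frame property of \mac\ (that $\mac(m,P,\ssubst,\sst)\rightsquigarrow(\ssubst',\sst_f)$ implies $\mac(m,P,\ssubst,\sst\cdot\smem_f)\rightsquigarrow(\ssubst',\sst_f\cdot\smem_f)$) from Properties~\ref{prop:wf}--\ref{prop:soundness} of the axiomatic interface. This does not work: those properties constrain what \mac\ returns on a \emph{given} input, but say nothing about how \mac\ on a \emph{larger} input relates to \mac\ on a smaller one. Property~\ref{prop:soundness} tells you the resource of $\sst$ decomposes as $\smemb_P\cup\smemb_f$, but gives no control over what \mac\ would do when handed $\sst\cdot\smem_f$---an interface-conforming implementation could in principle match a cell inside $\smem_f$, or branch differently. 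The frame property for \mac\ and \produce\ is a separate fact about the \emph{specific implementation} (the \macMP\ and \produce\ rules of \S\ref{sec:function-implementation}), and must be established by its own induction over those rules: e.g., one checks that \conscell\ on the larger heap still has the branch selecting the same cell, with the added disjointness constraints discharged by the well-formedness conjunct in state composition. The paper's appendix setting also explicitly drops predicates, so \conspred\ need not be handled.

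On the portion of your proposal that concerns Theorem~\ref{thm:bi-abduction}: your plan via the main-text rules \textsc{Biab}, \textsc{Biab-Miss}, \textsc{Biab-Seq} is essentially what the paper does, except that the paper's appendix switches to an equivalent ``modified-rules'' presentation (e.g., \textsc{Biab-Lookup}, \textsc{Biab-Mutate}, \textsc{Biab-Seq}) in which the fix is inlined into each rule rather than layered through a separate \texttt{fix} function; this makes the induction marginally more direct but is not a substantive difference, and your handling of the \textsc{Biab-Seq} case (framing $\smemb_2$ onto the $\scmd_1$ execution via Theorem~\ref{thm:ux-symbolic-frame} and Lemma~\ref{thm:antiheap-codom-sv}) matches the paper exactly.
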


We now present the main proof:

\begin{proof}[Proof of Thm.~\ref{thm:bi-abduction}]
The proof is by induction. We show some  illustrative case for fixes.
\begin{description}
\item[(rule {\sc Biab-Lookup})]  
From the application of the rule, we have: 

\begin{mathpar}
\inferrule[\textsc{Biab-Lookup}]
 {\cseeval{\pexp}{\ssto}{\spc}{\sval_1}{\spc'} \quad
 \spc'' = \spc' \land \sval_1 \in \nats \land \sval_1 \not\in \domain(\smem) \quad
 \sat(\spc'') \quad \sval_2 \text{ fresh}}
 {\biab{(\ssto, \smem, \spc)}{\pderef{\pvar{x}}{\pexp}}{ ((\ssto[\pvar{x} \mapsto \sval_2], \smem[\sval_1 \mapsto \sval_2], \spc'' \land \sval_2 \in \vals), \{ \sval_1 \mapsto \sval_2 \}) }{\fsctx}{\osucc}}
\end{mathpar}
Define  $\spc'' = \sval_1 \in \nats \land \sval_1 \not\in \domain(\smem) \land \spc'$, $\sst' = (\ssto[\pvar{x} \mapsto \sval_2], \smem[\sval_1 \mapsto \sval_2], \spc'')$ and  $\smem = \{ \sval_1 \mapsto \sval_2 \}$.

Notice that now {\sc Lookup} with $\sst\cdot \smem=(\ssto, \smem[\sval_1 \mapsto \sval_2], \spc'')$:
\begin{mathpar}
\inferrule[\textsc{Lookup}]
 {\cseeval{\pexp}{\ssto}{\spc}{\sval_1}{\spc'} \quad \smem(\sval_1) = \sval_2 \quad \sat(\spc'')} 
 {\csesemtrans{\ssto, \smem[\sval_1 \mapsto \sval_2], \spc''}{\pderef{\pvar{x}}{\pexp}}{ \ssto[\pvar{x} \mapsto \sval_2], \smem[\sval_1 \mapsto \sval_2], \spc''}{\fsctx}{\osucc}}
\end{mathpar}
and the result follows.


\item[(rule {\sc Biab-Mutate})] 
From the application of the rule we have:
\begin{mathpar}
\inferrule[\textsc{Biab-Mutate}]
 {\cseeval{\pexp_1}{\ssto}{\spc}{\sval_1}{\spc'} \quad
 \cseeval{\pexp_2}{\ssto}{\spc'}{\sval_2}{\spc''} \quad \sval_3 \text{ fresh} \\\\
 \spc''' = \spc'' \land \sval_1 \in \nats \land \sval_1 \not\in \domain(\smem) \wedge \sval_3\in\Val\qquad
 \sat(\spc''') }
 {\biab{(\ssto, \smem, \spc)}{\pmutate{\pexp_1}{\pexp_2}}{ ((\ssto, \smem[\sval_1 \mapsto \sval_2], \spc'''), \{ \sval_1 \mapsto \sval_3 \}) }{\fsctx}{\osucc}}
 \end{mathpar}
 
 Now we can apply the rule {\sc Mutate} with $\sst\cdot [\sval_1\mapsto \sval_3]=(\ssto,\smem[\sval_1\mapsto \sval_3],\spc^* )$ where \(\spc^*=\spc\wedge \sval_1 \in \nats \land \sval_1 \not\in \domain(\smem) \wedge \sval_3\in\Val\)
 \begin{mathpar}
 \inferrule[\textsc{Mutate}]
 {\cseeval{\pexp_1}{\ssto}{\spc^*}{\sval_1}{\spc'}\quad \cseeval{\pexp_2}{\ssto}{\spc'}{\sval_2}{\spc''}  \quad \smem(\sval_1) = \sval_3 \quad  \\\\ \sat(\spc'') \qquad \smem' = \smem[\sval_1 \mapsto \sval_2]  }
 {\csesemtrans{\ssto, \smem [\sval_1\mapsto \sval_3],  \spc^*}{\pmutate{\pexp_1}{\pexp_2}}{\ssto, \smem',\spc''}{\fsctx}{\osucc}}
 \end{mathpar}
 and $\spc'''=\spc''\wedge  \sval_1 \in \nats \land \sval_1 \not\in \domain(\smem) \wedge \sval_3\in\Val$, which gives the result.

\item[(rule {\sc Biab-Seq})] 
From the application of the rule, we have:

\begin{mathpar}
\inferrule[\textsc{Biab-Seq}]
{\biab{\sst}{\scmd_1}{(\sst', \smem_1)}{\fsctx}{\osucc} \qquad \biab{\sst'}{\scmd_2}{(\sst'', \smem_2)}{\fsctx}{\result}\\\\
 \svs{\domain(\smem_2)} \cap (\svs{\sst'} \setminus \svs{\sst}) = \emptyset}
{\biab{\sst}{\scmd_1; \scmd_2}{(\sst'', \smem_1 \uplus \smem_2)}{\fsctx}{\result}}
\end{mathpar}

By the induction hypothesis,  $\csesemtransabstractm{\sst \cdot \smem_1}{\scmd_1}{\sst'}{\fsctx}{\macUX}{\oxok}$ and $\csesemtransabstractm{\sst' \cdot \smem_2}{\scmd_2}{\sst''}{\fsctx}{\macUX}{\outcome}$ where $\smem_1$ and $\smem_2$ are the antiframes for $\scmd_1$ and $\scmd_2$, respectively. 

From $\csesemtransabstractm{\sst \cdot \smem_1}{\scmd_1}{\sst'}{\fsctx}{\macUX}{\oxok}$, Lem.~\ref{thm:antiheap-codom-sv}, and Thm.~\ref{thm:ux-symbolic-frame}, we get $\csesemtransabstractm{\sst \cdot \smem_1 \cdot \smem_2}{\scmd_1}{\sst' \cdot \smem_2}{\fsctx}{\macUX}{\oxok}$. 

Now we can apply rule {\sc Seq}:

\begin{mathpar}
\inferrule[\textsc{Seq}]
{\csesemtransabstractm{\sst \cdot \smem_1\cdot \smem_2}{\scmd_1}{\sst'\cdot \smem_2}{\fsctx}{\macUX}{\oxok} \\\\ \csesemtransabstractm{\sst' \cdot \smem_2}{\scmd_2}{\sst''}{\fsctx}{\macUX}{\outcome}}
{\csesemtransabstractm{\sst \cdot \smem_1\cdot \smem_2}{\scmd_1}{\sst''}{\fsctx}{\macUX}{\result}}
\end{mathpar}
and the result follows.
\end{description}
\end{proof}

\newpage

\section{Applications: Correctness}\label{app:applications-proofs}

The following lemma is convenient for our proofs, it, like Lem.~\ref{lem:valid-with-frame}, follows easily from the fact that the concrete language is compositional, this time with respect to stores:
\begin{lemma}[Specification Validity with Restrictions]\label{lem:frameless-validity}
\[
\begin{array}{l}
\fictx \models \quadruple{P}{\cmd}{\Qok}{\Qerr} \iff \\
\quad (\forall \subst, \sto, \hp, \outcome, \sto', \hp'.~\dom(\subst)\subseteq\lv{P} \land \pv{\sto} \subseteq \pv{\scmd} \land \\
\qquad \subst, (\sto, \hp) \models P \land (\sto, \hp), \cmd \baction_{\fictx} \outcome: (\sto', \hp') \implies \\
\qquad (\outcome \neq \oxm \land \subst, (\sto', \hp') \models Q_\outcome)) \\
\fictx \models \islquadruple{P}{\cmd}{\Qok}{\Qerr} \iff \\
\quad (\forall \subst, \sto', \hp', \outcome.~\dom(\subst)\subseteq\lv{Q_\outcome} \land \pv{\sto'} \subseteq \pv{\scmd} \land \subst, (\sto', \hp') \models Q_\outcome \implies \\
    \quad \quad (\exsts{\sto,\hp}~\subst, (\sto, \hp) \models P~\land~(\sto, \hp), \cmd \baction_\fictx \outcome: (\sto', \hp')))
\end{array}
\]
\end{lemma}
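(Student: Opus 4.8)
The statement is an equivalence, and the forward direction ($\Rightarrow$, from the original definitions of validity to the restricted characterisations) is immediate: the restricted right-hand sides merely quantify over a subclass of the configurations ranged over by the definitions of $\fictx \models \quadruple{P}{\cmd}{\Qok}{\Qerr}$ and $\fictx \models \islquadruple{P}{\cmd}{\Qok}{\Qerr}$ from the ``OX and UX Specifications'' paragraph --- namely those $\subst$ with $\dom(\subst) \subseteq \lv{P}$ (resp.\ $\lv{Q_\outcome}$) and those stores whose program-variable domain is contained in $\pv{\cmd}$ --- so any universally quantified property that survives over all configurations certainly survives over the restricted ones. The whole content is therefore in the reverse direction ($\Leftarrow$), where the restricted hypotheses must be lifted to arbitrary $\subst$ and arbitrary stores.

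Two auxiliary facts drive the lifting. First, a standard coincidence property of the satisfaction relation: $\subst, (\sto, \hp) \models P$ depends on $\subst$ only through $\subst|_{\lv{P}}$ and on $\sto$ only through $\sto|_{\pv{P}}$, which follows by structural induction on $P$ from the corresponding property of logical expression evaluation $\esem{\lexp}{\subst, \sto}$. Second, the store analogue of the frame property (Lem.~\ref{lem:ux-ox-frame}) --- the precise sense in which the concrete language is ``compositional with respect to stores'': since a command reads only program variables in $\pv{\cmd}$ and writes only variables in $\updt(\cmd) \subseteq \pv{\cmd}$, for any $\sto_f$ with $\pv{\sto_f} \cap \pv{\cmd} = \emptyset$ one has $(\sto \uplus \sto_f, \hp), \cmd \baction_\fictx \outcome : (\sto' \uplus \sto_f, \hp')$ iff $(\sto, \hp), \cmd \baction_\fictx \outcome : (\sto', \hp')$. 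Unlike the heap frame property, this store version holds cleanly in both directions and for every outcome, because stores and heaps are orthogonal and allocation concerns only the heap.

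For the OX reverse direction the plan is to take arbitrary $\subst, \sto, \hp, \outcome, \sto', \hp'$ with $\subst, (\sto, \hp) \models P$ and $(\sto, \hp), \cmd \baction_\fictx \outcome : (\sto', \hp')$, and set $\subst_0 = \subst|_{\lv{P}}$, $\sto_c = \sto|_{\pv{\cmd}}$, and $\sto_f$ the remaining store fragment. Coincidence gives $\subst_0, (\sto_c, \hp) \models P$ (using $\pv{P} \subseteq \pv{\cmd}$, valid for the specifications at hand), and store locality splits the execution as $(\sto_c, \hp), \cmd \baction_\fictx \outcome : (\sto'_c, \hp')$ with $\sto' = \sto'_c \uplus \sto_f$. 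The restricted hypothesis then yields $\outcome \neq \oxm$ and $\subst_0, (\sto'_c, \hp') \models Q_\outcome$, and re-inflating by coincidence (using $\lv{Q_\outcome} \subseteq \lv{P}$ and $\pv{Q_\outcome} \subseteq \pv{\cmd}$, both guaranteed by well-formedness of external specifications) gives $\subst, (\sto', \hp') \models Q_\outcome$, as required. The UX reverse direction is dual: starting from a full final configuration $\subst, (\sto', \hp') \models Q_\outcome$, I would restrict to $\subst_0 = \subst|_{\lv{Q_\outcome}}$ and $\sto'|_{\pv{\cmd}}$, apply the restricted UX hypothesis to obtain a restricted initial configuration, then frame the untouched store fragment back on via store locality and re-extend the interpretation by coincidence, producing the full initial configuration that satisfies $P$ and executes to $(\sto', \hp')$.

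The main obstacle is the bookkeeping around which variables and logical variables are ``relevant'' on each side, together with the asymmetry between OX (where the interpretation and store are pinned down by the precondition, so one propagates forward) and UX (where they are pinned down by the postcondition, so one reasons backward). In particular, the two coincidence re-inflation steps rely on the inclusions $\lv{Q_\outcome} \subseteq \lv{P}$ (OX) and its dual, and on $\pv{P}, \pv{Q_\outcome} \subseteq \pv{\cmd}$; I would make these well-formedness conditions explicit and check that they hold for every specification to which the lemma is applied, namely function specifications and the $\cmd; \pvar{ret} := \pexp$ forms arising in \texttt{verifyOX} and \texttt{synthesise}. Establishing the bidirectional store-frame property itself, though routine by induction on the big-step derivation, is the one place where the compositional structure of the language is genuinely used.
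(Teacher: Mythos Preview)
Your proposal is correct and matches the paper's own justification, which is literally a single sentence pointing to store compositionality; you supply the natural details the paper omits, and you are in fact more careful than the paper in isolating the second ingredient (coincidence of satisfaction for the logical-interpretation restriction) and in flagging the well-formedness side conditions $\lv{Q_\outcome} \subseteq \lv{P}$, $\lv{P} \subseteq \lv{Q_\outcome}$, and $\pv{P}, \pv{Q_\outcome} \subseteq \pv{\cmd}$ on which the re-inflation steps depend.
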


\subsection{Correctness of \texttt{verifyOX}}

\begin{theorem}[Correctness of \texttt{verifyOX}]
Given a function $\pfunction{f}{\vec{\pvar{x}}}{\scmd; \preturn{\pexp}}$, a function specification $s = \quadruple{\vec{\pvar{x}} = \vec{x} \lstar P}{f(\vec{\pvar{x}})}{\Qok}{\Qerr}$, and an OX valid environment $\models (\fictx, \fsctx)$, where $f \not\in \domain(\fictx)$: let $\fsctx' = \fsctx[f \mapsto s]$, if $\texttt{verifyOX}(\fsctx, f, s)$, then $\models (\fictx', \fsctx')$, where $\fictx' = \fictx[f \mapsto (\vec{\pvar{x}}, \scmd, \pexp)]$.
\end{theorem}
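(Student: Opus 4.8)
The goal is to show that, if the verification procedure \texttt{verifyOX} succeeds, then the extended environment $(\fictx', \fsctx')$ is OX-valid, i.e.\ every external specification in $\fsctx'$ has a corresponding internal specification valid under $\fictx'$. Since $\fsctx$ was already valid under $\fictx$ and the only new entry is $s$ for $f$, and since $f \notin \domain(\fictx)$ means adding the implementation of $f$ to $\fictx$ cannot change the behaviour of any previously-specified function (no function calls are resolved differently), the only thing to prove is that $\fictx' \models \genquadruple{\vec{\pvar x} = \vec x \lstar P}{\cmd; \preturn{\pexp}}{\Qok}{\Qerr}$ holds for a suitable internal form — more precisely, that there is an internal specification in $\fext_{\fictx', f}^{\macOX}(s)$ valid under $\fictx'$. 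I would first unfold the definition of OX internalisation and reduce the claim to: the internal quadruple $\quadruple{\vec{\pvar x} = \vec x \lstar P \lstar \vec{\pvar z} \doteq \nil}{\cmd; \passign{\pvar{ret}}{\pexp}}{\Qok^*}{\Qerr^*}$ is $\fictx'$-valid, where $\Qok^*, \Qerr^*$ are obtained by replacing $\pvar{ret}$ with the return expression and massaging program variables, and that it implies the external $\Qok, \Qerr$ via forward consequence. In SL, as the excerpt notes, explicit internalisation is not strictly necessary because program-variable information in internal post-conditions can be forgotten, so the real content is plain validity of the Hoare-style triple for the function body.

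The core of the argument is then to run the verification algorithm forward and connect it, step by step, to the compositional soundness theorem (Thm.~\ref{thm:ux-ox-sound-func}). Concretely: take any $\subst, \sto, \hp$ with $\subst, (\sto, \hp) \models \vec{\pvar x} = \vec x \lstar P \lstar \vec{\pvar z} \doteq \nil$ and any concrete execution $(\sto,\hp), \cmd; \passign{\pvar{ret}}{\pexp} \baction_{\fictx'} \outcome : (\sto', \hp')$; I must show $\outcome \ne \oxm$ and $\subst, (\sto', \hp') \models Q^*_\outcome$. Step~1 of the algorithm builds $\ssubst, \ssto, \sst = (\ssto, \emptyset, \true)$; I would observe that $\subst$ and the initial concrete store are related to $\sst$ via the interpretation $\vint$ defined by $\vint(\svar x) = \subst(x)$, so $\vint, (\sto, \emptyset) \smodels \sst$. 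Step~2 produces $P$: by completeness of \produce (Prop.~\ref{prop:prod-compl}, using $\hp_p = \hp$ since $\hp$ is exactly the heap satisfying $P$ and it is disjoint from $\emptyset$), there is $\sst'$ with $\produce(P, \ssubst, \sst) \rightsquigarrow \sst'$ and $\vint, (\sto, \hp) \smodels \sst'$. Step~3 runs the collecting semantics $\csesemtranscollectm{\sst'}{\cmd; \passign{\pvar{ret}}{\pexp}}{\hat\Sigma'}{\fsctx}{\macOX}$; since \texttt{verifyOX} succeeded, step~3(a) guarantees $\oxabort \notin \hat\Sigma'$ and $\omiss$ does not appear for any trace that would be consumed, so Thm.~\ref{thm:ux-ox-sound-func} (OX direction) applies to the concrete execution of the body, giving $\outcome \ne \oxm$, some $(\outcome, \sst'') \in \hat\Sigma'$, and $\vint' \ge \vint$ with $\vint', (\sto'', \hp'') \smodels \sst''$ for the corresponding concrete state — which after the $\passign{\pvar{ret}}{\pexp}$ is precisely $(\sto', \hp')$ up to the adjustments the algorithm makes in step~3(b) (renaming $\pvar{ret}$ to a fresh logical variable $r$). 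Finally, steps~3(c)--(d): the algorithm consumes $Q'' = Q'\{r/\pvar{ret}\}$ from $\sst''$ and checks the residual heap is empty; by soundness of \mac (Prop.~\ref{prop:soundness}), successful consumption with empty frame means $\sst''$'s entire symbolic resource satisfies $Q''$, hence $\vint'(\ssubst''), (\sto', \hp') \models Q''$, and reinstating $\subst$ (since $\vint'(\ssubst'')$ restricted to $\lv{Q'}$ agrees with $\subst$ extended by $r \mapsto \esem{\pexp}{\subst, \sto'}$) together with $\pvar{ret} \doteq \pexp$ gives $\subst, (\sto', \hp') \models \exists \vec y.~Q' = \Qok$; the error case with $\Qerr$ is the same with $Q' = \Qerr$ and no return-value manipulation.

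I would then assemble these into the internalisation witness: the internal post-conditions $\Qok^*, \Qerr^*$ are read off from the consumed assertions $Q''$ (these are exactly the external $Q'$ with $\pvar{ret}$/$\pvar{err}$ handled), the side conditions $\models(\Qok^* \Rightarrow \pexp \in \Val \lstar \AssTrue)$ follow because consumption of a cell/equality assertion mentioning the return value only succeeds when that value is a proper value, and the two consequence implications hold by construction (in SL we may even take equality). Wrapping up, validity of $(\fictx', \fsctx')$ then follows from: (i) the new internal spec is $\fictx'$-valid by the above, (ii) all old internal specs remain valid because $\fictx \subseteq \fictx'$ only adds $f$, which is not called by any previously-specified function body in a way that changes behaviour — here I would invoke that $f \notin \domain(\fictx)$ and that specification contexts only refer to function \emph{names}, so $\fsctx$-validity is monotone under adding a fresh function with a fresh name.

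\textbf{Main obstacle.} The delicate point is the careful bookkeeping of the interpretations and substitutions across the three phases — matching the fresh logical variable $r$ introduced by the algorithm for $\pvar{ret}$ against the existential $\vec y$ in $\Qok$, tracking how $\vint$ is extended to $\vint'$ by the symbolic execution of the body, and ensuring the residual-empty-heap check in step~3(d) precisely translates (via Prop.~\ref{prop:soundness}) into ``the whole output heap satisfies $Q''$'' with no leftover frame. The rest is essentially a mechanical chaining of Prop.~\ref{prop:prod-compl}, Thm.~\ref{thm:ux-ox-sound-func}, and Prop.~\ref{prop:soundness}, plus the observation that a concrete execution using $\fictx'$ can be mirrored by the symbolic collecting semantics using $\fsctx$ because $\models(\fictx, \fsctx)$ and $f$ does not call itself (non-recursive, per the standing assumption in \S\ref{sec:applications}).
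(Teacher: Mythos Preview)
Your proposal is correct and follows essentially the same route as the paper's proof: reduce to the new function $f$, pick an arbitrary model of the internal pre-condition and a concrete execution, build the interpretation $\vint$ from $\subst$, invoke completeness of \produce (Prop.~\ref{prop:prod-compl}) for step~2, invoke compositional OX soundness (Thm.~\ref{thm:ux-ox-sound-func}) for step~3, and then use the consume phase to discharge the post-condition; finally read off the internalisation witness. Your observation that the non-recursive assumption lets you run the concrete execution under $\fictx$ rather than $\fictx'$ when applying Thm.~\ref{thm:ux-ox-sound-func} is exactly the point the paper leaves implicit.

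One step you should make explicit: in your consume phase you appeal only to Prop.~\ref{prop:soundness}, but soundness alone does not let you conclude $\vint'(\ssubst''), (\sto',\hp') \models Q''$. Consumption can branch, and Prop.~\ref{prop:soundness} only tells you that \emph{in any given branch} the removed resource $\sst_P$ models $Q''$ under the strengthened path condition $\spc'''$. To transport your concrete model $(\sto',\hp')$ into some branch with $\vint'(\spc''')=\true$ you need Prop.~\ref{prop:mac-branch-comp} (OX completeness of \mac): since \texttt{verifyOX} succeeded, $\oxabort\notin\mac(\macEX,Q'',\ssubst',\sst'')$, and together with $\vint',(\sto',\hp')\smodels\sst''$ this yields a branch $(\ssubst'',\sst''')$ with $\vint',\st_f\smodels\sst'''$; the empty-heap check then forces $\st_f$ to have empty heap and $\sst_P$ to carry all of $\sst''$'s resource, after which Prop.~\ref{prop:soundness} gives the conclusion. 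The paper does exactly this, invoking Prop.~\ref{prop:mac-branch-comp} before Prop.~\ref{prop:soundness}. With that one extra ingredient inserted, your plan matches the paper's argument.
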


\begin{proof}
    %
    We assume some $\fictx$, $\fsctx$, $\fictx'$, $\fsctx'$, $f$, $\pvvar x$, $\cmdf$, $\pexp$, $t$, $P$, $\Qok$, $\Qerr$ such that 
    \begin{description}
        \item[(A1)] $t= \quadruple{\pvvar x\doteq\vec x\lstar P}{f(\pvvar x)}{\Qok}{\Qerr}$ is an external function specification
        \item[(A2)] $f\notin\dom(\fictx)$
        \item[(A3)] $\models(\fictx,\fsctx)$
        \item[(A4)] $\fictx'=\fictx[f\mapsto (\pvvar x, \cmdf, \pexp)]$
        \item[(A5)] $\fsctx'=\fsctx[f\mapsto t]$
    \end{description}
and we aim to show that $\models(\fictx',\fsctx')$, i.e.
\[
\begin{array}{l}
 \dom(\fsctx')\subseteq\dom(\fictx')~\land \\
 (\forall \fid, \pvvar x, \cmd, \pexp, \sspec.~f(\vec{\pvar{x}})\{\cmd; \preturn{\pexp}\} \in \fictx' \land \sspec \in \fsctx'(\fid) \implies \exists \sspec' \in \fext_{\fictx', \fid}(\sspec).~\fictx'\models \cmd:\sspec')
\end{array}
\]
(A3-5) imply $\dom(\fsctx')\subseteq\dom(\fictx')$ and for all functions that are already in $\fictx$, the statement follows straightforwardly from (A3). The only non-trivial case is the newly added function $f$ and its specification $t$. (A1) and the definition of Internalisation imply that any internal specification of $t$ has the precondition $\pvvar x\doteq\vec x\lstar P \lstar \pvvar z\doteq\nil$, where $\pvar x$ are the arguments of $f$ as defined in (A4) and $\pvvar z$ are the remaining program variables of $\cmd$.

We therefore assume some $\subst, \sto, \hp, \sto'$ and $\hp'$ such that 
    \begin{description}
        \item[(H1)] $\subst, \sto, \hp \models \pvvar x\doteq\vec x \lstar P \lstar \pvvar z\doteq\nil$, where 
        \begin{description}
            \item[(H1a)] $\pv{P}= \emptyset$
            \item[(H1b)] $\lv{P}=\vec x$
            \item[(H1c)] $\pvvar z= \pv{\cmd}\backslash \{\pvvar x\}$
            \item[(H1d)] $\pv{s}\subseteq\pv{\cmd}$
            \item[(H1e)] $\dom(\subst)\subseteq\pv{\pvvar x\doteq\vec x\lstar P}$. 
        \end{description}
        \item[(H2)] $(\sto,\hp), \cmd \Downarrow_{\fictx'} \outcome: (\sto', \hp')$
    \end{description}
To avoid some clutter later, we also define \textbf{(H3a)}~$\st=(\sto,\hp)$ and \textbf{(H3b)}~$\st'=(\sto',\hp')$.
%
%

The execution of $\texttt{verifyOX}(\fsctx, f, s)$ implies the following definitions and properties:
\begin{description}
    \item[(V1)] Definine $\ssubst$: $\ssubst(\vec x)=\vec{\hat x}$ with $\vec{\hat x}$ being distinct symbolic variables, else undefined.
    \item[(V2)] Define $\ssto$: $\ssto(\pvvar x)=\vec{\hat x}$, $\ssto(\pvvar z)=\nil$, else undefined.
    \item[(V3)] Define $\smem=\emptyset$.
    \item[(V4)] Define $\spc=\true$. 
    \item[(V5)] Define $\sst=(\ssto, \smem, \spc)$
    %
    %
\end{description}
%
%
Next, we define an interpretation $\sint$: \textbf{(H3a)} $\sint(\vec{\hat x})=\sto(\pvvar x)$. 
(V2), (H1) and (H1d) imply \textbf{(H3b)} $\sint(\ssto)=\sto$. (H1), (H1e) and (H3) furthermore imply \textbf{(H3c)} $\sint(\ssubst)=\subst$.  
(H1) and (H1a) implies \textbf{(H4)} $\subst, (\emptyset,\hp) \models P$.  (H3c) implies \textbf{(H5)} $\sint(\ssubst), (\emptyset,\hp) \models P$. \\
\textit{Property~\ref{prop:prod-compl}} of \texttt{produce} implies the more specific statement
\begin{multline*}
    \frall{\sint, \ssubst, \ssto, \hp, P} \sint(\ssubst), (\emptyset, \hp) \models P \implies \\
    \exsts{\smem',\spc'} \produce(\ssubst, (\ssto, \smem, \spc))\rightsquigarrow (\ssubst'(\ssto, \smem', \spc')) ~\land~ \sint(\smem')=\hp ~\land~ \sint(\spc')=\true
\end{multline*}
This implies with (H5) (and because \texttt{verifyOX} explores all paths):
\begin{description}
    \item[(H6)] $\sint(\smem')=\hp$
    \item[(H7)] $\sint(\spc')=\true$   
\end{description}
Defining $\sst'=(\ssto,\hp',\spc')$, (H3b), (H6) and (H7) yield \textbf{(H8)}~$\sint(\sst')=\st$. \\
As \texttt{verifyOX} covers all paths and it's successful execution implies that step 3 never yields the outcomes $\omiss$ or {\it abort}, Thm.~\ref{thm:ux-ox-sound-func} together with (A3), (H2), (H3b) and (H8) yields the existence of some $\sint'\geq\sint$ and $\sst''$ such that 
\[
\textbf{(H9)}~ \sst', \cmd, \pvar{ret}:=\pexp \Downarrow_{\fsctx}^{\macOX} \outcome: \sst'' ~\land~ \sint'(\sst''\backslash\{\pvar{ret}\})=(\sto',\hp')  
\]
(H9) implies \textbf{(H10)}~$\sint'(\sst''\mathtt{.pc})=\true$. \\\\
\textbf{Case: $\outcome=\osucc$.} \\
Let $r$ be a fresh logical variables and define \textbf{(V9)} $\ssubst'=\ssubst[r\mapsto \sst''\mathtt{.sto}(\pvar{ret})]$. 
Step 5 implies that \textbf{(V14)} $abort\notin\mac(\macEX, Q_\osucc[r/\pvar{ret}], \ssubst', \sst'')$
Step 3 and \textit{Property~\ref{prop:mac-branch-comp}} imply the existence of some $\ssubst''\geq\ssubst'$ and $\sst'''$ such that
\begin{description}
    \item[(V10)] $\mac(\macEX, Q_\osucc[r/\pvar{ret}], \ssubst', \sst'') \rightsquigarrow (\ssubst'', \sst''')$
    \item[(V11)] $\sst'''\mathtt{.sto} = \sst''\mathtt{.sto}$
    \item[(V12)] $\sst'''\mathtt{.hp} = \emptyset$ 
    \item[(V13)] $\sint'(\sst'''\mathtt{.pc}) = \true$
\end{description}
(V12) implies vacuously that the new heap $\sst'''\mathtt{.hp}$ is covered by $\sint'$, and without loss of generality, we assume that the new substitution $\ssubst''$ is covered as well, i.e. \textbf{(H11)}~$ \sst'''\mathtt{.hp}, \ssubst'' \subseteq \sint'$. \\
(V10), (V12), (V13) and \textit{Property~\ref{prop:soundness}} imply
\[
\textbf{(H12a)}~\sint'(\ssubst''), \sint'(\sst''\mathtt{.sto}, \sst''\mathtt{.hp}) \models Q_\osucc[r/\pvar{ret}]
\]
(H12a) and (H9) imply
imply
\[
\textbf{(H12b)}~\sint'(\ssubst''), (\sto'[\pvar{ret}\mapsto\esem{\pexp}{\sto'}], \hp') \models Q_\osucc[r/\pvar{ret}]
\]
and as $Q_\osucc[r/\pvar{ret}]$ does not hold the program variable $\pvar{ret}$, (H12b) yields 
\[
\textbf{(H12c)}~\sint'(\ssubst''), (\sto', \hp') \models Q_\osucc[r/\pvar{ret}]
\]
(V9), (V10), $\ssubst''\geq\ssubst'$ and \textit{Property~\ref{prop:wf}} imply that
\[
    ~\textbf{(H13a)}~
(\sint'(\ssubst''))(r)
= (\sint'(\ssubst'))(r)
= (\sint'(\ssubst'(r)))
= \sint'(\sst''\mathtt{.sto}(\pvar{ret}))
\]
(H9), the fact that $\pvar{ret}\notin\pv{\pexp}$ imply 
\[
    \textbf{(H13b)}~ \sst''\mathtt{.sto}(\pvar{ret})=\esem{\pexp}{\sint'(\sst''\mathtt{.sto}\backslash\{\pvar{ret}\})}=\esem{\pexp}{\sto'}
\]
(H12c) and (H13b) imply \textbf{(H14)}~$\sint'(\ssubst''), (\sto',\hp')\models Q_\osucc[r/\pvar{ret}] \lstar r\doteq \pexp$.\\
(H3c), (V9), $\ssubst''\geq\ssubst'$ and $\sint'\geq\sint$ implies that  $\sint'(\ssubst'')\rest{\dom(\subst)} = \subst$ and (A1) and (H12c) therefore yields
\[
\subst[r \mapsto \esem{\pexp}{\sto'}], (\sto',\hp')\models \Qok[r/\pvar{ret}]
\]
which implies 
\[
\textbf{(H15)}~\subst, (\sto',\hp')\models \Qok[\pexp/\pvar{ret}].
\]
We define $\Qok'\eqdef Q_\osucc[\pexp/\pvar{ret}]$ and show that it is an OX-internalisation of $\Qok$:
Let $\pvvar p=\pv{\Qok'}$, $\vec p$ some fresh logical variables (w.r.t. $\Qok'$), we show for arbitrary $\bar\subst, \bar\sto$ and $\bar\hp$: 
\begin{align*}
    &\bar\subst, (\bar\sto, \bar\hp)\models \exsts{\vec p} \Qok'[\vec p / \pvvar p]\lstar \pvar{ret}\doteq\pexp[\vec p / \pvvar p]  \\
    \Leftrightarrow~ &\bar\subst, (\bar\sto, \bar\hp)\models\exsts{\vec p}(\Qok[\pexp/\pvar{ret}])[\vec p / \pvvar p] \lstar \pvar{ret}\doteq \pexp[\vec p / \pvvar p] \\
    \Leftrightarrow~ &\bar\subst, (\bar\sto, \bar\hp)\models\exsts{\vec p}\Qok[(\pexp[\vec p / \pvvar p])/\pvar{ret}]\lstar \pvar{ret}\doteq \pexp[\vec p / \pvvar p] \\
    \Leftrightarrow~ &\exsts{\vec v\subset\Val}~\bar\subst[\vec p\mapsto\vec v], (\bar\sto, \bar\hp)\models \Qok[(\pexp[\vec p / \pvvar p])/\pvar{ret}]\lstar \pvar{ret}\doteq \pexp[\vec p / \pvvar p]\\
    \Leftrightarrow~ &\exsts{\vec v\subset\Val}~\bar\subst[\vec p\mapsto\vec v], (\bar\sto, \bar\hp)\models \Qok\lstar \pvar{ret}\doteq \pexp[\vec p / \pvvar p] \\
    \Rightarrow~ &\exsts{\vec v\subset\Val}~\bar\subst[\vec p\mapsto\vec v], (\bar\sto, \bar\hp)\models \Qok \\
    \Rightarrow~ &\bar\subst, (\bar\sto, \bar\hp)\models \Qok
\end{align*}
where the last implications holds because $\vec p$ are fresh w.r.t $\Qok$ (since they are fresh w.r.t. $\Qok'$).\\

\textbf{Case: $\outcome=\oerr$.} This case works analoguous to the $\oxok$ case. \\

Therefore, 
\[
\quadruple{\vec{\pvar{x}} = \vec{x} \lstar P \lstar \pvvar z \doteq\nil}{}{\Qok'}{\Qerr'} \in \fext_{\fictx', f}^{OX}(t).
\]
This yields the desired $\models(\fictx', \fsctx')$.

\end{proof}

\subsection{Correctness of \synthesise}

The $\toasrt$ function is correct in the following sense:
\begin{theorem}[Correctness of $\toasrt$]\label{thm:toAsrt}
\[
\toasrt(\sst) = P \implies \forall \vint, \sto, \cmem.~
(\vint(\sst) = \st \Rightarrow \sintpe{\vint}(\ssubstid), \st \models P) \land (\sintpe{\vint}(\ssubstid), (\sto, \cmem) \models P \Rightarrow \vint(\sst) = (\sto|_{\sst.\mathsf{sto}}, h))
\]
where: the \emph{identity substitution} $\ssubstid$, turning logical variables into symbolic variables, is injective and defined as $\ssubstid(x) = \hat{x}$, with $\codom(\ssubstid) = \dom(\vint)$; $\sst.\mathsf{sto}$ denotes the store of $\sst$; and $\sto|_{\ssto}$ denotes the store~$\sto$ with domain restricted to the domain of $\ssto$.
\end{theorem}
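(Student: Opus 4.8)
The theorem is essentially a sanity check that $\toasrt$ produces an assertion whose models are exactly the concrete states described by the symbolic state $\sst$ (up to the usual restriction of the store to the program variables in $\sst$). The plan is to proceed by structural analysis of the four components of $\sst = (\ssto, \smemb, \spc)$, where $\smemb = (\smem, \sps)$, relying on the definition of $\toasrt$ given in \S\ref{sec:bug-finding}: $\ssto$ becomes the $\lstar$-conjunction of equalities $\pvar{x} \doteq \ssto(\pvar{x})$; $\smem$ becomes the $\lstar$-conjunction of cell assertions $\sval_1 \mapsto \sval_2$ (or $\sval \mapsto \cfreed$); $\sps$ is lifted predicate-by-predicate to predicate assertions $\Predd{\vec{\sval}_1}{\vec{\sval}_2}$; and $\spc$ is lifted to a pure assertion. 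So $\toasrt(\sst) = P_{\ssto} \lstar P_{\smem} \lstar P_{\sps} \lstar P_{\spc}$, and I would prove the two implications by showing each conjunct matches its component.

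First I would establish the key commutation fact, analogous to Lem.~\ref{lem:esem_comm}: for a symbolic value $\sval$, $\esem{\ssubstid(\sval)}{\subst, \sto} = \vint(\sval)$ whenever $\vint$ and $\subst = \sintpe{\vint}(\ssubstid)$ agree appropriately, i.e. lifting $\sval$ to a logical expression via $\ssubstid$ and then evaluating under the induced logical interpretation reproduces $\vint(\sval)$. This is a routine induction on the structure of symbolic values (using the definition of $\ssubstid$ as the injective map $\hat{x} \mapsto x$ read backwards, and the fact that $\dom(\vint) = \codom(\ssubstid)$). With this in hand, the forward direction ($\vint(\sst) = \st \Rightarrow \sintpe{\vint}(\ssubstid), \st \models P$) follows by unfolding the definition of symbolic-state interpretation $\vint(\sst)$ from App.~\ref{appssec:furtherdef} / \S\ref{appssec:extsstates}: $\vint(\ssto)$ is the store $\{\pvar{x} \mapsto \vint(\ssto(\pvar{x}))\}$, which directly satisfies $P_{\ssto}$; $\vint(\smem)$ is the heap $\{\vint(\sval) \mapsto \vint(\smem(\sval))\}$, which via the commutation fact satisfies $P_{\smem}$; the predicate multiset $\sps$ interpreted relationally (the $\smodels$ relation, since predicates may be satisfied by multiple heaps --- here I must be careful and use $\smodels$ rather than a functional interpretation for the $\sps$ part) satisfies $P_{\sps}$ by the definition of predicate-assertion satisfaction; and $\vint(\spc) = \true$ gives $P_{\spc}$ since $\spc$ is pure and lifted as such. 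One then glues the heaps together with the appropriate $\lstar$-decomposition, noting $\vint(\ssto)$ forces the store part to agree on $\dom(\ssto)$.

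The backward direction ($\sintpe{\vint}(\ssubstid), (\sto, \cmem) \models P \Rightarrow \vint(\sst) = (\sto|_{\sst.\mathsf{sto}}, \cmem)$) is the mirror image: from $\subst, (\sto, \cmem) \models P_{\ssto} \lstar P_{\smem} \lstar P_{\sps} \lstar P_{\spc}$ I would split $\cmem$ into the heap fragments forced by each spatial conjunct and read off, via the commutation fact and the definitions of cell/predicate satisfaction, that $\cmem$ equals $\vint(\smemb)$ (in the relational sense for the predicate part) and that $\sto$ restricted to $\dom(\ssto)$ equals $\vint(\ssto)$ --- the restriction is exactly why the conclusion writes $\sto|_{\sst.\mathsf{sto}}$ rather than $\sto$. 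The pure conjunct $P_{\spc}$ forces $\vint(\spc) = \true$, which is what makes $\vint(\sst)$ defined in the first place (well-formedness $\sinv(\sst)$, assumed throughout, ensures $\svs{\ssto} \cup \svs{\smem} \cup \svs{\sps} \subseteq \svs{\spc}$ so the whole state is interpretable once $\spc$ is).

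\textbf{Main obstacle.} The delicate point is the treatment of symbolic predicates: unlike stores and heaps, $\vint$ on $\sps$ is a \emph{relation} ($\smodels$), not a function, because a single symbolic predicate can be realised by many concrete heaps (a strictly-exact predicate being the exceptional case). So the statement as written --- phrased with the functional notation $\vint(\sst) = \st$ --- must be read as the relational $\vint, \st \smodels \sst$ on the nose; I expect the cleanest path is to restate the theorem internally in terms of $\smodels$ and then observe that the displayed form is just shorthand. The other mild subtlety is bookkeeping around $\ssubstid$: one must check it is well-formed ($\sinv(\ssubstid, \spc)$, i.e. $\spc \models \codom(\ssubstid) \subseteq \Val$, which holds because well-formedness of $\sst$ already guarantees every symbolic variable of $\spc$ is interpreted into $\Val$), and that the free logical variables of $P = \toasrt(\sst)$ are precisely $\codom$-preimages of $\svs{\sst}$, so that $\sintpe{\vint}(\ssubstid)$ covers $\lv{P}$. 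Both are routine given $\sinv(\sst)$; the proof is otherwise a direct structural unwinding with no genuine difficulty beyond the relational/functional care for predicates.
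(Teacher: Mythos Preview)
The paper does not actually prove this theorem. Thm.~\ref{thm:toAsrt} is stated in the appendix immediately before the proof of Thm.~\ref{thm:synthesise-correct}, but no proof is given for it; it is simply invoked as a fact inside that subsequent proof. So there is nothing to compare your proposal against.

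On its own merits your plan is the natural one and would go through: decompose $\toasrt(\sst)$ along the four components of $\sst$, establish the commutation lemma between $\vint$ and evaluation under $\sintpe{\vint}(\ssubstid)$ (your analogue of Lem.~\ref{lem:esem_comm}), and then check each direction component-wise. Your identification of the predicate component as the delicate point is exactly right, and in fact slightly sharper than what the paper commits to: the theorem statement is written with the functional notation $\vint(\sst) = \st$, but as you note (and as the paper itself says in App.~\ref{appssec:extsstates}), once $\sps$ is nonempty the interpretation is genuinely relational via $\smodels$. Your suggestion to read the statement internally in terms of $\smodels$ is the right fix; without it the theorem as displayed is not literally well-typed for extended symbolic states. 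The remaining bookkeeping about $\ssubstid$ being well-formed and covering $\lv{P}$ is, as you say, routine given $\sinv(\sst)$.
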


\begin{proof}[Proof of Thm.~\ref{thm:synthesise-correct}]
    $\synthesise(f, \pvvar x=\vec x \lstar P)$ with $f(\pvvar x)\{\cmd;\preturn{E}\}$ is guaranteed to end successfuly with a set of triples $\left\{(\outcome_i, \sst_i, \smem_i)\right\}$, which in the worst case scenario will simply be empty. The following assumptions form the hypothesis of our theorem:
    \begin{description}
        \item[(S1)] $\sst=(\ssto,\smem,\spc)$
        \begin{description}
            \item[(S1a)] $\ssubst(x)=\hat x$ for all $x\in\lv{\pvvar x\doteq\vec x \lstar P}$, else undefined
            \item[(S1b)] $\ssto(\pvvar x)=\vec{\hat x}$ and $\ssto(\pvvar z)=\nil$ where $\pvvar z = \pv{\cmd}\setminus\{\pvvar x\}$, else undefined
            \item[(S1c)] $\smem=\emptyset$
            \item[(S1d)] $\spc=\true$ 
        \end{description}
        \item[(S2)] $\produce(P,\ssubst,\sst)\rightsquigarrow\sst'$
        \item[(S3)] $\sst',\cmd\baction_{\fsctx}^{\biexmode} \hat\Sigma'$
        \item[(S4)] $\synthesise(\fsctx, \fid, P) = \left\{\isltriplex{\vec{\pvar{x}} = \vec{\lvar{x}} \lstar P_i}{\fid(\vec{\pvar{x}})}{\outcome}{\exists \vec{\lvar{y}}.~Q_i} \right\}$ such that:
        \begin{description}
        		\item[(S4a)] $(\outcome_i, (\ssto_i', \smem_i', \spc_i'), \smem_i')) \in \Sigma'$
        		\item[(S4b)] $P_i = P \lstar \toasrt((\emptyset, \smem_i, \true))$
			\item[(S4c)] $Q_i = \toasrt(\ssto_i'|_{\{\pvar{x}\}}, \smem_i', \spc_i')$ with $\pvar{x} = \pvar{ret}$ if $\outcome_i = \osucc$, $\pvar{x} = \pvar{err}$ otherwise.
			\item[(S4d)] $\vec{\lvar{y}} = \lv{Q_i} \setminus \lv{P_i}$
        \end{description}
    \end{description}

We start by showing that $\exists \vec{\lvar{y}}.~Q_i$ is a UX internalisation of $Q_i' = \toasrt(\ssto_i', \smem_i', \spc_i')$, i.e, that they satisfy the following internalisations conditions:
\begin{description}
        \item[(IC1)] $\pexp \in \vals \Leftarrow Q_i' \star \AssTrue$ if $\outcome_i = \osucc$
        \item[(IC2)] $Q_i \Leftarrow \exists \vec{\lvar{p}}. Q_i'[\vec{\lvar{p}}/\vec{\pvar{p}}] \star \pvar{ret} = \pexp{}[\vec{\lvar{p}}/\vec{\pvar{p}}]$ if $\outcome_i = \osucc$
        \item[(IC3)] $\Qerr \Leftarrow \exists \vec{\lvar{p}}. \Qerr'[\vec{\lvar{p}}/\vec{\pvar{p}}]$ if $\outcome_i = \oerr$
\end{description}

    \textit{Internalisation conditions 1-3.}\\
These properties, in the case $\outcome_i = \osucc$, naturally follow from Thm.~\ref{thm:toAsrt} as well as (S4), and the fact that $\pvar{ret} := \pexp$ is executed in each branch before the call to $\texttt{toAsrt}$.

We want to show that

\[
\isltriplex{P \star \vec{\pvar{x}} = \vec{\lvar{x}} \star \vec{\pvar{z}} = \nil}{C}{\outcome_i}{Q'_i}{}
\]

\textit{Case } $\outcome_i = \osucc$\\
We assume some $\subst$, $\sto'$ and $\hp'$ such that \textbf{(H3)}~$\subst,(\sto',\hp')\models Q_i'$. (S4) implies that there is some \textbf{(H4a)}~$(\outcome, \sst'_i, \smem_i')\in\hat\Sigma'$ such that \textbf{(H4b)}~$\subst,(\sto',\hp')\models\texttt{toAsrt}(\sst_i'|_{\{\pvar{ret}\}})$. Define $\sint$ as $\sint(x)=\subst(x)$, so by definition of $\ssubstid$, \textbf{(H5)}~$\sint(\ssubstid)=\subst$.

(S4), (H5) and Thm.~\ref{thm:toAsrt} imply that \textbf{(H6)}~$\sint(\sst'')=(\sto',\hp')$. Thm.~\ref{thm:ux-ox-sound-func}, Thm.~\ref{thm:bi-abduction}, (H4a) and (S3) imply \textbf{(H7)}~$\sint(\sst'),\cmd\baction_{\fictx}\oxok: (\sto', \hp')$.

Soundness of \texttt{produce}, (S1) and (S2) implies \textbf{(H8)}~$\subst,\sint(\sst')\models P$. 

(S1), (S2) and Prop.~\ref{prop:wf} of \texttt{produce} implies that $\sint(\sst')\texttt{.sto}=\sto$ and (S1b) then implies $\subst,\sint(\st')\models \pvar x\doteq\vec x\lstar P \lstar \pvvar z\doteq\nil$. This concludes the proof of the $\oxok$ case. \\

\textit{Err case.}\\
This case works fully analogously to the $\oxok$ case.

\end{proof}

\newpage
\section{Gillian Evaluation}%
\label{app:gillian}


As explained in \S\ref{sec:evaluation}, we ran our freshly implemented UX bi-abduction analysis on the Collections-C library~\cite{collections}. In this appendix, we explain what fixes were implemented, provide examples of generated specifications, and detail limitations of the current implementation of Gillian-C, the instantiation of Gillian to the C programming language.

\subparagraph*{Implemented Fixes.}

To obtain our preliminary results, we only implement a subset of all possible fixes for missing errors in the Gillian-C memory model. In particular, we implemented the fixes corresponding to those presented in this paper: we add cells in memory when they are missing during a load, a store, or a $\conscell$.

While Gillian-C does feature a variety of other memory operations and consumers, each with their corresponding fixes, our current implementation has proven sufficient in generating a compelling number of interesting specifications.

\subparagraph*{Examples of Generated Specifications.}

We now show examples of specifications generated by our bi-abduction analysis, on a function of the \texttt{array} library. In Collections-C, an array structure contains a size and a capacity both of type \lstinline{size_t}. Then, it contains an expansion factor, which specifies how to increase the capacity of the array when it is full and we need to add a new element. Finally, it contains a pointer to an array of pointers to \lstinline[language=C]{void} -- an artefact of the lack of polymorphism in C.

The structure definition in the original implementation of the library also contained 3 additional function pointers for users of the library to provide custom implementations of \texttt{malloc}, \texttt{calloc} and \texttt{free}. We removed these pointers from the structure and replaced any calls to these functions to the \texttt{stdlib} versions. We performed similar changes to any other structure containing allocator function pointers.

We provide examples for 

\begin{minipage}{0.40\textwidth}
\begin{lstlisting}[language=C]
typedef struct array_s {
    size_t size;
    size_t capacity;
    float exp_factor;
    void **buffer;
    // Modified to remove
    // allocator pointers.
} Array;
\end{lstlisting}
\end{minipage}\hfill%
\begin{minipage}{0.57\textwidth}
\begin{lstlisting}[language=C]
cc_enum array_get_at(Array *ar, size_t index,
                   void **out)
{
    if (index >= ar->size)
        return CC_ERR_OUT_OF_BOUND;
    *out = ar->buffer[index];
    return CC_OK;
}
\end{lstlisting}
\end{minipage}\hfill%

Gillian is able to generate 9 specifications for this function, 4 of which correspond to successful executions, and 5 to erroneous executions. We detail 2 successful ones and 1 erroneous one, provided in Fig.~\ref{fig:gillian-app-specs}.

\newcommand{\facc}[1]{\texttt{.#1}}
\newcommand{\arraygetattcall}{\mathtt{
{\color{blue} cc\_enum}\ array\_get\_at({\color{blue} Array} *\! ar,\ {\color{blue} size\_t}\ index,\ {\color{blue} void} *\!*out)
}}
\newcommand{\NULL}{\mathtt{NULL}}

\begin{figure}[!ht]
\begin{tabular}{|l|c|}
\hline
Spec 1: Successful access &
$\bigisltripleok
  {{\color{gray} \pvar{ar} = a \lstar \pvar{index} = i \lstar \pvar{out} = o \lstar}\\
   a\facc{size} \mapsto s \lstar \\
   a\facc{buffer} \mapsto b \lstar (b + i) \mapsto v \lstar\\
   o \mapsto {\color{red} o'} \lstar {\color{violet} o' = \NULL}}
  {\arraygetattcall}
  {a\facc{size} \mapsto s \lstar {\color{red} i < s} \lstar \\
   a\facc{buffer} \mapsto b \lstar (b + i) \mapsto v \lstar\\
   o \mapsto {\color{red} v} \lstar {\color{violet} o' = \NULL} \lstar\\
   {\color{red} \pvar{ret} = \pvar{CC\_OK}}}$ \\ \hline 
Spec 2: Graceful out-of-bounds &
$\bigisltripleok
  {{\color{gray} \pvar{ar} = a \lstar \pvar{index} = i \lstar \pvar{out} = o \lstar}\\
   {\color{violet} o = \NULL} \lstar \\
   a\facc{size} \mapsto s}
  {\arraygetattcall}
  {{\color{violet} o = \NULL} \lstar \\
   a\facc{size} \mapsto s \lstar {\color{red} i \geq s} \lstar\\
   {\color{red}\pvar{ret} = \pvar{CC\_ERR\_OUT\_OF\_BOUNDS}}}$ \\ \hline
Spec 3: NULL dereference &
$\bigisltripleerr
  {{\color{gray} \pvar{ar} = a \lstar \pvar{index} = i \lstar \pvar{out} = o \lstar}\\
   a = \NULL \lstar o = \NULL}
  {\arraygetattcall}
  {a = \NULL \lstar o = \NULL \lstar \\
   {\color{red}\pvar{ret} = \mathtt{"segmentation\ fault"}}}$ \\ \hline
\end{tabular}
\caption{Examples of specifications generated by Gillian-C's bi-abduction}
\label{fig:gillian-app-specs}
\end{figure}

In all three specifications, we highlight in {\color{red} red} the fragments which are modified between the pre and the post condition, and we {\color{gray} grey out} the mandatory program variable assignments.

\begin{description}

\item[\textbf{Spec 1}] corresponds to a successful execution of that function: the size check passes, the accessed cell and the $\pvar{out}$ pointer are properly allocated. The content pointed to by the $\pvar{out}$ pointer is overridden and the function returns $\pvar{CC\_OK}$. Note the minimal footprint, only the \texttt{size} and \texttt{buffer} fields of the structure are required and therefore only those fields are bi-abduced.

\item[\textbf{Spec 2}] corresponds to a case where the \texttt{size} field of the array structure indicates that the index is out of bounds. The library gracefully handles this by returning an error code.

\item[\textbf{Spec 3}] corresponds to the case where the pointer given as input is a $\NULL$ pointer, which triggers a $\NULL$ dereference, which Gillian-C detects as an error. This spec is perhaps the most important, as it is the one which, when propagated through the codebase by function calls, would allow a front-end that can filter true bugs to detect an issue.

\end{description}

\subparagraph*{On Specification Duplication.}

Above, we only presented 2 out of the 4 successful specifications generated by Gillian-C. In addition, we highlighted in {\color{violet} purple} extraneous conditions bi-abduced by the Gillian-C engine, in the form $\_ = \NULL$. This is due to a current limitation resulting of a bad cocktail mixing the way Gillian-C encodes values in its symbolic heap and information lost in compilation from C to GIL\footnote{Gillian-C uses the \texttt{CSharpMinor} intermediate language of the CompCert formally verified compiler which has lost most of the typing information available at the source level}: bi-abducing the ``shape'' of the value (NULL, or not NULL) becomes necessary to preserve well-formedness of the memory.

This phenomenon leads to an explosion in the number of generated specifications, especially when many different memory cells containing pointers are accessed in a row. In particular, Collections-C exposes several iterator structures composed of many pointers, such as \lstinline[language=C]{slist_zip_iter}, \lstinline[language=C]{hashtable_iter} and \lstinline[language=C]{hashet_iter}. For each of these structure, the library also exposes an initialiser function which will assign each field one by one, leading to a path explosion.

Thankfully, this is purely a limitation of the current implementation of Gillian-C, and not of our theoretical framework for bi-abduction, nor of Gillian's implementation of bi-abduction. It could be overcome if we were to write our own instantiation of Gillian for C, using another compiler and improving the encoding of value in memory. Such work is already in progress, but outside the scope of this paper.

\else
\fi
\end{document}